\newcommand{\E}{\mathbb{E}}
\newcommand{\Var}{\mathrm{Var}}
\newcommand{\Cov}{\mathrm{Cov}}
\newcommand\indep{\protect\mathpalette{\protect\independenT}{\perp}}
\def\independenT#1#2{\mathrel{\rlap{$#1#2$}\mkern2mu{#1#2}}}
\newtheorem{theorem}{Theorem}
\newtheorem{remark}{Remark}
\newtheorem{lemma}[theorem]{Lemma}
\newtheorem{proposition}[theorem]{Proposition}
\numberwithin{theorem}{section}
\numberwithin{remark}{section}
\numberwithin{equation}{section}
\begin{document}

\title{Fused Extended Two-Way Fixed Effects for Difference-in-Differences With Staggered Adoptions}
\date{\today}

\author{Gregory Faletto\thanks{gregory.faletto@marshall.usc.edu} }
\affil{University of Southern California}

\maketitle

\begin{abstract}

To address the bias of the canonical two-way fixed effects estimator for difference-in-differences under staggered adoptions, \citet{wooldridge2021two} proposed the \textit{extended two-way fixed effects} estimator, which adds many parameters. However, this reduces efficiency. Restricting some of these parameters to be equal (for example, subsequent treatment effects within a cohort) helps, but \textit{ad hoc} restrictions may reintroduce bias. We propose a machine learning estimator with a single tuning parameter, \textit{fused extended two-way fixed effects} (FETWFE), that enables automatic data-driven selection of these restrictions. We prove that under an appropriate sparsity assumption FETWFE identifies the correct restrictions with probability tending to one, which improves efficiency. We also prove the consistency, oracle property, and asymptotic normality of FETWFE for several classes of heterogeneous marginal treatment effect estimators under either conditional or marginal parallel trends, and we prove the same results for conditional average treatment effects under conditional parallel trends. We provide an R package implementing fused extended two-way fixed effects, and we demonstrate FETWFE in simulation studies and an empirical application.

\end{abstract}

\textbf{Keywords:} Difference-in-differences, staggered adoptions, treatment effect heterogeneity, bridge regression, fused lasso.

\section{Introduction}

We investigate difference-in-differences estimation under staggered adoptions. Suppose \(N\) units are observed at \(T \geq 2\) times, where our theoretical results will assume \(N\) grows asymptotically while \(T\) is fixed. For convenience we assume a balanced panel where all units are observed at all times, so we have \(NT\) observations.

Suppose \(R \leq T - 1\) cohorts start receiving a treatment at times \(r \in \mathcal{R} =\{r_1, r_2, \ldots, r_R\} \subseteq \{2, \ldots, T\}\). For example, we might want to investigate the effect of a law or policy that is implemented state-by-state across time \citep{callaway2021difference, goodman2021difference, borusyak2024revisiting}, or the effect of a product or service that is released across different regions over time \citep{de2020two}. We consider settings where the treatment is an absorbing state, so that once a unit starts being treated, they continue receiving treatment thereafter, but this can be relaxed with little complication. Since no cohort enters at time \(t = 1\) and we assume the probability of never selecting into treatment is strictly positive for all units (see Assumption F2 in Section \ref{setup.sec}), for sufficiently large \(N\) we have untreated units available as a baseline comparison for all treated cohorts at all times.

 For \(i \in \{1, \ldots, N\} =: [N]\) and \(t \in [T]\), we encode the treatment status for each unit in a random variable \(W_i \in \{0\} \cup \mathcal{R}\), where \(W_i  = r\) if unit \(i\) is in cohort \(r \in \mathcal{R}\) or \(W_i = 0\) if unit \(i\) is never treated. Let \(\tilde{y}_{(it)}(r)\) be the potential outcome for unit \(i\) at time \(t\) if they were in cohort \(r\). We will be interested in estimating treatment effects of the form
 \begin{align} 
\tau_{\text{ATT}} (r, t)  :=    \E \left[  \tilde{y}_{(i t)}(r) - \tilde{y}_{(i t)}(0) \mid W_i = r  \right]  \label{att.cohort.time} , \qquad  r \in \mathcal{R}, t \geq r,
\end{align}
as well as conditional treatment effects that depend on the value of unit covariates; see \eqref{catt.t.r.def}, to come. It is well-known that in the \(T =2\), \(\mathcal{R} = \{2\}\) case, under no anticipation and parallel trends assumptions
\begin{equation}\label{did.estimand}
\tau_{\text{ATT}}(2, 2) = \E \left[ \tilde{y}_{i2}(2) - \tilde{y}_{i1}(2) \mid W_i = 2 \right] - \E \left[  \tilde{y}_{i2}(0) - \tilde{y}_{i1}(0)  \mid W_i = 0 \right]
,
\end{equation}
and this treatment effect can be estimated consistently by the canonical two-way fixed effects regression
\begin{equation}\label{bad.reg}
\tilde{y}_{it} = \alpha_i^* + \gamma_t^* + \tau^* \cdot \sum_{r \in \mathcal{R}} \mathbbm{1}\{W_i = r \} \mathbbm{1}\{t \geq r\}  + \epsilon_{it}
,
\end{equation}
where \(\alpha_i^*\) are fixed effects corresponding to units, \(\gamma_t^*\) are fixed effects corresponding to time, \(\mathbbm{1}\{ \cdot \} \) is an indicator variable that equals 1 on the event \(\cdot\) and 0 otherwise, \(\epsilon_{it} \) is a noise term, and for conciseness we denote \(\tilde{y}_{it} := \tilde{y}_{it}(W_i)\), the observed response for unit \(i\) at time \(t\). 

In the more general \(T \geq 2\) and arbitrary \(\mathcal{R}\) setting, one might hope that regression \eqref{bad.reg} would aggregate the treatment effects in a sensible way to estimate some form of an average treatment effect, but it is now well-known that this isn't the case, and many alternative estimators have been proposed \citep{borusyak2018revisiting, de2020two, sun2021estimating,  goodman2021difference, callaway2021difference}.

\subsection{The Extended Two-Way Fixed Effects Estimator}

Most of the alternative estimators proposed in these and other works depart from ordinary least squares (OLS) estimation, but \citet{wooldridge2021two}\footnote{Along with \citet{sun2021estimating}.} shows that linear regression can still be unbiased provided that enough parameters are estimated. He first shows that \eqref{did.estimand} is also the estimand for \(\tau\) in the regression
\begin{equation}\label{wooldridge.smaller.model}
\tilde{y}_{it} =  \eta^* + \nu_2^* \cdot \mathbbm{1}\{W_i = 2\}  + \gamma_2^* \cdot \mathbbm{1}\{t = 2\}  + \tau^* \cdot \mathbbm{1}\{W_i = 2 \} \mathbbm{1}\{t  = 2\}  + \epsilon_{it}
,
\end{equation}
where \( \eta \) can be interpreted as the expected outcome for the never-treated units at \(t = 1\), \(\nu_2\) is the selection bias for the treated units at \(t = 1\), and \(\gamma_2\) is the trend in the untreated potential outcomes. \citeauthor{wooldridge2021two} proposes generalizing this to the arbitrary \(T\) case and allowing for each parameter to be linear in covariates rather than constant. This leads to a linear model on cohort dummy variables, time dummies, covariates, treatment indicators, and interactions of those terms: for all \(i \in [N]\), \(t \in [T]\), and \(r \in \mathcal{R}\), 
\begin{equation}\label{wooldridge.6.33.model}
\tilde{y}_{it} = \eta^* +  \gamma_t^* + \boldsymbol{X}_{i}^\top(\boldsymbol{\kappa}^* + \boldsymbol{\xi}_t^*) +\sum_{r \in \mathcal{R}}  \mathbbm{1} \{W_i = r \} \left( \nu_r^* + \boldsymbol{X}_{i}^\top \boldsymbol{\zeta}_r^*  + \mathbbm{1}\{ t \geq r\}  (\tau_{rt}^*  + \boldsymbol{\dot{X}}_{(ir)}^\top  \boldsymbol{\rho}_{rt}^*  ) \right) + \epsilon_{it}
.
\end{equation}
Here \(\boldsymbol{X}_{i} \in \mathbb{R}^d\) is a set of time-invariant (pre-treatment) covariates, \( \eta^* + \boldsymbol{X}_{i}^\top \boldsymbol{\kappa}^* \) is the conditionally expected outcome of the never-treated units at \(t = 1\), the \( \gamma_t^*  + \boldsymbol{X}_{i}^\top  \boldsymbol{\xi}_t^*  \) terms are the conditional trends in the untreated potential outcomes, \(\nu_r^* + \boldsymbol{X}_{i}^\top \boldsymbol{\zeta}_r^*\) is the conditional selection bias for cohort \(r\), and \(\tau_{rt}^*  + \boldsymbol{\dot{X}}_{(ir)}^\top  \boldsymbol{\rho}_{rt}^* \) is the conditional treatment effect for cohort \(r\) at time \(t\).  (See Assumption (LINS) in Section \ref{lins.sec} for more details.)


\citeauthor{wooldridge2021two} calls \eqref{wooldridge.6.33.model} the \textit{extended two-way fixed effects} (ETWFE) estimator \citep[Equation 6.33]{wooldridge2021two} and shows that in this regression \(\tau_{rt}^*\) is unbiased for  \(\tau_{\text{ATT}} (r, t)\) under conditional no-anticipation and parallel trends assumptions; see Assumptions (CNAS) and (CCTS) in Section \ref{sec.did.assumptions}. The ETWFE estimator can also be used to estimate conditional average treatment effects on the treated units
\begin{equation}\label{catt.t.r.def}
\tau_{\text{CATT}} (r, t, \boldsymbol{x}) := \E [ \tilde{y}_{(i t)}(r) - \tilde{y}_{(i t)}(0) \mid W_i = r, \boldsymbol{X}_{i} = \boldsymbol{x} ]
, \qquad i \in [N], r \in \mathcal{R},  t \in \{r, \ldots, T\}
.
\end{equation}

\subsection{Restrictions in the ETWFE Model}

As \citet[Section 6.5]{wooldridge2021two} points out, the ETWFE estimator \eqref{wooldridge.6.33.model} may contain a large number of parameters to estimate if \(R\), \(T\), or \(d\) is moderately large relative to \(N\). In this case, ordinary least squares estimation of \eqref{wooldridge.6.33.model} may lead to very imprecise estimates of at least some of the parameters, trading the bias problem that model \eqref{bad.reg} has for a low-efficiency (or high-variance) problem. 

\citet{wooldridge2021two} suggests some possible ways to mitigate this problem by imposing restrictions in order to reduce the number of parameters to be estimated. For example, one might assume that the treatment effects only vary across cohorts and not with time, so that \(\tau_{rt} = \tau_r\) for all \(r \in \mathcal{R}\), \(t \geq r\). (Other works that consider similar restrictions include \citet{borusyak2018revisiting} and \citet{borusyak2024revisiting}.) If many of these restrictions hold, then imposing them results in more efficient estimation of the treatment effects than regression \eqref{wooldridge.6.33.model}. However, imposing these restrictions in an \textit{ad hoc} way could reintroduce bias if the wrong restrictions are imposed, or leave unnecessary inefficiency if not enough restrictions are imposed.

The first goal of this paper is to propose a novel estimator that imposes these restrictions in an automatic, data-driven way. We propose the \textit{fused extended two-way fixed effects} (FETWFE) estimator, a novel machine learning method for difference-in-differences estimation. Before we describe FETWFE, we discuss sparse fusion penalties in linear regression, which play a key role in our estimator.

\subsection{Sparse Fusion Penalties}

Given a \(N \times p\) real-valued design matrix \(\boldsymbol{X}\) and a response \(\boldsymbol{y} \in \mathbb{R}^N\), the well-known lasso \citep{Tibshirani1996} estimator is defined by the optimization problem
\begin{align*}
\boldsymbol{\hat{\beta}}(\lambda_N) \in ~ & \underset{\boldsymbol{\beta} \in \mathbb{R}^p}{\arg \min} \left\{  \sum_{i=1}^N  \left( y_i - \boldsymbol{\beta}^\top \boldsymbol{X}_i \right)^2 + \lambda_N \sum_{j=1}^p | \beta_j| \right\}
\\  = ~ &  \underset{\boldsymbol{\beta} \in \mathbb{R}^p}{\arg \min} \left\{\left \lVert \boldsymbol{y}  -  \boldsymbol{X} \boldsymbol{\beta} \right \rVert_2^2 + \lambda_N \lVert \boldsymbol{\beta} \rVert_1 \right\}
,
\end{align*}
where \(\lambda_N\) is a tuning parameter that can be chosen from a set of candidate values by cross-validation or a criterion like BIC, and recall that the \(\ell_q\) norm\footnote{It is a norm for finite \(q \geq 1\), though we also consider \(q \in (0, 1)\), in which case \(\lVert \cdot \rVert_q \) does not satisfy the triangle inequality.} of an \(n\)-vector \(\boldsymbol{v} = (v_1, v_2, \ldots, v_n)^\top \) is defined as \(\lVert \boldsymbol{v} \rVert_q := \left( \sum_{i=1}^n |v_i|^q \right)^{1/q}\) . The \(\ell_1\) penalty term leads to \textit{sparse} estimated coefficients \(\boldsymbol{\hat{\beta}}(\lambda_N)\) with entries set exactly equal to 0.

The \textit{fused lasso} \citep{tibshirani2005sparsity} adds a second tuning parameter and \(\ell_1\) penalty term:
\begin{align*}
\boldsymbol{\hat{\beta}}(\lambda_N^{(1)}, \lambda_N^{(2)}) \in ~ &  \underset{\boldsymbol{\beta} \in \mathbb{R}^p}{\arg \min} \left\{ \sum_{i=1}^N  \left( y_i - \boldsymbol{\beta}^\top \boldsymbol{X}_i \right)^2 + \lambda_N^{(1)} \sum_{j=1}^p | \beta_j|  + \lambda_N^{(2)} \sum_{j=2}^p | \beta_j - \beta_{j-1}| \right\}
.
\end{align*}
The fused lasso tends to yield solutions that both set some individual coefficients exactly equal to 0 (due to the first \(\ell_1\) penalty term) and also set the differences between some adjacent pairs of coefficients exactly equal to 0 (due to the second \(\ell_1\) penalty term). That is, some adjacent pairs of coefficients are set exactly equal to each other, or \textit{fused} together.

FETWFE uses \(\ell_q\) fusion penalties for \(q \in (0, 2]\) (\textit{bridge} regularization) to shrink certain sets of parameters in \eqref{wooldridge.6.33.model} towards each other in a way that aligns with intuition and previously proposed restrictions. See Figure \ref{fig:fusion-penalties} for a visual depiction of which of the marginal average treatment effects \(\hat{\tau}_{rt}\) we propose penalizing towards each other, and see further details about the FETWFE penalty term \eqref{fetwfe.penalty} in Section \ref{sec.meth}. A lasso estimator is within our framework if one chooses \(q = 1\). Choices of \(q \in (0, 1]\) will result in sparse solutions, and choices of \(q \in (0,1)\) will allow for convergence at \(1/\sqrt{N}\) rates and asymptotic normality of the estimated coefficients. These fusion penalties allow the restrictions to be selected in a hands-off, automatic way that comes with theoretical guarantees.

\begin{figure}[ht]
    \centering
\begin{tikzpicture}[every node/.style={font=\small}]
    \def\T{6} 
    \def\R{5} 
    
    \draw[->,thick] (0,0) -- (0,\R+1) node[above] {\(r\)};
    \draw[->,thick] (0,0) -- (\T+1,0) node[right] {\(t\)};
    \foreach \x in {1,...,\T} {
        \draw (\x,-0.2) -- (\x,0.2);
        \node at (\x,0.5) {\x};
    }
    \foreach \y in {2,...,\R} {
        \draw (-0.2,\y) -- (0.2,\y);
        \node at (-0.5,\y) {\y};
    }

    \foreach \r in {2,...,\R} {
        \foreach \t in {\r,...,\T} {
            \ifnum\t=\r
                \node[draw, circle, inner sep=2pt, fill=black, label=left:\(\hat{\tau}_{\r\t}\)] (node-\r-\t) at (\t,\r) {};
            \else
                \node[draw, circle, inner sep=2pt, fill=black, label=above:\(\hat{\tau}_{\r\t}\)] (node-\r-\t) at (\t,\r) {};
            \fi
        }
    }

    \foreach \r in {2,...,\R} {
        \foreach \t in {\r,...,\T} {
            \ifnum\t>\r
                \draw (node-\r-\t) -- (node-\r-\the\numexpr\t-1\relax);
            \fi
        }
    }

    \foreach \r in {3,...,\R} {
        \draw (node-\r-\r) -- (node-\the\numexpr\r-1\relax-\the\numexpr\r-1\relax);
    }
\end{tikzpicture}

    \caption{Visualization of which of the estimated marginal average treatment effect terms \(\hat{\tau}_{rt}\) from regression \eqref{wooldridge.6.33.model} (which estimate the average treatment effects \(\tau_{\text{ATT}} (r, t) \) from Equation \ref{att.cohort.time}) we penalize towards each other in the FETWFE penalty \eqref{fetwfe.penalty}. In this setting, \(T = 6\) and \(\mathcal{R} = \{2, \ldots, 5\}\). The horizontal axis depicts time and the vertical axis depicts cohorts. FETWFE works well under an assumption that the linked treatment effects tend to be close together, and at least some of them are exactly equal. See further details in Section \ref{sec.meth}.}
    \label{fig:fusion-penalties}
\end{figure}
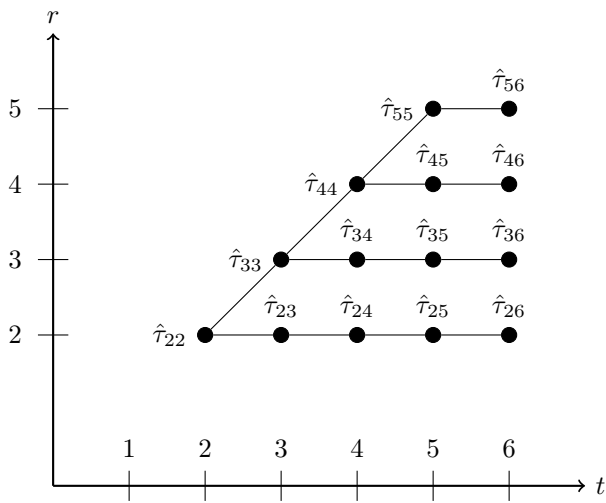

\subsection{Demonstrating the Efficacy of FETWFE}

The second goal of this paper is to provide theoretical guarantees for the performance of FETWFE. We prove that under a suitable sparsity assumption on the true coefficients and regularity conditions, FETWFE learns the correct restrictions with probability tending to one (Theorem \ref{te.sel.cons.thm}). FETWFE then estimates only the needed parameters, improving efficiency. We also prove that for \(q \in (0, 2]\) our proposed estimators of two classes of heterogeneous marginal average treatment effects are consistent under either conditional or marginal parallel trends, and two classes of conditional average treatment effect estimators are consistent under conditional parallel trends (Theorem \ref{main.te.cons.thm}). To the best of our knowledge, this is the first work establishing the consistency of a difference-in-differences estimator for marginal average treatment effects under either form of parallel trends. For estimating propensity-weighted conditional average treatment effects, the generalized propensity scores can be estimated by an arbitrary nonparametric estimator with theoretical convergence guarantees, so our theory applies to propensity score estimators as simple as multinomial logit or as flexible as deep neural networks \citep{farrell2021deep} or random forests \citep{gao2022towards}.

We also show that our treatment effect estimators are asymptotically normal for \(q \in (0, 1)\) and we provide feasible, provably consistent finite-sample variance estimators (Theorems \ref{te.asym.norm.thm} and \ref{te.asym.norm.thm.gen.cond}). For probability-weighted treatment effect estimates, we propose both an asymptotically normal test statistic if the generalized propensity scores are estimated on a separate data set from FETWFE (Theorem \ref{te.asym.norm.thm}b) and an asymptotically subgaussian test statistic if both models are estimated on the same data set (Theorem \ref{te.asym.norm.thm}c). The data-driven restrictions FETWFE learns improve parameter estimation while preserving asymptotic unbiasedness and consistency, allowing for the construction of asymptotically valid confidence intervals.

Under the assumptions of Theorem \ref{te.oracle.thm} we show that FETWFE has an oracle property \citep{fan2001variable, fan2004nonconcave}: the asymptotic variance of its estimates depend only on the parameters of the smaller restricted model, and it converges at the same \(1/\sqrt{N}\) rate as an OLS-estimated model on the true restricted model even if the number of covariates grows to infinity with \(N\). Our theoretical guarantees leverage extensions of the bridge regression theory of \citet{kock2013oracle} (Theorems \ref{prop.2i} and \ref{prop.ext.2} in the appendix) that may be of independent interest.

Finally, a third goal of this paper is to show demonstrate and facilitate the practical use of FETWFE. An open-source R package implementing FEWFE is available on the Comprehensive R Archive Network \citep{Faletto:2025aa}. We demonstrate that FETWFE is useful in practice through simulation studies that we conduct in Section \ref{synth.exps.sec} comparing the FETWFE estimator to competitor estimators. We also illustrate FETWFE in an empirical application in Section \ref{sec.data.app}.

\subsection{Related Literature}

There has been an abundance of recent research on difference-in-differences and it is not possible to cite all of the work in this stream; see \citet{de2023tworeview, sun2022linear, roth2023s}, and \citet{zimmermann2020handbookcallaway} for recent reviews. \citet{lechner2011estimation} reviews earlier literature. \citet{chiu2023and} review the literature from a political science perspective. \citet{de2023two} and \citet{arkhangelsky2023causal} review causal inference in panel data more broadly.  Beyond those reviews and the papers we have already cited, throughout the paper we cite papers in the difference-in-differences literature and relate them to FETWFE; see Sections \ref{sec.did.assumptions} and \ref{sec.caus.estimands} and Appendix \ref{par.trend.ciuu.app}, in particular.

Besides \citet{wooldridge2021two}, there are many other estimators that target similar estimands to ours, and we discuss several of these in Section \ref{sec.caus.estimands}, though none of these are machine learning estimators. It is also natural to compare FETWFE to other machine learning and non- and semi-parametric difference-in-differences estimators. Generally these methods are more flexible in the kinds of models than they can estimate than FETWFE (which is a regularized linear model, though our theoretical results allow for nonparametric estimation of the generalized propensity scores), but much less flexible in the estimands they target. Further, these methods tend to require assumptions that an estimator is available that achieves consistency at particular rates, so applying these results would involve further checking the required assumptions for a particular estimator. In contrast, we provide ``end-to-end" assumptions for the consistency, asymptotic normality, and oracle property of three classes of estimators \eqref{catt.estimand.fixed.est}, \eqref{att.estimator.fixed}, and \eqref{att.estimator.weighted}.

The parametric and semi-parametric estimators proposed by \citet{zimmert2020efficient}, \citet{sant2020doubly}, and \citet{chang2020double} yield theoretical results for a wide variety of machine learning estimators with theoretical guarantees under conditional parallel trends assumptions, but only in the \(T = 2\) setting and only for a single average treatment effect estimand (specifically, our estimand in Equation \ref{att.def}). \citet{blackwell2024estimating} accomplish similar goals with different estimands under different assumptions, focusing mainly on a setting with \(T  = 2\) and a single mediator of interest. \citet{nie2021nonparametric} propose related non-parametric estimators that achieve \(1/\sqrt{N}\) convergence and asymptotic normality in the \(T = 2\) repeated cross-sections setting only for an average treatment effect, and consistency for conditional average treatment effects at much slower rates, in contrast to the \(1/\sqrt{N}\) rate for conditional average treatment effects estimated by FETWFE (see Theorem  \ref{te.asym.norm.thm.gen.cond}). \citet{hatamyar2023machine} provide similar semi-parametric estimators for both marginal and conditional average treatment effects in the staggered adoptions setting with arbitrary \(T\), but they do not provide any theoretical guarantees.

\citet{gavrilova2023dynamic} propose an approach using random forests that is somewhat more flexible in its estimands. Their approach allows for arbitrary \(T\), though they focus on the case with only one treated cohort that starts an absorbing treatment at time \(t= 2\), providing a brief sketch as to how their method might be extended to allow for staggered adoptions. Further, their method requires splitting the data and estimating completely separate models to estimate the treatment effect of the treated cohort at each time \(2, 3, \ldots, T\), while FETWFE allows for the full sample to be used to jointly estimate all treatment effects simultaneously. Indeed, a central idea of our method is that the fusion penalties we use allow FETWFE to borrow strength across cohorts and time in order to improve estimation efficiency. (None of the above works mentions fusion penalties.)

Our work also contributes to the much broader topic of estimating conditional average treatment effects. In Section \ref{sec.caus.estimands} we mention some prominent works in this stream of literature.

Outside of causal inference, our approach is clearly closely related to both the fused lasso \citep{tibshirani2005sparsity} and the generalized lasso \citep{tibshirani2011solution}; see \eqref{opt.prob}. Our work also draws much from the bridge regression theory of \citet{kock2013oracle}, and, in turn, \citet{Huang2008}. \citet{kaddoura2022estimation} propose a linear panel data model with a fused lasso-like penalty, though not for causal inference. \citet{park2023change} propose a related Bayesian approach. \citet{heiler2021shrinkage} also propose a penalized panel data model that combines categories, but not in the causal inference setting. Similarly, \citet{kwon2021optimal} proposes a shrinkage estimator for linear panel data models (though not for causal inference) and cites more related work in this stream.

FETWFE also relates more specifically to the generalized fused lasso \citep{hofling2010coordinate, xin2014efficient}. Like FETWFE, the generalized fused lasso involves applying \(\ell_q\) penalties to the absolute differences between a variety of pairs of coefficients, though the generalized fused lasso only uses the \(q =1\) case. Some works that prove theoretical results for the generalized fused lasso include \citet{viallon2013adaptive} and \citet{viallon2016robustness}. \citet{viallon2013adaptive} prove consistency for the generalized fused lasso under different assumptions from ours, including a restriction to the fixed \(p\) case. They prove selection consistency and asymptotic normality only for an \textit{adaptive} \citep{adalasso} version of their method. \citet{viallon2016robustness} prove similar results (with similar limitations) for a broader class of generalized linear models.

\subsection{Organization of the Paper}

After we introduce some notation, we establish our setup and basic assumptions in Section \ref{setup.sec} and present our difference-in-differences assumptions in Section \ref{sec.did.assumptions}. In particular, we provide a testable sufficient condition (CIUN) for several kinds of estimators, including FETWFE, to be consistent under either marginal or conditional parallel trends. In Section \ref{sec.caus.estimands} we introduce causal estimands that FETWFE can be used to estimate, and in Section \ref{sec.meth} we propose the fused extended two-way fixed effects model, as well as our estimators for our proposed causal estimands. We present our theoretical guarantees for FETWFE in Section \ref{sec.theory}. In Section \ref{synth.exps.sec} we demonstrate the superior performance of FETWFE in our setting against extended two-way fixed effects (ETWFE), bridge-penalized ETWFE, and a slightly more flexible model than \eqref{bad.reg} in simulation studies. We illustrate the usefulness of FETWFE in an empirical application using data from \citet{stevenson2006bargaining} in Section \ref{sec.data.app}. Finally, Section \ref{conc.sec} concludes.

\subsection{Notation}

Throughout we will use subscripts like \(\tilde{y}_{(it)}\) to refer to the \(i^\text{th}\) unit at the \(t^{\text{th}}\) time, \(\boldsymbol{\tilde{y}}_{(i \cdot)} \in \mathbb{R}^T\) to refer to all response observations corresponding to unit \(i\), and so on. For matrices, \(\boldsymbol{Z}_{(it) \cdot}\) refers to the row corresponding to unit \(i\) at time \(t\), \(\boldsymbol{Z}_{(i\cdot) j}\) refers to the \(T\) observations of the \(j^\text{th}\) feature in unit \(i\), and so on. We will assume that unit-specific covariates are time invariant, and it will be more convenient to refer to the covariates for unit \(i\) with the notation \(\boldsymbol{X}_i\).

Recall that for a sequence of random variables \(\{X_N\}\) and a sequence of positive real numbers \(\{a_N\}\), we say that \(X_N = \mathcal{O}_{\mathbb{P}}(a_N)\) if for every \(\epsilon > 0\) there exists a finite \(M_\epsilon > 0\) and \(N_\epsilon > 0\) such that for all \(N > N_\epsilon\),
\[
\mathbb{P} \left( \left| \frac{X_N}{a_N} \right| > M_\epsilon \right) < \epsilon
.
\]
If \(a_N \to 0\), \(X_N = \mathcal{O}_{\mathbb{P}}(a_N)\) is sufficient for \(X_N \xrightarrow{p} 0\), and \(a_N\) can be interpreted as the rate of convergence \citep[Chapter 1]{dasgupta2008asymptotic}.


\section{Basic Setup and Assumptions}\label{setup.sec}

We will expand on the setup from the introduction. For each \(i \in [N]\), assume
\[
\boldsymbol{\tilde{y}}_{(i\cdot)} =  \mathbb{E} [ \boldsymbol{\tilde{y}}_{(i\cdot)} \mid W_i, \boldsymbol{X}_{i} ] + c_i \boldsymbol{1}_T + \boldsymbol{u}_{(i\cdot)}
,
\]
where \(c_i \in \mathbb{R}\) and \(\boldsymbol{u}_{(i\cdot)} \in \mathbb{R}^T\) are both random variables and \(\boldsymbol{1}_T\) is a \(T\)-vector with 1 in every entry.

\textbf{Assumption (F1):} \(\E [ c_i \mid W_i, \boldsymbol{X}_{i} ] = 0\), \( \E [\boldsymbol{u}_{(i\cdot)}   \mid c_i, W_i, \boldsymbol{X}_{i} ] =  \boldsymbol{0},\) \( \Var ( c_i \mid W_i, \boldsymbol{X}_{i} ) =  \sigma_c^2\), and \( \Var ( \boldsymbol{u}_{(i\cdot)}   \mid c_i, W_i, \boldsymbol{X}_{i} ) =  \sigma^2 \boldsymbol{I}_T\) almost surely, where \(\sigma_c^2\) and \(\sigma^2\) are assumed to be known and finite and \(\boldsymbol{I}_T \in \mathbb{R}^{T \times T}\) is the identity matrix.

With Assumption (F1) we can use a generalized least squares transform to create a data set with conditionally uncorrelated rows. Let \(\boldsymbol{\epsilon}_{(i\cdot)} := c_i \boldsymbol{1}_T + \boldsymbol{u}_{(i\cdot)}\), and observe that for every \(i \in [N]\) almost surely \(\Var ( \boldsymbol{\epsilon}_{(i\cdot)}  \mid W_i, \boldsymbol{X}_{i} )= \sigma^2 \boldsymbol{I}_T + \sigma_c^2 \boldsymbol{1}_T \boldsymbol{1}_T^\top  =:  \boldsymbol{\Omega} \in \mathbb{R}^{T \times T}\). Under (F1), for every \(i \in [N]\) almost surely \(\E [ \sigma \boldsymbol{\Omega}^{-1/2} \boldsymbol{\epsilon}_{(i\cdot)}  \mid W_i, \boldsymbol{X}_{i} ] = \boldsymbol{0}\) and
\begin{equation}\label{gls.transform.id}
\Var ( \sigma \boldsymbol{\Omega}^{-1/2} \boldsymbol{\epsilon}_{(i\cdot)}  \mid W_i, \boldsymbol{X}_{i} ) =   \sigma^2 \boldsymbol{\Omega}^{-1/2}   \Var ( \boldsymbol{\epsilon}_{(i\cdot)}  \mid W_i, \boldsymbol{X}_{i} ) \boldsymbol{\Omega}^{-1/2}  = \sigma^2 \boldsymbol{I}_{T}
.
\end{equation}
We can carry out this transformation for any known finite and symmetric positive definite \(\boldsymbol{\Omega} = \Var ( \boldsymbol{\epsilon}_{(i\cdot)}  \mid W_i, \boldsymbol{X}_{i} ) \). All of our results can be easily adapted to apply for any such \(\boldsymbol{\Omega}\), though we focus on the specified setting for concreteness.

Next we make an assumption on the joint distribution of the potential outcomes, treatment, and covariates. Let
\begin{equation}\label{cond.prob.cohort.def}
\pi_r(\boldsymbol{x}) := \mathbb{P} ( W_i = r \mid  \boldsymbol{X}_{i} = \boldsymbol{x} ),  \qquad r \in \{0\} \cup \mathcal{R}
\end{equation}
be the conditional probability of a unit's treatment status, a propensity score.

\textbf{Assumption (F2):} The \(N\) joint random variables \((W_i, \boldsymbol{X}_{i}, c_i, \boldsymbol{u}_{(i\cdot)})_{i=1}^N\) are independent and identically distributed (iid). Further, assume that for all \(r \in  \{0\} \cup \mathcal{R}\), almost surely \(0 < \pi_r(\boldsymbol{X}_i) < 1\).

Notice that for each \(\boldsymbol{X}_i\) all of the potential outcomes \(\tilde{y}_{it}(r)\) are well-defined under Assumption (F2) since treatment probabilities are always strictly between 0 and 1. So far we have made no assumptions on the relationship between the treatment assignments and the potential outcomes for a given unit, but we do so in the next section.

\section{Difference-in-Differences Assumptions}\label{sec.did.assumptions}

We will use the generalizations of the no anticipation and common trends assumptions described in the introduction---Assumptions (CNAS) and (CCTS), respectively---that were proposed by \citet{wooldridge2021two}, in addition to some variations and other assumptions. 

\textbf{Assumption (CNAS)} (``conditional no-anticipation with staggered interventions"): For treatment cohorts \(r \in \mathcal{R} \), almost surely
\[
\E [  \tilde{y}_{(it)}(r) - \tilde{y}_{(it)}(0) \mid W_i = r, \boldsymbol{X}_{i} ] = 0 \qquad \forall i \in [N],  t < r.
\]
This requires that before treatment, in conditional expectation each individual's expected outcome is identical to what it would have been if the unit were never treated. Some works (two recent examples are \citet{caetano2022difference} and \citealt{ghanem2023selection}) define the potential outcomes of the units at each time in terms of their treatment status at that time rather than in terms of their cohort membership. In our notation, this amounts to assuming that \(\tilde{y}_{(i t)}(r) = \tilde{y}_{(i t)}(0)\) almost surely for \( t < r\). As \citet[Section 6.3]{wooldridge2021two} points out, this stronger assumption is sufficient for (CNAS). Such a framework may not be troubling, but we only need the weaker assumption (CNAS). See further discussion in \citet[Sections 5.3, 6.1, 6.3]{wooldridge2021two}.

\subsection{Parallel Trends Assumptions}\label{par.trends.sec}

We will discuss and use a variety of parallel or common trends assumptions.

\textbf{Assumption (CCTS)} (``conditional common trends with staggered interventions"): For all \(r \in \mathcal{R}\), \( i \in [N], t \in \{2, \ldots, T\}\), almost surely
\begin{equation}\label{ccts.def}
\E [ \tilde{y}_{(it)} (0) - \tilde{y}_{(i1)}(0) \mid  W_i = r, \boldsymbol{X}_{i} ]  = \E [ \tilde{y}_{(it)} (0) - \tilde{y}_{(i1)}(0) \mid W_i = 0, \boldsymbol{X}_{i} ] 
.
\end{equation}

Assumption (CCTS) requires that each unit's untreated potential outcomes \(\tilde{y}_{(it)} (0)\) follow the same trend over time in conditional expectation regardless of when or if they are treated. Assumption (CCTS) is used by \citet[Assumption 5]{zimmermann2020handbookcallaway}, and also by \citet[Assumption 1]{gavrilova2023dynamic} in the \(\mathcal{R} = \{2\}\) case. Assumption (CCTS) has become widely adopted in the \(T = 2\) setting; early examples include \citet[Assumption A-5]{heckman1997matching}, \citet[Equation 12]{heckman1998characterizing}, and \citet[Assumption 3.1]{abadie2005semiparametric}, and some more recent examples include \citet[Assumption 2]{sant2020doubly},  \citet[Assumption 6(iv)]{zimmert2020efficient}, \citet[Assumption 2.1]{chang2020double}, \citet[Assumption 6]{roth2023s}, and \citet[Assumption PT-X]{ghanem2023selection}.

We also consider versions of (CCTS) that only apply before or after treatment begins for each cohort.

\textbf{Assumption (CCTSB)} (``conditional common trends before treatment"): \eqref{ccts.def} holds almost surely for all \(r \in \mathcal{R} \setminus \{2\}\), \( i \in [N]\), and \(t \in \{2, \ldots, r - 1\}\).

\textbf{Assumption (CCTSA)} (``conditional common trends after treatment"): \eqref{ccts.def} holds almost surely for all \(r \in \mathcal{R}\), \( i \in [N]\), and \(t \in \{r, \ldots, T\}\).

Assumption (CCTSB) requires that (CCTS) holds only until the period before each unit is treated, while (CCTSA) requires (CCTS) to hold after treatment begins. Many previous works have distinguished between the parallel trends before treatment versus after treatment, noting that (CCTSB) can be directly tested under assumption (CNAS) while (CCTSA) cannot, since untreated potential outcomes are not observed for units after treatment, and noting that parallel trends holding before treatment is no guarantee of holding after treatment. See \citet{bilinski2018seeking, kahn2020promise,  dette2020difference, sun2021estimating, ban2022generalized, callaway2022pre, roth2022pretest}, \citet[Equation 22]{henderson2022complete}, \citet{borusyak2024revisiting,  rambachan2023more}, and \citet[Section 3.3]{zimmermann2020handbookcallaway}. See \citet[Section 7.1, Equation 7.4]{wooldridge2021two} for an example of how (CCTSB) can be tested under (CNAS).

We can contrast (CCTS) with an unconditional or marginal version, which \citet{wooldridge2021two} calls Assumption (CTS).

\textbf{Assumption (CTS)} (``common trends with staggered interventions"): Almost surely for all \(r \in \mathcal{R}\)
\begin{equation}\label{cts}
\E [ \tilde{y}_{(it)} (0) - \tilde{y}_{(i1)}(0) \mid  W_i = r  ]  = \E [ \tilde{y}_{(it)} (0) - \tilde{y}_{(i1)}(0) \mid W_i = 0 ] ,  \qquad  i \in [N], t \in \{2, \ldots, T\}
.
\end{equation}

We will also refer to a version of (CTS) that holds only after treatment:

\textbf{Assumption (CTSA)} (``common trends with staggered interventions after treatment"): \eqref{cts} holds almost surely for all \( i \in [N]\), \(r \in \mathcal{R}\), and \( t \in \{r, \ldots, T\}\).

Because (CTS) does not account for covariates, (CCTS) is often thought of as more plausible than (CTS). We generally agree, though in Appendix \ref{thm.discus.sec} we add some nuance by pointing out that (CTS) can hold when (CCTS) does not hold (Theorem \ref{unconf.ccts.cts.thm}). In such a setting it is not possible to estimate conditional average treatment effects using regression \eqref{wooldridge.6.33.model}; we show this formally in Theorem \ref{te.interp.prop}(a) in the appendix. Further, it turns out that even marginal average treatment effect estimates from models like FETWFE that rely on (CCTS) will in general be inconsistent. 

However, (CTS) holding is enough to estimate marginal average treatment effects consistently using many estimators, so it is reasonable to wonder whether regression \eqref{wooldridge.6.33.model} might be able to estimate treatment effects consistently under (CTS) and some additional assumptions. The following assumption will turn out to allow FETWFE to consistently estimate marginal average treatment effects if (CTS) holds, even if (CCTS) does not.

\textbf{Assumption (CIUN)} (``conditional independence of trends of the untreated potential outcomes for never-treated units"): Almost surely for all \(i \in [N]\) and any \(t \in \{2, \ldots, T\}\),
\[
 \E \left[ \tilde{y}_{(i t)}(0) - \tilde{y}_{(i1)}(0) \mid W_i = 0 \right]     =   \E \left[ \tilde{y}_{(i t)}(0) - \tilde{y}_{(i1)}(0) \mid W_i = 0, \boldsymbol{X}_{i}   \right]  
 .
\]

Assumption (CIUN) requires that the trends in the untreated potential outcomes are mean-independent of the covariates conditional on treatment status for units that are untreated. Since (CIUN) depends only on the observed potential outcomes, it can be tested, as we discuss in the next section.

\subsection{Linearity Assumption (Assumption LINS)}\label{lins.sec}

We present an assumption that is slightly different than one introduced by \citet{wooldridge2021two} on the functional forms of the conditional distributions of the untreated potential outcomes, the trends \(\tilde{y}_{(i t)}(0) - \tilde{y}_{(i1)}(0) \), and the difference-in-differences estimands.

\textbf{Assumption (LINS):} For each \(N \in \mathbb{N}\), assume we observe \(d_N\) covariates, where \(d_N\) may be fixed or increasing in \(N\). There exist fixed parameters \(\eta^* \in \mathbb{R}, \{\nu_r^*\}_{r \in \mathcal{R}} \in \mathbb{R}\), \(\{\gamma_t^*\}_{t \in \{2, \ldots, T\}} \in \mathbb{R}\), and \(\{ \tau_{rt}^* \}_{r \in \mathcal{R}, t \geq r}\in \mathbb{R} \), along with fixed sequences of parameters \(\{\boldsymbol{\kappa}_N^*\}_{N \in \mathbb{N}} \in \mathbb{R}^{d_N}, \{ \boldsymbol{\zeta}_{rN}^*\}_{r \in \mathcal{R}, N \in \mathbb{N}} \in \mathbb{R}^{d_N}, \{\boldsymbol{\xi}_{tN}^*\}_{t \in \{2, \ldots, T\}, N \in \mathbb{N}} \in \mathbb{R}^{d_N}\), and \( \{ \boldsymbol{\rho}_{rtN}^* \}_{r \in \mathcal{R}, t \geq r, N \in \mathbb{N}} \in \mathbb{R}^{d_N}\) such that for each \(N \in \mathbb{N}\), for all \(i \in [N]\) and any fixed \(\boldsymbol{x} \in \mathbb{R}^{d_N}\) in the support of the random vector \(\boldsymbol{X}_i\) it holds that
\begin{align}
\E \left[\tilde{y}_{(i1)}(0) \mid W_i = 0, \boldsymbol{X}_{i} = \boldsymbol{x} \right] =  ~ & \eta^*  + \boldsymbol{x}^\top \boldsymbol{\kappa}_N^*  ,  \label{t.1.untreated}
\\ \E \left[\tilde{y}_{(i1)}(0) \mid W_i = r, \boldsymbol{X}_{i} = \boldsymbol{x} \right]   
\qquad & \nonumber
\\
 - \E \left[\tilde{y}_{(i1)}(0) \mid W_i = 0, \boldsymbol{X}_{i} = \boldsymbol{x} \right]  =  ~ &  \nu_r^* + \boldsymbol{x}^\top \boldsymbol{\zeta}_{rN}^*
\qquad \forall r \in \mathcal{R},  \label{t.1.treated}
\\ \E \left[ \tilde{y}_{(i t)}(0) - \tilde{y}_{(i1)}(0) \mid W_i = 0, \boldsymbol{X}_{i} = \boldsymbol{x} \right] = ~ & \gamma_t^* + \boldsymbol{x}^\top \boldsymbol{\xi}_{tN}^*, \qquad  t \in \{2, \ldots, T\}, \qquad \text{and} \label{trend.params}
\\ 
 \E\left[ \tilde{y}_{(i t)}(r)  - \tilde{y}_{(i 1)}(r) \mid W_i = r, \boldsymbol{X}_i  = \boldsymbol{x}\right]  \qquad &   \nonumber
\\ - \E \left[  \tilde{y}_{(it)}(0) - \tilde{y}_{(i1)}(0)  \mid  W_i = 0, \boldsymbol{X}_{i} = \boldsymbol{x} \right]   = ~ & \tau_{rt}^* + \boldsymbol{\dot{x}}_r^\top \boldsymbol{\rho}_{rtN}^*, \qquad r \in \mathcal{R}, t \geq r  \label{treat.eff.def.covs}
,
\end{align}
where
\[
\boldsymbol{\dot{x}}_r := \boldsymbol{x} - \E \left[ \boldsymbol{X}_{i} \mid  W_i = r \right]
\]
are covariate vectors that are centered with respect to their cohort means. (For conciseness, hereafter we omit \(N\) from the subscripts of the parameters defined here that may vary with \(N\).)

The version of this assumption presented by \citet{wooldridge2021two} uses a slightly different expression for \eqref{treat.eff.def.covs} and requires \(d_N = d\) to be fixed, since ETWFE is inconsistent in the high-dimensional setting we consider. Notice the similarity of the quantity on the left side of \eqref{treat.eff.def.covs} to the canonical two-period difference-in-differences estimand \eqref{did.estimand}. \citet{wooldridge2021two} shows that Assumption (LINS), along with Assumptions (CNAS) and (CCTS), make \eqref{wooldridge.6.33.model} correctly specified for the potential outcomes and the the left side of \eqref{treat.eff.def.covs} equal to the conditional average treatment effects. See also our derivation \eqref{stagger.tau.deriv} and Theorem \ref{first.thm.fetwfe}(a) in the appendix.

\begin{remark}\label{eff.rmk}

We can gain insight on (CCTS) by re-arranging and using the conditional independence from \(W_i\) to arrive at the following equivalent condition:
\begin{align}
& \E [ \tilde{y}_{(it)} (0)  \mid  W_i = r, \boldsymbol{X}_{i} ] -  \E [ \tilde{y}_{(it)} (0) \mid  W_i = 0, \boldsymbol{X}_{i} ]   \nonumber
\\ = ~ &  \E [\tilde{y}_{(i1)}(0) \mid  W_i = r, \boldsymbol{X}_{i} ]  - \E [  \tilde{y}_{(i1)}(0) \mid W_i = 0,  \boldsymbol{X}_{i} ]  ,  \qquad r \in \mathcal{R}, i \in [N], t \in \{2, \ldots, T\}. \label{ccts.unconf}
\end{align}
(This is similar to a framing used by, for example, \citet[Equations 10 and 12]{heckman1998characterizing}, \citet[Equation 3.2]{lechner2011estimation}, \citet[Assumption 1]{henderson2022complete}, \citet[Section 2.1]{ban2022generalized},  and \citet[Section 2.2]{park2023universal} in the \(T = 2\) setting.) Under (CCTS) and (LINS), \eqref{t.1.treated} quantifies the time-invariant difference in conditionally expected potential outcomes between units in each cohort and units in the never-treated cohort that appears on each side of \eqref{ccts.unconf}. The inclusion of the terms in \eqref{t.1.treated} provides robustness against time-invariant confounding (see Proposition \ref{unconf.ccts.cts.thm.app}(a) in the appendix).

On the other hand, under unconfoundedness of the untreated potential outcomes (which we define formally and discuss in more detail in \eqref{unconfound.def} in the appendix), \eqref{t.1.treated} (and each side of Equation \ref{ccts.unconf}) is identically 0. Although including these covariates in a regression does not harm unbiasedness under unconfoundedness, it reduces efficiency. Fortunately, we will see later that under the assumptions of Theorem \ref{te.oracle.thm}, if the estimated model contains any coefficients that equal 0 exactly, FETWFE successfully screens out these covariates and achieves the same asymptotic efficiency as a model that omitted them. In this way FETWFE provides robustness without compromising asymptotic efficiency under unconfoundedness, avoiding the kind of trade-off between robustness and asymptotic efficiency that \citet{zimmert2020efficient} discusses.

Many previous works have also noted that difference-in-differences estimation allows for selection bias and can effectively control for some kinds of unobserved time-invariant confounding. \citet{imai2019should} explore this idea in depth, and also raise interesting issues surrounding the plausibility of an assumption that time-varying covariates do not affect the observed response and the implications when this assumption is violated. \citet{imai2023matching} propose a matching estimator in a setting with time-varying covariates. For more in-depth investigations into the parallel trends assumption, see \citet{ghanem2023selection}, \citet{roth2023parallel}, and \citet{marx2024parallel}. See also our further discussion in Appendix \ref{par.trend.ciuu.app}.

\end{remark}

Under Assumption (CIUN) and (LINS), the covariates \(\boldsymbol{\xi}_t^*\) all equal \(\boldsymbol{0}\). So a test of this null hypothesis (for example, a Wald test) serves as a test for the null hypothesis that both (CIUN) and (LINS) hold.

\section{Causal Estimands}\label{sec.caus.estimands}

Next we will discuss several classes of causal estimands we hope to estimate using FETWFE. The building block of our estimands will be the conditional average treatment effects on the treated units \(\tau_{\text{CATT}} (r, t, \boldsymbol{x}) \) from \eqref{catt.t.r.def}, which matches definition (6.23) in \citet{wooldridge2021two}. Estimating \( \tau_{\text{CATT}} (r, t, \boldsymbol{x}) \) is useful if we are interested in identifying which units are most affected by the treatment at a specific time. Besides economics \citep{florens2008identification, djebbari2008heterogeneous}, domains where conditional average treatment effect estimation have been found useful include personalized medicine \citep{feinstein1972estimating, lesko2007personalized,frankovich2011evidence,  foster2011subgroup}, education \citep{morgan2001counterfactuals, brand2010benefits, murphy2016handbook}, targeted marketing \citep{bottou2013counterfactual, athey-yelp-2018, hitsch2018heterogeneous, ascarza2018retention, yang2020targeting, ellickson2022estimating}, psychology \citep{bolger2019causal, winkelbeiner2019evaluation}, sociology \citep{xie2012estimating, breen2015heterogeneous}, and political science \citep{imai2011estimation, grimmer2017estimating}.

Many methods for estimating conditional average (or \textit{heterogeneous}\footnote{Within the differences-in-differences literature, the descriptor ``heterogeneous" has often been used in reference to treatment effects that vary across cohorts or across time; e.g., \citet{de2020two, sun2021estimating, wooldridge2021two}. Here we refer to heterogeneity in treatment effects across the distribution of \(\boldsymbol{X}_i\), though our model also accommodates heterogeneity in treatment effects across cohorts and time. To avoid ambiguity, we will generally use the term ``conditional average treatment effects."}) treatment effects  have been developed within econometrics over the last decade; see \citet{abrevaya2015estimating,athey_imbens_2016,  Wager2018, nie2021quasi, semenova2021debiased}, and \citet{fan2022estimation} for a few examples.

\subsection{Conditional Average Treatment Effect Estimands}

We can use \eqref{catt.t.r.def} to construct a variety of causal estimands. For any set of finite constants \(\{\psi_{rt}\}_{r \in \mathcal{R}, t \in \{r, \ldots, T\}}\), consider the estimand
\begin{equation}\label{catt.estimand.fixed}
\sum_{r \in \mathcal{R}} \sum_{t=r}^T \psi_{rt}  \tau_{\text{CATT}} (r, t, \boldsymbol{x}) 
,
\end{equation}
a weighted average of the cohort- and time-specific conditional average treatment effects. This is similar to the framework of estimation targets of \citet{borusyak2024revisiting} and \citet[Equation 3.1]{callaway2021difference}, though they study average treatment effects (like our estimand \eqref{att.estimand.fixed}, to come) rather than conditional average treatment effects. Refer to these papers for many examples of choices of \(\{\psi_{rt}\}\). Similar aggregations are also discussed by \citet[Remark 3.2]{RePEc:qed:wpaper:1495} and \citet[Section 6.5]{wooldridge2021two}.

To give one example among many possible estimands of interest using specific \(\{\psi_{rt}\}\), we could choose for some \(r' \in \mathcal{R}\)
\begin{equation}\label{cohort.avg.psi.r.t}
\psi_{rt} = \begin{cases}
  \frac{1}{T-r'+1} , &  r = r',
\\  0, & \text{otherwise}
  \end{cases}
\end{equation}
to define the conditional average treatment effect across time for a unit in cohort \(r'\) with covariates \(\boldsymbol{x}\):
\begin{align}
\tau_{\text{CATT}} (r', \boldsymbol{x}) :=   ~ & \frac{1}{T-r'+1} \sum_{t=r'}^T \tau_{\text{CATT}} (r', \boldsymbol{x}, t) 
 \label{catt.r.def}
\\ =  ~ &   \frac{1}{T-r'+1} \sum_{t=r'}^T \E [ \tilde{y}_{(i t)}(r') - \tilde{y}_{(i t)}(0) \mid W_i = r', \boldsymbol{X}_{i} = \boldsymbol{x} ]  \nonumber
.
\end{align}
A version of \eqref{catt.r.def} for average (rather than conditional average) treatment effects was considered by \citet[Equation 6.41]{wooldridge2021two}, and a slightly more general form (also accommodated by our framework) was considered by \citet[Equation 12]{goodman2021difference}.

\subsubsection{Propensity-Weighted Conditional Average Treatment Effects}
A second class of causal estimands can be generated by weighting each \(\tau_{\text{CATT}} (r, t, \boldsymbol{x}) \) by the probability a unit is in cohort \(r\) given that they are treated conditional on \(\boldsymbol{x}\),
\begin{equation}\label{cond.prob.cohort.def.treat}
\tilde{\pi}_r(\boldsymbol{x}) := \mathbb{P} ( W_i = r \mid W_i \neq 0, \boldsymbol{X}_{i} = \boldsymbol{x} ) =  \left( \sum_{r' \in \mathcal{R}} \pi_{r'}(\boldsymbol{x})  \right)^{-1} \pi_r(\boldsymbol{x}) \qquad \forall r \in \mathcal{R}
.
\end{equation}
In particular, for any set of finite constants \(\{\psi_{rt}\}\), consider the estimands
\begin{equation}\label{catt.estimand.weighted}
 \sum_{r \in \mathcal{R}} \sum_{t =r}^T \psi_{rt}     \tilde{\pi}_r (\boldsymbol{x}) \tau_{\text{CATT}} (r, t, \boldsymbol{x}) 
 .
\end{equation}
One example of an estimand we can form in this class arises from choosing
\begin{equation}\label{avg.cohorts.psi.r.t}
\psi_{rt} :=    \begin{cases}
0, & r  = 0,
\\ \frac{1}{T-r+1} , & \text{otherwise,}
\end{cases}
\end{equation}
which yields a conditional average treatment effect by marginalizing \eqref{catt.r.def} across treatment status:
\begin{align}
\tau_{\text{CATT}} ( \boldsymbol{x} ) :=    ~ & \sum_{r \in \mathcal{R}}  \sum_{t=r}^T  \frac{1}{T-r+1}  \tilde{\pi}_r(\boldsymbol{x})  \tau_{\text{CATT}} (r, t, \boldsymbol{x})   \label{catt.def}
\\  = ~ &     \sum_{r \in \mathcal{R}} \tilde{\pi}_r(\boldsymbol{x}) \tau_{\text{CATT}} (r, \boldsymbol{x})  \nonumber
\\ = ~ &  \E \left[  \tau_{\text{CATT}} ( W_i, \boldsymbol{X}_{i})  \mid W_i \neq 0, \boldsymbol{X}_{i}  =  \boldsymbol{x} \right] \nonumber
.
\end{align}
This estimand is closely related to many previously considered heterogenous treatment effects; one of many works that considers a similar estimand (outside of the specific context of difference-in-differences) is \citet[Equation 1]{Wager2018}. 

\subsection{Marginal Average Treatment Effects}

Outside of conditional average treatment effects, we can also marginalize these conditional treatment effects over the distribution of \(\boldsymbol{X}_i\) to get average treatment effects. Since these estimands are "coarser," it may be possible to get more precise estimates of these estimands given a fixed sample size.

For \(\tau_{\text{ATT}} (r, t)  =  \mathbb{E} \left[ \tau_{\text{CATT}} ( r, t, \boldsymbol{X}_i)  \mid W_i = r \right]  \) from \eqref{att.cohort.time}, consider the classes of estimands
\begin{equation}\label{att.estimand.fixed}
\sum_{r \in \mathcal{R}} \sum_{t=r}^T \psi_{rt}  \tau_{\text{ATT}} (r,  t ) ) 
\end{equation}
and
\begin{equation}\label{att.estimand.weighted}
 \sum_{r \in \mathcal{R}} \sum_{t =r}^T \psi_{rt}     \tilde{\pi}_r \tau_{\text{ATT}} (r, t) 
 ,
\end{equation}
where \(\tilde{\pi}_r := \mathbb{P}(W_i = r \mid W_i \neq 0) = \E[ \tilde{\pi}_r (\boldsymbol{X}_{i}) \mid W_i \neq 0 ] \) is the marginal probability of a unit being in cohort \(r\) given that it is treated. 

Estimand \eqref{att.cohort.time} was used by \citet{callaway2021difference} and \citet[Equation 1]{hatamyar2023machine}, and it matches the estimand from \citet[Equation 6.2]{wooldridge2021two} and \citet[Equation 3]{RePEc:qed:wpaper:1495}. It is also closely related to the \textit{cohort-specific average treatment effect on the treated} (3) from \citet{sun2021estimating} (though we use the acronym CATT to refer to conditional, not cohort, effects) and estimands from \citet{goodman2021difference}. 

As we mentioned earlier, the class of estimands \eqref{att.estimand.fixed} is the same framework of estimation targets considered by \citet{borusyak2024revisiting} and is similar to aggregations considered by \citet[Equation 3.1]{callaway2021difference}, \citet[Section 6.5]{wooldridge2021two}, \citet[Remark 3.2]{RePEc:qed:wpaper:1495}, and \citet[Equation 18]{hatamyar2023machine}.

The probability-weighted class of estimands \eqref{att.estimand.weighted} bears a resemblance to the probability-weighted estimands proposed by \citet{callaway2021difference}. Theorem \ref{main.te.cons.thm.gen} in the appendix extends our consistency result to accommodate classes of estimands with much broader forms of estimated probability weights than \eqref{catt.estimand.weighted} and \eqref{att.estimand.weighted}, like estimand (26) of \citet{sun2021estimating} or the broad class of estimands used by \citet{callaway2021difference}. Theorems \ref{te.asym.norm.thm.gen} and \ref{te.asym.norm.thm.gen.cond} extend our asymptotic normality results to these more general estimators as well.

As a particular choice of \(\{\psi_{rt}\}\), we will again highlight \eqref{cohort.avg.psi.r.t} for any \(r' \in \mathcal{R}\), which for estimand \eqref{att.estimand.fixed} marginalizes \(\tau_{\text{CATT}} ( r', \boldsymbol{X}_{i}) \) over the distribution of \(\boldsymbol{X}_i\):
\begin{align}
 \tau_{\text{ATT}} (r')  := ~ &   \frac{1}{T-r'+1} \sum_{t=r'}^T \tau_{\text{ATT}} (r', t)  \label{att.cohort}
\\ = ~ &   \frac{1}{T-r'+1} \sum_{t=r'}^T  \E \left[  \tilde{y}_{(i t)}(r') - \tilde{y}_{(i t)}(0) \mid W_i = r'  \right]  \nonumber
\\ = ~ & \E \left[ \tau_{\text{CATT}} (r', \boldsymbol{X}_{i}) \mid W_i = r'  \right] \nonumber
.
\end{align}
Estimand \eqref{att.cohort} is identical to estimand (3.7) in \citet{callaway2021difference}, and is also the estimand of estimator (6.41) in \citet{wooldridge2021two}.

\subsubsection{The Average Effect of the Treatment on Treated Units}\label{att.subsec}

Lastly, one of the most basic causal parameters of interest is the average effect of treatment on the treated group (ATT). By using the \(\psi_{rt}\) from \eqref{avg.cohorts.psi.r.t} with \eqref{att.estimand.weighted}, we can express this estimand as
\begin{align}
\tau_{\text{ATT}}  := ~ &  
\mathbb{E} \left[\tau_{\text{CATT}} (\boldsymbol{X}_{i})  \mid W_i \neq 0  \right]
=   \mathbb{E} \left[ \sum_{r \in \mathcal{R}} \tilde{\pi}_r(\boldsymbol{X}_{i}) \tau_{\text{CATT}} (r, \boldsymbol{X}_{i})  \mid W_i \neq 0   \right]   \label{att.def}
.
\end{align}
Estimand \eqref{att.def} is similar to estimand (6.40) proposed by \citet{wooldridge2021two} (a similar estimand is also mentioned by \citealt{borusyak2024revisiting}), though we weight the cohort average treatment effects \(\tau_{\text{CATT}} (r, \boldsymbol{X}_{i}) \) by the probabilities \(\tilde{\pi}_r(\boldsymbol{X}_{i}) \) rather than taking an average where each average treatment effect in each time period is weighted equally (we could also take an equal average with our estimand from Equation \ref{att.estimand.fixed}). This probability-weighted average is closely related to estimand (26) from \citet{sun2021estimating}. 

The virtues of estimand \(\tau_{\text{ATT}}\) include summarizing the effect of the treatment in a single number and potentially being a less noisy quantity to estimate. Estimating \(\tau_{\text{ATT}}\) will involve taking a weighted average over many estimated coefficients rather than relying on a single, less precise, estimated coefficient. In contrast, more specific estimands like \( \tau_{\text{CATT}} (r, t, \boldsymbol{x}) \) will typically be difficult to estimate with precision because their estimators will depend heavily on a small subset of the observed data (though this problem will be mitigated by our fusion penalty, discussed in the next section, which allows us to borrow information from nearby estimators).

Notice that \( \tau_{\text{ATT}} \) can be obtained not just by marginalizing \( \tau_{\text{CATT}} ( \boldsymbol{X}_{i} )\) over the distribution of \(\boldsymbol{X}_i\), but also by marginalizing \(\tau_{\text{ATT}} (  W_i)\) over the distribution of cohort assignments:
\begin{align*}
 \tau_{\text{ATT}}  = ~ & \mathbb{E} \left[  \tau_{\text{CATT}} ( W_i, \boldsymbol{X}_{i}) \mid W_i \neq 0  \right] =   \E \left[ \E \left[  \tau_{\text{CATT}} ( W_i, \boldsymbol{X}_{i}) \mid W_i ,  W_i \neq 0 \right]   \mid W_i \neq 0  \right] 
 \\ = ~ &   \sum_{r \in \mathcal{R}} \tilde{\pi}_r   \cdot  \E[  \tau_{\text{CATT}} ( W_i, \boldsymbol{X}_{i})  \mid W_i = r]  = \sum_{r \in \mathcal{R}} \tilde{\pi}_r  \tau_{\text{ATT}} (r)
 \\ = ~  & \sum_{r \in \mathcal{R}} \tilde{\pi}_r  \frac{1}{T-r+1} \sum_{t=r}^T \tau_{\text{ATT}} (r, t) 
 ,
\end{align*}
which shows that estimand \eqref{att.def} is in the class defined by \eqref{att.estimand.weighted} and matches estimand (3.11) from \citet{callaway2021difference}. This formulation will also be useful for estimation.

\section{Fused Extended Two-Way Fixed Effects}\label{sec.meth}

We will carry out regression on the same design matrix \(\boldsymbol{\tilde{Z}}\) proposed by \citet{wooldridge2021two} to estimate regression \eqref{wooldridge.6.33.model} by OLS, but we will add fusion penalties to leverage our belief that some of the parameters are equal to each other. This design matrix contains \(R\) columns of cohort dummies, \(T - 1\) columns of time dummies for times \(\{2, \ldots, T\}\), \(d_N\) covariates, \(\mathfrak{W} := \sum_{r \in \mathcal{R}} (T - r + 1)\) treatment dummies for each possible base treatment effect \(\tau_{rt}\), and \(d_N(R + T - 1 + \mathfrak{W})\) interactions between covariates and the cohort dummies, time dummies, and treatment dummies. As is typical in penalized regression, we will center \(\boldsymbol{\tilde{y}}\) and the columns of \(\boldsymbol{\tilde{Z}}\) before estimating our regression. It is not necessary to estimate the intercept term \(\eta^*\) in order to estimate the treatment effects.

\subsection{Restrictions in the ETWFE Estimator}

Let \(p_N\) be the number of columns of \(\boldsymbol{\tilde{Z}}\); that is, \(\boldsymbol{\tilde{Z}} \in \mathbb{R}^{NT \times p_N}\). We will assume \(p_N \leq N T\) (see Assumption R2 later), but as we discussed in the introduction, it is clear that \(p_N = R + T - 1 + \mathfrak{W} + d_N(1 + R + T - 1 + \mathfrak{W})\) may be fairly large compared to \(NT\), particularly if \(d_N\) is not very small. This could lead to imprecise treatment effect estimates if  \eqref{wooldridge.6.33.model} is estimated by OLS. \citet[Section 6.5]{wooldridge2021two} suggests alleviating this problem by assuming \textit{restrictions} in the model---assuming some of these parameters equal each other. Some of the possibilities \citeauthor{wooldridge2021two} suggests include assuming that the treatment-covariate interactions \(\boldsymbol{\rho}_{rt}\) are fixed across time or assuming the within-cohort treatment effects \(\tau_{rt}\) are fixed across time. \citet{goodman2021difference} also considers the possibility of treatment effects that are fixed across time. Another weaker assumption \citet{wooldridge2021two} proposes is assuming that different treatment effects do not need to be estimated for every individual time period since treatment starts; instead, these times could be condensed into ``early treated" and ``late treated" effects, for example.

These suggestions seem sensible, but making these simplifying assumptions without strong justification risks re-introducing the bias that was removed by adding these parameters to begin with. In settings where a practitioner believes some of these restrictions hold but has no clear way to choose which particular parameters may be equal, we instead propose a data-driven approach to estimate which, if any, of these parameters may be equal. 

\subsection{Fusion Penalization in FETWFE}

As described in the introduction, we will assign \(\ell_q\) fusion \citep{tibshirani2005sparsity} penalties to the differences between estimated parameters which may be equal, and then carry out a penalized linear regression. For example, for the treatment effects, we propose penalizing the absolute differences between treatment effects within the same cohort at adjacent times, \(|\tau_{r_k,t} - \tau_{r_k,t-1}|\), and we propose the same penalty structure for interaction effects, \(|\rho_{r_k,t,j} - \rho_{r_k,t-1,j}|\) separately for each \(j \in [d_N]\). This leverages our belief that some of these effects may be fixed across time while allowing the optimization problem to learn from the data which restrictions to choose.

We also propose penalizing the absolute differences between initial treatment effects for each cohort, \(|\tau_{r_k,r_k} - \tau_{r_{k-1},r_{k-1}}|\), and likewise we penalize the absolute differences between adjacent initial interaction effects \(|\rho_{r_k,r_k,j} - \rho_{r_{k-1},r_{k-1},j}|\) for each \(j\). This is motivated by the idea that perhaps time since beginning treatment could be the only dimension in which treatment effect changes. It is also natural to penalize the adjacent cohort fixed effects absolute differences \(|\nu_{r_k} - \nu_{r_{k-1}}|\) and time fixed effects absolute differences \(|\gamma_t - \gamma_{t-1}|\) towards each other, with analogous penalties for the interaction terms \(|\zeta_{r_kj} - \zeta_{r_{k-1},j}|\) and \(|\xi_{tj} - \xi_{t-1,j}|\).

Finally, we may also want to regularize the coefficients directly, as is common in regressions with large numbers of parameters to estimate relative to the number of observations. Because of the fusion penalties we employ, regularizing only a handful of base terms allows for this regularization. We can also penalize all of the base covariate coefficients directly, allowing us to screen out irrelevant covariates as \(d_N\) grows.

In summary, we propose estimating a model like \eqref{wooldridge.6.33.model} but with a penalty term
\begin{align}
p(\boldsymbol{\beta}; q) := ~ &  \sum_{k=2}^R | \nu_{r_k} - \nu_{r_{k-1}} |^q  +  |\nu_{r_R}|^q    +   \sum_{t=3}^T | \gamma_t - \gamma_{t-1}  |^q   +  |\gamma_T|^q\nonumber
\\ & + \sum_{j=1}^{d_N} \left( |\kappa_j|^q + \sum_{k =2}^R | \zeta_{r_k,j} - \zeta_{r_{k-1},j} |^q   +    |\zeta_{r_R,j}|^q   \nonumber
+  \sum_{t=3}^T | \xi_{tj} - \xi_{t-1,j} |^q   + |\xi_{Tj}|^q \right)  \nonumber
\\ & + | \tau_{r_1, r_1} |^q + \sum_{k =2}^R \left( |\tau_{r_k, r_k} - \tau_{r_{k-1}, r_{k-1}} |^q +  \sum_{t \geq r_k + 1} | \tau_{r_k,t} - \tau_{r_k,t-1}|^q  \right)  \nonumber
\\ & + \sum_{j =1}^{d_N} \left[ | \rho_{r_1,r_1,j} |^q + \sum_{k =2}^R \left( |\rho_{r_k,r_k,j} - \rho_{r_{k-1}, r_{k-1},j} |^q +  \sum_{t \geq r_k + 1} | \rho_{r_k,t,j} - \rho_{r_k,t-1,j}|^q  \right) \right]  \label{fetwfe.penalty}
,
\end{align}
where \(\boldsymbol{\beta} \in \mathbb{R}^{p_N}\) collects all of the coefficients. (Refer again to Figure \ref{fig:fusion-penalties} for a graphical depiction of the penalty terms on the estimated treatment effects \(\tau_{rt}\).) We can more compactly express this as
\begin{equation}\label{penalty.term}
p(\boldsymbol{\beta}; q) = \lVert \boldsymbol{D}_N \boldsymbol{\beta} \rVert_q^q
,
\end{equation}
where \(\boldsymbol{D}_N \in \mathbb{R}^{p_N \times p_N}\) is a suitably defined differences matrix which we later define explicitly in \eqref{d.expres}.

\begin{remark}\label{d.remark}
Our theory could easily be extended to allow a much broader class of penalties than the one penalty we specify in \eqref{fetwfe.penalty}. Any choice of \(\boldsymbol{D}_N\) that has all finite entries, is invertible, and is block diagonal with blocks of sizes not increasing in \(N\) would work for our theoretical results. Such modifications would lead to a change only in constant factors in our theory. For example, one could choose to only use fusion penalties for the treatment parameters \(\tau_{rt}\) and \(\rho_{rt}\) and directly penalize all of the remaining parameters. For another example, if it is believed that treatment effects are more likely to be equal in time since treatment rather than within cohorts, one could replace our structure of treatment effect penalties with fusion penalties across treatment effects at the same time since treatment; see Figure \ref{fig:modified-fusion-penalties} for a graphical depiction of such a structure. It is important that the chosen \(\boldsymbol{D}_N\) makes Assumption \(S(s_N)\), which we describe later in Section \ref{cons.subsec}, plausible.

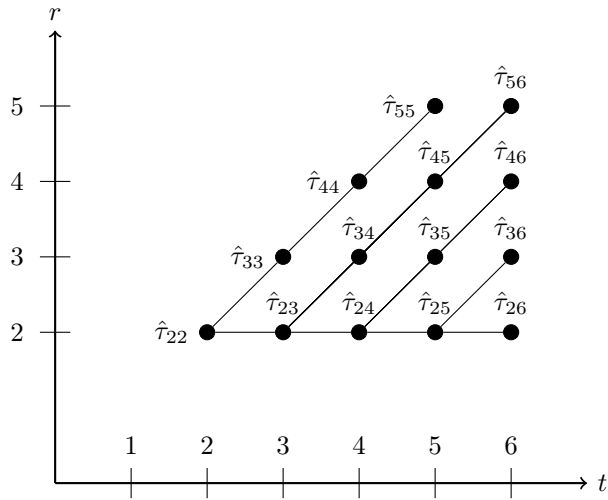
\begin{figure}[ht]
    \centering
\begin{tikzpicture}[every node/.style={font=\small}]
    \def\T{6} 
    \def\R{5} 
    
    \draw[->,thick] (0,0) -- (0,\R+1) node[above] {\(r\)};
    \draw[->,thick] (0,0) -- (\T+1,0) node[right] {\(t\)};
    \foreach \x in {1,...,\T} {
        \draw (\x,-0.2) -- (\x,0.2);
        \node at (\x,0.5) {\x};
    }
    \foreach \y in {2,...,\R} {
        \draw (-0.2,\y) -- (0.2,\y);
        \node at (-0.5,\y) {\y};
    }

    \foreach \r in {2,...,\R} {
        \foreach \t in {\r,...,\T} {
            \ifnum\t=\r
                \node[draw, circle, inner sep=2pt, fill=black, label=left:\(\hat{\tau}_{\r\t}\)] (node-\r-\t) at (\t,\r) {};
            \else
                \node[draw, circle, inner sep=2pt, fill=black, label=above:\(\hat{\tau}_{\r\t}\)] (node-\r-\t) at (\t,\r) {};
            \fi
        }
    }

    \foreach \r in {2,...,\R} {
        \foreach \t in {\r,...,\T} {
            \pgfmathtruncatemacro{\diff}{\t - \r}
            \foreach \rOther in {2,...,\R} {
                \pgfmathtruncatemacro{\tOther}{\rOther + \diff}
                \ifnum\tOther>\rOther
                    \ifnum\tOther<\T+1
                        \draw (node-\r-\t) -- (node-\rOther-\tOther);
                    \fi
                \fi
            }
        }
    }

    \foreach \r in {3,...,\R} {
        \draw (node-\r-\r) -- (node-\the\numexpr\r-1\relax-\the\numexpr\r-1\relax);
    }

    \foreach \t in {3,...,\T} {
        \draw (node-2-\t) -- (node-2-\the\numexpr\t-1\relax);
    }
\end{tikzpicture}

    \caption{Another example of a possible valid penalty structure that mainly penalizes treatment effects across cohorts towards each other based on time since the start of treatment, as mentioned in Remark \ref{d.remark}. Compare to Figure \ref{fig:fusion-penalties}. All of our theoretical results for FETWFE would work with such a penalty structure up to a change in constants in the results.}
    \label{fig:modified-fusion-penalties}
\end{figure}

\end{remark}

It is much easier to see that these candidate restrictions are plausible due to the time-structured nature of the parameters than it is to know exactly which restrictions to choose. FETWFE allows practitioners to leverage this structure to improve estimation efficiency without requiring them to choose restrictions by hand, which could introduce bias if too many restrictions are chosen or compromise efficiency if too few restrictions are chosen.

\subsection{Constructing the FETWFE Optimization Problem}

Finally, we take care of some details. First we account for the random effects discussed in Section \ref{setup.sec}. As suggested by \eqref{gls.transform.id}, in practice we will arrange the rows of \(\boldsymbol{\tilde{Z}}\) to contain consecutive blocks of \(T\) rows for each observation and carry out a regression on \(\boldsymbol{\tilde{y}}\) and \(\boldsymbol{\tilde{Z}}\) transformed by 
\begin{equation}\label{g.def}
\boldsymbol{G}_N := \sigma \boldsymbol{I}_N \otimes \boldsymbol{\Omega}^{-1/2} \in \mathbb{R}^{NT \times NT}
,
\end{equation}
 where \(\otimes\) denotes the Kronecker product (recall that \(\boldsymbol{\Omega}\) was defined in Section \ref{setup.sec}). We will also center both the response \(\boldsymbol{G}_N \boldsymbol{\tilde{y}}\) and each column of the covariates \(\boldsymbol{G}_N \boldsymbol{\tilde{Z}}\), as is common in penalized regression. Denote these centered objects by \(\boldsymbol{y}\) and \(\boldsymbol{Z}\). Then our optimization problem is
\begin{equation}\label{opt.prob}
\boldsymbol{\hat{\beta}}_N^{(q)}(\lambda_N) := \underset{ \boldsymbol{\beta} \in \mathbb{R}^{p_N}}{\arg \min} \left\{ \lVert \boldsymbol{y} -  \boldsymbol{Z} \boldsymbol{\beta} \rVert_2^2 + \lambda_N \lVert \boldsymbol{D}_N \boldsymbol{\beta} \rVert_q^q  \right\}
,
\end{equation}
where \(\lambda_N\) is a tuning parameter. For brevity, we will hereafter write \(\boldsymbol{\hat{\beta}}_N^{(q)}(\lambda_N)  = \boldsymbol{\hat{\beta}}^{(q)}\).

\begin{remark}\label{opt.prob.rmk}
Optimization problem \eqref{opt.prob} is convex if and only if \(q \geq 1\), but we will be interested in solving this problem with \(q \in (0, 1)\) as well. In this case, finding the global optimum can in principle be difficult, but many heuristics exist to solve this optimization problem that seem to work well in practice; see, for example \citet[Section 4.1]{Huang2008} and \citet{Breheny:2015aa}. In Sections \ref{synth.exps.sec} and \ref{sec.data.app} we implement FETWFE by solving \eqref{opt.prob} with the \texttt{grpreg} R package \citep{breheny2009penalized}. Our simulation results seem to align with our theory, including the asymptotic normality of our estimators, suggesting that these heuristics seem to work well for FETWFE.
\end{remark}

\subsection{Causal Estimators}\label{caus.est.sec}

As shown in Theorem \ref{te.interp.prop}(a), under (CNAS), (LINS), and (CCTS) 
\[
\tau_{\text{CATT}} (r, t, \boldsymbol{x}) = \tau_{rt}^* + ( \boldsymbol{x} - \E \left[ \boldsymbol{X}_{i} \mid  W_i = r \right] )^\top \boldsymbol{\rho}_{rt}^*  
\]
for any fixed \(\boldsymbol{x}\). Since under these assumptions (and other regularity conditions) FETWFE consistently estimates \( \tau_{rt}^* \) and \( \boldsymbol{\rho}_{rt}^*  \), we can estimate this quantity by
\begin{equation}\label{catt.t.r.est}
\hat{\tau}_{\text{CATT}} (r, t, \boldsymbol{x}) :=    \hat{\tau}_{rt}^{(q)} + \left( \boldsymbol{x} -  \boldsymbol{\overline{X}}_r \right)^\top \hat{\rho}_{rt}^{(q)} , \qquad r \in \mathcal{R}, t \geq r
,
\end{equation}
where \(\hat{\tau}_{rt}^{(q)}\) and \(\hat{\rho}_{rt}^{(q)}\) are the estimators of \(\tau_{rt}\) and \(\rho_{rt}\) from \(\boldsymbol{\hat{\beta}}^{(q)}\) as defined in \eqref{opt.prob} and
\[
\boldsymbol{\overline{X}}_r  = \frac{1}{N_r} \sum_{\{i: W_i = r\}} \boldsymbol{X}_{i}
\]
is the sample mean of the covariates for units observed in cohort \(r\), with \(N_r := \sum_{i=1}^N \mathbbm{1}\{W_i = r\}\) the total number of units in cohort \(r\). The forms of \eqref{catt.estimand.fixed} and \eqref{catt.estimand.weighted} then naturally suggest the estimators
\begin{equation}\label{catt.estimand.fixed.est}
\sum_{r \in \mathcal{R}} \sum_{t=r}^T \psi_{rt}  \hat{\tau}_{\text{CATT}} (r, t, \boldsymbol{x}) ) 
\end{equation}
and
\begin{equation}\label{catt.estimand.weighted.est}
 \sum_{r \in \mathcal{R}} \sum_{t =r}^T \psi_{rt}     \left(  \frac{\hat{\pi}_r(\boldsymbol{x})}{ \sum_{r' \in \mathcal{R}} \hat{\pi}_{r'}(\boldsymbol{x})}   \right) \hat{\tau}_{\text{CATT}} (r, t, \boldsymbol{x}) 
 ,
\end{equation}
where \(\hat{\pi}_r(\boldsymbol{x})\) is an estimator of the propensity scores \(\pi_r(\boldsymbol{x})\). We will discuss what properties this estimator \(\hat{\pi}_r(\boldsymbol{x})\) should have in more detail in Section \ref{cons.subsec}.

Finally we discuss estimating the average treatment effects marginalized over \(\boldsymbol{X}_i\). As we showed in Theorem \ref{te.interp.prop}(a) and (b), because the treatment effects are interacted with the covariates centered with respect to their cohort means, under either (CCTS) or (CTS) and (CIUN) we can estimate the average treatment effects using
\begin{equation}\label{att.est.cov.r.t}
\hat{\tau}_{\text{ATT}} (r,  t)  :=  \hat{\tau}_{rt}^{(q)}
\end{equation}
as proposed by \citet[Sections 5.3 and 6.3]{wooldridge2021two}. In particular, we propose the estimators
\begin{equation}\label{att.estimator.fixed}
\sum_{r \in \mathcal{R}} \sum_{t=r}^T \psi_{rt} \hat{\tau}_{\text{ATT}} (r,  t) 
\end{equation}
and
\begin{equation}\label{att.estimator.weighted}
 \sum_{r \in \mathcal{R}} \sum_{t =r}^T \psi_{rt}     \frac{N_r}{N_\tau}  \hat{\tau}_{\text{ATT}} (r,  t) 
 \end{equation}
for estimands \eqref{att.estimand.fixed} and \eqref{att.estimand.weighted}, respectively. Notice that estimator \eqref{att.estimator.weighted} avoids requiring the specification of an estimator \(\boldsymbol{\hat{\pi}}(\boldsymbol{x})\) of the generalized propensity scores by instead using the observed cohort counts to estimate the marginal cohort assignment probabilities, as we alluded to at the end of Section \ref{att.subsec}.

\section{Theoretical Results and Discussion}\label{sec.theory}

Before we present theoretical guarantees for FETWFE, we will state a sparsity assumption that we will require along with some additional regularity conditions. We will require fewer assumptions to prove consistency alone than we will to prove restriction selection consistency and asymptotic normality. We emphasize that the conceptually most important assumptions are (CCTSB), (CCTSA), and (CTSA), which validate the difference-in-differences approach; (LINS), which justifies a model that is linear in the covariates \(\boldsymbol{X}_i\); and Assumption S(\(s_N\)), defined below, which assumes sparsity in a sense that matches our penalty \eqref{penalty.term}. (See Remark \ref{d.remark} for some comments on how this can be made more flexible.)

\subsection{Consistency}\label{cons.subsec}

\textbf{Assumption S(\(s_N\)):} Assume the vector \(\boldsymbol{D}_N \boldsymbol{\beta}_N^*\) (where the differences matrix \(\boldsymbol{D}_N\) associated with our penalty \eqref{penalty.term} is defined in Equation \ref{d.expres}) is sparse, with indices \(\mathcal{S} \subseteq [p_N]\) nonzero and the remaining indices in \( [p_N] \setminus \mathcal{S}\) equal to 0, where \( |\mathcal{S}| = s_N\).

Assumption S(\(s_N\)) allows us to have sparsity in the appropriate sense: the differences between parameters that we penalized in \eqref{penalty.term} are sparse, and irrelevant covariates have true coefficient equal to 0. That is, all but \(s_N\) of the restrictions that we expect to hold will in fact hold. Crucially, Assumption S(\(s_N\)) does not require us to know which restrictions to select, or even how many restrictions to select---the value of \(s_N\) does not need to be known. Our consistency and selection consistency theorems will allow \(s_N\) to grow with \(N\) subject to certain constraints in later assumptions.

The remaining regularity conditions for Theorem \ref{main.te.cons.thm} are more routine. Before we describe them in detail, we briefly characterize them informally.
\begin{itemize}
\item Assumption (R1) requires the tails of the covariates to not be too heavy, but accommodates a wide variety of covariate distributions.
\item Assumption (R2) requires the eigenvalues of the empirical Gram matrix to be reasonably well-behaved. In particular, it requires \(\boldsymbol{Z}\) to have full column rank, so \(p_N \leq NT\). Otherwise this is a mild assumption as long as (1) none of the covariates are very highly correlated and (2) none of the cohort propensity scores tend too closely to 0. For example, an overlap assumption that \(\pi_r(\boldsymbol{x}) \geq \pi_{\text{low}}\) for some \(\pi_{\text{low}} > 0\) for all \(N \in \mathbb{N}\), \(r \in \{0\} \cup \mathcal{R}\), and \(\boldsymbol{x}\) in the support of \(\boldsymbol{X}_i\) would help the plausibility of this assumption. 
\item In Assumption (R2), the assumptions on the penalty parameter \(\lambda_N\) are not of practical importance because in practice \(\lambda_N\) can be selected from a set of candidate values by a criterion like BIC. (We show in simulation studies in Section \ref{synth.exps.sec} that the theoretical guarantees from this section seem to hold in practice when \(\lambda_N\) is selected in this way.)
\item Assumption (R3) is essentially a technical assumption. It does suggest that FETWFE may not work well if  \(\boldsymbol{D}_N \boldsymbol{\beta}_N^*\) has many very small entries, or a small number of very large entries.
\end{itemize}
In particular, these assumptions allow the number of covariates \(d_N\) and the number of relevant coefficients \(s_N\) to grow to infinity with \(N\) and allow for unbounded distributions of the covariates.

We now state and describe the assumptions.

\textbf{Assumption (R1):} For each \(i \in [N]\), \(\E [ X_{ij}^4 ] < \infty\) for all \(j \in [d_N]\) and \(\E [  u_{(it)}^4 ] < \infty\) for all \(t \in [T] \). Further, for some \(K < \infty\), 
\[
\sup_{1 \leq N \leq \infty} \left\{ \frac{1}{p_N} \sum_{t=1}^T \sum_{j=1}^{p_N} \E \left[ ( \boldsymbol{Z} \boldsymbol{D}_N^{-1})_{(1t)j}^2 \right] \right\}  \leq K
.
\]
We show later (Lemma \ref{lem.d.express}) that \(\boldsymbol{D}_N\) is invertible. The matrix \(\boldsymbol{Z} \boldsymbol{D}_N^{-1}\) is of interest because it turns out this is the design matrix we will actually use for estimation; see the proof of Theorem \ref{first.thm.fetwfe} for details. Because the singular values of \(\boldsymbol{D}_N^{-1}\) are bounded away from 0 and from above for all \(N\) even if \(d_N \to \infty\) (see Lemma \ref{d.sing.val.lem}), this is qualitatively similar to making the same statement about \(\boldsymbol{Z}\) itself. We discuss a closely related point in more detail below.

Before we state Assumption (R2), we will define some needed notation. For any matrix \(\boldsymbol{A} \in \mathbb{R}^{N \times p_N}\), denote the empirical Gram matrix by \(\boldsymbol{\hat{\Sigma}}(\boldsymbol{A})\):
\begin{equation}\label{gram.mat.def}
\boldsymbol{\hat{\Sigma}}(\boldsymbol{A}) :=  \frac{1}{N} \boldsymbol{A}^\top \boldsymbol{A} \in \mathbb{R}^{p_N \times p_N}
.
\end{equation}
Notice that if \(\boldsymbol{A}\) has centered columns then \(\boldsymbol{\hat{\Sigma}}(\boldsymbol{A})\) is the estimated covariance matrix. Let \(e_{1N}\) and \(e_{2N}\) be the smallest and largest eigenvalues of \(\boldsymbol{\hat{\Sigma}}(\boldsymbol{Z})\).(Since \(\boldsymbol{\hat{\Sigma}}(\boldsymbol{Z})\) is random, \(e_{1N}\) and \(e_{2N}\) are as well.)

\textbf{Assumption (R2):} Assume \(e_{1N}\) is almost surely upper-bounded by a constant for all \(N\) and \(h_N' \xrightarrow{a.s.} 0\), where
\begin{equation}\label{h.n.prime.def}
h_N' :=  \sqrt{\frac{p_N + \lambda_N s_N}{N e_{1N} }}
\end{equation}
and \(\xrightarrow{a.s.}\) denotes almost sure convergence. Also, assume \( \lim_{N \to \infty} \lambda_N \sqrt{ \frac{s_N }{N}} = 0 \). 

 \citet{kock2013oracle} points out that \eqref{h.n.prime.def} requires the design matrix to be full rank, so this assumption requires that \(p_N\) grows slower than \(N \), \(\lambda_N\) does not grow too quickly, and \(e_{1N}\) does not vanish too quickly. (In practice, in finite samples \(\lambda_N\) can be selected by cross-validation or a criterion like BIC, similarly to the lasso.) Under the qualitative assumption that the eigenvalues of \(\boldsymbol{\hat{\Sigma}}(\boldsymbol{Z})\) and \(\boldsymbol{\hat{\Sigma}}(\boldsymbol{G}_N\boldsymbol{\tilde{Z}})\) are ``close,"\footnote{This can be made formal by noting that centering the columns of \(\boldsymbol{G}_N\boldsymbol{\tilde{Z}}\) can be characterized as multiplication on the left by a centering matrix \(\boldsymbol{C}\), then bounding the singular values of \(\boldsymbol{C}\). See, for example, Section 8.1 of \citet{ding2024linear} for further details.} (R2) roughly requires that \(\boldsymbol{G}_N\boldsymbol{\tilde{Z}}\) is full rank, and we can provide sufficient conditions for this to hold. First, the singular values of \(\boldsymbol{G}_N\) are bounded from above and below by positive constants for all \(N\) (see Lemma \ref{g.sing.val.lem} in the appendix), so \(\boldsymbol{\tilde{Z}}\) being full rank is enough for \(\boldsymbol{G}_N\boldsymbol{\tilde{Z}}\) to be full rank. For \(\boldsymbol{\tilde{Z}}\) to be full rank, we require at least \(d_N + 1\) observed units in each cohort (see Lemma \ref{rank.cond2} in the appendix), all of the marginal cohort probabilities must be large enough relative to the growth rate of \(d_N\) as \(N \to \infty\), and the distribution of \(\boldsymbol{X}_i\) must be well-behaved as \(N \to \infty\).

\textbf{Assumption (R3):} There exist constants \(0 < b_0 < b_1 < \infty\) such that \(b_0 \leq \min_{j \in \mathcal{S}} \left\{ \left| ( \boldsymbol{D}_N \boldsymbol{\beta}_N^*)_j \right| \right\} \leq  \max_{j \in \mathcal{S}} \left\{ \left| ( \boldsymbol{D}_N \boldsymbol{\beta}_N^*)_j \right| \right\}  \leq b_1\).

The lower bound in Assumption (R3) is essentially a signal strength assumption.
 
 Note that we allow \(s_N \to \infty\) and \(p_N \to \infty\) as long as the above assumptions are satisfied.

 We are now prepared to state our consistency theorem.

\begin{theorem}[Consistency of FETWFE]\label{main.te.cons.thm}

Assume that Assumptions (CNAS), (CCTSB), and (LINS) hold, as well as Assumptions (F1), (F2), S(\(s_N\)), and (R1) - (R3). Let \(q > 0\).

\begin{enumerate}[(a)]

\item Suppose that either Assumptions (CTSA) and (CIUN) hold or Assumption (CCTSA) holds. Then for any set of finite constants \(\{\psi_{rt}\}\),
\[
 \left| \sum_{r \in \mathcal{R}} \sum_{t=r}^T \psi_{rt}  \left( \hat{\tau}_{\text{ATT}} (  r,t )  -  \tau_{\text{ATT}} (r,t )  \right) \right|  = \mathcal{O}_\mathbb{P} \left( \min\{h_N, h_N'\}\right)
 \]
 and
 \[
 \left|  \sum_{r \in \mathcal{R}} \sum_{t =r}^T \psi_{rt} \left( \frac{N_r}{N_\tau}     \hat{\tau}_{\text{ATT}} ( r, t) - \tilde{\pi}_r     \tau_{\text{ATT}} (r, t) \right) \right|  = \mathcal{O}_\mathbb{P} \left( \min\{h_N, h_N'\}\right)
  ,
  \]
where
\begin{equation}\label{h.n.def}
h_N :=  \frac{1}{e_{1N}} \sqrt{\frac{p_N}{N}} 
\end{equation}
and \(h_N'\) is as defined in \eqref{h.n.prime.def}. 

\item
Suppose Assumption (CCTSA) holds and \(d_N = d\) is fixed. For all \(N\) and all \(r \in \{0\} \cup   \mathcal{R}\), assume all of the eigenvalues of \( \Cov ( \boldsymbol{X}_i \mid W_i = r)\) are bounded between \(\lambda_{\text{min}} > 0\) and \(\lambda_{\text{max}} < \infty\). Then for any set of finite constants \(\{\psi_{rt}\}\) and any fixed \(\boldsymbol{x}\) in the support of \(\boldsymbol{X}_i\),
\begin{align*}
\left| \sum_{r \in \mathcal{R}} \sum_{t=r}^T \psi_{rt} \left(    \hat{\tau}_{\text{CATT}}(r, t, \boldsymbol{x}) -  \tau_{\text{CATT}}(r, t, \boldsymbol{x})  \right) \right| =
\mathcal{O}_{\mathbb{P}}( \min\{h_N, h_N'\}  )  
.
\end{align*}
Further, if \(\hat{\pi}_r(\boldsymbol{x})\) is an estimator of \(\pi_r(\boldsymbol{x})\) that satisfies \( | \hat{\pi}_r(\boldsymbol{x}) -  \pi_r(\boldsymbol{x})| =  \mathcal{O}_{\mathbb{P}}(a_N)   \) for a decreasing sequence \(a_N\) for each \(r \in \{0\} \cup \mathcal{R}\) and all \(\boldsymbol{x} \) in the support of \(\boldsymbol{X}_i\), then 
\begin{align*}
& \left|  \sum_{r \in \mathcal{R}} \sum_{t=r}^T \psi_{rt} \left(   \frac{\hat{\pi}_r(\boldsymbol{x})}{\sum_{r' \in \mathcal{R}} \hat{\pi}_{r'}(\boldsymbol{x})}   \hat{\tau}_{\text{CATT}}(r, t, \boldsymbol{x}) -  \tilde{\pi}_r(\boldsymbol{x}) \tau_{\text{CATT}}(r, t, \boldsymbol{x})  \right) \right| 
\\ = ~ & 
\mathcal{O}_{\mathbb{P}}( \min\{h_N, h_N'\} \vee a_N )  
.
\end{align*}

\end{enumerate}

\end{theorem}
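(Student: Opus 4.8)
The plan is to reduce both parts to a single $\ell_2$-consistency statement for the full coefficient vector, namely $\lVert \boldsymbol{\hat{\beta}}^{(q)} - \boldsymbol{\beta}_N^* \rVert_2 = \mathcal{O}_{\mathbb{P}}(\min\{h_N, h_N'\})$, and then to control the remaining, lower-order errors coming from the empirical cohort proportions, the sample covariate means, and the propensity estimates. To obtain the $\ell_2$ bound I would first reparametrize \eqref{opt.prob} by $\boldsymbol{\theta} = \boldsymbol{D}_N \boldsymbol{\beta}$; since $\boldsymbol{D}_N$ is invertible (Lemma \ref{lem.d.express}) this turns the objective into the ordinary bridge problem $\arg\min_{\boldsymbol{\theta}} \{ \lVert \boldsymbol{y} - (\boldsymbol{Z}\boldsymbol{D}_N^{-1})\boldsymbol{\theta} \rVert_2^2 + \lambda_N \lVert \boldsymbol{\theta}\rVert_q^q\}$ on the transformed design $\boldsymbol{Z}\boldsymbol{D}_N^{-1}$, whose Gram behavior is controlled by (R1)--(R2) and the singular-value bounds on $\boldsymbol{D}_N^{-1}$ (Lemma \ref{d.sing.val.lem}). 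Applying the extensions of Kock's bridge theory (Theorems \ref{prop.2i} and \ref{prop.ext.2}) to this transformed problem yields consistency of $\boldsymbol{\hat{\theta}}$, and hence of $\boldsymbol{\hat{\beta}}^{(q)} = \boldsymbol{D}_N^{-1}\boldsymbol{\hat{\theta}}$, at the stated rate. Theorem \ref{te.interp.prop} then identifies the coordinates: under (CNAS), (CCTSB), (LINS) together with either (CCTSA) or (CTSA) and (CIUN) we have $\tau_{rt}^* = \tau_{\text{ATT}}(r,t)$, and under the (CCTSA) branch additionally $\tau_{\text{CATT}}(r,t,\boldsymbol{x}) = \tau_{rt}^* + (\boldsymbol{x} - \E[\boldsymbol{X}_i \mid W_i = r])^\top \boldsymbol{\rho}_{rt}^*$.

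For part (a), since $\hat{\tau}_{\text{ATT}}(r,t) = \hat{\tau}_{rt}^{(q)}$ is a single coordinate of $\boldsymbol{\hat{\beta}}^{(q)}$, the bound $|\hat{\tau}_{rt}^{(q)} - \tau_{rt}^*| \le \lVert \boldsymbol{\hat{\beta}}^{(q)} - \boldsymbol{\beta}_N^*\rVert_2$, together with the finiteness of the weights $\{\psi_{rt}\}$ and of the index set $\{(r,t): r \in \mathcal{R}, t \ge r\}$ (which has fixed cardinality $\mathfrak{W}$ since $T$, hence $R$, is fixed, so summing does not inflate the rate), immediately gives the first display at rate $\min\{h_N, h_N'\}$. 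For the weighted version I would write $\tfrac{N_r}{N_\tau}\hat{\tau}_{rt}^{(q)} - \tilde{\pi}_r \tau_{rt}^* = \tfrac{N_r}{N_\tau}(\hat{\tau}_{rt}^{(q)} - \tau_{rt}^*) + (\tfrac{N_r}{N_\tau} - \tilde{\pi}_r)\tau_{rt}^*$; the first summand is $\mathcal{O}_{\mathbb{P}}(\min\{h_N, h_N'\})$ because $N_r/N_\tau \le 1$, and the second is $\mathcal{O}_{\mathbb{P}}(N^{-1/2})$ since the ratio of empirical cohort counts $N_r/N_\tau$ converges to $\tilde{\pi}_r = \mathbb{P}(W_i = r \mid W_i \neq 0)$ at the parametric rate under (F2) while $\tau_{rt}^*$ is a fixed finite constant. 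Because (R2) forces $e_{1N}$ to be bounded above and $p_N \ge 1$, one checks $N^{-1/2} = \mathcal{O}(\min\{h_N, h_N'\})$, so this term is absorbed.

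For part (b), I would expand $\hat{\tau}_{\text{CATT}}(r,t,\boldsymbol{x}) - \tau_{\text{CATT}}(r,t,\boldsymbol{x}) = (\hat{\tau}_{rt}^{(q)} - \tau_{rt}^*) + (\boldsymbol{x} - \boldsymbol{\overline{X}}_r)^\top(\hat{\rho}_{rt}^{(q)} - \boldsymbol{\rho}_{rt}^*) + (\E[\boldsymbol{X}_i \mid W_i = r] - \boldsymbol{\overline{X}}_r)^\top \boldsymbol{\rho}_{rt}^*$. The first two terms are $\mathcal{O}_{\mathbb{P}}(\min\{h_N, h_N'\})$: the coordinate and subvector errors are dominated by $\lVert \boldsymbol{\hat{\beta}}^{(q)} - \boldsymbol{\beta}_N^*\rVert_2$, and $\lVert \boldsymbol{x} - \boldsymbol{\overline{X}}_r \rVert_2 = \mathcal{O}_{\mathbb{P}}(1)$ since with $d$ fixed and $\Cov(\boldsymbol{X}_i \mid W_i = r)$ eigenvalue-bounded, $\boldsymbol{\overline{X}}_r \xrightarrow{p} \E[\boldsymbol{X}_i \mid W_i = r]$. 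The third term is $\mathcal{O}_{\mathbb{P}}(N^{-1/2})$ (a sample-mean deviation times a fixed finite vector), again absorbed, and summing over the fixed index set with finite weights gives the first display of (b). For the propensity-weighted display, set $\hat{\tilde{\pi}}_r(\boldsymbol{x}) = \hat{\pi}_r(\boldsymbol{x})/\sum_{r'} \hat{\pi}_{r'}(\boldsymbol{x})$ and note that, since $\sum_{r'} \pi_{r'}(\boldsymbol{x}) = \mathbb{P}(W_i \ne 0 \mid \boldsymbol{X}_i = \boldsymbol{x})$ is a positive constant for fixed $\boldsymbol{x}$ by (F2), continuity of the ratio map together with $|\hat{\pi}_{r'}(\boldsymbol{x}) - \pi_{r'}(\boldsymbol{x})| = \mathcal{O}_{\mathbb{P}}(a_N)$ yields $|\hat{\tilde{\pi}}_r(\boldsymbol{x}) - \tilde{\pi}_r(\boldsymbol{x})| = \mathcal{O}_{\mathbb{P}}(a_N)$. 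I would then split $\hat{\tilde{\pi}}_r \hat{\tau}_{\text{CATT}} - \tilde{\pi}_r \tau_{\text{CATT}} = \hat{\tilde{\pi}}_r(\hat{\tau}_{\text{CATT}} - \tau_{\text{CATT}}) + (\hat{\tilde{\pi}}_r - \tilde{\pi}_r)\tau_{\text{CATT}}$, bounding the first by $\min\{h_N,h_N'\}$ (using $\hat{\tilde{\pi}}_r(\boldsymbol{x}) = \mathcal{O}_{\mathbb{P}}(1)$ and the first display) and the second by $a_N$, which together give the rate $\min\{h_N, h_N'\} \vee a_N$.

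The genuinely hard analytic work---establishing the bridge-regression $\ell_2$ rate uniformly over the growing dimension $p_N$ while allowing $s_N, d_N \to \infty$, and accommodating the GLS transform $\boldsymbol{G}_N$, the column-centering, and the $\boldsymbol{\theta} = \boldsymbol{D}_N\boldsymbol{\beta}$ reparametrization---is carried out in the appendix (Theorems \ref{prop.2i} and \ref{prop.ext.2}) and can be invoked here. The main obstacle inside the present proof is therefore bookkeeping rather than estimation theory: I must (i) confirm that the estimand-to-coefficient identity of Theorem \ref{te.interp.prop} genuinely holds in both the (CCTSA) and the (CTSA)-plus-(CIUN) regimes (the latter supporting only the marginal identities of part (a), which is precisely why part (b) restricts to (CCTSA)), and (ii) verify that each auxiliary error---the $N^{-1/2}$ rates from $N_r/N_\tau$ and $\boldsymbol{\overline{X}}_r$, and the $a_N$ rate from the propensity normalization---is correctly dominated by or combined with $\min\{h_N, h_N'\}$, which reduces to the elementary comparison $N^{-1/2} = \mathcal{O}(\min\{h_N, h_N'\})$ afforded by (R2).
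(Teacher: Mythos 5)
Your proposal is correct and follows essentially the same route as the paper's own proof: reduce everything to the $\ell_2$ bound $\lVert \boldsymbol{\hat{\beta}}^{(q)} - \boldsymbol{\beta}_N^* \rVert_2 = \mathcal{O}_{\mathbb{P}}(\min\{h_N, h_N'\})$ from the reparametrized bridge problem (Theorem \ref{first.thm.fetwfe}(b)), identify estimands via Theorem \ref{te.interp.prop}, and absorb the auxiliary $\mathcal{O}_{\mathbb{P}}(N^{-1/2})$ errors from $N_r/N_\tau$ and $\boldsymbol{\overline{X}}_r$ and the $\mathcal{O}_{\mathbb{P}}(a_N)$ error from the normalized propensities, using the comparison $N^{-1/2} = \mathcal{O}(\min\{h_N,h_N'\})$ afforded by (R2). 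The only difference is organizational: the paper first proves the more general Theorem \ref{main.te.cons.thm.gen} (arbitrary differentiable $f_r$ of the cohort probabilities, handled via the multinomial MLE and the delta method, with the ratio consistency and sample-mean rates isolated in Lemmas \ref{tilde.pi.r.o.p.cond.lemma} and \ref{lem.asym.norm.cond.means}) and then specializes, whereas you treat the specific ratio weights directly, which is equivalent in substance.
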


\begin{proof} Provided in Section \ref{app.prove.main.res} of the appendix.
\end{proof}

Theorem \ref{main.te.cons.thm} shows that all of our causal estimators are consistent and characterizes their rates of consistency. In particular, it also shows that the classes of marginal parallel trends estimators \eqref{att.estimator.fixed} and \eqref{att.estimator.weighted} are consistent under either (CCTSA) or (CIUN) and (CTSA) for the marginal average treatment effects \eqref{att.estimand.fixed} and \eqref{att.estimand.weighted}. Although we require (CCTSB) to hold regardless of whether (CIUN) holds, (CCTSB) also depends only on the observed outcomes and is testable. In the next section, we show that under mild assumptions FETWFE automatically provides a valid estimate of whether (CIUN) holds (by estimating whether the \(\boldsymbol{\xi}_t^*\) coefficients all equal \(\boldsymbol{0}\)).

To interpret the rate of convergence \( \min\{h_N, h_N'\}\), consider the upper bound \(h_N\) from \eqref{h.n.def}. Suppose the minimum eigenvalue \(e_{1N}\) is bounded away from 0 with high probability, which happens under relatively mild assumptions in our \(p_N \leq NT\) setting (see our more detailed discussion later in Section \ref{add.assum} about Assumption R6). Then \( \mathcal{O}_\mathbb{P} \left( \min\{h_N, h_N'\}\right) =  \mathcal{O}_\mathbb{P} \left( \sqrt{p_N/N} \right)\); see Lemma \ref{conv.claim.lem} in the appendix.

In general we will focus more on \(q \in (0, 1)\) as required by Theorems \ref{te.sel.cons.thm}, \ref{te.oracle.thm}, and \ref{te.asym.norm.thm}, but we note that Theorem \ref{main.te.cons.thm} holds for any \(q > 0\). This is useful because the convex lasso (\(q = 1\)) and ridge (\(q = 2\)) optimization problems, for example, may be more tractable on very large data sets than the nonconvex bridge estimator with \(q < 1\).

To give a concrete example of an estimator \(\boldsymbol{\hat{\pi}}(\boldsymbol{x})\), in the setting of Theorem \ref{main.te.cons.thm}(b) where \(d_N = d\) is fixed, the multinomial logit model is \(1/\sqrt{N}\)-consistent for the generalized propensity scores by standard maximum likelihood theory under correct specification. If this model is overly simplistic or \(d \) is large relative to \(N\), regularized machine learning methods with theoretical convergence guarantees---like deep neural networks \citep[see Lemma 9c]{farrell2021deep}, random forests \citep{gao2022towards}, the \(\ell_1\)-penalized classifiers of \citet{levy2023generalization}, or the group lasso multinomial logit model of \citet{farrell2015robust}---would also work. On the other hand, since the cohorts have a natural ordering, we could add more structure by using an ordinal response estimator like the proportional odds model \citep{mccullagh1980regression}.

It may be possible to relax Assumption (LINS) so that linearity only has to hold approximately, along the lines of e.g. Condition ASTE in \citet{belloni2014inference}. (At a high level, we might expect that if approximate linearity holds so that the error due to misspecification is on the same order as the estimation error of the population least squares model, we can achieve convergence at the same rate as when the linearity assumption holds exactly.)

It may also be possible to relax the fixed \(d\) assumption in part \((b)\) by using high-dimensional central limit theorems like those in \citet{lopes2022central} and \citet{chernozhukov2023nearly}, to name two recent examples among many papers containing such results.

Lastly, we point out that under Theorem \ref{main.te.cons.thm} the generalized propensity scores and treatment effects can be estimated on the same data sets; that is, the full data set can be used for both. Intuitively, this is because these estimators are used for the distinct tasks of estimating the effects \(\tau_{\text{CATT}} (r, t, \boldsymbol{x})\) and estimating the weights used to combine them in a weighted average. For technical details, see the proof of Theorem \ref{main.te.cons.thm} in Appendix \ref{app.prove.main.res}.

In the appendix, we present an extension of Theorem \ref{main.te.cons.thm}, Theorem \ref{main.te.cons.thm.gen}, which proves consistency for a set of estimators that allows  for much broader functions of the estimated cohort membership probabilities than the classes \eqref{att.estimator.weighted} and \eqref{catt.estimand.weighted.est}. This result includes, for example, all of the estimands from \citet{callaway2021difference} as special cases.

\subsection{Restriction Selection Consistency}

Again, we first state some additional needed regularity conditions and then present our selection consistency result, Theorem \ref{te.sel.cons.thm}. Theorem \ref{te.sel.cons.thm} shows that with probability tending towards one FETWFE identifies the correct restrictions, resulting in fusing together parameters that equal each other, screening out irrelevant covariates, and improving estimation efficiency. At a high level, Assumptions (R4) and (R5) add the following requirements:
\begin{itemize}
\item Assumption (R4) is most plausible if the covariates have bounded support and if \(d_N\) is not increasing in \(N\). Intuitively, this means that in finite samples results relying on Assumption (R4) will be most likely to hold when \(NT\) is reasonably large compared to \(p_N\), though we show in the simulation studies in Section \ref{synth.exps.sec} that in practice the conclusion of Theorem \ref{te.sel.cons.thm} seems to hold even when \(N\) is not all that large and \(p_N\) is fairly close to \(NT\).
\item Assumption (R5) is a technical assumption that does not require much more than (R2). (R5) is easier to satisfy if \(d_N\) is increasing only slowly in \(N\) and if the sparsity \(s_N\) is fixed. (In finite samples, we would expect better results if \(p_N\) and the sparsity \(s_N\) are small relative to \(NT\).)
\end{itemize}

\textbf{Assumption (R4):} There exists a positive constant \(e_{\text{max}} < \infty\) such that \( e_{2N} \leq e_{\text{max}}\) almost surely (recall from Section \ref{cons.subsec} that \(e_{2N}\) is the largest eigenvalue of \(\boldsymbol{\hat{\Sigma}}(\boldsymbol{Z})\)).

A sufficient condition for Assumption (R4) is that the distribution of \(\boldsymbol{X}_i\) has bounded support in the sense that \(\lVert \boldsymbol{X}_i \rVert_2 \leq B \) almost surely for some \(B < \infty\).

\textbf{Assumption (R5):}
\[
 \lambda_N \frac{e_{1N}^{2- q}}{ \sqrt{N^q p_N^{2 - q }}} \xrightarrow{a.s.} \infty
 .
\]
We will briefly examine how Assumption (R5) interacts with Assumption (R2). For simplicity, assume the minimum eigenvalue \ \(e_{1N} \geq c\) for some \(c > 0\) with probability tending towards 1, so that we can ignore \(e_{1N}\) asymptotically (using a similar argument to the one in the proof of Lemma \ref{conv.claim.lem} in the appendix). Then Assumption (R5) is equivalent to the deterministic condition
\[
 \lambda_N \frac{1}{ N^{q/2} p_N^{1 - q/2 }}  \to \infty
 .
\]
If we also assume \(s_N = s\) is fixed, we have from Assumption (R2) that
\begin{align*}
\frac{\lambda_N }{N } \to  0 \qquad \text{and} \qquad 
 \lim_{N \to \infty} \frac{ \lambda_N}{N^{1/2}}   =  0
.
\end{align*}
We will require \(q \in (0,1)\), and we see that under these assumptions for (R5) and (R2) to hold simultaneously we require \(N^{q/2}p_N^{1-q/2}\) to grow slower than \(N^{1/2}\); that is, \(p_N\) must grow slower than
\[
\left( \frac{N^{1/2}}{N^{q/2}} \right)^{2/(2-q)} 
 = N^{(1-q)/(2-q)}
.
\]

Even if \(e_{1N}\) is not bounded away from 0 with high probability, Assumption (R5) requires that \(e_{1N}\) does not vanish too quickly. As we discussed in reference to Assumption (R2), this requires that the probability of being in any one cohort does not vanish too quickly as \(N \to \infty\). It also requires that \(\lambda_N\) not decrease too quickly.

Notice that other than these relatively mild regularity conditions on the eigenvalues of \(\boldsymbol{\hat{\Sigma}}(\boldsymbol{Z})\), we do not require a condition like the \textit{irrepresentable condition} \citep{Zhao2006} or \textit{neighborhood stability condition} \citep{meinshausen2006high} that are required for selection consistency of the lasso. However, our restriction selection consistency result will only work for \(q < 1\); we do not prove selection consistency of FETWFE with a lasso (\(q =1\)) penalty.

\begin{theorem}[Selection consistency]\label{te.sel.cons.thm}

 For the FETWFE estimator \(\boldsymbol{\hat{\beta}}^{(q)}\) defined in \eqref{opt.prob}, define
 \[
 \hat{\mathcal{S}} := \left\{j \in [p_N]: ( \boldsymbol{D}_N \boldsymbol{\hat{\beta}}^{(q)})_j \neq 0 \right\}
 \]
 to be the set of nonzero values in \(\boldsymbol{D}_N \boldsymbol{\hat{\beta}}^{(q)}\). Recall from Assumption S(\(s_N\)) that
  \[
\mathcal{S} = \left\{j \in [p_N]: ( \boldsymbol{D}_N \boldsymbol{\beta}_N^*)_j \neq 0 \right\}
.
 \]
Assume that Assumptions (CNAS), (CCTSB), and (LINS) hold, as well as Assumptions (F1), (F2), S(\(s_N\)), and (R1) - (R5). Assume \(q \in (0,1)\). Then \(\lim_{N \to \infty} \mathbb{P} \left(\hat{\mathcal{S}} = \mathcal{S} \right) = 1\).

\end{theorem}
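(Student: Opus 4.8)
The plan is to reduce optimization problem \eqref{opt.prob} to a standard $\ell_q$ bridge regression through the change of variables $\boldsymbol{\theta} := \boldsymbol{D}_N \boldsymbol{\beta}$. Since $\boldsymbol{D}_N$ is invertible (Lemma \ref{lem.d.express}), minimizing over $\boldsymbol{\beta}$ is equivalent to minimizing over $\boldsymbol{\theta}$, and \eqref{opt.prob} becomes
\[
\boldsymbol{\hat{\theta}} = \underset{\boldsymbol{\theta} \in \mathbb{R}^{p_N}}{\arg\min} \left\{ \lVert \boldsymbol{y} - \boldsymbol{Z}\boldsymbol{D}_N^{-1} \boldsymbol{\theta} \rVert_2^2 + \lambda_N \lVert \boldsymbol{\theta} \rVert_q^q \right\},
\]
with $\boldsymbol{\hat{\theta}} = \boldsymbol{D}_N \boldsymbol{\hat{\beta}}^{(q)}$ and true parameter $\boldsymbol{\theta}^* := \boldsymbol{D}_N \boldsymbol{\beta}_N^*$. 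The events $\{\hat{\mathcal{S}} = \mathcal{S}\}$ and $\{\mathrm{supp}(\boldsymbol{\hat{\theta}}) = \mathrm{supp}(\boldsymbol{\theta}^*)\}$ coincide, so it suffices to prove support recovery for a bridge estimator with design matrix $\boldsymbol{\tilde{X}} := \boldsymbol{Z}\boldsymbol{D}_N^{-1}$---precisely the object appearing in Assumption (R1). By Theorem \ref{first.thm.fetwfe}, under (CNAS), (CCTSB), and (LINS) this transformed model is correctly specified after the GLS whitening \eqref{g.def}, so $\boldsymbol{\theta}^*$ is the true regression vector and the errors are conditionally mean-zero and homoskedastic; this lets me import the bridge-regression machinery of \citet{kock2013oracle} as extended in Theorems \ref{prop.2i} and \ref{prop.ext.2}.

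The second step is to verify that Assumptions (R1)--(R5) and S($s_N$) supply exactly the hypotheses those theorems require of $\boldsymbol{\tilde{X}}$, $\boldsymbol{\theta}^*$, and $\lambda_N$. Because the singular values of $\boldsymbol{D}_N^{-1}$ are bounded above and away from zero uniformly in $N$ (Lemma \ref{d.sing.val.lem}), the eigenvalue controls (R2) and (R4) on $\boldsymbol{\hat{\Sigma}}(\boldsymbol{Z})$ transfer, up to constants, to $\boldsymbol{\hat{\Sigma}}(\boldsymbol{\tilde{X}})$; Assumption (R1) controls the fourth moments and the average column energy of $\boldsymbol{\tilde{X}}$; Assumption S($s_N$) is sparsity of $\boldsymbol{\theta}^*$; Assumption (R3) provides the signal-strength bracket $b_0 \leq |\theta_j^*| \leq b_1$ on $\mathcal{S}$; and (R2) together with (R5) pin down the admissible rate of $\lambda_N$ relative to $N$, $p_N$, $s_N$, and $e_{1N}$.

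With the hypotheses in hand I would split support recovery into two halves. For the absence of false negatives, I would invoke the estimation-consistency bound (as in Theorem \ref{main.te.cons.thm}, giving $\lVert \boldsymbol{\hat{\theta}} - \boldsymbol{\theta}^* \rVert = \mathcal{O}_\mathbb{P}(\min\{h_N,h_N'\})$); since (R3) keeps $|\theta_j^*| \geq b_0 > 0$ for $j \in \mathcal{S}$ while the estimation error vanishes, each $\hat{\theta}_j$ with $j\in\mathcal S$ is bounded away from zero with probability tending to one, so no true nonzero is missed. For the absence of false positives---the hard part---I would argue from the first-order stationarity condition of the bridge objective: at any stationary point with $\hat{\theta}_j \neq 0$,
\[
\lambda_N q \, |\hat{\theta}_j|^{q-1} = 2 \left| \boldsymbol{\tilde{X}}_{\cdot j}^\top (\boldsymbol{y} - \boldsymbol{\tilde{X}} \boldsymbol{\hat{\theta}}) \right|.
\]
For $j \notin \mathcal{S}$ the right-hand side is controlled by the moment and eigenvalue conditions together with the estimation-consistency rate, while the left-hand side diverges like $|\hat{\theta}_j|^{q-1}$ as $\hat{\theta}_j \to 0$ because $q < 1$. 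Assumption (R5) is exactly the rate balance ensuring the penalty term dominates, forcing $\hat{\theta}_j = 0$ for all $j \notin \mathcal{S}$ with probability tending to one.

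The main obstacle is this no-over-selection step. Unlike the lasso, the $\ell_q$ penalty with $q \in (0,1)$ is nonconvex, so a minimizer cannot be characterized by a single clean subgradient inclusion; the argument must instead exploit the super-linear blow-up of the penalty derivative at the origin and carefully track how the residual correlations $\boldsymbol{\tilde{X}}_{\cdot j}^\top(\boldsymbol{y}-\boldsymbol{\tilde{X}}\boldsymbol{\hat{\theta}})$ scale with $p_N$, $s_N$, and $e_{1N}$. Making the probabilistic bound on these correlations tight enough to mesh with (R5)---and doing so uniformly over the $p_N - s_N$ zero coordinates---is the delicate calculation, and it is the reason selection consistency is available only for $q < 1$.
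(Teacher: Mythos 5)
Your proposal is correct and takes essentially the same approach as the paper: the paper's proof (via Theorem \ref{first.thm.fetwfe}(c)) likewise reparameterizes \eqref{opt.prob} as a standard bridge regression in $\boldsymbol{\theta} = \boldsymbol{D}_N\boldsymbol{\beta}$ with design $\boldsymbol{Z}\boldsymbol{D}_N^{-1}$, uses Lemma \ref{d.sing.val.lem} to transfer the eigenvalue/moment conditions (R1)--(R5) to the assumptions of \citet{kock2013oracle}, and then (in Theorem \ref{prop.2i}) splits support recovery into no-false-negatives, obtained from estimation consistency plus the beta-min bound in (R3), and no-false-positives, obtained from Kock's Lemma 3. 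The only difference is presentational: where the paper imports the no-over-selection step as a black box from Kock's Lemma 3, you sketch a re-derivation of that lemma's internal first-order-condition argument (the blow-up of the $\ell_q$ penalty derivative at zero balanced against the residual correlations via (R5)), which is the same mechanism underlying the cited result.
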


\begin{proof} This is immediate from Theorem \ref{first.thm.fetwfe}(c) in the appendix.
\end{proof}

Theorem \ref{te.sel.cons.thm} shows that FETWFE fulfills its promise to identify the correct restrictions in accordance with our \(\ell_q\) penalty structure, as discussed in Section \ref{sec.meth}. This serves our goal of improving estimation by avoiding unnecessarily estimating parameters that are in fact equal. 

FETWFE also screens out irrelevant covariates and interactions. As mentioned in Section \ref{lins.sec}, if Assumptions (LINS) and (CIUN) both hold than the \(\boldsymbol{\xi}_t^*\) coefficients must all equal 0. If this is true, in large enough samples FETWFE will estimate \(\boldsymbol{\xi}_t^*\) to equal exactly \(\boldsymbol{0}\) with high probability.

The proof of Theorem \ref{te.sel.cons.thm} relies on an extension of Theorem 2 of \citet{kock2013oracle} (Theorem \ref{prop.2i} in the appendix) to show that not only does the bridge estimator exclude false selections, it also selects all of the relevant components of \( \boldsymbol{D}_N \boldsymbol{\beta}_N^*\) with probability tending to 1. Theorem \ref{prop.2i} requires no added assumptions from Theorem 2 of \citet{kock2013oracle}.

One consequence of Theorem \ref{te.sel.cons.thm} is that if \(\tau_{\text{ATT}} (r, t) = 0\) for any \(r\), then 
\[
\lim_{N \to \infty} \mathbb{P} \left( \hat{\tau}_{\text{ATT}} ( r, t) = 0\right) = 1.
\]
The same is true of all of the other causal estimators. This is stronger than convergence in probability to 0, and this stronger notion of convergence turns out to be a key ingredient in the proof of Theorems \ref{te.oracle.thm} and \ref{te.asym.norm.thm}, to come.

\subsection{Asymptotic Normality and Oracle Property}\label{add.assum}

Our asymptotic convergence results (Theorems \ref{te.oracle.thm} and \ref{te.asym.norm.thm} below) require one additional regularity condition, an eigenvalue condition which does not require much more than Assumptions (R2) and (R5).

\textbf{Assumption (R6):} The minimum eigenvalue of 
\begin{equation}\label{min.eigen.eq}
\E \left[  \boldsymbol{\hat{\Sigma}} \left(  \boldsymbol{Z} \boldsymbol{D}_N^{-1} \right)\right] 
\end{equation}
 is greater than or equal to a fixed \(\delta > 0\) for all \(N \) sufficiently large, and
\begin{equation}\label{new.assum.2}
\frac{\lambda_N}{ e_{1N} \sqrt{N } }   \in  o_{\mathbb{P}} \left( 1 \right) 
 .
\end{equation}
Further, for all \(N\) and all \(r \in \{0\} \cup   \mathcal{R}\), all of the eigenvalues of \( \Cov ( \boldsymbol{X}_i \mid W_i = r)\) are bounded between \(\lambda_{\text{min}} > 0\) and \(\lambda_{\text{max}} < \infty\).

The assumption that the minimum eigenvalue of \eqref{min.eigen.eq} is bounded away from 0 is fairly mild since we already require \(\boldsymbol{\hat{\Sigma}}(\boldsymbol{Z})\) to be full rank and for its minimum eigenvalue to not vanish too quickly, and the minimum singular value of \(\boldsymbol{D}_N\) is bounded from below by a positive constant; see Lemma \ref{d.sing.val.lem} in the appendix. Further, this assumption is stronger than necessary for the sake of interpretability; we only require a bound on the minimum eigenvalue of a different matrix \eqref{smaller.pop.eq} for all \(N\). This condition holding for \eqref{new.assum.2} is sufficient but not necessary.

Condition \eqref{new.assum.2} is another assumption that requires \(e_{1N}\) to not vanish too quickly, which is qualitatively similar to Assumptions (R2) and (R5). Holding \(s = s_N\) fixed (as is required in Theorems \ref{te.oracle.thm} and \ref{te.asym.norm.thm}) and ignoring the distinction between almost sure convergence and convergence in probability, \eqref{new.assum.2} is more restrictive on the growth rate of \(\lambda_N\) than (R2) since (R2) requires
\[
\frac{\lambda_N }{N e_{1N} } \to 0
 \]
while \eqref{new.assum.2} requires
\[
\frac{\lambda_N}{ \sqrt{N} e_{1N}}  \to 0
.
\]
However, if \(e_{1N}\) is bounded away from 0 with high probability, using a similar argument to the one in the proof of Lemma \ref{conv.claim.lem} in the appendix, \eqref{new.assum.2} is essentially equivalent to the other requirement in (R2) that \(\lim_{N \to \infty} \lambda_N \sqrt{s_N/N} = 0\) (in the setting of Theorem \ref{te.asym.norm.thm} where \(s_N = s\) is fixed). If the smallest eigenvalue of \(\E \left[ \boldsymbol{\hat{\Sigma}}(\boldsymbol{Z}) \right]\) is bounded away from 0, one can show that the asymptotic distribution for the minimum eigenvalue of \( \boldsymbol{\hat{\Sigma}}(\boldsymbol{Z}) \) has support bounded away from 0 in our setting where \(\boldsymbol{X}_i\) has finite fourth moments and \(p_N \leq NT\) \citep{marchenko1967distribution}. If we also assume that \(\boldsymbol{X}_i\) is subgaussian (a weaker assumption than requiring \(\boldsymbol{X}_i\) to have bounded support), under Theorem 5.39 from \citet{vershynin_2012} \(e_{1N}\) is bounded away from 0 with high probability for finite \(N\) large enough relative to \(p_N\).

\subsubsection{Oracle Property}

Next we show that under Assumptions (R1) - (R6) the estimators \eqref{att.estimator.fixed} and \eqref{att.estimator.weighted} are both oracle estimators \citep{fan2001variable, fan2004nonconcave}, converge at a \(1/\sqrt{N}\) rate (faster than the guarantee from Theorem \ref{main.te.cons.thm}), and are asymptotically Gaussian.

\begin{theorem}[Oracle Property of FETWFE]\label{te.oracle.thm}

Assume that Assumptions (CNAS), (CCTSB), and (LINS) hold, as well as Assumptions (F1), (F2), S(\(s\)) for a fixed \(s\), and (R1) - (R6). Assume \(q \in (0,1)\). Suppose that either Assumptions (CTSA) and (CIUN) hold or Assumption (CCTSA) holds. 

For an arbitrary set of finite constants \(\{\psi_{rt}\}\), if at least one of the \(\tau_{\text{ATT}} (r, t)\) with a nonzero \(\psi_{rt}\) is nonzero, then the sequence of random variables
\begin{align*}
\sqrt{NT} \sum_{r \in \mathcal{R}} \sum_{t=r}^T \psi_{rt}  ( \hat{\tau}_{\text{ATT}} (  r,t )  - \tau_{\text{ATT}} (r,t ) ) 
\end{align*}
converges in distribution to a mean zero Gaussian random variables with variance that depends only on the \(s\) parameters of the model with all restrictions correctly identified.

Further, if \(\hat{\tau}_{\text{ATT}} ( r, t)\) and the probability ratios \(N_r/ N_\tau\) are estimated on two independent data sets of size \(N\), then the sequence of random variables 
\begin{align*}
\sqrt{NT} \sum_{r \in \mathcal{R}} \sum_{t =r}^T \psi_{rt}     \left(   \frac{N_r}{N_\tau}  \hat{\tau}_{\text{ATT}} ( r, t) -  \tilde{\pi}_r \tau_{\text{ATT}} (r, t) \right) 
\end{align*}
also converges in distribution to a mean zero Gaussian random variable with variance that depends only on the \(s\) parameters of the model with all restrictions correctly identified. 

(If all of the \(\tau_{\text{ATT}} (r, t)\) with nonzero \(\psi_{rt}\) terms equal 0, both sequences of random variables converge in probability to 0.)

\end{theorem}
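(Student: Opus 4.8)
The plan is to exploit the oracle structure: with probability tending to one FETWFE collapses to ordinary least squares on the correctly-restricted model, which is $\sqrt{N}$-asymptotically Gaussian, so the treatment-effect functionals inherit Gaussianity with a variance governed only by the surviving $s$ parameters. First I would reparametrize. Since $\boldsymbol{D}_N$ is invertible (Lemma \ref{lem.d.express}), set $\boldsymbol{\theta} = \boldsymbol{D}_N \boldsymbol{\beta}$, turning \eqref{opt.prob} into a standard bridge problem $\min_{\boldsymbol{\theta}}\{\lVert \boldsymbol{y} - (\boldsymbol{Z}\boldsymbol{D}_N^{-1})\boldsymbol{\theta}\rVert_2^2 + \lambda_N \lVert \boldsymbol{\theta}\rVert_q^q\}$ with design $\boldsymbol{Z}\boldsymbol{D}_N^{-1}$ and true value $\boldsymbol{\theta}^* = \boldsymbol{D}_N \boldsymbol{\beta}_N^*$ supported on $\mathcal{S}$ (of fixed size $s$ by Assumption S$(s)$). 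Assumptions (R1) and (R6) control the columns and the minimum population-Gram eigenvalue of $\boldsymbol{Z}\boldsymbol{D}_N^{-1}$.

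Next I would invoke the extension of \citet{kock2013oracle}, Theorem \ref{prop.ext.2}, which together with the selection consistency already established in Theorem \ref{te.sel.cons.thm} yields the oracle property for $\hat{\boldsymbol{\theta}}$: on an event of probability tending to one $\hat{\boldsymbol{\theta}}_{\mathcal{S}^c} = \boldsymbol{0}$ exactly, while $\sqrt{N}(\hat{\boldsymbol{\theta}}_{\mathcal{S}} - \boldsymbol{\theta}^*_{\mathcal{S}})$ is asymptotically mean-zero Gaussian with covariance fixed by the $s \times s$ oracle Gram submatrix. By \eqref{att.est.cov.r.t}, $\hat{\tau}_{\text{ATT}}(r,t) = \hat{\tau}_{rt}^{(q)}$ is a fixed coordinate of $\hat{\boldsymbol{\beta}} = \boldsymbol{D}_N^{-1}\hat{\boldsymbol{\theta}}$, so $\sum_{r,t}\psi_{rt}\hat{\tau}_{\text{ATT}}(r,t) = \boldsymbol{b}^\top \hat{\boldsymbol{\theta}}$ with $\boldsymbol{b} = \boldsymbol{D}_N^{-\top}\boldsymbol{a}$ for the selector $\boldsymbol{a}$ carrying the weights $\psi_{rt}$. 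Under the maintained assumptions Theorem \ref{te.interp.prop} gives $\tau_{rt}^* = \tau_{\text{ATT}}(r,t)$, so centering at $\tau_{\text{ATT}}$ coincides with centering at $\boldsymbol{\theta}^*$; restricting to the oracle event,
\[
\sqrt{NT}\sum_{r,t}\psi_{rt}\big(\hat{\tau}_{\text{ATT}}(r,t) - \tau_{\text{ATT}}(r,t)\big) = \sqrt{NT}\,\boldsymbol{b}_{\mathcal{S}}^\top\big(\hat{\boldsymbol{\theta}}_{\mathcal{S}} - \boldsymbol{\theta}^*_{\mathcal{S}}\big) + o_{\mathbb{P}}(1).
\]
Absorbing the constant factor $\sqrt{T}$ and applying the continuous-mapping/Slutsky device to the oracle Gaussian limit gives the claimed univariate normality; the variance is a quadratic form in the fixed, known vector $\boldsymbol{b}_{\mathcal{S}}$ and the oracle covariance, hence depends only on the $s$ restricted-model parameters. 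The nonzero-$\tau$ hypothesis is what rules out the degenerate case: if instead every relevant $\tau_{\text{ATT}}(r,t)$ vanishes, the stronger consequence of Theorem \ref{te.sel.cons.thm}, namely $\mathbb{P}(\hat{\tau}_{\text{ATT}}(r,t) = 0) \to 1$, forces the entire sum to equal $0$ with probability tending to one, giving the parenthetical in-probability convergence.

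For the probability-weighted statement I would write
\[
\tfrac{N_r}{N_\tau}\hat{\tau}_{\text{ATT}}(r,t) - \tilde{\pi}_r \tau_{\text{ATT}}(r,t) = \tfrac{N_r}{N_\tau}\big(\hat{\tau}_{\text{ATT}}(r,t) - \tau_{\text{ATT}}(r,t)\big) + \big(\tfrac{N_r}{N_\tau} - \tilde{\pi}_r\big)\tau_{\text{ATT}}(r,t).
\]
Since the proportions $N_r/N_\tau$ and the estimates $\hat{\tau}_{\text{ATT}}$ are computed on independent samples, the two summands are independent. The cohort proportions are averages of i.i.d.\ indicators (Assumption (F2)), so $N_r/N_\tau \xrightarrow{p} \tilde{\pi}_r$ and $\sqrt{NT}(N_r/N_\tau - \tilde{\pi}_r)$ is asymptotically Gaussian by the CLT. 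Slutsky's theorem converts the first summand into $\tilde{\pi}_r$ times the oracle Gaussian of the previous paragraph and the second into a constant multiple of the proportion Gaussian; independence makes the joint limit a sum of independent Gaussians, hence Gaussian, with variance combining the oracle covariance and the multinomial variance of the cohort proportions---again functions only of the restricted-model quantities.

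The main obstacle is the oracle-normality step: extracting genuine $\sqrt{N}$-asymptotic normality of the active coordinates from the nonconvex ($q < 1$) objective. This is where the extended \citet{kock2013oracle} machinery does the real work, showing that with probability tending to one a local minimizer agreeing with restricted OLS on $\mathcal{S}$ and exactly zero off $\mathcal{S}$ is the computed solution, so the asymptotic variance never ``sees'' the off-support coordinates; here Assumption (R6) supplies the crucial minimum-eigenvalue bound. Once that normality is in hand, the remaining work---tracking the fixed linear map $\boldsymbol{D}_N^{-1}$ and the independence-plus-Slutsky argument for the weighted version---is essentially bookkeeping.
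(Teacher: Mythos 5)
Your proposal is correct and follows essentially the same route as the paper: reparametrize via \(\boldsymbol{\theta} = \boldsymbol{D}_N\boldsymbol{\beta}\), write \(\sum_{r,t}\psi_{rt}\hat{\tau}_{\text{ATT}}(r,t)\) as a linear functional of \(\boldsymbol{\hat{\theta}}^{(q)}\) built from rows of \(\boldsymbol{D}_N^{-1}\), identify the regression estimands with the causal estimands via Theorem \ref{te.interp.prop}, invoke the extended bridge-regression machinery (Theorem \ref{prop.ext.2}, packaged as Theorem \ref{first.thm.fetwfe}(d) and (f)) for the oracle Gaussian limit and the independent-sample weighted version, and use selection consistency for the degenerate all-zero case. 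Your two minor loosenesses---treating \(N_r/N_\tau\) as a plain i.i.d.\ average (it is a ratio of counts, so the multinomial delta method is needed) and calling the two summands of the weighted decomposition independent (both involve \(N_r/N_\tau\), so one needs joint convergence plus Slutsky rather than literal independence)---are exactly what the paper's Theorem \ref{first.thm.fetwfe}(f) handles, and do not change the verdict.
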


\begin{proof} Provided in Section \ref{app.prove.main.res} of the appendix.
\end{proof}

Theorems \ref{te.oracle.thm} and \ref{te.asym.norm.thm} leverage novel (to the best of our knowledge) extensions of Theorem 2 from \citet{kock2013oracle} that establish finite-sample variance estimators, including when the weights in a linear combination of the estimated coefficients are themselves estimated; see Theorem \ref{prop.ext.2} in the appendix.

Theorem \ref{te.oracle.thm} (and, more directly, Theorem \ref{first.thm.fetwfe}(d) and (f) in the appendix) show that even if \(p_N \to \infty\) FETWFE converges at a \(1/\sqrt{N}\) rate and its asymptotic covariance matrix depends only on the \(s\) parameters of the correct model. That is, FETWFE estimates the model with the same asymptotic efficiency as an OLS-estimated ETWFE estimator that knows the \(p_N - s\) correct restrictions. FETWFE leverages the possible restrictions in the ETWFE estimator to improve efficiency without requiring the practitioner to make assumptions on which restrictions to choose, which could induce asymptotic bias if the wrong restrictions are chosen or compromise asymptotic efficiency if not enough restrictions are chosen. This addresses the need to consider both bias and variance when estimating treatment effects \citep{gelman2021slamming}.

In Remark \ref{eff.rmk} we mentioned that the inclusion of the parameters \eqref{t.1.treated} in our regression allows for unbiased estimation of treatment effects under both (CCTS) and unconfoundedness of the untreated potential outcomes, but under unconfoundedness the parameters equal exactly 0 and are inefficient to include. However, Theorem \ref{te.oracle.thm} shows that when coefficients that equal 0 are included, FETWFE sets these coefficients equal to exactly 0 with high probability and estimates the nonzero coefficients with the same asymptotic efficiency as if those coefficients were omitted from the model. FETWFE is therefore a good model to choose if one wants to maximize robustness without compromising asymptotic efficiency. The same is true if (CIUN) holds and the regression estimands \(\boldsymbol{\xi}_t^*\) all equal \(\boldsymbol{0}\), or if any irrelevant covariates are included in the regression. So practitioners do not need to hesitate to include a possibly relevant covariate that could be helpful for the plausibility of Assumption (CCTSB) or (CCTS) as long as \(\boldsymbol{Z}\) has full column rank.


\subsubsection{Asymptotically Valid Confidence Intervals}

Theorem \ref{te.oracle.thm} is conceptual and yields insight on the kind of behavior we can expect from FETWFE in large samples. The following result, which has the same assumptions as Theorem \ref{te.oracle.thm}, is more practical and provides explicit formulas for variance estimators for asymptotically valid confidence intervals of treatment effect estimates.

\begin{theorem}[Asymptotic Confidence Intervals for FETWFE]\label{te.asym.norm.thm}

Assume that Assumptions (CNAS), (CCTSB), and (LINS) hold, as well as Assumptions (F1), (F2), S(\(s\)) for a fixed \(s\), and (R1) - (R6). Assume \(q \in (0,1)\). Suppose that either Assumptions (CTSA) and (CIUN) hold or Assumption (CCTSA) holds. Let \(\{\psi_{rt}\}\) be an arbitrary set of finite constants, and for all of the below results, assume at least one of the \( \tau_{\text{ATT}} (r,t )\) corresponding to a nonzero \(\psi_{rt}\) is nonzero (otherwise, all of the below sequences of random variables converge in probability to 0).

\begin{enumerate}[(a)]

\item For the estimator \eqref{att.estimator.fixed} it holds that
\begin{align*}
\sqrt{ \frac{NT}{\hat{v}_N^{(\text{C})} }}  \sum_{r \in \mathcal{R}} \sum_{t=r}^T \psi_{rt}  ( \hat{\tau}_{\text{ATT}} (  r,t )  - \tau_{\text{ATT}} (r,t ) ) \xrightarrow{d} ~ &  \mathcal{N}(0, 1)
,
\end{align*}
where \(\hat{v}_N^{(\text{C})} \) is a finite-sample variance estimator defined in  \eqref{v.n.r.t.att.const}. (All variance estimators depend only on the known \(\sigma^2\) and the observed data.)

\item Suppose one set of data of size \(N\) is used to estimate the cohort probabilities \(N_r/ N_\tau\) and another independent data set of size \(N\) is used to estimate each \(\hat{\tau}_{\text{ATT}} ( r, t)\). Then for the estimator \eqref{att.estimator.weighted} it holds that
\begin{align*}
 \sqrt{ \frac{NT}{   \hat{v}_N^{\text{C}; \hat{\pi}}  }}  \sum_{r \in \mathcal{R}} \sum_{t =r}^T \psi_{rt}     \left(   \frac{N_r}{N_\tau}  \hat{\tau}_{\text{ATT}} ( r, t) -  \tilde{\pi}_r \tau_{\text{ATT}} (r, t) \right) 
 \xrightarrow{d} ~ &   \mathcal{N}(0, 1)
,
\end{align*}
where \( \hat{v}_N^{\text{C}; \hat{\pi}}  \) is a finite-sample variance estimator defined in  \eqref{v.n.r.t.att.rand}. 

\item Suppose a single set of data of size \(N\) is used to estimate both the cohort probabilities \(N_r/ N_\tau\) and each \(\hat{\tau}_{\text{ATT}} ( r, t)\). Then the estimator \eqref{att.estimator.weighted} is asymptotically subgaussian: the sequence of random variables
\begin{align*}
 \sqrt{ \frac{NT}{   \hat{v}_N^{\text{C, (cons)}; \hat{\pi}}  }}  \sum_{r \in \mathcal{R}} \sum_{t =r}^T \psi_{rt}     \left(   \frac{N_r}{N_\tau}  \hat{\tau}_{\text{ATT}} ( r, t) -  \tilde{\pi}_r \tau_{\text{ATT}} (r, t) \right) 
\end{align*}
converges in distribution to a mean-zero subgaussian random variable with variance at most 1, where \( \hat{v}_N^{\text{C, (cons)}; \hat{\pi}}  \) is a finite-sample conservative variance estimator defined in \eqref{v.n.r.t.att.rand.cons}.

\end{enumerate}

\end{theorem}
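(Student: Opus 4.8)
The plan is to reduce all three parts to the asymptotic behavior of a single linear functional of the reparameterized bridge estimator fit on the design matrix \(\boldsymbol{Z}\boldsymbol{D}_N^{-1}\), and then invoke the extended bridge-regression central limit theorem of Theorem \ref{prop.ext.2}, which supplies both the Gaussian limit and an explicit, consistent finite-sample variance estimator. First I would use Theorem \ref{te.interp.prop} together with the identity \(\hat{\tau}_{\text{ATT}}(r,t)=\hat{\tau}_{rt}^{(q)}\) to write each target \(\sum_{r,t}\psi_{rt}\tau_{\text{ATT}}(r,t)\) as a fixed linear combination \(\boldsymbol{a}_N^\top\boldsymbol{\beta}_N^*\) of the underlying coefficients, with the estimator equal to \(\boldsymbol{a}_N^\top\boldsymbol{\hat{\beta}}^{(q)}\). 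After the change of variables \(\boldsymbol{\theta}:=\boldsymbol{D}_N\boldsymbol{\beta}\), the penalty \(\lVert\boldsymbol{D}_N\boldsymbol{\beta}\rVert_q^q\) becomes a standard bridge penalty on \(\boldsymbol{\theta}\), and the functional becomes \(\boldsymbol{a}_N^\top\boldsymbol{D}_N^{-1}\boldsymbol{\hat{\theta}}^{(q)}\), a linear functional of a bridge estimator on \(\boldsymbol{Z}\boldsymbol{D}_N^{-1}\)---exactly the object handled by Theorems \ref{first.thm.fetwfe} and \ref{prop.ext.2}.

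For part (a) the weights \(\psi_{rt}\) are fixed constants, so I would apply the linear-functional limit law of Theorem \ref{prop.ext.2} directly. Selection consistency (Theorem \ref{te.sel.cons.thm}), together with the stronger ``exactly zero with probability tending to one'' property noted after it, collapses the estimator onto the oracle sub-model with probability tending to one, so the asymptotic variance and the estimator \(\hat{v}_N^{(\text{C})}\) depend only on the \(s\) active parameters. The remaining work is a Slutsky argument: verifying that the plug-in estimator \(\hat{v}_N^{(\text{C})}\), built from the known \(\sigma^2\) and the estimated restricted Gram matrix, converges to the true oracle asymptotic variance, which follows from convergence of \(\boldsymbol{\hat{\Sigma}}(\boldsymbol{Z}\boldsymbol{D}_N^{-1})\) on the selected sub-model under (R6).

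For part (b) the estimated cohort weights \(N_r/N_\tau\) introduce a second source of randomness, which I would handle by expanding the product around its limit,
\[
\frac{N_r}{N_\tau}\hat{\tau}_{\text{ATT}}(r,t)-\tilde{\pi}_r\tau_{\text{ATT}}(r,t)=\tilde{\pi}_r\bigl(\hat{\tau}_{\text{ATT}}(r,t)-\tau_{\text{ATT}}(r,t)\bigr)+\tau_{\text{ATT}}(r,t)\Bigl(\tfrac{N_r}{N_\tau}-\tilde{\pi}_r\Bigr)+\text{(cross term)},
\]
showing the cross term is \(o_{\mathbb{P}}(1/\sqrt{N})\). The first leading term is asymptotically Gaussian by the argument of part (a), and the second is asymptotically Gaussian by the multinomial central limit theorem for the cohort proportions. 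Because the two pieces are computed on independent data sets, they are independent, so their sum is Gaussian with variance equal to the sum of the two variances---precisely the quantity that \(\hat{v}_N^{\text{C};\hat{\pi}}\) estimates---and consistency of each variance component again follows by Slutsky.

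The main obstacle is part (c), the single-sample case, where \(N_r/N_\tau\) and \(\hat{\tau}_{\text{ATT}}(r,t)\) are built from the same data and are therefore dependent, so the clean additive-variance decomposition of part (b) fails. Rather than compute the intractable cross-covariance, I would bound it: each leading term is still individually asymptotically normal (hence subgaussian), so their sum is asymptotically subgaussian, and I would design the conservative estimator \(\hat{v}_N^{\text{C, (cons)};\hat{\pi}}\) to dominate the true variance by applying Cauchy--Schwarz to the cross term, i.e.\ allowing worst-case correlation \(+1\) between the two pieces. Standardizing by this conservative estimator then yields a limiting subgaussian variable with variance at most one. The delicate step is verifying that \(\hat{v}_N^{\text{C, (cons)};\hat{\pi}}\) dominates the true asymptotic variance with probability tending to one while remaining \(\mathcal{O}_{\mathbb{P}}(1)\), so that the standardized statistic neither explodes nor collapses.
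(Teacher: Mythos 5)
Your proposal follows essentially the same route as the paper: the paper proves this theorem by specializing Theorem \ref{te.asym.norm.thm.gen}, whose proof reduces each part to a linear functional of the reparameterized bridge estimator \(\boldsymbol{\hat{\theta}}^{(q)}\) and invokes Theorem \ref{first.thm.fetwfe}(e), (g), and (h) (i.e.\ Theorem \ref{prop.ext.2}), with exactly your three-term decomposition plus the multinomial CLT/delta method for part (b) and the Cauchy--Schwarz conservative variance bound plus subgaussianity of sums (Lemma \ref{subg.tails}) for part (c). The only cosmetic difference is that the paper handles general smooth functions \(f_r\) of the cohort proportions and then checks that \(N_r/N_\tau\) has a finite nonzero Jacobian, whereas you work with the specific weights directly; the underlying argument is identical.
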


\begin{proof} Provided in Section \ref{app.prove.main.res} of the appendix.
\end{proof}

Theorem \ref{te.asym.norm.thm} shows that we can form asymptotically valid \(1 - \alpha\) confidence intervals with estimator \(\hat{T}_N\) and variance estimator \(\hat{v}_N\) with
\begin{equation}\label{conf.int.form}
\text{CI}_N(\hat{v}_N; \alpha) := \left[ \hat{T}_N  - \Phi^{-1} \left(1 - \frac{\alpha}{2} \right) \sqrt{\frac{\hat{v}_N}{NT}}, \hat{T}_N  + \Phi^{-1} \left(1 - \frac{\alpha}{2} \right) \sqrt{\frac{\hat{v}_N}{NT}}  \right]
,
\end{equation}
where \(\Phi(\cdot)\) is the distribution function of the standard normal distribution. 

Theorem \ref{te.asym.norm.thm}(b) requires data splitting for asymptotic normality of the estimator \eqref{att.estimator.weighted}. However, the data used for estimating the marginal cohort probabilities do not need to include observed responses. In some domains like public health and medicine where difference-in-differences has many potential uses \citep{rothbard2023tutorial, doi:10.1146/annurev-publhealth-040617-013507}, such \textit{unlabeled} data may be more widely available than \textit{labeled} data \citep[Section 1.3.2]{Candes2018}. In a setting like this, asymptotic normality can be achieved without reducing the amount of data used to estimate the model. (It is straightforward to adjust the variance estimator appropriately if the sample sizes of the two data sets are not equal.)

Theorem \ref{te.asym.norm.thm}(c) does not require sample splitting, at the price of sacrificing asymptotic normality for mere asymptotic subgaussianity and necessitating a conservative variance estimator. Because a subgaussian random variable can be defined as a variable \(X\) satisfying \(\mathbb{P}(|X| \geq x) \leq 2 \exp\{-x^2/K^2\}\) for some \(K > 0\) for all \(x 
\geq 0\) \citep[Proposition 2.5.2(a)]{vershynin2018high}, the tails of such a variable decay at least as quickly as a Gaussian random variable, so we might expect \(95\%\) confidence intervals to have similar asymptotic validity to the confidence intervals from part (b). In practice, we show in the simulation studies in Section \ref{synth.exps.sec} that these confidence intervals do seem to retain validity.

We cannot test the null hypothesis of treatment effects equalling 0 under Theorem \ref{te.asym.norm.thm}, since these test statistics are not asymptotically normal if the estimated coefficients all equal 0; instead they converge in probability to 0 (even after scaling by \(\sqrt{NT}\)).

Theorem \ref{te.asym.norm.thm.gen.cond} in the appendix is a generalization of both Theorems \ref{te.oracle.thm} and \ref{te.asym.norm.thm} for the classes of conditional average treatment effects estimators \eqref{catt.estimand.fixed.est} and \eqref{catt.estimand.weighted.est} in the fixed \(d_N = d\) setting under the assumption that an asymptotically Gaussian estimator of the generalized propensity scores (like the multinomial logit or proportional odds models) is available and an extra split of the data is available to estimate the cohort covariate means.

\section{Simulation Studies}\label{synth.exps.sec}

To test the efficacy of FETWFE under our assumptions, we conduct simulation studies in R using the \texttt{fetwfe} package \citep{Faletto:2025aa} and the \texttt{simulator} package \citep{bien2016simulator}. We choose parameters that bear resemblance to the empirical application from Section III of \citet{goodman2021difference} (\(N = 51, T = 33, R = 12, d_N = 3\)), which we explore in Section \ref{sec.data.app}. We use fewer cohorts for simplicity of presentation, more covariates to increase \(p_N\) in compensation for the smaller \(R\), and more units to avoid realizations where no units are assigned to a cohort by random chance. We generate data with \(N =  120\) units, \(T = 30\) time periods, \(R = 5\) cohorts entering at times \(\{2, \ldots, 6\}\), and \(d_N = 12\) features, which results in a total of \(p_N = 2209\) covariates and \(NT = 3600\) observations. To generate a \(\boldsymbol{\beta}_N^*\) such that \(\boldsymbol{D}_N \boldsymbol{\beta}_N^*\) is sparse, we generate a single random sparse \(\boldsymbol{\theta}_N^* \in \mathbb{R}^{p_N}\), then transform this into a coefficient vector \(\boldsymbol{\beta}_N^* = \boldsymbol{D}_N^{-1} \boldsymbol{\theta}_N^*\) to use across all simulations. We generate \(\boldsymbol{\theta}_N^*\) by taking a \(p_N\)-vector of all 0 entries and setting each entry equal to 2 randomly with probability 0.1. We set the sign of each term randomly, but since each entry of \(\boldsymbol{\beta}_N^*\) is a sum of terms in \(\boldsymbol{\theta}_N^*\), we bias the signs to avoid treatment effects that are too close to 0: individual nonzero terms in \(\boldsymbol{\theta}_N^*\) are positive with probability 0.6 and negative otherwise.

On each of 700 simulations, we randomly generate \(N\) independent realizations of \(\boldsymbol{X}_{(i)} \sim \mathcal{N}(\boldsymbol{0}, \boldsymbol{I}_{d_N})\) for the time-invariant covariates. We randomly assign treatments with probabilities \((\pi_0, \pi_1, \ldots, \pi_R) = (1/(R+1), \ldots, 1/(R+1))\), where
\begin{equation}\label{pi.r.def}
\pi_r := \begin{cases}
\mathbb{P}(W_i = 0), & r = 0,
\\ \mathbb{P}(W_i = r), & r \in \mathcal{R},
\end{cases}
\end{equation}
are the marginal probabilities of treatment assignments. To ensure the model is estimable, in the rare instances where there are no untreated units or one cohort has no observed units, we draw another set of random assignments.

After drawing covariates and treatment assignments, we construct \(\boldsymbol{\tilde{Z}}\). Then we draw \(N\) independent random effects \(c_i \sim \mathcal{N}(0, 5)\) as well as \(NT\) independent noise terms \(u_{(it)} \sim \mathcal{N}(0, 5)\) to generate \(\boldsymbol{\epsilon}_{(i\cdot)} = c_i \boldsymbol{1}_T + \boldsymbol{u}_{(i\cdot)}\) and finally generate \(\boldsymbol{\tilde{y}} = \boldsymbol{\tilde{Z}} \boldsymbol{\beta}_N^*  + \boldsymbol{\epsilon}_{(\cdot \cdot)}\).

Besides FETWFE, we also consider three competitor methods. The first is ETWFE, estimated by OLS on \(\boldsymbol{\tilde{Z}}\). We also consider bridge regression on \(\boldsymbol{Z}\) directly penalizing \(\boldsymbol{\beta}\) rather than penalizing \(\boldsymbol{D}_N \boldsymbol{\beta}\) (BETWFE). For both FETWFE and BETWFE, we follow the experiments of \citet{kock2013oracle} and use \(q = 0.5\), selecting the penalty \(\lambda_N\) over a grid of 100 values (equally spaced on a logarithmic scale) using BIC. We implement bridge regression for both BETWFE and FETWFE using the R \texttt{grpreg} package \citep{breheny2009penalized}. Finally, we also consider a slightly more flexible version of \eqref{bad.reg}, where we estimate the model
\[
\tilde{y}_{it} = \nu_r  + \gamma_t + \boldsymbol{X} \boldsymbol{\kappa} + \sum_{r \in \mathcal{R}} \tau_r \cdot \mathbbm{1}\{W_i = r \} \mathbbm{1}\{t \geq r\}  + \epsilon_{it}
\]
by OLS (TWFE\_COVS). (Notice that adding unit-specific covariates to Model \eqref{bad.reg} induces collinearity. Fitting the model instead on cohort fixed effects, like ETWFE, also matches, e.g., Equation 3.2 in \citet{callaway2021difference}. See also the discussion around Equation 2.5 in \citealt{sant2020doubly}.) The inclusion of covariates might make it seem more plausible that Assumptions (CNAS) and (CCTS) could hold in this model, though \citet[Section 6]{wooldridge2021two} points out that adding time-invariant covariates does not change the estimates of the treatment effects. Separate treatment effects for each cohort, while insufficient to avoid bias and capture the variation in treatment effects, should improve estimation relative to \eqref{bad.reg} and allows for the estimation of cohort-specific treatment effects to compare to the other methods.

\subsection{Estimation Error}\label{sec.exp.est.err}

We start by evaluating the error of each method in estimating the average treatment effect \eqref{att.def}. The assumptions of Theorem \ref{main.te.cons.thm} are satisfied, so we expect FETWFE to estimate the treatment effects more accurately than ETWFE, which will suffer from high variance; BETWFE, which does not assume the correct form of sparsity; and TWFE\_COVS, which is asymptotically biased. We estimate the average treatment effect using the FETWFE estimator \eqref{att.estimator.weighted} with \(\psi_{rt}\) from  \eqref{avg.cohorts.psi.r.t}, as well as analogous estimators using the competitor estimated regression coefficients and observed cohort counts. 

We calculate the squared error of each average treatment effect estimate for each method on each iteration. Boxplots of the results are displayed in Figure \ref{att_mse_fig}. We also provide the means and standard errors for the squared errors of each method in Table \ref{att.mse.tab}. Table \ref{att.mse.t.tab} contains \(p\)-values for paired one-tailed \(t\)-tests of the alternative hypothesis that the squared error for FETWFE is less than the squared error for each competitor method; all results show significantly better performance for FETWFE at the 0.05 significance level. 

These results show that even if most of the restrictions we would naturally consider hold and we are only interested in overall average treatment effect \eqref{att.def}, a simple model like TWFE\_COVS is an unsuitable estimator. This aligns with the conclusions of previous works, as we discussed in the introduction. We see that ETWFE has better estimation error, though in practice it still does not perform as well as FETWFE because it estimates about ten times as many parameters as are needed, leading to imprecise estimates. 

\begin{figure}[htbp]
\begin{center}
\includegraphics[scale=0.7]{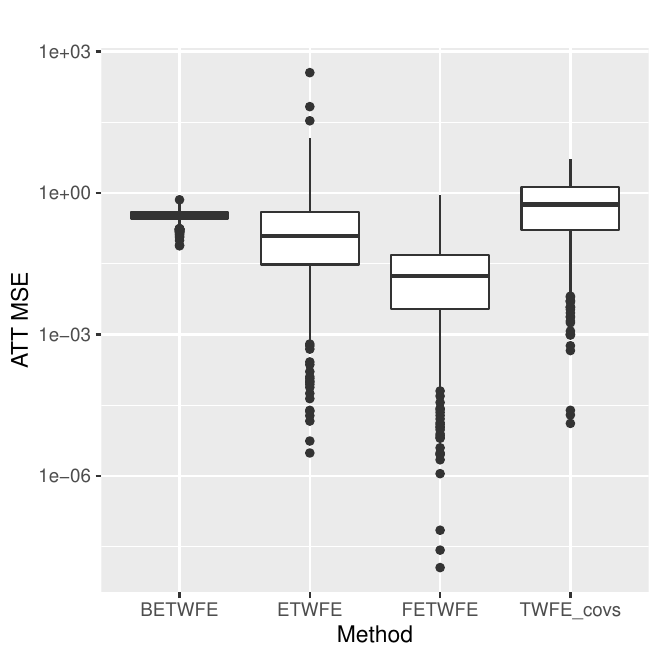}
\caption{Boxplots of squared errors for each treatment effect estimate across all 700 simulations. Vertical axis is on a log scale.}
\label{att_mse_fig}
\end{center}
\end{figure}

\begin{table}[!htbp] \centering 
  \caption{Means and standard errors for the squared error of the estimator of the ATT \eqref{att.def} for each method across all 700 simulations in the first simulation study from Section \ref{synth.exps.sec}.} 
  \label{att.mse.tab} 
\begin{tabular}{@{\extracolsep{5pt}} cccc} 
\\[-1.8ex]\hline 
\hline \\[-1.8ex] 
FETWFE & ETWFE & BETWFE & TWFE\_COVS \\ 
\hline \\[-1.8ex] 
0.0399 (0.00263) & 1.19 (0.633) & 0.339 (0.00336) & 0.938 (0.0395) \\ 
\hline \\[-1.8ex] 
\end{tabular} 
\end{table} 

\begin{table}[!htbp] \centering 
  \caption{\(p\)-values from paired one-tailed \(t\)-tests of the alternative hypothesis that the FETWFE ATT estimator \eqref{att.estimator.weighted} has a lower squared estimation error for the ATT \eqref{att.def} than each competitor method calculated across all 700 simulations in the first simulation study from Section \ref{synth.exps.sec}. (Low \(p\)-values indicate better predictive performance for FETWFE.)} 
  \label{att.mse.t.tab} 
\begin{tabular}{@{\extracolsep{5pt}} ccc} 
\\[-1.8ex]\hline 
\hline \\[-1.8ex] 
ETWFE & BETWFE & TWFE\_COVS \\ 
\hline \\[-1.8ex] 
0.0343 & 1.24e-320 & 1.68e-86 \\ 
\hline \\[-1.8ex] 
\end{tabular} 
\end{table}

In Appendix \ref{synth.exp.details}, we also present estimation error results for the cohort-specific average treatment effects \( \tau_{\text{ATT}} (r)\) as defined in \eqref{att.cohort} for each of the five cohorts, and estimation error for the coefficients of \(\boldsymbol{\beta}_N^*\) corresponding to the interaction effects, \(\boldsymbol{\rho}^*\). These results show that FETWFE outperforms the competitor methods for these estimands as well.

\subsection{Restriction Selection Consistency}\label{rest.sel.cons}

Next, we examine the extent to which we can trust FETWFE to select the correct restrictions, as we know it does with asymptotically high probability under Theorem \ref{te.sel.cons.thm}. In the same experiment as before, on each simulation we calculate the percentage of restrictions FETWFE successfully identifies among the treatment parameters. In particular, we calculate the percentage of entries of \(\boldsymbol{D} \boldsymbol{\beta}_N^*\) that FETWFE successfully estimates as either zero or nonzero.

On average across the 700 simulations, FETWFE successfully decides whether to use a restriction or not in \(95.0\%\) of the \(p_N\) possible cases, with a standard error of \(0.025\%\). Of the true restrictions (that is, the entries of \(\boldsymbol{D} \boldsymbol{\beta}_N^*\) that equal 0), on average \(97.5\%\) (\(0.021\%\)) are correctly identified. This suggests that FETWFE is effective at identifying restrictions in practice, which contributes to FETWFE's success in estimating treatment effects that we saw in the previous section. (Figure \ref{sel.cons.fig} in Appendix \ref{synth.exp.details} shows a boxplot of the proportion of restrictions successfully identified across all 700 simulations.)

To get a better sense of the asymptotics, we conduct an additional simulation study in the same way as the first one, but with \(T = 5\),\footnote{Notice that increasing \(T\) increases the number of observations while also increasing the number of treatment effects and interactions to estimate. So we might expect larger \(T\) to have a roughly neutral effect on the convergence properties, while greatly increasing the computational cost of simulations with large \(N\).} \(R = 3, d_N = 2\) (resulting in \(p_N = 50\)), and \(N = 1200\) observations (resulting in 300 observations per cohort in expectation). The sparse \(\boldsymbol{\theta}_N^*\) is generated as before, but with entries each equalling 0 with probability 0.5. Again, we conduct 700 simulations. In this second simulation study, FETWFE successfully decides whether to use a restriction or not in \(99.3\%\) (\(0.042\%\)) of the \(p_N\) possible cases across the 700 simulations, and on average FETWFE identifies \(98.5\%\) (\(0.096\%\)) of the true restrictions.

\subsection{Asymptotic Distribution}

Finally we investigate the asymptotic distribution of the FETWFE estimators. On each simulation, we estimate nominal asymptotic \(95\%\) confidence intervals for the cohort average treatment effects using \eqref{conf.int.form} and the variance estimator from Theorem \ref{te.asym.norm.thm}(a). In the first simulation study, we find coverage rates of 0.791 for the \(r = 2\) cohort (standard error 0.015), 0.884 for the \(r = 3\) cohort (standard error 0.012), \(0.854\) (0.013) for the \(r = 4\) cohort, \(0.803\) (0.015) for the \(r = 5\) cohort, and \(0.850\) (0.014) for the \(r = 6\) cohort. We see that the coverage is unsatisfactory for FETWFE in this relatively small sample size, where in expectation 20 units are observed in each cohort.

We also estimate conservative confidence intervals for the average treatment effect with the conservative variance estimator from Theorem \ref{te.asym.norm.thm}(c), yielding a coverage rate of 0.993 (0.001). That is, the conservative confidence intervals seem to achieve the desired coverage level in practice, even though the asymptotic distribution of the test statistic is not known to be Gaussian (merely subgaussian).

Lastly, we examine the non-conservative estimator from Theorem \ref{te.asym.norm.thm}(b): on each simulation study, we generate \(N\) independent unlabeled samples where only cohort assignments (drawn from the same distribution as the \(N\) labeled units) are observed, and we use these to estimate asymptotically exact confidence intervals for the average treatment effect. The resulting coverage is 0.986 (0.001). The coverage may be better for the average treatment effect than the cohort average treatment effects because the number of observations that are relevant for estimating the overall average treatment effect is much larger than that for each of the cohort average treatment effects.

In the second simulation study, described previously in Section \ref{rest.sel.cons}, we find coverage rates of 0.944 (0.009) for the \(r = 2\) cohort, 0.941 (0.009) for the \(r = 3\) cohort, and 0.930 (0.010) for the third cohort. The conservative confidence intervals for the average treatment effects have coverage 0.997 (0.001), and the asymptotically exact confidence intervals have coverage 0.973 (0.002). Overall, the coverage rates are close to the nominal \(95\%\) level in this finite \(N\) setting. This is especially promising because we are using heuristics to solve the FETWFE optimization problem, as discussed in Remark \ref{opt.prob.rmk}.

\section{Empirical Application}\label{sec.data.app}

We use the R \texttt{fetwfe} package \citep{Faletto:2025aa} to analyze the data from \citet{stevenson2006bargaining}, as collected in the \texttt{divorce} data set from \citet{bacondecomp}. \citeauthor{stevenson2006bargaining} estimate the causal effect of unilateral or ``no-fault" divorce laws on suicide rates among women. The data set consists of observations from \(N=51\) states (including the District of Columbia) for \(T = 33\) years, along with demographic covariates. The response variable is the aggregate state suicide rate among women by year (as in the first column of Table 1 of \citealt{stevenson2006bargaining}). See \citet{stevenson2006bargaining} and \citet[Section III]{goodman2021difference} for more details on the data.

We remove states treated at time \(T = 1\) from the data, leaving \(N = 42\) observations. 5 states are never treated, and \(R = 12\) cohorts are treated at various times. Similarly to the analysis in \citet[Section IV(B)]{goodman2021difference}, we consider the time-varying covariates \texttt{murderrate} (state female homicide rate), \texttt{lnpersinc} (natural log of state personal income
per capita), and \texttt{afdcrolls} (state welfare participation rate), including the values from time \(t = 1\) as time-invariant controls. However, \texttt{murderrate} is missing for some observations at \(t = 1\) so is omitted, leaving \(d_N = 2\) covariates. In total, we have \(NT = 1386\) observations and \(p_N = 908\) coefficients to estimate.

We estimate the variances \(\sigma_c^2\) (the variance of the unit random effects) and \(\sigma^2\) (the variance of the added noise) using the estimators from \citet[Section 26.5.1]{pesaran-2015-text} with ridge regression. We use FETWFE with \(q = 0.5\) to estimate the marginal average treatment effects, again leveraging the R \texttt{grpreg} package as in the simulation studies and selecting \(\lambda_N\) by BIC. We estimate that the overall average percent change to the suicide rate for women is \(-3.76\%\), which is similar to the estimate of \(-2.52\%\) (standard error \(1.09\%\)) that \citet{goodman2021difference} calculates when using similar controls. Applying Theorem \ref{te.asym.norm.thm}(c), we estimate a conservative standard error of \(4.58\%\), yielding a conservative \(95\%\) confidence interval of \((-12.73\%, 5.21\%)\) for the average treatment effect.

In Table \ref{cohort.treat.effs} we present point estimates for the estimated average treatment effects for each of the cohorts. (We omit standard errors since the individual cohorts have very small sample sizes.) We estimate that four cohorts have non-zero treatment effects, with the 1970 cohort (consisting of California and Iowa) having by far the largest treatment effect estimate. We emphasize that these individual estimates are likely less reliable than the average treatment effect estimate since most cohorts have a small number of units.

\begin{table}[!htbp] \centering 
  \caption{Cohort average treatment effect estimates from the empirical application in Section \ref{sec.data.app}, targeting estimand \( \tau_{\text{ATT}} (r)\) as defined in \eqref{att.cohort}.} 
  \label{cohort.treat.effs} 
\begin{tabular}{@{\extracolsep{5pt}} cc} 
\\[-1.8ex]\hline 
\hline \\[-1.8ex] 
 Cohort & \( \hat{\tau}_{\text{ATT}} (r)\) \\ 
\hline \\[-1.8ex] 
1969 & $0\%$ \\ 
 1970 & $$-$40.142\%$ \\ 
 1971 & $0\%$ \\ 
 1972 & $0\%$ \\ 
 1973 & $$-$3.466\%$ \\ 
 1974 & $0\%$ \\ 
 1975 & $0\%$ \\ 
 1976 & $$-$4.703\%$ \\ 
 1977 & $$-$5.338\%$ \\ 
 1980 & $0\%$ \\ 
 1984 & $0\%$ \\ 
 1985 & $0\%$ \\ 
\hline \\[-1.8ex] 
\end{tabular} 
\end{table}

\section{Conclusion}\label{conc.sec}

In this work, we have proposed the fused extended two-way fixed effects estimator for difference-in-differences with staggered adoptions. Our estimator leverages the fact that the extended two-way fixed effects estimator includes coefficients that are likely to be similar (for example, because of their proximity in time), but it selects appropriate restrictions automatically rather than relying on hand-selected restrictions, which could be biased. FETWFE retains the asymptotic unbiasedness (and normality) of ETWFE but leverages sparsity for large efficiency gains. In particular, we showed that our estimator is consistent, selection consistent, asymptotically Gaussian, and has an oracle property for several classes of heterogeneous marginal and conditional average treatment effects. Finally, we demonstrated the usefulness of FETWFE in practice in both simulation studies and an empirical application. 

\subsection{Future Work}

Throughout the paper we have mentioned places where our work might be relatively straightforward to extend. Here we specify some other possible extensions. It would be useful to develop theory allowing for FETWFE to be estimated when the design matrix \(\boldsymbol{Z}\) is deficient in column rank; for example, when \(p_N > NT\). One could extend our method to this setting by suitably adapting the marginal bridge estimator of \citet{kock2013oracle}, which involves a screening step to remove irrelevant covariates before estimating a regression on the reduced design matrix in a second step. This allows for consistency and asymptotic normality of the second-step estimator under similar assumptions to those that we have used in this work. Alternatively, consistency (but not asymptotic normality) could be established under weaker assumptions in the \(q = 1\) case by suitably adapting theory for lasso estimators like Corollary 2 from \citet{negahban2012unified}.

Another potentially useful extension would be relaxing the parallel trends assumption to allow for heterogeneous trends as described in \citet[Section 7.2]{wooldridge2021two}. This requires adding extra parameters to the ETWFE estimator, but FETWFE is well-poised to adapt to this setting because it can automatically restrict to 0 any parameters that are added unnecessarily.

\section*{Acknowledgement}

I thank Jacob Bien for very helpful discussions that shaped the direction of this work and edits to an early draft.

\section*{Funding}

This work was supported by the University of Southern California.


\bibliographystyle{abbrvnat}
\bibliography{mybib2fin}

\begin{thebibliography}{132}
\providecommand{\natexlab}[1]{#1}
\providecommand{\url}[1]{\texttt{#1}}
\expandafter\ifx\csname urlstyle\endcsname\relax
  \providecommand{\doi}[1]{doi: #1}\else
  \providecommand{\doi}{doi: \begingroup \urlstyle{rm}\Url}\fi

\bibitem[Abadie(2005)]{abadie2005semiparametric}
A.~Abadie.
\newblock Semiparametric difference-in-differences estimators.
\newblock \emph{The review of economic studies}, 72\penalty0 (1):\penalty0
  1--19, 2005.

\bibitem[Abadie and Imbens(2006)]{abadie2006large}
A.~Abadie and G.~W. Imbens.
\newblock Large sample properties of matching estimators for average treatment
  effects.
\newblock \emph{econometrica}, 74\penalty0 (1):\penalty0 235--267, 2006.

\bibitem[Abrevaya et~al.(2015)Abrevaya, Hsu, and Lieli]{abrevaya2015estimating}
J.~Abrevaya, Y.-C. Hsu, and R.~P. Lieli.
\newblock Estimating conditional average treatment effects.
\newblock \emph{Journal of Business \& Economic Statistics}, 33\penalty0
  (4):\penalty0 485--505, 2015.

\bibitem[Arkhangelsky and Imbens(2023)]{arkhangelsky2023causal}
D.~Arkhangelsky and G.~Imbens.
\newblock Causal models for longitudinal and panel data: A survey.
\newblock Technical report, National Bureau of Economic Research, 2023.

\bibitem[Ascarza(2018)]{ascarza2018retention}
E.~Ascarza.
\newblock Retention futility: Targeting high-risk customers might be
  ineffective.
\newblock \emph{Journal of Marketing Research}, 55\penalty0 (1):\penalty0
  80--98, 2018.

\bibitem[Athey and Imbens(2016)]{athey_imbens_2016}
S.~Athey and G.~Imbens.
\newblock Recursive partitioning for heterogeneous causal effects: Table 1.
\newblock \emph{Proceedings of the National Academy of Sciences}, 113\penalty0
  (27):\penalty0 7353--7360, May 2016.
\newblock \doi{10.1073/pnas.1510489113}.

\bibitem[Athey and Imbens(2006)]{athey2006identification}
S.~Athey and G.~W. Imbens.
\newblock Identification and inference in nonlinear difference-in-differences
  models.
\newblock \emph{Econometrica}, 74\penalty0 (2):\penalty0 431--497, 2006.

\bibitem[Athey and Imbens(2022)]{athey2022design}
S.~Athey and G.~W. Imbens.
\newblock Design-based analysis in difference-in-differences settings with
  staggered adoption.
\newblock \emph{Journal of Econometrics}, 226\penalty0 (1):\penalty0 62--79,
  2022.

\bibitem[{Athey} et~al.(2018){Athey}, {Blei}, {Donnelly}, {Ruiz}, and
  {Schmidt}]{athey-yelp-2018}
S.~{Athey}, D.~{Blei}, R.~{Donnelly}, F.~{Ruiz}, and T.~{Schmidt}.
\newblock {Estimating Heterogeneous Consumer Preferences for Restaurants and
  Travel Time Using Mobile Location Data}.
\newblock \emph{ArXiv e-prints}, Jan. 2018.

\bibitem[Ban and Kedagni(2022)]{ban2022generalized}
K.~Ban and D.~Kedagni.
\newblock Generalized difference-in-differences models: Robust bounds.
\newblock \emph{arXiv preprint arXiv:2211.06710}, 2022.

\bibitem[Belloni et~al.(2014)Belloni, Chernozhukov, and
  Hansen]{belloni2014inference}
A.~Belloni, V.~Chernozhukov, and C.~Hansen.
\newblock Inference on treatment effects after selection among high-dimensional
  controls.
\newblock \emph{The Review of Economic Studies}, 81\penalty0 (2):\penalty0
  608--650, 2014.

\bibitem[Bien(2016)]{bien2016simulator}
J.~Bien.
\newblock The simulator: an engine to streamline simulations.
\newblock \emph{arXiv preprint arXiv:1607.00021}, 2016.

\bibitem[Bilinski and Hatfield(2018)]{bilinski2018seeking}
A.~Bilinski and L.~A. Hatfield.
\newblock Seeking evidence of absence: Reconsidering tests of model
  assumptions.
\newblock \emph{arXiv preprint arXiv:1805.03273}, 2018.

\bibitem[Blackwell et~al.(2024)Blackwell, Glynn, Hilbig, and
  Phillips]{blackwell2024estimating}
M.~Blackwell, A.~Glynn, H.~Hilbig, and C.~H. Phillips.
\newblock Estimating controlled direct effects with panel data: An application
  to reducing support for discriminatory policies.
\newblock 2024.

\bibitem[Bolger et~al.(2019)Bolger, Zee, Rossignac-Milon, and
  Hassin]{bolger2019causal}
N.~Bolger, K.~S. Zee, M.~Rossignac-Milon, and R.~R. Hassin.
\newblock Causal processes in psychology are heterogeneous.
\newblock \emph{Journal of Experimental Psychology: General}, 148\penalty0
  (4):\penalty0 601, 2019.

\bibitem[Borusyak and Jaravel(2018)]{borusyak2018revisiting}
K.~Borusyak and X.~Jaravel.
\newblock \emph{Revisiting event study designs}.
\newblock SSRN, 2018.

\bibitem[Borusyak et~al.(2024)Borusyak, Jaravel, and
  Spiess]{borusyak2024revisiting}
K.~Borusyak, X.~Jaravel, and J.~Spiess.
\newblock Revisiting event study designs: Robust and efficient estimation.
\newblock \emph{Review of Economic Studies}, page rdae007, 2024.

\bibitem[Bottou et~al.(2013)Bottou, Peters, Qui{\~n}onero-Candela, Charles,
  Chickering, Portugaly, Ray, Simard, and Snelson]{bottou2013counterfactual}
L.~Bottou, J.~Peters, J.~Qui{\~n}onero-Candela, D.~X. Charles, D.~M.
  Chickering, E.~Portugaly, D.~Ray, P.~Simard, and E.~Snelson.
\newblock Counterfactual reasoning and learning systems: The example of
  computational advertising.
\newblock \emph{Journal of Machine Learning Research}, 14\penalty0 (11), 2013.

\bibitem[Brand and Xie(2010)]{brand2010benefits}
J.~E. Brand and Y.~Xie.
\newblock Who benefits most from college? evidence for negative selection in
  heterogeneous economic returns to higher education.
\newblock \emph{American sociological review}, 75\penalty0 (2):\penalty0
  273--302, 2010.

\bibitem[Breen et~al.(2015)Breen, Choi, and Holm]{breen2015heterogeneous}
R.~Breen, S.~Choi, and A.~Holm.
\newblock Heterogeneous causal effects and sample selection bias.
\newblock \emph{Sociological Science}, 2:\penalty0 351--369, 2015.

\bibitem[Breheny and Huang(2009)]{breheny2009penalized}
P.~Breheny and J.~Huang.
\newblock Penalized methods for bi-level variable selection.
\newblock \emph{Statistics and its interface}, 2\penalty0 (3):\penalty0 369,
  2009.

\bibitem[Breheny and Huang(2015)]{Breheny:2015aa}
P.~Breheny and J.~Huang.
\newblock Group descent algorithms for nonconvex penalized linear and logistic
  regression models with grouped predictors.
\newblock \emph{Statistics and Computing}, 25:\penalty0 173--187, 2015.

\bibitem[Brown and Butts(2022)]{RePEc:qed:wpaper:1495}
N.~Brown and K.~Butts.
\newblock {A Unified Framework for Dynamic Treatment Effect Estimation in
  Interactive Fixed Effect Models}.
\newblock Working Paper 1495, Economics Department, Queen's University, Nov.
  2022.
\newblock URL \url{https://ideas.repec.org/p/qed/wpaper/1495.html}.

\bibitem[Caetano et~al.(2022)Caetano, Callaway, Payne, and
  Rodrigues]{caetano2022difference}
C.~Caetano, B.~Callaway, S.~Payne, and H.~S. Rodrigues.
\newblock Difference in differences with time-varying covariates.
\newblock \emph{arXiv preprint arXiv:2202.02903}, 2022.

\bibitem[Callaway(2023)]{zimmermann2020handbookcallaway}
B.~Callaway.
\newblock \emph{Handbook of Labor, Human Resources and Population Economics},
  chapter Difference-in-Differences for Policy Evaluation, pages 1--61.
\newblock Springer International Publishing, 2023.

\bibitem[Callaway and Karami(2023)]{callaway2023treatment}
B.~Callaway and S.~Karami.
\newblock Treatment effects in interactive fixed effects models with a small
  number of time periods.
\newblock \emph{Journal of Econometrics}, 233\penalty0 (1):\penalty0 184--208,
  2023.

\bibitem[Callaway and Li(2023)]{callaway2023policy}
B.~Callaway and T.~Li.
\newblock Policy evaluation during a pandemic.
\newblock \emph{Journal of Econometrics}, 236\penalty0 (1):\penalty0 105454,
  2023.

\bibitem[Callaway and Sant'Anna(2022)]{callaway2022pre}
B.~Callaway and P.~Sant'Anna.
\newblock Pre--testing in a did setup using the did package.
\newblock \emph{Vignette URL https://bcallaway11. github.
  io/did/articles/pre-testing. html}, 2022.

\bibitem[Callaway and Sant'Anna(2021)]{callaway2021difference}
B.~Callaway and P.~H. Sant'Anna.
\newblock Difference-in-differences with multiple time periods.
\newblock \emph{Journal of Econometrics}, 225\penalty0 (2):\penalty0 200--230,
  2021.

\bibitem[Cand{\`{e}}s et~al.(2018)Cand{\`{e}}s, Fan, Janson, and
  Lv]{Candes2018}
E.~Cand{\`{e}}s, Y.~Fan, L.~Janson, and J.~Lv.
\newblock {Panning for gold: `model-X' knockoffs for high dimensional
  controlled variable selection}.
\newblock \emph{Journal of the Royal Statistical Society. Series B: Statistical
  Methodology}, 80\penalty0 (3):\penalty0 551--577, 2018.
\newblock ISSN 14679868.
\newblock \doi{10.1111/rssb.12265}.
\newblock URL
  \url{https://rss-onlinelibrary-wiley-com.libproxy1.usc.edu/doi/pdf/10.1111/rssb.12265}.

\bibitem[Chang(2020)]{chang2020double}
N.-C. Chang.
\newblock Double/debiased machine learning for difference-in-differences
  models.
\newblock \emph{The Econometrics Journal}, 23\penalty0 (2):\penalty0 177--191,
  2020.

\bibitem[Chernozhukov et~al.(2023)Chernozhukov, Chetverikov, and
  Koike]{chernozhukov2023nearly}
V.~Chernozhukov, D.~Chetverikov, and Y.~Koike.
\newblock Nearly optimal central limit theorem and bootstrap approximations in
  high dimensions.
\newblock \emph{The Annals of Applied Probability}, 33\penalty0 (3):\penalty0
  2374--2425, 2023.

\bibitem[Chiu et~al.(2023)Chiu, Lan, Liu, and Xu]{chiu2023and}
A.~Chiu, X.~Lan, Z.~Liu, and Y.~Xu.
\newblock What to do (and not to do) with causal panel analysis under parallel
  trends: Lessons from a large reanalysis study.
\newblock \emph{Available at SSRN 4490035}, 2023.

\bibitem[Cinelli et~al.(2022)Cinelli, Forney, and Pearl]{cinelli2022crash}
C.~Cinelli, A.~Forney, and J.~Pearl.
\newblock A crash course in good and bad controls.
\newblock \emph{Sociological Methods \& Research}, page 00491241221099552,
  2022.

\bibitem[DasGupta(2008)]{dasgupta2008asymptotic}
A.~DasGupta.
\newblock \emph{Asymptotic theory of statistics and probability}, volume 180.
\newblock Springer, 2008.

\bibitem[De~Chaisemartin and d'Haultfoeuille(2020)]{de2020two}
C.~De~Chaisemartin and X.~d'Haultfoeuille.
\newblock Two-way fixed effects estimators with heterogeneous treatment
  effects.
\newblock \emph{American Economic Review}, 110\penalty0 (9):\penalty0
  2964--2996, 2020.

\bibitem[de~Chaisemartin and D'Haultf{\oe}uille(2023)]{de2023two}
C.~de~Chaisemartin and X.~D'Haultf{\oe}uille.
\newblock Difference-in-differences for simple and complex natural experiments.
\newblock \emph{Available at SSRN}, November 2023.
\newblock URL \url{https://ssrn.com/abstract=4487202}.

\bibitem[De~Chaisemartin and d'Haultfoeuille(2023)]{de2023tworeview}
C.~De~Chaisemartin and X.~d'Haultfoeuille.
\newblock Two-way fixed effects and differences-in-differences with
  heterogeneous treatment effects: A survey.
\newblock \emph{The Econometrics Journal}, 26\penalty0 (3):\penalty0 C1--C30,
  2023.

\bibitem[Dette(2020)]{dette2020difference}
H.~Dette.
\newblock Difference-in-differences estimation under non-parallel trends.
\newblock 2020.

\bibitem[Ding(2024)]{ding2024linear}
P.~Ding.
\newblock Linear model and extensions.
\newblock \emph{arXiv preprint arXiv:2401.00649}, 2024.

\bibitem[Ding and Li(2019)]{ding2019bracketing}
P.~Ding and F.~Li.
\newblock A bracketing relationship between difference-in-differences and
  lagged-dependent-variable adjustment.
\newblock \emph{Political Analysis}, 27\penalty0 (4):\penalty0 605--615, 2019.

\bibitem[Djebbari and Smith(2008)]{djebbari2008heterogeneous}
H.~Djebbari and J.~Smith.
\newblock Heterogeneous impacts in progresa.
\newblock \emph{Journal of Econometrics}, 145\penalty0 (1-2):\penalty0 64--80,
  2008.

\bibitem[Ellickson et~al.(2022)Ellickson, Kar, and
  Reeder~III]{ellickson2022estimating}
P.~B. Ellickson, W.~Kar, and J.~C. Reeder~III.
\newblock Estimating marketing component effects: Double machine learning from
  targeted digital promotions.
\newblock \emph{Marketing Science}, 2022.

\bibitem[Faletto(2025)]{Faletto:2025aa}
G.~Faletto.
\newblock \emph{fetwfe: Fused Extended Two-Way Fixed Effects}, 2025.
\newblock URL \url{https://cran.r-project.org/package=fetwfe}.
\newblock R package version 0.4.4.

\bibitem[Fan and Li(2001)]{fan2001variable}
J.~Fan and R.~Li.
\newblock Variable selection via nonconcave penalized likelihood and its oracle
  properties.
\newblock \emph{Journal of the American statistical Association}, 96\penalty0
  (456):\penalty0 1348--1360, 2001.

\bibitem[Fan and Peng(2004)]{fan2004nonconcave}
J.~Fan and H.~Peng.
\newblock Nonconcave penalized likelihood with a diverging number of
  parameters.
\newblock 2004.

\bibitem[Fan et~al.(2022)Fan, Hsu, Lieli, and Zhang]{fan2022estimation}
Q.~Fan, Y.-C. Hsu, R.~P. Lieli, and Y.~Zhang.
\newblock Estimation of conditional average treatment effects with
  high-dimensional data.
\newblock \emph{Journal of Business \& Economic Statistics}, 40\penalty0
  (1):\penalty0 313--327, 2022.

\bibitem[Farrell(2015)]{farrell2015robust}
M.~H. Farrell.
\newblock Robust inference on average treatment effects with possibly more
  covariates than observations.
\newblock \emph{Journal of Econometrics}, 189\penalty0 (1):\penalty0 1--23,
  2015.

\bibitem[Farrell et~al.(2021)Farrell, Liang, and Misra]{farrell2021deep}
M.~H. Farrell, T.~Liang, and S.~Misra.
\newblock Deep neural networks for estimation and inference.
\newblock \emph{Econometrica}, 89\penalty0 (1):\penalty0 181--213, 2021.

\bibitem[Feinstein et~al.(1972)Feinstein, Rubinstein, and
  Ramshaw]{feinstein1972estimating}
A.~R. Feinstein, J.~F. Rubinstein, and W.~A. Ramshaw.
\newblock Estimating prognosis with the aid of a conversational-mode computer
  program.
\newblock \emph{Annals of Internal Medicine}, 76\penalty0 (6):\penalty0
  911--921, 1972.

\bibitem[Flack and Jee(2020)]{bacondecomp}
E.~Flack and E.~Jee.
\newblock \emph{bacondecomp: Goodman-Bacon Decomposition}, 2020.
\newblock URL \url{https://CRAN.R-project.org/package=bacondecomp}.
\newblock R package version 0.1.1.

\bibitem[Florens et~al.(2008)Florens, Heckman, Meghir, and
  Vytlacil]{florens2008identification}
J.-P. Florens, J.~J. Heckman, C.~Meghir, and E.~Vytlacil.
\newblock Identification of treatment effects using control functions in models
  with continuous, endogenous treatment and heterogeneous effects.
\newblock \emph{Econometrica}, 76\penalty0 (5):\penalty0 1191--1206, 2008.

\bibitem[Foster et~al.(2011)Foster, Taylor, and Ruberg]{foster2011subgroup}
J.~C. Foster, J.~M. Taylor, and S.~J. Ruberg.
\newblock Subgroup identification from randomized clinical trial data.
\newblock \emph{Statistics in medicine}, 30\penalty0 (24):\penalty0 2867--2880,
  2011.

\bibitem[Frankovich et~al.(2011)Frankovich, Longhurst, and
  Sutherland]{frankovich2011evidence}
J.~Frankovich, C.~A. Longhurst, and S.~M. Sutherland.
\newblock Evidence-based medicine in the emr era.
\newblock \emph{N Engl J Med}, 365\penalty0 (19):\penalty0 1758--1759, 2011.

\bibitem[Gao et~al.(2022)Gao, Xu, and Zhou]{gao2022towards}
W.~Gao, F.~Xu, and Z.-H. Zhou.
\newblock Towards convergence rate analysis of random forests for
  classification.
\newblock \emph{Artificial Intelligence}, 313:\penalty0 103788, 2022.

\bibitem[Garc\'ia-Portugu\'es(2023)]{Garcia-Portugues2023}
E.~Garc\'ia-Portugu\'es.
\newblock \emph{Notes for Nonparametric Statistics}.
\newblock 2023.
\newblock URL \url{https://bookdown.org/egarpor/NP-UC3M/}.
\newblock Version 6.9.0. ISBN 978-84-09-29537-1.

\bibitem[Gavrilova et~al.(2023)Gavrilova, Lang{\o}rgen, and
  Zoutman]{gavrilova2023dynamic}
E.~Gavrilova, A.~Lang{\o}rgen, and F.~Zoutman.
\newblock Dynamic causal forests, with an application to payroll tax incidence
  in norway, 2023.

\bibitem[Gelman and V{\'a}k{\'a}r(2021)]{gelman2021slamming}
A.~Gelman and M.~V{\'a}k{\'a}r.
\newblock Slamming the sham: A bayesian model for adaptive adjustment with
  noisy control data.
\newblock \emph{Statistics in medicine}, 40\penalty0 (15):\penalty0 3403--3424,
  2021.

\bibitem[Ghanem et~al.(2024)Ghanem, Sant'Anna, and
  W{\"u}thrich]{ghanem2023selection}
D.~Ghanem, P.~H.~C. Sant'Anna, and K.~W{\"u}thrich.
\newblock Selection and parallel trends, 2024.
\newblock URL \url{https://arxiv.org/abs/2203.09001}.

\bibitem[Goodman-Bacon(2021)]{goodman2021difference}
A.~Goodman-Bacon.
\newblock Difference-in-differences with variation in treatment timing.
\newblock \emph{Journal of Econometrics}, 225\penalty0 (2):\penalty0 254--277,
  2021.

\bibitem[Grimmer et~al.(2017)Grimmer, Messing, and
  Westwood]{grimmer2017estimating}
J.~Grimmer, S.~Messing, and S.~J. Westwood.
\newblock Estimating heterogeneous treatment effects and the effects of
  heterogeneous treatments with ensemble methods.
\newblock \emph{Political Analysis}, 25\penalty0 (4):\penalty0 413--434, 2017.

\bibitem[Ham and Miratrix(2022)]{ham2022benefits}
D.~W. Ham and L.~Miratrix.
\newblock Benefits and costs of matching prior to a difference in differences
  analysis when parallel trends does not hold.
\newblock \emph{arXiv preprint arXiv:2205.08644}, 2022.

\bibitem[Hastie et~al.(2015)Hastie, Tibshirani, and
  Wainwright]{hastie2015statistical}
T.~Hastie, R.~Tibshirani, and M.~Wainwright.
\newblock Statistical learning with sparsity.
\newblock \emph{Monographs on statistics and applied probability},
  143:\penalty0 143, 2015.

\bibitem[Hatamyar et~al.(2023)Hatamyar, Kreif, Rocha, and
  Huber]{hatamyar2023machine}
J.~Hatamyar, N.~Kreif, R.~Rocha, and M.~Huber.
\newblock Machine learning for staggered difference-in-differences and dynamic
  treatment effect heterogeneity, 2023.
\newblock URL \url{https://arxiv.org/abs/2310.11962}.

\bibitem[Heckman et~al.(1997)Heckman, Ichimura, and Todd]{heckman1997matching}
J.~J. Heckman, H.~Ichimura, and P.~E. Todd.
\newblock Matching as an econometric evaluation estimator: Evidence from
  evaluating a job training programme.
\newblock \emph{The review of economic studies}, 64\penalty0 (4):\penalty0
  605--654, 1997.

\bibitem[Heckman et~al.(1998)Heckman, Ichimura, Smith, and
  Todd]{heckman1998characterizing}
J.~J. Heckman, H.~Ichimura, J.~A. Smith, and P.~E. Todd.
\newblock Characterizing selection bias using experimental data, 1998.

\bibitem[Heiler and Mareckova(2021)]{heiler2021shrinkage}
P.~Heiler and J.~Mareckova.
\newblock Shrinkage for categorical regressors.
\newblock \emph{Journal of Econometrics}, 223\penalty0 (1):\penalty0 161--189,
  2021.

\bibitem[Henderson and Sperlich(2022)]{henderson2022complete}
D.~J. Henderson and S.~Sperlich.
\newblock A complete framework for model-free difference-in-differences
  estimation.
\newblock \emph{Available at SSRN: https://ssrn.com/abstract=4300808},
  \penalty0 (No. 15799), 2022.
\newblock URL \url{http://dx.doi.org/10.2139/ssrn.4300808}.

\bibitem[Hitsch and Misra(2018)]{hitsch2018heterogeneous}
G.~J. Hitsch and S.~Misra.
\newblock Heterogeneous treatment effects and optimal targeting policy
  evaluation.
\newblock \emph{Available at SSRN 3111957}, 2018.

\bibitem[H{\"o}fling et~al.(2010)H{\"o}fling, Binder, and
  Schumacher]{hofling2010coordinate}
H.~H{\"o}fling, H.~Binder, and M.~Schumacher.
\newblock A coordinate-wise optimization algorithm for the fused lasso.
\newblock \emph{arXiv preprint arXiv:1011.6409}, 2010.

\bibitem[Huang et~al.(2008)Huang, Horowitz, and Ma]{Huang2008}
J.~Huang, J.~L. Horowitz, and S.~Ma.
\newblock {Asymptotic properties of bridge estimators in sparse
  high-dimensional regression models}.
\newblock \emph{Annals of Statistics}, 36\penalty0 (2):\penalty0 587--613,
  2008.
\newblock ISSN 00905364.
\newblock \doi{10.1214/009053607000000875}.

\bibitem[Imai and Kim(2019)]{imai2019should}
K.~Imai and I.~S. Kim.
\newblock When should we use unit fixed effects regression models for causal
  inference with longitudinal data?
\newblock \emph{American Journal of Political Science}, 63\penalty0
  (2):\penalty0 467--490, 2019.

\bibitem[Imai and Strauss(2011)]{imai2011estimation}
K.~Imai and A.~Strauss.
\newblock Estimation of heterogeneous treatment effects from randomized
  experiments, with application to the optimal planning of the get-out-the-vote
  campaign.
\newblock \emph{Political Analysis}, 19\penalty0 (1):\penalty0 1--19, 2011.

\bibitem[Imai et~al.(2023)Imai, Kim, and Wang]{imai2023matching}
K.~Imai, I.~S. Kim, and E.~H. Wang.
\newblock Matching methods for causal inference with time-series
  cross-sectional data.
\newblock \emph{American Journal of Political Science}, 67\penalty0
  (3):\penalty0 587--605, 2023.

\bibitem[Imbens and Rubin(2015)]{imbens_rubin_2015}
G.~Imbens and D.~B. Rubin.
\newblock \emph{Causal inference for statistics, social, and biomedical
  sciences: an introduction}.
\newblock Cambridge University Press, 2015.

\bibitem[Imbens and Wooldridge(2009)]{imbens2009recent}
G.~W. Imbens and J.~M. Wooldridge.
\newblock Recent developments in the econometrics of program evaluation.
\newblock \emph{Journal of economic literature}, 47\penalty0 (1):\penalty0
  5--86, 2009.

\bibitem[Kaddoura and Westerlund(2022)]{kaddoura2022estimation}
Y.~Kaddoura and J.~Westerlund.
\newblock Estimation of panel data models with random interactive effects and
  multiple structural breaks when t is fixed.
\newblock \emph{Journal of Business \& Economic Statistics}, pages 1--13, 2022.

\bibitem[Kahn-Lang and Lang(2020)]{kahn2020promise}
A.~Kahn-Lang and K.~Lang.
\newblock The promise and pitfalls of differences-in-differences: Reflections
  on 16 and pregnant and other applications.
\newblock \emph{Journal of Business \& Economic Statistics}, 38\penalty0
  (3):\penalty0 613--620, 2020.

\bibitem[Kock(2013)]{kock2013oracle}
A.~B. Kock.
\newblock Oracle efficient variable selection in random and fixed effects panel
  data models.
\newblock \emph{Econometric Theory}, 29\penalty0 (1):\penalty0 115--152, 2013.

\bibitem[Kwon(2021)]{kwon2021optimal}
S.~Kwon.
\newblock Optimal shrinkage estimation of fixed effects in linear panel data
  models.
\newblock \emph{EliScholar--A Digital Platform for Scholarly Publishing at Yal
  e}, page~1, 2021.

\bibitem[Lechner et~al.(2011)]{lechner2011estimation}
M.~Lechner et~al.
\newblock The estimation of causal effects by difference-in-difference methods.
\newblock \emph{Foundations and Trends{\textregistered} in Econometrics},
  4\penalty0 (3):\penalty0 165--224, 2011.

\bibitem[Lehmann(1999)]{lehmann1999elements}
E.~L. Lehmann.
\newblock \emph{Elements of large-sample theory}.
\newblock Springer, 1999.

\bibitem[Lesko(2007)]{lesko2007personalized}
L.~Lesko.
\newblock Personalized medicine: elusive dream or imminent reality?
\newblock \emph{Clinical Pharmacology \& Therapeutics}, 81\penalty0
  (6):\penalty0 807--816, 2007.

\bibitem[Levy and Abramovich(2023)]{levy2023generalization}
T.~Levy and F.~Abramovich.
\newblock Generalization error bounds for multiclass sparse linear classifiers.
\newblock \emph{Journal of Machine Learning Research}, 24\penalty0
  (151):\penalty0 1--35, 2023.

\bibitem[Lindner and McConnell(2019)]{lindner2019difference}
S.~Lindner and K.~J. McConnell.
\newblock Difference-in-differences and matching on outcomes: a tale of two
  unobservables.
\newblock \emph{Health Services and Outcomes Research Methodology},
  19:\penalty0 127--144, 2019.

\bibitem[Lopes(2022)]{lopes2022central}
M.~E. Lopes.
\newblock Central limit theorem and bootstrap approximation in high dimensions:
  Near 1/n rates via implicit smoothing.
\newblock \emph{The Annals of Statistics}, 50\penalty0 (5):\penalty0
  2492--2513, 2022.

\bibitem[Manski and Pepper(2018)]{manski2018right}
C.~F. Manski and J.~V. Pepper.
\newblock How do right-to-carry laws affect crime rates? coping with ambiguity
  using bounded-variation assumptions.
\newblock \emph{Review of Economics and Statistics}, 100\penalty0 (2):\penalty0
  232--244, 2018.

\bibitem[Marchenko and Pastur(1967)]{marchenko1967distribution}
V.~A. Marchenko and L.~A. Pastur.
\newblock Distribution of eigenvalues for some sets of random matrices.
\newblock \emph{Matematicheskii Sbornik}, 114\penalty0 (4):\penalty0 507--536,
  1967.

\bibitem[Marx et~al.(2024)Marx, Tamer, and Tang]{marx2024parallel}
P.~Marx, E.~Tamer, and X.~Tang.
\newblock Parallel trends and dynamic choices.
\newblock \emph{Journal of Political Economy Microeconomics}, 2\penalty0
  (1):\penalty0 129--171, 2024.

\bibitem[McCullagh(1980)]{mccullagh1980regression}
P.~McCullagh.
\newblock Regression models for ordinal data.
\newblock \emph{Journal of the Royal Statistical Society: Series B
  (Methodological)}, 42\penalty0 (2):\penalty0 109--127, 1980.

\bibitem[Meinshausen and B{\"u}hlmann(2006)]{meinshausen2006high}
N.~Meinshausen and P.~B{\"u}hlmann.
\newblock High-dimensional graphs and variable selection with the lasso.
\newblock 2006.

\bibitem[Meyer(1995)]{meyer1995natural}
B.~D. Meyer.
\newblock Natural and quasi-experiments in economics.
\newblock \emph{Journal of business \& economic statistics}, 13\penalty0
  (2):\penalty0 151--161, 1995.

\bibitem[Morgan(2001)]{morgan2001counterfactuals}
S.~L. Morgan.
\newblock Counterfactuals, causal effect heterogeneity, and the catholic school
  effect on learning.
\newblock \emph{Sociology of education}, pages 341--374, 2001.

\bibitem[Murphy et~al.(2016)Murphy, Redding, and Twyman]{murphy2016handbook}
M.~Murphy, S.~Redding, and J.~Twyman.
\newblock \emph{Handbook on personalized learning for states, districts, and
  schools}.
\newblock IAP, 2016.

\bibitem[Negahban et~al.(2012)Negahban, Ravikumar, Wainwright, and
  Yu]{negahban2012unified}
S.~N. Negahban, P.~Ravikumar, M.~J. Wainwright, and B.~Yu.
\newblock A unified framework for high-dimensional analysis of m-estimators
  with decomposable regularizers.
\newblock 2012.

\bibitem[Nie and Wager(2021)]{nie2021quasi}
X.~Nie and S.~Wager.
\newblock Quasi-oracle estimation of heterogeneous treatment effects.
\newblock \emph{Biometrika}, 108\penalty0 (2):\penalty0 299--319, 2021.

\bibitem[Nie et~al.(2021)Nie, Lu, and Wager]{nie2021nonparametric}
X.~Nie, C.~Lu, and S.~Wager.
\newblock Nonparametric heterogeneous treatment effect estimation in repeated
  cross sectional designs, 2021.
\newblock URL \url{https://arxiv.org/abs/1905.11622}.

\bibitem[Park and Tchetgen(2023)]{park2023universal}
C.~Park and E.~T. Tchetgen.
\newblock A universal difference-in-differences approach for causal inference.
\newblock 2023.
\newblock URL \url{https://arxiv.org/abs/2212.13641}.

\bibitem[Park and Yamauchi(2023)]{park2023change}
J.~H. Park and S.~Yamauchi.
\newblock Change-point detection and regularization in time series
  cross-sectional data analysis.
\newblock \emph{Political Analysis}, 31\penalty0 (2):\penalty0 257--277, 2023.

\bibitem[Pesaran(2015)]{pesaran-2015-text}
M.~H. Pesaran.
\newblock \emph{{Time Series and Panel Data Econometrics}}.
\newblock Number 9780198759980 in OUP Catalogue. Oxford University Press, 2015.
\newblock URL \url{https://ideas.repec.org/b/oxp/obooks/9780198759980.html}.

\bibitem[Rambachan and Roth(2023)]{rambachan2023more}
A.~Rambachan and J.~Roth.
\newblock A more credible approach to parallel trends.
\newblock \emph{Review of Economic Studies}, page rdad018, 2023.

\bibitem[Roth(2022)]{roth2022pretest}
J.~Roth.
\newblock Pretest with caution: Event-study estimates after testing for
  parallel trends.
\newblock \emph{American Economic Review: Insights}, 4\penalty0 (3):\penalty0
  305--322, 2022.

\bibitem[Roth and Sant'Anna(2023)]{roth2023parallel}
J.~Roth and P.~H. Sant'Anna.
\newblock When is parallel trends sensitive to functional form?
\newblock \emph{Econometrica}, 91\penalty0 (2):\penalty0 737--747, 2023.

\bibitem[Roth et~al.(2023)Roth, Sant'Anna, Bilinski, and Poe]{roth2023s}
J.~Roth, P.~H. Sant'Anna, A.~Bilinski, and J.~Poe.
\newblock What's trending in difference-in-differences? a synthesis of the
  recent econometrics literature.
\newblock \emph{Journal of Econometrics}, 2023.

\bibitem[Rothbard et~al.(2023)Rothbard, Etheridge, and
  Murray]{rothbard2023tutorial}
S.~Rothbard, J.~C. Etheridge, and E.~J. Murray.
\newblock A tutorial on applying the difference-in-differences method to health
  data.
\newblock \emph{Current Epidemiology Reports}, pages 1--11, 2023.

\bibitem[Rutanen(2021)]{4018462}
K.~Rutanen.
\newblock The smallest non-zero singular value of ab.
\newblock Mathematics Stack Exchange, 2021.
\newblock URL \url{https://math.stackexchange.com/q/4018462}.
\newblock URL:https://math.stackexchange.com/q/4018462 (version: 2021-02-10).

\bibitem[Sant'Anna and Zhao(2020)]{sant2020doubly}
P.~H. Sant'Anna and J.~Zhao.
\newblock Doubly robust difference-in-differences estimators.
\newblock \emph{Journal of Econometrics}, 219\penalty0 (1):\penalty0 101--122,
  2020.

\bibitem[Semenova and Chernozhukov(2021)]{semenova2021debiased}
V.~Semenova and V.~Chernozhukov.
\newblock Debiased machine learning of conditional average treatment effects
  and other causal functions.
\newblock \emph{The Econometrics Journal}, 24\penalty0 (2):\penalty0 264--289,
  2021.

\bibitem[Shaikh and Toulis(2021)]{shaikh2021randomization}
A.~M. Shaikh and P.~Toulis.
\newblock Randomization tests in observational studies with staggered adoption
  of treatment.
\newblock \emph{Journal of the American Statistical Association}, 116\penalty0
  (536):\penalty0 1835--1848, 2021.

\bibitem[Stevenson and Wolfers(2006)]{stevenson2006bargaining}
B.~Stevenson and J.~Wolfers.
\newblock Bargaining in the shadow of the law: Divorce laws and family
  distress.
\newblock \emph{The Quarterly Journal of Economics}, 121\penalty0 (1):\penalty0
  267--288, 2006.

\bibitem[Sun and Abraham(2021)]{sun2021estimating}
L.~Sun and S.~Abraham.
\newblock Estimating dynamic treatment effects in event studies with
  heterogeneous treatment effects.
\newblock \emph{Journal of Econometrics}, 225\penalty0 (2):\penalty0 175--199,
  2021.

\bibitem[Sun and Shapiro(2022)]{sun2022linear}
L.~Sun and J.~M. Shapiro.
\newblock A linear panel model with heterogeneous coefficients and variation in
  exposure.
\newblock \emph{Journal of Economic Perspectives}, 36\penalty0 (4):\penalty0
  193--204, 2022.

\bibitem[Tibshirani(1996)]{Tibshirani1996}
R.~Tibshirani.
\newblock {Regression Shrinkage and Selection via the Lasso}.
\newblock \emph{Journal of the Royal Statistical Society. Series B: Statistical
  Methodology}, 58\penalty0 (1):\penalty0 267--288, 1996.

\bibitem[Tibshirani et~al.(2005)Tibshirani, Saunders, Rosset, Zhu, and
  Knight]{tibshirani2005sparsity}
R.~Tibshirani, M.~Saunders, S.~Rosset, J.~Zhu, and K.~Knight.
\newblock Sparsity and smoothness via the fused lasso.
\newblock \emph{Journal of the Royal Statistical Society: Series B (Statistical
  Methodology)}, 67\penalty0 (1):\penalty0 91--108, 2005.

\bibitem[Tibshirani and Taylor(2011)]{tibshirani2011solution}
R.~J. Tibshirani and J.~Taylor.
\newblock The solution path of the generalized lasso.
\newblock \emph{The annals of statistics}, 39\penalty0 (3):\penalty0
  1335--1371, 2011.

\bibitem[van~der Vaart(2000)]{van2000asymptotic}
A.~van~der Vaart.
\newblock \emph{Asymptotic Statistics}.
\newblock Asymptotic Statistics. Cambridge University Press, 2000.
\newblock ISBN 9780521784504.
\newblock URL \url{https://books.google.com/books?id=UEuQEM5RjWgC}.

\bibitem[Vershynin(2012)]{vershynin_2012}
R.~Vershynin.
\newblock \emph{Introduction to the non-asymptotic analysis of random
  matrices}, pages 210--268.
\newblock Cambridge University Press, 2012.
\newblock \doi{10.1017/CBO9780511794308.006}.

\bibitem[Vershynin(2018)]{vershynin2018high}
R.~Vershynin.
\newblock \emph{High-Dimensional Probability: An Introduction with Applications
  in Data Science}.
\newblock Cambridge Series in Statistical and Probabilistic Mathematics.
  Cambridge University Press, 2018.
\newblock ISBN 9781108415194.
\newblock URL \url{https://books.google.com/books?id=J-VjswEACAAJ}.

\bibitem[Viallon et~al.(2013)Viallon, Lambert-Lacroix, H{\"o}fling, and
  Picard]{viallon2013adaptive}
V.~Viallon, S.~Lambert-Lacroix, H.~H{\"o}fling, and F.~Picard.
\newblock Adaptive generalized fused-lasso: Asymptotic properties and
  applications.
\newblock 2013.

\bibitem[Viallon et~al.(2016)Viallon, Lambert-Lacroix, Hoefling, and
  Picard]{viallon2016robustness}
V.~Viallon, S.~Lambert-Lacroix, H.~Hoefling, and F.~Picard.
\newblock On the robustness of the generalized fused lasso to prior
  specifications.
\newblock \emph{Statistics and Computing}, 26\penalty0 (1-2):\penalty0
  285--301, 2016.

\bibitem[Wager and Athey(2018)]{Wager2018}
S.~Wager and S.~Athey.
\newblock {Estimation and Inference of Heterogeneous Treatment Effects using
  Random Forests}.
\newblock \emph{Journal of the American Statistical Association}, 113\penalty0
  (523):\penalty0 1228--1242, jul 2018.
\newblock ISSN 0162-1459.
\newblock \doi{10.1080/01621459.2017.1319839}.
\newblock URL
  \url{https://www.tandfonline.com/doi/full/10.1080/01621459.2017.1319839}.

\bibitem[Wasserman(2004)]{wasserman2004all}
L.~Wasserman.
\newblock \emph{All of statistics: a concise course in statistical inference},
  volume~26.
\newblock Springer, 2004.

\bibitem[Wing et~al.(2018)Wing, Simon, and
  Bello-Gomez]{doi:10.1146/annurev-publhealth-040617-013507}
C.~Wing, K.~Simon, and R.~A. Bello-Gomez.
\newblock Designing difference in difference studies: Best practices for public
  health policy research.
\newblock \emph{Annual Review of Public Health}, 39\penalty0 (1):\penalty0
  453--469, 2018.
\newblock \doi{10.1146/annurev-publhealth-040617-013507}.
\newblock URL \url{https://doi.org/10.1146/annurev-publhealth-040617-013507}.
\newblock PMID: 29328877.

\bibitem[Winkelbeiner et~al.(2019)Winkelbeiner, Leucht, Kane, and
  Homan]{winkelbeiner2019evaluation}
S.~Winkelbeiner, S.~Leucht, J.~M. Kane, and P.~Homan.
\newblock Evaluation of differences in individual treatment response in
  schizophrenia spectrum disorders: a meta-analysis.
\newblock \emph{JAMA psychiatry}, 76\penalty0 (10):\penalty0 1063--1073, 2019.

\bibitem[Wooldridge(2021)]{wooldridge2021two}
J.~M. Wooldridge.
\newblock Two-way fixed effects, the two-way mundlak regression, and
  difference-in-differences estimators.
\newblock \emph{Available at SSRN 3906345}, 2021.

\bibitem[Xie et~al.(2012)Xie, Brand, and Jann]{xie2012estimating}
Y.~Xie, J.~E. Brand, and B.~Jann.
\newblock Estimating heterogeneous treatment effects with observational data.
\newblock \emph{Sociological methodology}, 42\penalty0 (1):\penalty0 314--347,
  2012.

\bibitem[Xin et~al.(2014)Xin, Kawahara, Wang, and Gao]{xin2014efficient}
B.~Xin, Y.~Kawahara, Y.~Wang, and W.~Gao.
\newblock Efficient generalized fused lasso and its application to the
  diagnosis of alzheimer's disease.
\newblock In \emph{Proceedings of the AAAI Conference on Artificial
  Intelligence}, volume~28, 2014.

\bibitem[Yang et~al.(2020)Yang, Eckles, Dhillon, and Aral]{yang2020targeting}
J.~Yang, D.~Eckles, P.~Dhillon, and S.~Aral.
\newblock Targeting for long-term outcomes.
\newblock \emph{arXiv preprint arXiv:2010.15835}, 2020.

\bibitem[Zhao and Yu(2006)]{Zhao2006}
P.~Zhao and B.~Yu.
\newblock {On Model Selection Consistency of Lasso}.
\newblock \emph{Journal of Machine Learning Research}, 7:\penalty0 2541--2563,
  2006.

\bibitem[Zimmert(2020)]{zimmert2020efficient}
M.~Zimmert.
\newblock Efficient difference-in-differences estimation with high-dimensional
  common trend confounding, 2020.
\newblock URL \url{https://arxiv.org/abs/1809.01643}.

\bibitem[Zou(2006)]{adalasso}
H.~Zou.
\newblock The adaptive lasso and its oracle properties.
\newblock \emph{J. Amer. Statist. Assoc.}, 101\penalty0 (476):\penalty0
  1418--1429, 2006.
\newblock ISSN 0162-1459.
\newblock URL \url{https://doi.org/10.1198/016214506000000735}.

\bibitem[Zou and Hastie(2005)]{Zou2005}
H.~Zou and T.~Hastie.
\newblock Regularization and variable selection via the elastic net.
\newblock \emph{Journal of the royal statistical society: series B (statistical
  methodology)}, 67\penalty0 (2):\penalty0 301--320, 2005.

\end{thebibliography}

\newpage

\appendix

\numberwithin{equation}{section}

In Appendix \ref{par.trend.ciuu.app} we investigate in more detail the relatively plausibility of conditional and marginal parallel trends, how Assumption (CIUN) interacts with these assumptions, and how we can achieve consistency of FETWFE for marginal average treatment effects under either form of parallel trends if (CIUN) holds. We present additional results from the simulation studies from Section \ref{synth.exps.sec} in Appendix \ref{synth.exp.details}. We state Theorem \ref{first.thm.fetwfe}, which Theorems \ref{main.te.cons.thm}, \ref{te.sel.cons.thm}, \ref{te.oracle.thm}, and \ref{te.asym.norm.thm} all depend on, in Appendix \ref{first.thm.state.sec}, and we prove it in Appendix \ref{sec.prove.first.thm}. In Appendix \ref{app.ext.sec} we state Theorem \ref{main.te.cons.thm.gen}, which shows the consistency of FETWFE for broader classes of estimators than those considered in Theorem \ref{main.te.cons.thm}. Theorem \ref{te.asym.norm.thm.gen} similarly extends Theorem \ref{te.asym.norm.thm}, and Theorem \ref{te.asym.norm.thm.gen.cond} shows that two classes of conditional treatment effect estimators are asymptotically Gaussian and provides consistent variance estimators for constructing asymptotically valid confidence intervals, analogously to Theorem \ref{te.asym.norm.thm.gen} for marginal treatment effect estimators. We prove the results from Appendix \ref{app.ext.sec}, along with Theorems \ref{main.te.cons.thm}, \ref{te.oracle.thm}, and \ref{te.asym.norm.thm}, in Appendix \ref{app.prove.main.res}. In Appendix \ref{app.extra.proofs} we provide proofs for results that were stated but not proven in Appendix \ref{par.trend.ciuu.app} and supporting results used in Appendix \ref{app.prove.main.res}. We prove supporting results for Theorem \ref{first.thm.fetwfe} in Appendix \ref{sec.main.lemmas}. Finally, in Appendix \ref{tech.lem.sec} we state and prove some technical lemmas.

\section{More on Parallel Trends and Assumption (CIUN)}\label{par.trend.ciuu.app}

In Section \ref{par.trend.app} we discuss the plausibility of marginal and conditional parallel trends in relation to an assumption of marginal or conditional (respectively) independence of the treatment and the potential outcomes. The latter assumptions imply the former, which motivates difference-in-differences estimators like FETWFE as more plausible than identification strategies that rely on stronger untestable assumptions.

We point out that no such strict hierarchy exists to relate marginal and conditional parallel trends in Section \ref{thm.discus.sec}---there exist settings where either assumption can hold without the other holding. Investigating this sheds light on when conditional or marginal parallel trends may be more plausible. We later show (Theorem \ref{te.interp.prop}) that conditional parallel trends is necessary to estimate conditional average treatment effects via FETWFE, but there are estimators that are consistent for marginal average treatment effects under either conditional or marginal parallel trends. So it is natural to wonder what conditions allow an estimator to be consistent for marginal average treatment effects under either assumption, and whether this might be possible for FETWFE. In Section \ref{pt.robust.sec} we explore when this might hold, ultimately investigating Assumption (CIUN) in more detail and showing how it affects the consistency of regression \eqref{wooldridge.6.33.model}. These ideas underlie the proof of Theorem \ref{main.te.cons.thm}.

\subsection{Relationships Between Unconfounded Treatment Assignment, Marginally Randomized Treatment Assignment, and Parallel Trends Assumptions}\label{par.trend.app}

In this section we present Proposition \ref{unconf.ccts.cts.thm.app}, which relates the plausibility of several assumptions on the potential outcomes. Ultimately, this motivates difference-in-differences estimators like FETWFE as having more plausible untestable assumptions than methods that rely on unconfoundedness (like matching) or randomized assignment.

We consider an assumption of unconfoundedness of the untreated potential outcomes, which we define in this setting as
\begin{equation}\label{unconfound.def}
\tilde{y}_{it}(0) \indep W_i \ \mid \ \boldsymbol{X}_i \qquad i \in [N], t \in \{1, \ldots, T\}
.
\end{equation}
This requires that treatment assignment is independent of the untreated potential outcomes conditional on the time-invariant (pre-treatment) covariates \(\boldsymbol{X}_i\). \citet[Section 3.2.2]{abadie2005semiparametric} considered a similar assumption in the \(T = 2\) case. As we discuss in more detail later in this section, it has previously been noted that the assumption of unconfoundedness \citep[Def. 3.6]{imbens_rubin_2015} that is common in causal inference is a stronger assumption than (CCTS).

Sometimes unconfoundedness is taken to be a stronger assumption that all of the potential outcomes are independent of treatment status conditional on covariates; such a condition would imply \eqref{unconfound.def}. \citet[Assumption 2.1]{shaikh2021randomization} is one example of a work that uses such a stronger assumption in a setting with continuous treatment status.

We note that different notions of unconfoundedness have been considered in the difference-in-differences setting. For example, \citet[Section 6.5.4]{imbens2009recent} consider unconfoundedness in a two-period difference-in-differences setting with no covariates, where the lagged outcome is conditioned on rather than covariates. Similarly, \citet{callaway2023policy} consider unconfoundedness conditional on several lagged time-varying variables. \citet{ding2019bracketing} and \citet{lindner2019difference} consider an ignorability condition where the untreated potential outcome is independent of treatment assignment status conditional on covariates and the lagged outcome.

We also refer to an assumption of marginal independence of the untreated potential outcomes and treatment status,
\begin{equation}\label{marg.ind.def}
\tilde{y}_{it}(0) \indep W_i  \qquad i \in [N], t \in \{1, \ldots, T\}.
\end{equation}
Completely randomized assignment, an assumption considered by, for example, \citet[Assumption 1]{athey2022design} and \citet[Case 1]{roth2023parallel}, is sufficient for \eqref{marg.ind.def}.

Lastly, we say that (CCTS) or (CTS) holds in a \textit{transformation-invariant} sense if \eqref{ccts.def} or \eqref{cts} (respectively) hold under any transformation \(h(\tilde{y}_{it}(r))\) of the potential outcomes such that the expectations exist. Some works have expressed concern that parallel trends assumptions are fragile because they may hold only if the response is not transformed, or only under a specific transformation \citep{meyer1995natural, athey2006identification, kahn2020promise}. \citet{roth2023parallel} investigate in detail when marginal parallel trends assumptions are transformation-invariant in the \(T =2\) setting, and they point out that their arguments extend straightforwardly to the arbitrary \(T\) setting as well as to conditional parallel trends assumptions. Proposition \ref{unconf.ccts.cts.thm.app} explicitly draws attention to an implication of their Propositions 3.1 and 3.2 when extended to these more general settings.

\begin{proposition}\label{unconf.ccts.cts.thm.app}
Assume that the conditional expectations in \eqref{ccts.def} and \eqref{cts} exist and are finite. 

\begin{enumerate}[(a)]

\item Unconfoundedness of the untreated potential outcomes is a strictly stronger assumption than conditional parallel trends: \eqref{unconfound.def} is sufficient for Assumption CCTS to hold in a transformation-invariant sense, but (CCTS) does not imply unconfoundedness of the untreated potential outcomes.

\item Marginal independence of treatment status and the untreated potential outcomes is a strictly stronger assumption than marginal parallel trends: \eqref{marg.ind.def} is sufficient for Assumption CTS to hold in a transformation-invariant sense, but (CTS) does not imply \eqref{marg.ind.def}.

\end{enumerate}

\end{proposition}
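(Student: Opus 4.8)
The plan is to treat both parts with a common two-step template: a short forward implication from the independence hypothesis to the relevant parallel-trends assumption, valid under \emph{every} transformation, followed by an explicit admissible counterexample showing that the converse can fail. Parts (a) and (b) differ only in whether we condition on $\boldsymbol{X}_i$, so I would give (a) in full and obtain (b) by deleting the conditioning on $\boldsymbol{X}_i$ throughout.

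For the sufficiency direction of (a), the key fact is that conditional independence is stable under measurable transformations: \eqref{unconfound.def} gives $\tilde{y}_{(it)}(0) \indep W_i \mid \boldsymbol{X}_i$ for each $t$, hence $h(\tilde{y}_{(it)}(0)) \indep W_i \mid \boldsymbol{X}_i$ for any measurable $h$ with integrable image. Therefore, for every $r \in \{0\} \cup \mathcal{R}$,
\[
\E[h(\tilde{y}_{(it)}(0)) \mid W_i = r, \boldsymbol{X}_i] = \E[h(\tilde{y}_{(it)}(0)) \mid \boldsymbol{X}_i]
\]
almost surely, with no dependence on $r$. Taking the difference between the time-$t$ and time-$1$ instances, and noting that the right-hand side is the same for every cohort, shows that both sides of the transformed \eqref{ccts.def} equal $\E[h(\tilde{y}_{(it)}(0)) - h(\tilde{y}_{(i1)}(0)) \mid \boldsymbol{X}_i]$. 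As $h$ is arbitrary, (CCTS) holds transformation-invariantly. Part (b) is the verbatim marginal analogue: \eqref{marg.ind.def} yields $\E[h(\tilde{y}_{(it)}(0)) \mid W_i = r] = \E[h(\tilde{y}_{(it)}(0))]$ for all $r$ and $h$, giving (CTS) transformation-invariantly.

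For strictness, I would exhibit a process that satisfies the maintained assumptions but has parallel trends without independence, via time-invariant confounding. Take
\[
\tilde{y}_{(it)}(0) = \mu_{W_i} + \gamma_t + c_i + u_{(it)},
\]
where $\gamma_t$ is a deterministic common trend, $(c_i, u_{(it)})$ satisfy (F1), and $\mu_r$ is a cohort-specific level that genuinely varies with $r$ --- exactly the time-invariant selection-bias term captured by $\nu_r^*$ in \eqref{t.1.treated} and discussed in Remark \ref{eff.rmk}. Then the trend $\tilde{y}_{(it)}(0) - \tilde{y}_{(i1)}(0) = (\gamma_t - \gamma_1) + (u_{(it)} - u_{(i1)})$ has conditional mean $\gamma_t - \gamma_1$ independent of the cohort, so (CCTS) (and, \emph{a fortiori}, (CTS)) holds, whereas $\E[\tilde{y}_{(it)}(0) \mid W_i = r] = \mu_r + \gamma_t$ depends on $r$, so neither \eqref{unconfound.def} nor \eqref{marg.ind.def} can hold. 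Choosing $\mu_r$ bounded and the cohort probabilities strictly interior keeps the example consistent with (F1) and (F2).

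The only real subtlety is the transformation-invariant clause of the forward implication, which is where one must resist proving merely the identity-transformation (levels) version; the stability of independence under $h$ makes this immediate once it is stated correctly, and I would separately verify admissibility of the counterexample under (F1) and (F2) so that it is a genuine instance of the model rather than a degenerate one. This argument is the arbitrary-$T$, conditional-trends extension of Propositions 3.1 and 3.2 of \citet{roth2023parallel}, to which I would point for the transformation-invariance characterization.
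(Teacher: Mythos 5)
Your proposal is correct and follows essentially the same route as the paper: the forward direction uses the stability of (conditional) independence under transformations $h$ to get $\E[h(\tilde{y}_{(it)}(0)) \mid W_i, \boldsymbol{X}_i] = \E[h(\tilde{y}_{(it)}(0)) \mid \boldsymbol{X}_i]$ and then differences the $t$ and $1$ instances, exactly as in the paper's \eqref{unconf.req}, and the converse is settled by a counterexample with time-invariant, cohort-dependent level shifts. The only cosmetic difference is that your counterexample is a concrete additive model $\tilde{y}_{(it)}(0) = \mu_{W_i} + \gamma_t + c_i + u_{(it)}$, which is a special case (with $f(W_i,\boldsymbol{X}_i) = \mu_{W_i} - \E[\mu_{W_i} \mid \boldsymbol{X}_i]$, constant in $t$) of the paper's abstract family in which $\E[\tilde{y}_{(it)}(0) \mid W_i, \boldsymbol{X}_i] - \E[\tilde{y}_{(it)}(0) \mid \boldsymbol{X}_i]$ is an arbitrary nonzero function of $(W_i,\boldsymbol{X}_i)$ not depending on $t$.
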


\begin{proof}

\begin{enumerate}[(a)]

\item

Under unconfoundedness \eqref{unconfound.def}, for any \(i \in [N]\), \(t \in \{1, \ldots, T\}, r \in \mathcal{R}\) almost surely it holds that
\begin{equation}\label{unconf.req}
  \E [ h(\tilde{y}_{(it)} (0) ) \mid  W_i  , \boldsymbol{X}_{i} ]  =  \E [ h(\tilde{y}_{(it)} (0) )\mid   \boldsymbol{X}_{i} ] 
\end{equation}
for any arbitrary \(h(\cdot)\) such that the above expectation exists. For (CCTS) to hold with transformational invariance, we need that for any \(h(\cdot)\) such that the expectations exist and any \(i \in [N]\), \(t \in \{2, \ldots, T\}, r \in \mathcal{R}\), almost surely
\[
\E [ h(\tilde{y}_{(it)} (0) )- h(\tilde{y}_{(i1)}(0)) \mid  W_i , \boldsymbol{X}_{i} ]  = \E [ h(\tilde{y}_{(it)} (0)) -  h(\tilde{y}_{(i1)}(0)) \mid \boldsymbol{X}_{i} ] 
.
\] 
But this is immediate from \eqref{unconf.req}. We show that the reverse implication does not hold by constructing a counterexample. (CCTS) can be rearranged as
\begin{equation}\label{ccts.step}
\E [ \tilde{y}_{(it)} (0)  \mid  W_i, \boldsymbol{X}_{i} ]  -  \E [ \tilde{y}_{(it)} (0) \mid  \boldsymbol{X}_{i} ]   =  \E [  \tilde{y}_{(i1)}(0) \mid  W_i, \boldsymbol{X}_{i} ]  -  \E [  \tilde{y}_{(i1)}(0) \mid  \boldsymbol{X}_{i} ]  ~ \forall t \in [T]
.
\end{equation}
Suppose that for all \(t \in \{1, 2, \ldots, T\}\), it holds that \( \E [ \tilde{y}_{(it)} (0)  \mid  W_i, \boldsymbol{X}_{i} ]  -  \E [ \tilde{y}_{(it)} (0) \mid  \boldsymbol{X}_{i} ]  = f(W_i, \boldsymbol{X}_i)\) almost surely for an arbitrary \(f(\cdot, \cdot )\) that is not identically 0. Then from \eqref{ccts.step} we see that (CCTS) will hold. However, unconfoundedness does not hold since \eqref{unconf.req} with \(h(\cdot)\) as the identity function is violated.

\item

This proof is similar. Under marginal independence of treatment status and the untreated potential outcomes \eqref{marg.ind.def}, for any \(i \in [N]\), \(t \in \{1, \ldots, T\}, r \in \mathcal{R}\) almost surely it holds that
\begin{equation}\label{marg.ind.req}
  \E [ h(\tilde{y}_{(it)} (0) )  \mid  W_i   ]  =  \E [ h(\tilde{y}_{(it)} (0))  ] 
\end{equation}
for any \(h(\cdot)\) such that the expectation holds. It is immediate that for any such \(h(\cdot)\) and any \(i \in [N]\), \(t \in \{2, \ldots, T\}, r \in \mathcal{R}\), almost surely \(\E [ h(\tilde{y}_{(it)} (0)) - h( \tilde{y}_{(i1)}(0)) \mid  W_i ]  = \E [ h(\tilde{y}_{(it)} (0) ) - h(\tilde{y}_{(i1)}(0)) ] \), so (CTS) holds with transformational invariance. On the other hand, (CTS) can be rearranged as
\begin{equation}\label{cts.step}
\E [ \tilde{y}_{(it)} (0)  \mid  W_i  ]  -  \E [ \tilde{y}_{(it)} (0)  ]   =  \E [  \tilde{y}_{(i1)}(0) \mid  W_i  ]  -  \E [  \tilde{y}_{(i1)}(0) ]  \qquad \forall t \in [T]
.
\end{equation}
If for all \(t \in \{1, 2, \ldots, T\}\) we have \( \E [ \tilde{y}_{(it)} (0)  \mid  W_i  ]  -  \E [ \tilde{y}_{(it)} (0) ]  = g(W_i )\) almost surely for an arbitrary \(g(\cdot )\) that is not identically 0, then \eqref{cts.step} holds but \eqref{marg.ind.req} with \(h(\cdot)\) as the identity function is violated. So (CTS) holds but marginal independence of treatment status and the untreated potential outcomes does not.

 \end{enumerate}
 
 \end{proof}

Since unconfoundedness often seems more plausible than marginal independence of the potential outcomes and treatment status, Proposition \ref{unconf.ccts.cts.thm.app}(a) and (b) considered together suggest an intuition that conditional parallel trends may often be more plausible than marginal parallel trends. Proposition \ref{unconf.ccts.cts.thm.app} also points out that difference-in-differences estimators are strictly more robust than identification strategies that rely on assumptions of completely randomized assignment or unconfoundedness. 

Proposition \ref{unconf.ccts.cts.thm.app}(a) verifies that (CCTS) is a strictly weaker assumption than unconfoundedness. As we mentioned earlier, this is a fairly straightforward observation that has been noted before; for example, \citet{abadie2005semiparametric, heckman1997matching}, and \citet{heckman1998characterizing} mention that this fact holds in the \(T = 2\) case. Besides the previously mentioned intuition that Proposition \ref{unconf.ccts.cts.thm.app}(a) and (b) provide that conditional parallel trends may be more plausible in many applications than marginal parallel trends, we also draw attention to the facts in Proposition \ref{unconf.ccts.cts.thm.app}(a) to point out explicitly that difference-in-differences identification strategies like FETWFE that rely on (CCTS) are more robust than methods like matching on time-invariant covariates that rely on an unconfoundedness assumption \citep{abadie2006large, imbens_rubin_2015}.



The implications of Proposition \ref{unconf.ccts.cts.thm.app}(b) are analogous, and previous works have also shown that random or ``as good as random" treatment assignment is sufficient for (CTS) \citep{athey2022design, roth2023parallel, ghanem2023selection}.

\subsection{Neither Conditional Nor Marginal Parallel Trends is Strictly More Plausible}\label{thm.discus.sec}

Proposition \ref{unconf.ccts.cts.thm.app} showed that (CTS) and (CCTS) are strictly more plausible than (respectively) marginal independence of the potential outcomes and treatment and unconfoundedness. Theorem \ref{unconf.ccts.cts.thm} shows that (CTS) and (CCTS) cannot be related in this way---neither is strictly more plausible than the other.

\begin{theorem}\label{unconf.ccts.cts.thm}
Assume that the conditional expectations in \eqref{ccts.def} and \eqref{cts} exist and are finite. Neither of conditional parallel trends or marginal parallel trends is a relaxation of the other: Assumption CCTS \eqref{ccts.def} (conditional parallel trends) does not imply Assumption CTS \eqref{cts} (marginal parallel trends), nor does (CTS) imply (CCTS).

\end{theorem}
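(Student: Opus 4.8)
The plan is to prove both non-implications by exhibiting explicit joint distributions of the triple $(\tilde{y}_{(i\cdot)}(0), W_i, \boldsymbol{X}_i)$: one in which (CCTS) holds but (CTS) fails, and one in which (CTS) holds but (CCTS) fails. The organizing observation is that the marginal trend of a cohort is the conditional trend averaged against the covariate law \emph{within} that cohort. Writing $m_{r,t}(\boldsymbol{x}) := \E[\tilde{y}_{(it)}(0) - \tilde{y}_{(i1)}(0) \mid W_i = r, \boldsymbol{X}_i = \boldsymbol{x}]$, the law of iterated expectations gives $\E[\tilde{y}_{(it)}(0) - \tilde{y}_{(i1)}(0) \mid W_i = r] = \E[m_{r,t}(\boldsymbol{X}_i) \mid W_i = r]$. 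Thus (CCTS) is a statement about the functions $m_{r,t}$ themselves, while (CTS) is a statement about their cohort-specific averages, and the two come apart precisely through (i) dependence between $\boldsymbol{X}_i$ and $W_i$ and (ii) dependence of $m_{r,t}$ on $\boldsymbol{x}$. It suffices to take the minimal configuration $T = 2$, $\mathcal{R} = \{2\}$, with a scalar, bounded covariate $X_i$.

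For (CCTS) without (CTS), I would make the conditional trend common across cohorts but genuinely covariate-dependent, and induce selection on covariates. Concretely, let $X_i$ take values in $\{-1, +1\}$ with $\mathbb{P}(W_i = 2 \mid X_i = 1) \neq \mathbb{P}(W_i = 2 \mid X_i = -1)$, and set $\tilde{y}_{(i1)}(0) \equiv 0$ and $\tilde{y}_{(i2)}(0) = X_i$ deterministically. Then $m_{2,2}(x) = m_{0,2}(x) = x$, so (CCTS) holds trivially since the trend does not depend on $W_i$ at all; but $\E[X_i \mid W_i = 2] \neq \E[X_i \mid W_i = 0]$ because treatment selection shifts the covariate distribution, so the two marginal trends differ and (CTS) fails.

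For (CTS) without (CCTS), I would reverse the two levers: make $\boldsymbol{X}_i$ independent of $W_i$ but let the conditional trend carry cohort-specific, mean-zero heterogeneity in $\boldsymbol{x}$. Concretely, take $X_i$ with $\E[X_i] = 0$ and $X_i \indep W_i$, and set $\tilde{y}_{(i1)}(0) \equiv 0$ and $\tilde{y}_{(i2)}(0) = c + \mathbbm{1}\{W_i = 2\} X_i$. Then $m_{2,2}(x) = c + x \neq c = m_{0,2}(x)$ for $x \neq 0$, so (CCTS) fails; yet by independence the cohort averages satisfy $\E[m_{2,2}(X_i) \mid W_i = 2] = c + \E[X_i] = c = \E[m_{0,2}(X_i) \mid W_i = 0]$, so (CTS) holds.

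The only thing left to verify is that these are legitimate constructions: that the specified potential outcomes, treatment, and covariate jointly define a valid model realizing the claimed conditional means, and that all conditional expectations are finite. I expect no real obstacle here, since $X_i$ is bounded and the potential outcomes are bounded affine functions of $(X_i, W_i)$, so the verification reduces to the two displayed one-line cohort-average computations. If desired, one can add independent mean-zero noise to each potential outcome without affecting any of these conditional-mean calculations, which makes the examples fully generic rather than degenerate.
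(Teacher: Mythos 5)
Your proof is correct and takes essentially the same route as the paper's: for (CCTS) without (CTS) both arguments use a common covariate-dependent trend combined with covariate--treatment dependence, and for (CTS) without (CCTS) both use covariate--treatment independence with cohort-specific, mean-zero covariate heterogeneity in the conditional trends that cancels upon averaging. Your counterexamples are simply more explicit instantiations (deterministic potential outcomes, a single cohort in the second example) of the paper's constructions.
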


\begin{proof}
Provided in Appendix \ref{proofs.par.trend.ciuu.app}.

\end{proof}

We prove Theorem \ref{unconf.ccts.cts.thm} by providing counterexamples where one of the parallel trends assumptions holds but the other does not. We briefly provide some intuition behind the counterexamples here in order to highlight some of the relative strengths and weaknesses of each assumption. 

(CCTS) may hold when the time-invariant covariates have a time-varying effect on the potential outcomes, while (CTS) will not hold. This seems plausible if external conditions change that change the relevance of some features for the response. \citet{ham2022benefits} and \citet{callaway2023treatment} touch on closely related issues, investigating settings where parallel trends does not hold because time-invariant confounders have time-varying effects on the potential outcomes. We also expand on related ideas in Section \ref{pt.robust.sec}.

On the other hand, (CCTS) requires parallel trends to hold on every subgroup of \(\boldsymbol{X}_i\). But (CTS) can hold if this is violated and parallel trends holds merely on average across the distribution of the covariates. That is, for any \(r \in \{0\} \cup \mathcal{R}\), the quantity 
\begin{equation}\label{quantity.subg}
\E [ \tilde{y}_{(it)} (0) - \tilde{y}_{(i1)}(0) \mid  W_i = r, \boldsymbol{X}_{i} ]  -  \E [ \tilde{y}_{(it)} (0) - \tilde{y}_{(i1)}(0) \mid  \boldsymbol{X}_{i} ] 
\end{equation}
may be positive on some subgroups of \(\boldsymbol{X}_i\) if this is canceled out by \eqref{quantity.subg} being negative on other subgroups such that \eqref{cts} holds after marginalizing the left side of \eqref{ccts.def} across the distribution of \(\boldsymbol{X}_i\) conditional on \(W_i = r\) and the right side across the distribution of \(\boldsymbol{X}_i\) conditional on \(W_i = 0\). In this particular sense, (CTS) allows for more variation in the time trends across the distribution of \(\boldsymbol{X}_i\) than (CCTS). We also emphasize that both (CCTS) and (CTS) allow for heterogeneity in the sense that the values of \(\E [ \tilde{y}_{(it)} (0) - \tilde{y}_{(i1)}(0) \mid  W_i, \boldsymbol{X}_{i} ] \) and \( \E [ \tilde{y}_{(it)} (0) - \tilde{y}_{(i1)}(0) \mid  \boldsymbol{X}_{i} ] \) may vary with \(\boldsymbol{X}_i\). 

However, if \eqref{ccts.def} does not hold, \eqref{cts} holding exactly after marginalizing \eqref{ccts.def} appropriately may seem somewhat implausible. Nonetheless, it could be the case that \eqref{ccts.def} is not close to holding, but \eqref{cts} approximately holds. Relaxing (CTS) by allowing small violations has been considered in recent works on sensitivity analyses for difference-in-differences estimation \citep{manski2018right, ban2022generalized, rambachan2023more, ghanem2023selection}. That said, \citet[Section 7.2]{ghanem2023selection} point out that adjusting for covariates (that is, performing a sensitivity analysis under an assumption of conditional parallel trends that may not hold exactly) can result in smaller plausible ranges of marginal average treatment effects. 

Overall, this suggests an intuition that (CCTS) may be somewhat more plausible in practice than (CTS), but since neither assumption nests the other, an ideal estimator would be asymptotically unbiased under either assumption. We discuss this in more detail in the next section.


\subsection{Parallel Trends, Assumption (CIUN), and Regression \eqref{wooldridge.6.33.model}}\label{pt.robust.sec}

Theorem \ref{unconf.ccts.cts.thm} motivates interest in estimators of marginal average treatment effects that are consistent under either conditional or marginal parallel trends. We have seen that Assumption (CIUN) is helpful for this---under the conditions of Theorem \ref{main.te.cons.thm}, FETWFE consistently estimates heterogeneous marginal average treatment effects under either conditional or marginal parallel trends if (CIUN) holds. In this section we fill in more detail about (CIUN), build some intuition for its role in the consistency of difference-in-differences estimators, and establish the consistency of regression \eqref{wooldridge.6.33.model} under (CIUN).

We first introduce a related assumption that, unlike (CIUN), is not directly testable.


\textbf{Assumption (CIU)} (``conditional independence of trends of the untreated potential outcomes"): Almost surely for all \(r \in \{0\} \cup \mathcal{R}\), \(i \in [N]\), and any \(t \in \{2, \ldots, T\}\),
\begin{align*}
 \E \left[ \tilde{y}_{(i t)}(0) - \tilde{y}_{(i1)}(0) \mid W_i = r \right]     = ~ &    \E \left[ \tilde{y}_{(i t)}(0) - \tilde{y}_{(i1)}(0) \mid W_i = r, \boldsymbol{X}_{i}   \right]  
.
\end{align*}

(CIU) is similar to (CIUN), but it requires the trends in the untreated potential outcomes to be mean-independent of \(\boldsymbol{X}_i\) for units in all cohorts, so it cannot be tested.

It is immediate that (CCTS) and (CTS) are equivalent under (CIU).

\begin{proposition}\label{equiv.cond.ciua}
Under (CIU), conditional parallel trends (CCTS) and marginal parallel trends (CTS) are equivalent.
\end{proposition}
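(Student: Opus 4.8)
The plan is to reduce the equivalence to a short chain of substitutions among three conditional expectations, with essentially all of the care concentrated in the handling of the almost-sure qualifiers. First I would fix a unit $i \in [N]$, a cohort $r \in \mathcal{R}$, and a time $t \in \{2, \ldots, T\}$, and abbreviate the trend in the untreated potential outcomes by $\Delta := \tilde{y}_{(it)}(0) - \tilde{y}_{(i1)}(0)$. For each $s \in \{0\} \cup \mathcal{R}$ I would introduce the regression function $g_s(\boldsymbol{x}) := \E[\Delta \mid W_i = s, \boldsymbol{X}_i = \boldsymbol{x}]$ together with the constant $c_s := \E[\Delta \mid W_i = s]$. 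Under (F2) every event $\{W_i = s\}$ has positive probability and, by the overlap $0 < \pi_s(\boldsymbol{X}_i) < 1$, the conditional laws of $\boldsymbol{X}_i$ given $W_i = s$ are mutually absolutely continuous and share the support of $\boldsymbol{X}_i$, so that the ``almost surely'' statements below can all be read against a common reference measure on $\boldsymbol{x}$. In this notation (CIU) says exactly that $g_s(\boldsymbol{x}) = c_s$ for almost every $\boldsymbol{x}$ and every $s$ (each regression function is a.s.\ constant), (CCTS) says $g_r(\boldsymbol{x}) = g_0(\boldsymbol{x})$ a.s., and (CTS) says $c_r = c_0$.

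With this setup both implications are one-line substitutions. For the forward direction I would assume (CIU) and (CCTS) and chain the three equalities $c_r = g_r(\boldsymbol{x}) = g_0(\boldsymbol{x}) = c_0$, valid for almost every $\boldsymbol{x}$, where the outer equalities are (CIU) applied at $s = r$ and $s = 0$ and the middle one is (CCTS); since the leftmost and rightmost terms are constants, their a.s.\ equality forces $c_r = c_0$, which is (CTS). The reverse direction is symmetric: assuming (CIU) and (CTS) I would write $g_r(\boldsymbol{x}) = c_r = c_0 = g_0(\boldsymbol{x})$ for almost every $\boldsymbol{x}$, with the middle equality now supplied by (CTS) and the outer two by (CIU), which is precisely (CCTS). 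Because $r$ and $t$ were arbitrary, each conclusion holds for all the required indices, completing both directions.

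There is no substantive obstacle here; the mathematical content is the single substitution in each direction, and the result is essentially definitional once (CIU) is recognized as the statement that the conditional trend does not depend on $\boldsymbol{X}_i$ within any cohort. The only point that genuinely requires care is the bookkeeping of the almost-sure qualifiers: (CIU) equates a $\boldsymbol{X}_i$-measurable regression function to a constant, while (CCTS) equates the regression functions attached to the two disjoint conditioning events $\{W_i = r\}$ and $\{W_i = 0\}$. I would make these compatible by phrasing everything through the deterministic functions $g_s(\cdot)$ evaluated at a common argument $\boldsymbol{x}$, and by invoking the overlap from (F2) so that ``almost every $\boldsymbol{x}$'' refers unambiguously to the common support across cohorts. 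With that convention the two chains hold pointwise almost everywhere, and collapsing the equalities of constants (forward) or of functions (reverse) finishes the proof.
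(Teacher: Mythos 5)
Your proof is correct and takes essentially the same route as the paper: both directions amount to applying (CIU) to each side of the parallel-trends equation, which is exactly how the paper establishes the equivalence in a single chain of substitutions. Your additional bookkeeping of the almost-sure qualifiers via the common support under (F2) is a harmless refinement of a point the paper's proof leaves implicit.
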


\begin{proof}
We can apply (CIU) on each side of (CTS) to immediately yield that for any \(i \in [N]\), \(r \in \mathcal{R}\), and any \(t \geq 2\), 
\begin{align*}
\E [ \tilde{y}_{(it)} (0)   - \tilde{y}_{(i1)} (0)\mid W_i = r ]   = ~ &   \E [ \tilde{y}_{(it)} (0) - \tilde{y}_{(i1)}(0) \mid W_i = 0 ] 
\\ \iff \qquad    \E [ \tilde{y}_{(it)} (0) - \tilde{y}_{(i1)}(0) \mid W_i = r, \boldsymbol{X}_{i} ]  = ~ &   \E [ \tilde{y}_{(it)} (0) - \tilde{y}_{(i1)}(0) \mid W_i = 0, \boldsymbol{X}_{i} ] 
.
\end{align*}

\end{proof}

One implication of Proposition \ref{equiv.cond.ciua} is that any method that consistently estimates the marginal average treatment effects under either (CCTS) or (CTS) is consistent under both assumptions if (CIU) holds. 

Although (CIUN) does not guarantee the equivalence of (CTS) and (CCTS), under (CIUN) conditional and marginal parallel trends are still connected.

\begin{proposition}\label{equiv.cond.ciuu}
Suppose (CIUN) holds. Then conditional parallel trends (CCTS) implies marginal parallel trends (CTS).
\end{proposition}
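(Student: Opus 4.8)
The plan is to fix an arbitrary cohort $r \in \mathcal{R}$ and time $t \in \{2, \ldots, T\}$ and show directly that the marginal trend for cohort $r$ coincides with the marginal trend for the never-treated cohort, which is exactly the content of (CTS). Throughout I write $\Delta_{it} := \tilde{y}_{(it)}(0) - \tilde{y}_{(i1)}(0)$ for the trend in the untreated potential outcomes.

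First I would work on the never-treated side. Assumption (CIUN) says precisely that $\E[\Delta_{it} \mid W_i = 0]$ and $\E[\Delta_{it} \mid W_i = 0, \boldsymbol{X}_i]$ agree almost surely; the key consequence I want to extract is that the conditional-on-covariates trend for the never-treated units is (almost surely) constant in $\boldsymbol{X}_i$, namely equal to the scalar $\E[\Delta_{it} \mid W_i = 0]$.

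Next I would feed this into (CCTS). Assumption (CCTS) gives $\E[\Delta_{it} \mid W_i = r, \boldsymbol{X}_i] = \E[\Delta_{it} \mid W_i = 0, \boldsymbol{X}_i]$ almost surely, and combining this with the previous step shows that the cohort-$r$ conditional-on-covariates trend also equals the constant $\E[\Delta_{it} \mid W_i = 0]$ almost surely. The final step is then to marginalize over the within-cohort distribution of $\boldsymbol{X}_i$: by the law of iterated expectations, $\E[\Delta_{it} \mid W_i = r] = \E\!\left[\E[\Delta_{it} \mid W_i = r, \boldsymbol{X}_i] \mid W_i = r\right]$, and since the inner expectation is the constant $\E[\Delta_{it} \mid W_i = 0]$, the outer expectation returns that same constant. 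This yields $\E[\Delta_{it} \mid W_i = r] = \E[\Delta_{it} \mid W_i = 0]$, which is (CTS).

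I do not anticipate a serious obstacle here; the proof is a short chain of substitutions. The one point requiring care is that (CIUN) only asserts mean-independence of the trends from $\boldsymbol{X}_i$ for the never-treated cohort, not for cohort $r$. The argument therefore has to route the cohort-$r$ conditional trend through the never-treated conditional trend via (CCTS) before invoking mean-independence, rather than attempting to apply a mean-independence statement directly to cohort $r$ (which is exactly what the stronger, untestable assumption (CIU) would provide, and why Proposition \ref{equiv.cond.ciua} can conclude equivalence whereas here we obtain only a one-directional implication). Keeping track of the almost-sure qualifiers when substituting one conditional expectation into another is the only genuinely technical bookkeeping.
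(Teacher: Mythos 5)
Your proof is correct and follows essentially the same route as the paper's: apply (CIUN) to replace the never-treated conditional-on-covariates trend in (CCTS) with the constant $\E[\tilde{y}_{(it)}(0) - \tilde{y}_{(i1)}(0) \mid W_i = 0]$, then take expectations conditional on $W_i = r$ and use the tower property on the left while the right side is unchanged because it is a constant. Your closing remark about why mean-independence must be routed through (CCTS) rather than applied directly to cohort $r$ (and why this yields only a one-directional implication, in contrast to Proposition \ref{equiv.cond.ciua}) is also accurate and consistent with the paper's discussion.
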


\begin{proof}
Applying (CIUN) to one side of (CCTS), we have that for any \(i \in [N]\), \(r \in \mathcal{R}\), and any \(t \geq 2\), 
\begin{align*}
 \E [ \tilde{y}_{(it)} (0) - \tilde{y}_{(i1)}(0) \mid W_i = r, \boldsymbol{X}_{i} ]  = ~ &   \E [ \tilde{y}_{(it)} (0) - \tilde{y}_{(i1)}(0) \mid W_i = 0, \boldsymbol{X}_{i} ] 
\\ \iff \qquad \E [ \tilde{y}_{(it)} (0)   - \tilde{y}_{(i1)} (0)\mid W_i = r , \boldsymbol{X}_{i}  ]   = ~ &   \E [ \tilde{y}_{(it)} (0) - \tilde{y}_{(i1)}(0) \mid W_i = 0 ] 
\\ \implies ~ \E \left[ \E [ \tilde{y}_{(it)} (0)   - \tilde{y}_{(i1)} (0)\mid W_i = r , \boldsymbol{X}_{i}  ] \mid W_i = r \right]   
 = ~ &  \E \left[  \E [ \tilde{y}_{(it)} (0) - \tilde{y}_{(i1)}(0) \mid W_i = 0 ]   \mid W_i = r \right] 
\\ \iff  \qquad \E \left[ \tilde{y}_{(it)} (0)   - \tilde{y}_{(i1)} (0)\mid W_i = r \right]   = ~ &   \E [ \tilde{y}_{(it)} (0) - \tilde{y}_{(i1)}(0) \mid W_i = 0 ]   
.
\end{align*}

\end{proof}

Proposition \ref{equiv.cond.ciuu} shows that any method that consistently estimates the marginal average treatment effects is also consistent under (CIUN) and (CCTS).

Notice that Assumptions (CIU) and (CIUN) are closely related to the time-varying effects of the covariates on the potential outcomes that we mentioned in our discussion of Theorem \ref{unconf.ccts.cts.thm} in Section \ref{thm.discus.sec}, which can cause marginal parallel trends to fail to hold even if conditional parallel trends holds. Proposition \ref{equiv.cond.ciuu} shows that (CIUN) rules out such settings.

Next we present a key result that underlies Theorem \ref{first.thm.fetwfe}, and in turn Theorems \ref{main.te.cons.thm}, \ref{te.oracle.thm}, and \ref{te.asym.norm.thm}.

\begin{theorem}[Assumption (CIUN) and the Consistency of Regression \ref{wooldridge.6.33.model}]\label{te.interp.prop}
Assume that (CNAS) and \eqref{treat.eff.def.covs} from (LINS) hold.
\begin{enumerate}[(a)]

\item (Consistency under conditional parallel trends.) The following are equivalent: (i) conditional parallel trends after treatment (CCTSA) holds, and (ii) for all \(r \in \mathcal{R}\) and all \(t \geq r\), for the regression estimands \(\tau_{rt}^*\) and \( \boldsymbol{\rho}_{rt}^*\) defined in \eqref{treat.eff.def.covs} it holds almost surely that the conditional average treatment effects \(\tau_{\text{CATT}} (r, t, \boldsymbol{X}_i)  \) defined in \eqref{catt.t.r.def} equal \(  \tau_{rt}^* +  \left(\boldsymbol{X}_i -  \E \left[ \boldsymbol{X}_{i} \mid  W_i = r \right] \right)^\top \boldsymbol{\rho}_{rt}^*\).

\item (Consistency under marginal parallel trends and CIUN.) If (CIUN) and marginal parallel trends after treatment (CTSA) hold, then for all \(r \in \mathcal{R}\) and all \(t \geq r\) the regression estimand \(\tau_{rt}^*\) is equal to the marginal average treatment effect \(\tau_{\text{ATT}}(r,t)\) from \eqref{att.cohort.time}..

\item (Identifying the bias-inducing covariates under CTSA if CIUN does not hold.) Suppose (CTSA) and \eqref{trend.params} from (LINS) hold. Then for any \(r \in \mathcal{R}\) and \(t \in \{r, \ldots, T\}\), 
\[
\tau_{rt}^*  = \tau_{\text{ATT}}(r, t)  \nonumber  -  \left(  \E[\boldsymbol{X}_i \mid W_i = r]   -  \E \left[  \boldsymbol{X}_i  \mid W_i = 0 \right]   \right)  ^\top \boldsymbol{\xi}_t^* 
.
\]

\end{enumerate}

\end{theorem}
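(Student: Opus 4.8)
The plan is to handle all three parts through one unifying computation. Writing $L_{rt}(\boldsymbol{x})$ for the left-hand side of \eqref{treat.eff.def.covs}, part (a) amounts to comparing $L_{rt}(\boldsymbol{x})$ pointwise with $\tau_{\text{CATT}}(r,t,\boldsymbol{x})$, while parts (b) and (c) amount to marginalizing $L_{rt}(\boldsymbol{X}_i)$ over the conditional law of $\boldsymbol{X}_i$ given $W_i = r$. Throughout I use that every $r \in \mathcal{R}$ satisfies $1 < r$, so (CNAS) applies at $t = 1$, giving $\E[\tilde{y}_{(i1)}(r) \mid W_i = r, \boldsymbol{X}_i] = \E[\tilde{y}_{(i1)}(0) \mid W_i = r, \boldsymbol{X}_i]$.

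For part (a), I would substitute this $t=1$ identity into $L_{rt}(\boldsymbol{x})$ and then subtract $\tau_{\text{CATT}}(r,t,\boldsymbol{x})$. The treated-outcome terms cancel, and the difference collapses to exactly
\[
\E[\tilde{y}_{(it)}(0) - \tilde{y}_{(i1)}(0) \mid W_i = r, \boldsymbol{X}_i = \boldsymbol{x}] - \E[\tilde{y}_{(it)}(0) - \tilde{y}_{(i1)}(0) \mid W_i = 0, \boldsymbol{X}_i = \boldsymbol{x}],
\]
which is precisely the two-sided discrepancy in \eqref{ccts.def}. This vanishes for all $t \geq r$ if and only if (CCTSA) holds, and since $L_{rt}(\boldsymbol{x}) = \tau_{rt}^* + \boldsymbol{\dot{x}}_r^\top \boldsymbol{\rho}_{rt}^*$ by \eqref{treat.eff.def.covs}, this delivers the equivalence (i) $\iff$ (ii) directly.

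For parts (b) and (c) I would first marginalize $L_{rt}(\boldsymbol{X}_i)$ over $\boldsymbol{X}_i \mid W_i = r$. Because the centered vector $\boldsymbol{X}_i - \E[\boldsymbol{X}_i \mid W_i = r]$ has conditional mean $\boldsymbol{0}$ given $W_i = r$, the interaction term drops out and I obtain $\tau_{rt}^* = \E[L_{rt}(\boldsymbol{X}_i) \mid W_i = r]$. The first half of $L_{rt}$ marginalizes by the tower property to $\E[\tilde{y}_{(it)}(r) - \tilde{y}_{(i1)}(r) \mid W_i = r]$. The second half is where I expect the only real difficulty: the nested expectation
\[
\E\!\left[\,\E[\tilde{y}_{(it)}(0) - \tilde{y}_{(i1)}(0) \mid W_i = 0, \boldsymbol{X}_i] \mid W_i = r\,\right],
\]
in which the inner conditioning event $\{W_i = 0\}$ differs from the outer $\{W_i = r\}$, so the tower property does not apply and a $W_i = 0$ kernel is being integrated against the cohort-$r$ covariate distribution. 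This mismatch is exactly what (CIUN) or \eqref{trend.params} is designed to resolve: under (CIUN) the inner expectation is constant in $\boldsymbol{x}$ and survives the outer averaging unchanged as $\E[\tilde{y}_{(it)}(0) - \tilde{y}_{(i1)}(0) \mid W_i = 0]$, whereas under \eqref{trend.params} it is affine, $\gamma_t^* + \boldsymbol{x}^\top \boldsymbol{\xi}_t^*$, so averaging merely replaces $\boldsymbol{x}$ by $\E[\boldsymbol{X}_i \mid W_i = r]$.

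From there the finish is mechanical in both cases. For part (b), I would apply (CTSA) to swap the $W_i = 0$ conditioning for $W_i = r$ in the untreated-trend term, recombine with the treated term, and cancel the baseline $\E[\tilde{y}_{(i1)}(r) - \tilde{y}_{(i1)}(0) \mid W_i = r] = 0$ via (CNAS), landing on $\tau_{rt}^* = \tau_{\text{ATT}}(r,t)$. For part (c), I would additionally eliminate $\gamma_t^*$ by marginalizing \eqref{trend.params} over $\boldsymbol{X}_i \mid W_i = 0$, which gives $\gamma_t^* = \E[\tilde{y}_{(it)}(0) - \tilde{y}_{(i1)}(0) \mid W_i = 0] - \E[\boldsymbol{X}_i \mid W_i = 0]^\top \boldsymbol{\xi}_t^*$; substituting this, applying (CTSA) and the $t=1$ cancellation exactly as in part (b), the untreated-trend terms assemble into $\tau_{\text{ATT}}(r,t)$ while the leftover covariate-mean terms combine into $-(\E[\boldsymbol{X}_i \mid W_i = r] - \E[\boldsymbol{X}_i \mid W_i = 0])^\top \boldsymbol{\xi}_t^*$, yielding the stated identity.
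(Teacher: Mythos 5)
Your proposal is correct and takes essentially the same route as the paper's proofs: (CNAS) applied at $t=1$, identification of the leftover difference with the (CCTSA) discrepancy for part (a), and marginalization over $\boldsymbol{X}_i \mid W_i = r$ (with the centered interaction term vanishing) followed by (CIUN)/(CTSA) for part (b) and \eqref{trend.params}/(CTSA) for part (c). The only cosmetic differences are that your subtraction framing in part (a) makes both directions of the equivalence explicit, whereas the paper writes a one-directional chain of equalities, and in part (c) you eliminate $\gamma_t^*$ by marginalizing \eqref{trend.params} over the never-treated group rather than via the paper's add-and-subtract step---same ingredients, same conclusions.
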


\begin{proof}

 The proofs of parts (b) and (c) are provided in Appendix \ref{proofs.par.trend.ciuu.app}. Part (a) follows from \citet[Sections 6.3 and 6.4]{wooldridge2021two}, but for illustrative purposes we provide a proof in our notation here. For all \(r \in \mathcal{R}, t \geq r\) and any \(\boldsymbol{x}\) in the support of \(\boldsymbol{X}_i\),
\begin{align}
& \tau_{rt}^* + \boldsymbol{\dot{x}}_r^\top \boldsymbol{\rho}_{rt}^*  \nonumber
\\  \stackrel{(a)}{=} ~ &   \E\left[ \tilde{y}_{(i t)}(r)  - \tilde{y}_{(i 1)}(r) \mid W_i = r, \boldsymbol{X}_i  = \boldsymbol{x}\right]  
 - \E \left[  \tilde{y}_{(it)}(0) - \tilde{y}_{(i1)}(0)  \mid  W_i = 0, \boldsymbol{X}_{i} = \boldsymbol{x} \right] \nonumber
\\ \stackrel{(b)}{=} ~ &   \E [ \tilde{y}_{(i t)}(r) \mid W_i = r, \boldsymbol{X}_{i} = \boldsymbol{x} ] -   \bigg(  \E[\tilde{y}_{(i1)}(0) \mid W_i = r,   \boldsymbol{X}_{i} = \boldsymbol{x} ]   \nonumber
\\ &  +  \E [  \tilde{y}_{(i t)}(0) - \tilde{y}_{(i1)}(0) \mid W_i = 0, \boldsymbol{X}_{i} = \boldsymbol{x} ] \bigg) \nonumber
\\ \stackrel{(c)}{=}  ~ &  \E [ \tilde{y}_{(i t)}(r) \mid W_i = r, \boldsymbol{X}_{i} = \boldsymbol{x} ] -   \E [  \tilde{y}_{(i t)}(0) \mid W_i = r, \boldsymbol{X}_{i} = \boldsymbol{x} ] \nonumber
\\ = ~ & \tau_{\text{CATT}} (r, t, \boldsymbol{x}) 
 \label{stagger.tau.deriv}
 ,
\end{align}
where in \((a)\) we used \eqref{treat.eff.def.covs} from (LINS), in \((b)\) we used (CNAS), and \((c)\) follows from rearranging (CCTSA).

\end{proof}

\subsubsection{Discussion of Theorem \ref{te.interp.prop}}

Theorem \ref{te.interp.prop}(a) and (b) show the consistency of regression \eqref{wooldridge.6.33.model}: if (LINS), (CNAS), and (CIUN) hold, any estimator that is consistent for the regression estimands in \eqref{treat.eff.def.covs} is also consistent for the marginal average treatment effects under either conditional parallel trends after treatment or marginal parallel trends after treatment. Theorem \ref{te.interp.prop}(a) and (b) applies to ETWFE immediately under the conditions of Theorem 6.1(ii) in \citet{wooldridge2021two}, which shows that the ETWFE estimator of \(\tau_{rt}^*\) is consistent. It also applies to FETWFE in light of Theorem \ref{first.thm.fetwfe}(b), which is used to prove Theorem \ref{main.te.cons.thm}. 

Theorem \ref{te.interp.prop}(c) shows that it is specifically the covariates \(\boldsymbol{X}_i \mathbbm{1}\{t = t'\}\) for \(t' \in \{2, \ldots, T\}\) (which correspond to regression estimands \(\boldsymbol{\xi}_t^*\)) included in regression \eqref{wooldridge.6.33.model} that induce an asymptotic bias in the estimated marginal average treatment effects \(\tau_{\text{ATT}}(r,t')\) if (CIUN) does not hold. Suppose we assume that \eqref{trend.params} from (LINS) holds and \(\boldsymbol{X}_i\) and \(W_i\) are not mean-independent; in particular, for all \(r \in \mathcal{R}\) we have \(\E[\boldsymbol{X}_i \mid W_i = r] \neq \E[\boldsymbol{X}_i \mid W_i = 0]\)\footnote{Notice from Theorem \ref{te.interp.prop}(c) that \( \tau_{rt}^*  = \tau_{\text{ATT}}(r, t) \) if and only if either \(\E[\boldsymbol{X}_i \mid W_i = r] = \E[\boldsymbol{X}_i \mid W_i = 0]\) or \(\boldsymbol{\xi}_t^* = \boldsymbol{0}\). There is an interesting connection here: it is also true that (CCTS) implies (CTS) under either independence of \(\boldsymbol{X}_i\) and \(W_i\) or (CIUN), which under (LINS) is equivalent to \(\boldsymbol{\xi}_t^* = \boldsymbol{0}\). The latter fact is from Proposition \ref{equiv.cond.ciuu}. Under independence of \(\boldsymbol{X}_i\) and \(W_i\), the distribution of \(\boldsymbol{X}_i \) conditional on \(W_i = r\) is identical to the marginal distribution of \(\boldsymbol{X}_i\) for all \(r \in \{0\} \cup \mathcal{R}\), so we can marginalize each side of \eqref{ccts.def} across the marginal distribution of \(\boldsymbol{X}_i\) to yield \eqref{cts}.}. Then if (CCTSA) and (CIUN) both do not hold, not only does Theorem \ref{te.interp.prop}(a) show that there is no hope of estimating conditional average treatment effects, Theorem \ref{te.interp.prop}(c) shows that in general including the covariates \(\boldsymbol{X}_i \mathbbm{1}\{t = t'\}\) in a regression induces bias in the estimated average treatment effects (since if (CIUN) does not hold and \eqref{trend.params} from (LINS) does then \(\boldsymbol{\xi}_t^* \neq \boldsymbol{0}\) for at least some \(t\)). This is true even if (CTS) holds and the marginal average treatment effects are estimable (by, for example, model (6.15) in \citealt{wooldridge2021two}, which omits these covariates). That is, the covariates \(\boldsymbol{X}_i \mathbbm{1}\{t = t'\}\) for \(t' \in \{2, \ldots, T\}\) are ``bad controls," in the language of \citet{cinelli2022crash}. This observation aligns with the observation that \citet[Section 6.3]{wooldridge2021two} makes that including time-invariant covariates and their interactions with cohort dummies has no effect on the estimated marginal average treatment effects---it is specifically the inclusion of the covariate/time dummy interactions that changes the estimand for each \(\tau_{rt}^*\).

Notice that these covariates correspond exactly to the kind of effect of the covariates on the untreated trends that is ruled out by Assumptions (CIU) and (CIUN), as well as the time-varying effects of the covariates on the potential outcomes that we mentioned in our discussion of Theorem \ref{unconf.ccts.cts.thm} in Section \ref{thm.discus.sec}.
 
 However, if \eqref{trend.params} holds, we see that under (CIUN) the coefficients \(\boldsymbol{\xi}_t^*\) must all equal \(\boldsymbol{0}\). In this case, Theorem \ref{te.interp.prop}(c) shows that no harm is done to asymptotic unbiasedness by including those coefficients in the regression, and any estimator that is consistent for \(\tau_{rt}^*\) from  \eqref{treat.eff.def.covs} is consistent for \(\tau_{\text{ATT}}(r,t)\). (This aligns with part (b), which shows that if (CIUN) holds then we can consistently estimate the marginal average treatment effects under (CTSA) with regression \ref{wooldridge.6.33.model}.)
 
 
 As we showed in Theorem \ref{te.oracle.thm}, under the required assumptions no harm is done to the asymptotic efficiency of FETWFE by including irrelevant covariates (as we mentioned earlier in Remark \ref{eff.rmk}), like when (CIUN) holds and all of the \(\boldsymbol{\xi}_t^*\) equal \(\boldsymbol{0}\).
 
Theorem \ref{te.interp.prop}(c) has similarities to Theorems 4.1 and 5.1 from \citet{ham2022benefits}, which analyze the \(T = 2\), \(\mathcal{R} = \{2\}\) and arbitrary \(T\), \(\mathcal{R} = \{T\}\) cases, respectively. In particular, their estimand \(\mathbb{E}[\hat{\tau}_{\text{DiD}}^{X} ]\) matches our estimand \(\tau_{rt}^*\) under our assumptions, and they also identify the bias of this estimand in order to investigate when conditional parallel trends might be more plausible than marginal parallel trends. Besides the fact that we consider a more general staggered adoptions (arbitrary \(\mathcal{R}\)) setting, they also consider different structural equations for the potential outcomes (for example, they omit interactions between the covariates and treatment status or cohort membership, though they do allow for time-varying coefficients on the covariates) and they consider a setting with a latent confounder.

\section{Additional Results For Simulation Studies}\label{synth.exp.details}



Table \ref{att.cohort.sim.tab} contains results for the cohort-specific squared estimation errors for the cohort-specific average treatment effects \( \tau_{\text{ATT}} (r)\) as defined in \eqref{att.cohort}. We see that on average FETWFE outperforms every method for most cohorts, except that BETWFE outperforms FETWFE on average for cohort 2. This is because the true treatment effect for Cohort 2 is close to 0 (0.069), so BETWFE's direct penalization of the treatment coefficients directly towards 0 happens to be useful. In addition, this difference in the mean outcomes seems to be driven by outliers---we see from the boxplots in Figure \ref{cohort_2_mse_plot} that in the median case FETWFE still outperforms BETWFE. 

Table \ref{att.cohort.sim.p.tab} contains \(p\)-values for paired one-tailed \(t\)-tests of the alternative hypothesis that FETWFE has lower squared estimation error than the competitor methods, showing that the differences are statistically significant at the \(0.01\) level in all settings except for the Cohort 2 BETWFE estimate and the last two ETWFE estimates, which are not significant because the ETWFE estimator has very high variance (see Table \ref{att.cohort.sim.tab}).

Table \ref{rho.mean.se} contains results for the sum of squared errors of all methods' estimates of the treatment interaction coefficients \(\rho^*\). We see that FETWFE outperforms every method. Table \ref{rho.p.vals} contains \(p\)-values for paired one-tailed \(t\)-tests of the alternative hypothesis that FETWFE has lower squared estimation error than the competitor methods, showing that the differences are statistically significant at the \(0.05\) level.

Figure \ref{sel.cons.fig} depicts a boxplot of the proportion of the \(p_N\) possible restrictions that FETWFE selects correctly in the first simulation study. Figures \ref{cohort_2_mse_plot}, \ref{cohort_3_mse_plot}, \ref{cohort_4_mse_plot}, \ref{cohort_5_mse_plot}, and \ref{cohort_6_mse_plot} display boxplots of these squared errors, similar to Figure \ref{att_mse_fig} for the average treatment effect. These plots also indicate that FETWFE typically outperforms the other methods.

\begin{table}[!htbp] \centering 
  \caption{Means and standard errors for the squared errors of estimates of \( \tau_{\text{ATT}} (r)\) by each method for each cohort in the simulation study from Section \ref{synth.exps.sec}.} 
  \label{att.cohort.sim.tab} 
\begin{tabular}{@{\extracolsep{5pt}} ccccc} 
\\[-1.8ex]\hline 
\hline \\[-1.8ex] 
 & FETWFE & ETWFE & BETWFE & TWFE\_COVS \\ 
\hline \\[-1.8ex] 
Cohort 2 & 0.035 (0.006) & 1.502 (0.312) & 0.008 (0) & 7.265 (0.163) \\ 
Cohort 3 & 0.037 (0.004) & 0.943 (0.19) & 0.425 (0.011) & 1.881 (0.091) \\ 
Cohort 4 & 0.129 (0.009) & 0.856 (0.179) & 3.492 (0.016) & 1.269 (0.062) \\ 
Cohort 5 & 0.093 (0.007) & 4.12 (3.079) & 4.794 (0.011) & 1.685 (0.082) \\ 
Cohort 6 & 0.068 (0.005) & 3.212 (2.016) & 3.759 (0.016) & 0.949 (0.049) \\ 
\hline \\[-1.8ex] 
\end{tabular} 
\end{table} 

\begin{table}[!htbp] \centering 
  \caption{\(p\)-values for paired one-tailed \(t\)-tests of the alternative hypothesis that the FETWFE estimate of \( \tau_{\text{ATT}} (r)\) has lower squared error than the competitor methods for each cohort in the simulation study from Section \ref{synth.exps.sec}.} 
  \label{att.cohort.sim.p.tab} 
\begin{tabular}{@{\extracolsep{5pt}} cccc} 
\\[-1.8ex]\hline 
\hline \\[-1.8ex] 
 & ETWFE & BETWFE & TWFE\_COVS \\ 
\hline \\[-1.8ex] 
Cohort 2 & 1.14e-05 & 1 & 8.2e-206 \\ 
Cohort 3 & 8.06e-06 & 2.35e-145 & 1.85e-72 \\ 
Cohort 4 & 0.000119 & 0 & 3.89e-63 \\ 
Cohort 5 & 0.118 & 0 & 1.12e-68 \\ 
Cohort 6 & 0.0795 & 0 & 8.71e-60 \\ 
\hline \\[-1.8ex] 
\end{tabular} 
\end{table} 

\begin{table}[!htbp] \centering 
  \caption{Means and standard errors for the summed squared errors of estimates of the treatment interaction coefficients \(\boldsymbol{\rho}^*\) by each method for each cohort in the simulation study from Section \ref{synth.exps.sec}.} 
  \label{rho.mean.se} 
\begin{tabular}{@{\extracolsep{5pt}} ccc} 
\\[-1.8ex]\hline 
\hline \\[-1.8ex] 
FETWFE & ETWFE & BETWFE \\ 
\hline \\[-1.8ex] 
0.529 (0.00827) & 38.1 (18.3) & 4.82 (0.0119) \\
\hline \\[-1.8ex] 
\end{tabular} 
\end{table} 

\begin{table}[!htbp] \centering 
  \caption{\(p\)-values for paired one-tailed \(t\)-tests of the alternative hypothesis that the FETWFE estimates of  the treatment interaction coefficients \(\boldsymbol{\rho}^*\) have lower summed squared error than the competitor methods for each cohort in the simulation study from Section \ref{synth.exps.sec}.} 
  \label{rho.p.vals} 
\begin{tabular}{@{\extracolsep{5pt}} cc} 
\\[-1.8ex]\hline 
\hline \\[-1.8ex] 
ETWFE & BETWFE  \\ 
\hline \\[-1.8ex] 
0.02 & 0   \\ 
\hline \\[-1.8ex] 
\end{tabular} 
\end{table} 

\begin{figure}[htbp]
\begin{center}
\includegraphics[scale=0.7]{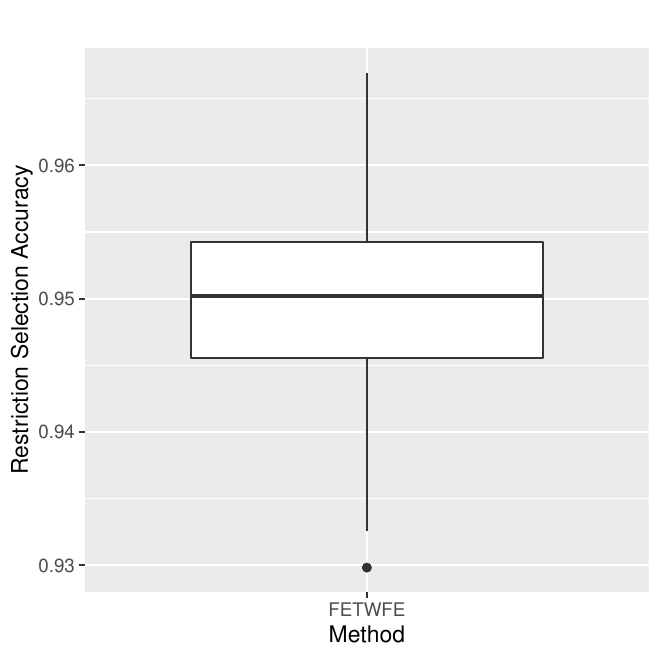}
\caption{Boxplot displaying proportions of correct treatment effect restriction decisions correctly made by FETWFE across each of the 700 simulations from the first simulation study in Section \ref{synth.exps.sec}.}
\label{sel.cons.fig}
\end{center}
\end{figure}

\begin{figure}[htbp]
\begin{center}
\includegraphics[scale=0.7]{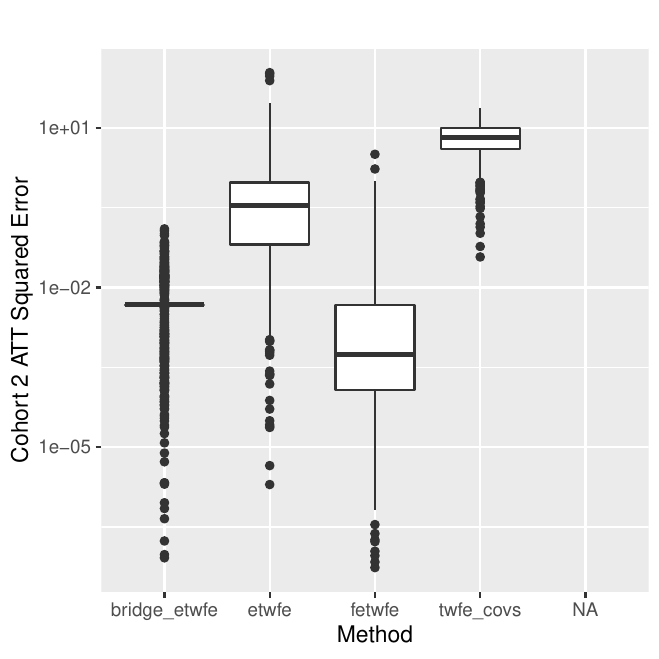}
\caption{Boxplots of squared errors for each method's estimate of \( \tau_{\text{ATT}} (2)\) across all 700 simulations. Vertical axis is on a log scale.}
\label{cohort_2_mse_plot}
\end{center}
\end{figure}

\begin{figure}[htbp]
\begin{center}
\includegraphics[scale=0.7]{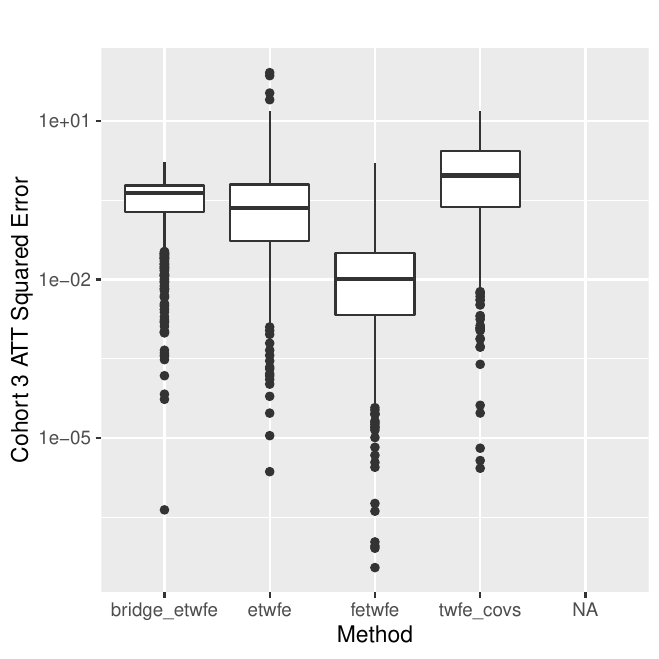}
\caption{Boxplots of squared errors for each method's estimate of \( \tau_{\text{ATT}} (3)\) across all 700 simulations. Vertical axis is on a log scale.}
\label{cohort_3_mse_plot}
\end{center}
\end{figure}

\begin{figure}[htbp]
\begin{center}
\includegraphics[scale=0.7]{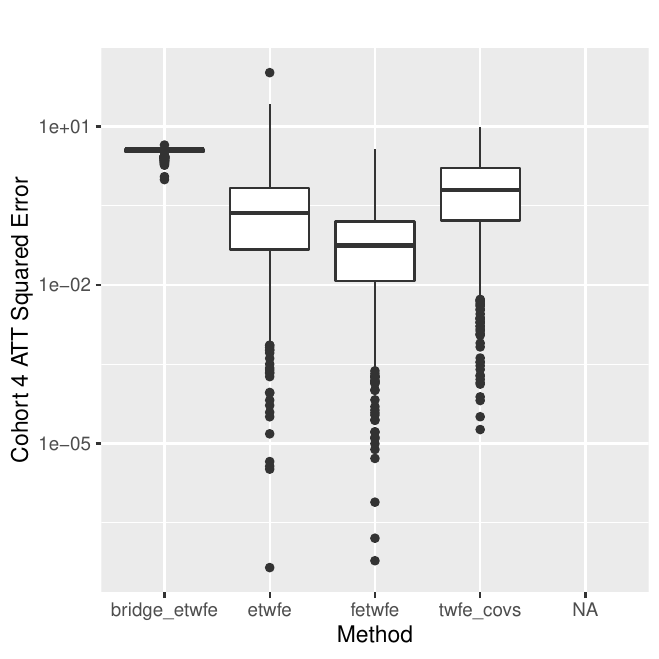}
\caption{Boxplots of squared errors for each method's estimate of \( \tau_{\text{ATT}} (4)\) across all 700 simulations. Vertical axis is on a log scale.}
\label{cohort_4_mse_plot}
\end{center}
\end{figure}

\begin{figure}[htbp]
\begin{center}
\includegraphics[scale=0.7]{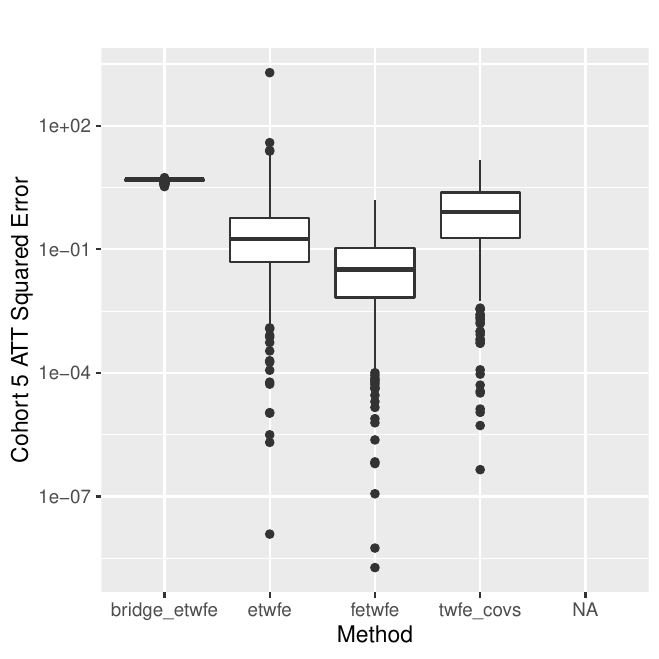}
\caption{Boxplots of squared errors for each method's estimate of \( \tau_{\text{ATT}} (5)\) across all 700 simulations. Vertical axis is on a log scale.}
\label{cohort_5_mse_plot}
\end{center}
\end{figure}

\begin{figure}[htbp]
\begin{center}
\includegraphics[scale=0.7]{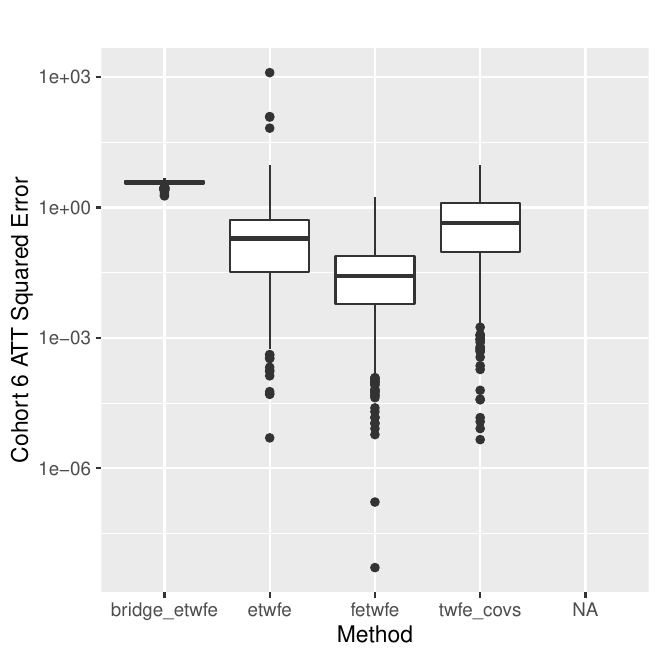}
\caption{Boxplots of squared errors for each method's estimate of \( \tau_{\text{ATT}} (6)\) across all 700 simulations. Vertical axis is on a log scale.}
\label{cohort_6_mse_plot}
\end{center}
\end{figure}

\section{Theorem \ref{first.thm.fetwfe}}\label{first.thm.state.sec}

Theorem \ref{first.thm.fetwfe} connects the bridge regression theory of \citet{kock2013oracle} to the difference-in-differences theory of \citet{wooldridge2021two}. It also makes use of some extensions of the theory of \citet{kock2013oracle} (Theorem \ref{prop.2i}, Theorem \ref{prop.ext.2}). Equipped with Theorem \ref{first.thm.fetwfe}, Theorems \ref{main.te.cons.thm}, \ref{te.sel.cons.thm}, \ref{te.oracle.thm}, and \ref{te.asym.norm.thm} are relatively straightforward to prove.

\begin{theorem}\label{first.thm.fetwfe}

\begin{enumerate}[(a)]

\item (Correct specification.) Suppose Assumptions (CNAS), (CCTSB), and (LINS) hold. Then for all \(i \in [N]\), \(t \in [T]\), \(r \in \mathcal{R}\), 
\begin{align*}
 \E[\tilde{y}_{(it)} \mid W_i = 0, \boldsymbol{X}_i] = ~ & \eta^* +  \gamma_t^*  + \boldsymbol{X}_i^\top(\boldsymbol{\kappa}^* + \boldsymbol{\xi}_t^*)   \qquad \text{and}
\\ \E[\tilde{y}_{(it)} \mid W_i = r, \boldsymbol{X}_i] = ~ &  \eta^* +  \gamma_t^*  +  \nu_r^* + \boldsymbol{X}_i^\top(\boldsymbol{\kappa}^* + \boldsymbol{\xi}_t^* + \boldsymbol{\zeta}_r^*  ) 
+ \mathbbm{1}\{ t \geq r\}  (\tau_{rt}^*  + \boldsymbol{\dot{X}}_{(ir)}^\top  \boldsymbol{\rho}_{rt}^*  )
,
\end{align*}
 where \(\boldsymbol{\dot{X}}_{(ir)} = \boldsymbol{X}_i - \E[ \boldsymbol{X}_i \mid W_i = r]\), for identifiability we define the time-dependent parameters to equal 0 at \(t = 1\), and recall that \(\eta^*, \gamma_t^*\), etc. are the regression estimands defined in (LINS). 
 

\item (Estimation consistency.) In addition to the assumptions from part \((a)\), suppose Assumptions (F1), (F2), S(\(s_N\)), and (R1) - (R3) hold. Then for any \(q>0\), \(\lVert \boldsymbol{\hat{\beta}}^{(q)} - \boldsymbol{\beta}_N^* \rVert_2 = \mathcal{O}_{\mathbb{P}} \left( \min\{h_N, h'_N\} \right)\), where \(h_N\) is defined in \eqref{h.n.def} and \(h_N'\) is defined in \eqref{h.n.prime.def}.

\item (Selection consistency.) Define
\begin{equation}\label{theta.def}
\boldsymbol{\theta}_N^* := \boldsymbol{D}_N \boldsymbol{\beta}_N^*
\end{equation}
and \(\boldsymbol{\hat{\theta}}^{(q)} := \boldsymbol{D}_N  \boldsymbol{\hat{\beta}}^{(q)}\). In addition to the previous assumptions, assume that \(q \in (0,1)\) and Assumptions (R4) - (R5) hold. Then  \(\lim_{N \to \infty} \mathbb{P} \left( \boldsymbol{\hat{\theta}}^{(q)}_{\mathcal{S}^c} = \boldsymbol{0} \right) = 1 \). (This also implies that \(\boldsymbol{\hat{\theta}}^{(q)}_{\mathcal{S}^c}  \xrightarrow{p} \boldsymbol{0}\), \( \lim_{N \to \infty} \mathbb{P} \left(  \sqrt{NT} \boldsymbol{\hat{\theta}}^{(q)}_{\mathcal{S}^c} = \boldsymbol{0} \right) = 1 \), and  \( \lim_{N \to \infty} \mathbb{P} \left(  \sqrt{NT}  \boldsymbol{v}_N^\top \boldsymbol{\hat{\theta}}^{(q)}_{\mathcal{S}^c} = 0 \right) = 1 \) for any sequence of finite vectors \(\boldsymbol{v}_N\ \in \mathbb{R}^{p_N}\).)

Further, let \(\hat{\mathcal{S}}_N := \{j: \hat{\theta}^{(q)}_j \neq 0\}\), and similarly let \(\mathcal{S}_N \subseteq [p_N]\) be the set of components of \(\boldsymbol{\theta}_N^*\) that are nonzero. Then
\[
\lim_{N \to \infty} \mathbb{P} \left( \hat{\mathcal{S}}_N = \mathcal{S}_N \right) = 1
.
\]

\item (Asymptotic normality and oracle property.) In addition to the previous assumptions, assume Assumption (R6) holds and Assumption S(\(s\)) holds for a fixed \(s_N = s\), so \(\mathcal{S}_N = \mathcal{S}\) is fixed as well. Let \(\{\boldsymbol{\psi}_N\}_{N=1}^\infty\) be a sequence of real-valued vectors where for each \(\boldsymbol{\psi}_N \in \mathbb{R}^{p_N}\), \((\boldsymbol{\psi}_N)_{\mathcal{S}} = \boldsymbol{\alpha} \in \mathbb{R}^s\) is fixed and finite and \((\boldsymbol{\psi}_N)_{[p_N] \setminus \mathcal{S}} = \boldsymbol{b}_N\), where \(\{\boldsymbol{b}_N\}_{N=1}^\infty\) is any sequence of constants where each \(\boldsymbol{b}_N \in \mathbb{R}^{p_N - s}\) contains all finite entries. Then if \(\boldsymbol{\alpha} \neq \boldsymbol{0}\),
 \[
 \sqrt{ NT  } \boldsymbol{\psi}_N^\top ( \boldsymbol{\hat{\theta}}^{(q)} - \boldsymbol{\theta}_N^*) \xrightarrow{d} \mathcal{N}\left( 0, \sigma^2 \boldsymbol{\alpha}^\top  \left( \Cov \left(\boldsymbol{Z} \boldsymbol{D}_N^{-1}  \right)_{(\cdot \cdot) \mathcal{S}} \right)^{-1} \boldsymbol{\alpha} \right)
 ,
 \]
where \(  \Cov \left(\boldsymbol{Z} \boldsymbol{D}_N^{-1}  \right)_{(\cdot \cdot) \mathcal{S}} \) is the columns and rows of the covariance matrix corresponding to the nonzero components of \(\boldsymbol{\theta}_N^*\).

 \item (Asymptotic normality with feasible variance estimator.) Maintain the assumptions from part \((d)\). For a random matrix \(\boldsymbol{A} \in \mathbb{R}^{NT \times p_N}\) with centered columns, define
 \begin{equation}\label{est.cov.def}
 \widehat{\Cov}(\boldsymbol{A}) := \frac{1}{NT}  \boldsymbol{A} ^\top \boldsymbol{A} 
 \end{equation}
 to be the sample covariance matrix of \(\boldsymbol{A}\). For any \(\mathcal{A}_N \subseteq [p_N]\) such that \(\mathcal{A}_N \neq \emptyset\), consider \( \widehat{\Cov} \left( (\boldsymbol{Z} \boldsymbol{D}_N^{-1} )_{(\cdot \cdot) \mathcal{A}_N} \right)\),
 the estimated covariance matrix of the columns of \( \boldsymbol{Z} \boldsymbol{D}_N^{-1}\) corresponding to \(\mathcal{A}_N\). For sets \((\mathcal{A}_N)_{N=1}^\infty\), define \(\boldsymbol{\alpha}(\mathcal{A}_N) \in \mathbb{R}^{|\mathcal{A}_N|}\) to be a fixed, finite vector for each unique \(\mathcal{A}_N\), and define the sequence of \(p_N\)-vectors \( \{ \boldsymbol{\psi}_N(\mathcal{A}_N) \}_{N=1}^\infty \) to have components in the entries corresponding to \(\mathcal{A}_N\) equal to \(\boldsymbol{\alpha}(\mathcal{A}_N) \) and remaining entries \(\boldsymbol{b}_n(\mathcal{A}_N) \) all finite (but otherwise arbitrary). Define the sequence of random variables \(U_1(\mathcal{A}_N ) \) to equal
\[
\frac{1}{\sigma}  \sqrt{ \frac{NT}{\boldsymbol{\alpha}(\mathcal{A}_N)^\top \left( \widehat{\Cov} \left( (\boldsymbol{Z} \boldsymbol{D}_N^{-1} )_{(\cdot \cdot) \mathcal{A}_N} \right) \right)^{-1} \boldsymbol{\alpha}(\mathcal{A}_N)}}  \cdot    \left(  \boldsymbol{\psi}_N(\mathcal{A}_N)^\top \boldsymbol{\hat{\theta}}^{(q)} -   \boldsymbol{\psi}_N(\mathcal{S})^\top\boldsymbol{\theta}_N ^* \right)
\]
if \(\mathcal{A}_N \neq \emptyset\) and \( \widehat{\Cov} \left( (\boldsymbol{Z} \boldsymbol{D}_N^{-1} )_{(\cdot \cdot) \mathcal{A}_N} \right) \) is invertible and 0 otherwise. Then if \(\boldsymbol{\alpha}(\mathcal{S}) \neq \boldsymbol{0}\),
\[
U_1(\hat{\mathcal{S}}_N ) \xrightarrow{d} \mathcal{N}(0, 1)
.
\]

\item (Asymptotic normality when \(\boldsymbol{\alpha}\) is estimated independently from \(\boldsymbol{\hat{\theta}}^{(q)}\).) Maintain the assumptions from part \((e)\). For sets \((\mathcal{A}_N)_{N=1}^\infty\), let \(\boldsymbol{\hat{\alpha}}_N(\mathcal{A}_N) \in \mathbb{R}^{|\mathcal{A}_N|}\) be a sequence of random variables for each \(\mathcal{A}_N\) with the properties that for a given data set, \(\mathcal{A} = \mathcal{A}'\) implies \(\boldsymbol{\hat{\alpha}}_N(\mathcal{A})   =   \boldsymbol{\hat{\alpha}}_N(\mathcal{A'})    \) and 
\[
\sqrt{NT}    ( \boldsymbol{\theta}_N^* )^\top   \left(  \boldsymbol{\hat{\alpha}}_N(\mathcal{S})  -   \boldsymbol{\alpha} \right) \xrightarrow{d} \mathcal{N}(\boldsymbol{0}_s,   v_\psi(\boldsymbol{\alpha} ))
\]
for some \(v_\psi(\boldsymbol{\alpha} ) > 0\). Define the sequence of random vectors \( \{ \boldsymbol{\hat{\psi}}_N(\mathcal{A}_N) \}_{N=1}^\infty \) to have components in \(\mathcal{A}_N\) equal to \(\boldsymbol{\hat{\alpha}}(\mathcal{A}_N)\) and components \(\boldsymbol{\hat{b}}_N (\mathcal{A}_N) \) in the components corresponding to \([p_N] \setminus \mathcal{A}_N\) that are all almost surely finite for all \(N\). Define the sequence of random variables
\[
 U_2(\mathcal{A}_N )  :=  \sqrt{ NT}  \cdot   \left(  \boldsymbol{\hat{\psi}}_N(\mathcal{A}_N)^\top \boldsymbol{\hat{\theta}}^{(q)} - \boldsymbol{\psi}_N(\mathcal{S})^\top\boldsymbol{\theta}_{N}^* \right)
  .
\]
Assume \(\boldsymbol{\hat{\alpha}}_N(\mathcal{A}_N)\) and \(\boldsymbol{\hat{\theta}}^{(q)}\) are estimated on independent data sets of size \(N\). Then if \( \boldsymbol{\alpha}  \neq \boldsymbol{0}\),
\[
U_2(\hat{\mathcal{S}}_N ) \xrightarrow{d} \mathcal{N}\left( 0,  \sigma^2   \boldsymbol{\alpha}^\top \left(  \Cov \left( (\boldsymbol{Z} \boldsymbol{D}_N^{-1} )_{(\cdot \cdot) \mathcal{S}} \right)  \right)^{-1}  
\boldsymbol{\alpha} +  v_\psi(\boldsymbol{\alpha} )  \right)
.
\]

\item (Asymptotic normality when \(\boldsymbol{\alpha}\) is estimated independently from \(\boldsymbol{\hat{\theta}}^{(q)}\) with feasible variance estimator.) Maintain the assumptions from part \((f)\). Define the sequence of random variables \(U_3(\mathcal{A}_N ) \) to equal
\[
  \sqrt{ \frac{NT}{ \hat{v}_N (\mathcal{A}_N) }}  \cdot   \left(  \boldsymbol{\hat{\psi}}_N(\mathcal{A}_N)^\top \boldsymbol{\hat{\theta}}^{(q)} - \boldsymbol{\psi}_N(\mathcal{S})^\top\boldsymbol{\theta}_{N}^* \right)
\]
if \(\mathcal{A}_N \neq \emptyset\),  \(\boldsymbol{\hat{\alpha}}(\mathcal{A}_N) \neq \boldsymbol{0}\), and \( \widehat{\Cov} \left( (\boldsymbol{Z} \boldsymbol{D}_N^{-1} )_{(\cdot \cdot) \mathcal{A}_N} \right) \) is invertible and 0 otherwise, where 
\begin{equation}\label{var.est.kock.wooldridge}
\hat{v}_N (\mathcal{A}_N) :=  \sigma^2   \boldsymbol{\hat{\alpha}}(\mathcal{A}_N)^\top \left(  \widehat{\Cov} \left( (\boldsymbol{Z} \boldsymbol{D}_N^{-1} )_{(\cdot \cdot) \mathcal{A}_N} \right)  \right)^{-1}  
\boldsymbol{\hat{\alpha}}(\mathcal{A}_N) + \hat{v}_\psi(\boldsymbol{\alpha} ) 
\end{equation}
 and \(  \hat{v}_\psi(\boldsymbol{\alpha} )  \) is an estimator of \( v_\psi(\boldsymbol{\alpha} )  \) that satisfies \(\hat{v}_\psi(\boldsymbol{\alpha} ) \xrightarrow{p} v_\psi(\boldsymbol{\alpha} ) \). Assume \(\boldsymbol{\hat{\alpha}}_N(\mathcal{A}_N)\) and \(\boldsymbol{\hat{\theta}}^{(q)}\) are estimated on independent data sets of size \(N\). Then if \( \boldsymbol{\alpha} \neq \boldsymbol{0}\),
\[
U_3(\hat{\mathcal{S}}_N ) \xrightarrow{d} \mathcal{N}(0, 1)
.
\]

\item (Asymptotic subgaussianity when \(\boldsymbol{\alpha}\) is estimated on the same data set as \(\boldsymbol{\hat{\theta}}^{(q)}\) with feasible variance estimator.) Maintain the assumptions from part \((g)\). Define the sequence of random variables \(U_4(\mathcal{A}_N ) \) to equal
\[
  \sqrt{ \frac{NT}{ \hat{v}_N^{(\text{cons})} (\mathcal{A}_N) }}  \cdot   \left(  \boldsymbol{\hat{\psi}}_N(\mathcal{A}_N)^\top \boldsymbol{\hat{\theta}}^{(q)} - \boldsymbol{\psi}_N(\mathcal{S})^\top\boldsymbol{\theta}_{N}^* \right)
\]
if \(\mathcal{A}_N \neq \emptyset\),  \(\boldsymbol{\hat{\alpha}}(\mathcal{A}_N) \neq \boldsymbol{0}\), and \( \widehat{\Cov} \left( (\boldsymbol{Z} \boldsymbol{D}_N^{-1} )_{(\cdot \cdot) \mathcal{A}_N} \right) \) is invertible and 0 otherwise, where \(\hat{v}_N^{(\text{cons})} (\mathcal{A}_N) \) is defined as 
\begin{align}
 &  \sigma^2   \boldsymbol{\hat{\alpha}}(\mathcal{A}_N)^\top \left(  \widehat{\Cov} \left( (\boldsymbol{Z} \boldsymbol{D}_N^{-1} )_{(\cdot \cdot) \mathcal{A}_N} \right)  \right)^{-1}  
\boldsymbol{\hat{\alpha}}(\mathcal{A}_N)   \nonumber
\\ ~ & + 2 \sqrt{   \sigma^2   \boldsymbol{\hat{\alpha}}(\mathcal{A}_N)^\top \left(  \widehat{\Cov} \left( (\boldsymbol{Z} \boldsymbol{D}_N^{-1} )_{(\cdot \cdot) \mathcal{A}_N} \right)  \right)^{-1}  
\boldsymbol{\hat{\alpha}}(\mathcal{A}_N)  \hat{v}_\psi(\boldsymbol{\alpha} ) } 
  + \hat{v}_\psi(\boldsymbol{\alpha} )  \label{var.est.kock.wooldridge.subgauss.cons}
  .
\end{align}
Assume \(\boldsymbol{\hat{\alpha}}_N(\mathcal{A}_N)\) and \(\boldsymbol{\hat{\theta}}^{(q)}\) are estimated on the same data set of size \(N\). Then if \( \boldsymbol{\alpha} \neq \boldsymbol{0}\), \(U_4(\hat{\mathcal{S}}_N )\) converges in distribution to a mean-zero subgaussian random variable with variance at most 1.

\end{enumerate}

\end{theorem}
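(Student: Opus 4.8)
The unifying device for the estimation parts (b)--(h) is to pass from the fused design to a plain bridge problem. After the GLS transform by $\boldsymbol{G}_N$ of \eqref{g.def} and column-centering, Assumptions (F1) and (F2) render the transformed errors conditionally mean-zero, uncorrelated, and homoskedastic with variance $\sigma^2$; reparametrizing by $\boldsymbol{\theta}:=\boldsymbol{D}_N\boldsymbol{\beta}$ as in \eqref{theta.def} turns the penalty $\lVert\boldsymbol{D}_N\boldsymbol{\beta}\rVert_q^q$ into the ordinary bridge penalty $\lVert\boldsymbol{\theta}\rVert_q^q$ and the design $\boldsymbol{Z}$ into $\boldsymbol{Z}\boldsymbol{D}_N^{-1}$. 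This places \eqref{opt.prob} exactly in the bridge-regression framework of \citet{kock2013oracle} together with its extensions (Theorems \ref{prop.2i} and \ref{prop.ext.2}); because the singular values of $\boldsymbol{D}_N$ and $\boldsymbol{D}_N^{-1}$ are bounded above and away from zero uniformly in $N$ (Lemma \ref{d.sing.val.lem}), every statement about $\boldsymbol{\hat{\theta}}^{(q)}$ transfers to $\boldsymbol{\hat{\beta}}^{(q)}$ up to constants. Part (a), by contrast, is a purely deterministic identity: writing $\E[\tilde{y}_{(it)}\mid W_i=r,\boldsymbol{X}_i]$ as the $t=1$ baseline plus the trend from $1$ to $t$, I would substitute the four building blocks of (LINS)---\eqref{t.1.untreated} for the never-treated baseline, \eqref{t.1.treated} for the cohort-$r$ selection shift, \eqref{trend.params} for the untreated trend, and \eqref{treat.eff.def.covs} for the treatment contribution at $t\ge r$---and use (CNAS) to remove anticipation and (CCTSB) to supply the common pre-treatment trend, so that the pre-period terms collapse to the never-treated trend while the derivation \eqref{stagger.tau.deriv} of Theorem \ref{te.interp.prop}(a) supplies the $t\ge r$ piece. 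The only work here is bookkeeping of which terms appear at which $(r,t)$.

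Parts (b) and (c) apply the bridge theory to $\boldsymbol{\hat{\theta}}^{(q)}$. For consistency I would invoke the rate bound of \citet{kock2013oracle} on the reparametrized problem, which under (F1), (F2), S($s_N$), and (R1)--(R3) gives $\lVert\boldsymbol{\hat{\theta}}^{(q)}-\boldsymbol{\theta}_N^*\rVert_2=\mathcal{O}_{\mathbb{P}}(h_N')$ with $h_N'$ from \eqref{h.n.prime.def}; a second, OLS-flavored bounding route yields the alternative rate $h_N$ from \eqref{h.n.def}, and taking the sharper of the two gives $\mathcal{O}_{\mathbb{P}}(\min\{h_N,h_N'\})$, which transfers to $\boldsymbol{\hat{\beta}}^{(q)}$. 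For selection consistency (with $q\in(0,1)$) I would apply Theorem \ref{prop.2i}: the concave bridge penalty both excludes false selections (the original \citet{kock2013oracle} conclusion) and, through the extension, recovers every truly nonzero coordinate of $\boldsymbol{\theta}_N^*$, where the signal-strength lower bound (R3) together with (R4) and the growth condition (R5) on $\lambda_N$ guarantees detectability; combining the two directions yields $\mathbb{P}(\hat{\mathcal{S}}_N=\mathcal{S}_N)\to 1$. I expect the no-false-negative direction to be the delicate part, which is precisely what Theorem \ref{prop.2i} is built to handle.

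Parts (d) and (e) establish the oracle central limit theorem. On the event $\{\hat{\mathcal{S}}_N=\mathcal{S}\}$, which has probability tending to one by part (c), the weighted sum $\boldsymbol{\psi}_N^\top\boldsymbol{\hat{\theta}}^{(q)}$ reduces to $\boldsymbol{\alpha}^\top\boldsymbol{\hat{\theta}}^{(q)}_{\mathcal{S}}$ because $\boldsymbol{\hat{\theta}}^{(q)}_{\mathcal{S}^c}=\boldsymbol{0}$ exactly, and on this event $\boldsymbol{\hat{\theta}}^{(q)}_{\mathcal{S}}$ coincides with the least-squares oracle on the true support. I would then show the penalty-induced bias in this restricted estimator is $o_{\mathbb{P}}(1/\sqrt{N})$ using condition \eqref{new.assum.2} of (R6), apply a Lindeberg central limit theorem to the linearized restricted estimator, and identify the limiting variance $\sigma^2\boldsymbol{\alpha}^\top(\Cov(\boldsymbol{Z}\boldsymbol{D}_N^{-1})_{(\cdot\cdot)\mathcal{S}})^{-1}\boldsymbol{\alpha}$, which is controlled by the minimum-eigenvalue bound of (R6) on \eqref{min.eigen.eq}. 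For the feasible version (e) I would replace the random support $\hat{\mathcal{S}}_N$ by $\mathcal{S}$ (valid with probability $\to 1$), establish $\widehat{\Cov}((\boldsymbol{Z}\boldsymbol{D}_N^{-1})_{(\cdot\cdot)\mathcal{S}})\xrightarrow{p}\Cov((\boldsymbol{Z}\boldsymbol{D}_N^{-1})_{(\cdot\cdot)\mathcal{S}})$ from the fourth-moment condition (R1) and the law of large numbers on the fixed support, and conclude $U_1(\hat{\mathcal{S}}_N)\xrightarrow{d}\mathcal{N}(0,1)$ by Slutsky. The main obstacle across (d)--(e) is controlling the estimator on a design of growing dimension while the variance is governed by only the fixed $s$ active coordinates; this is exactly where Theorem \ref{prop.ext.2} is invoked.

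Finally, parts (f)--(h) add estimated weights $\boldsymbol{\hat{\alpha}}_N$. On $\{\hat{\mathcal{S}}_N=\mathcal{S}\}$ I would decompose $\boldsymbol{\hat{\psi}}_N^\top\boldsymbol{\hat{\theta}}^{(q)}-\boldsymbol{\psi}_N(\mathcal{S})^\top\boldsymbol{\theta}_N^*$ into a coefficient-fluctuation term, handled as in part (d), and a weight-fluctuation term $(\boldsymbol{\theta}_N^*)^\top(\boldsymbol{\hat{\alpha}}_N(\mathcal{S})-\boldsymbol{\alpha})$ whose scaled limit is the assumed $\mathcal{N}(\boldsymbol{0},v_\psi(\boldsymbol{\alpha}))$, with cross terms negligible. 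When the two estimators use independent data sets (parts (f)--(g)) these two Gaussian contributions are independent, so their variances add, giving the stated limit $\sigma^2\boldsymbol{\alpha}^\top(\Cov((\boldsymbol{Z}\boldsymbol{D}_N^{-1})_{(\cdot\cdot)\mathcal{S}}))^{-1}\boldsymbol{\alpha}+v_\psi(\boldsymbol{\alpha})$, and the feasible version (g) follows by replacing each variance term with its consistent estimator and invoking Slutsky. In the same-data case (h) the two contributions are dependent with uncomputable covariance, so I would bound the limiting variance by $(\sqrt{\sigma^2\boldsymbol{\hat{\alpha}}(\mathcal{A}_N)^\top(\widehat{\Cov}((\boldsymbol{Z}\boldsymbol{D}_N^{-1})_{(\cdot\cdot)\mathcal{A}_N}))^{-1}\boldsymbol{\hat{\alpha}}(\mathcal{A}_N)}+\sqrt{\hat{v}_\psi(\boldsymbol{\alpha})})^2$---exactly the conservative estimator \eqref{var.est.kock.wooldridge.subgauss.cons}---and use that a sum of two asymptotically subgaussian terms is asymptotically subgaussian with subgaussian norm at most the sum of the individual norms, so that dividing by this conservative variance yields a limit that is subgaussian with variance at most one. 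The hardest point is (h): since joint Gaussianity of the two dependent terms cannot be assumed, I must argue subgaussianity of the limit directly and verify that the Cauchy--Schwarz variance bound survives the passage to the limit.
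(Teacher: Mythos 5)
Your proposal matches the paper's proof essentially step for step: part (a) by direct substitution of the (LINS) building blocks with (CNAS) and (CCTSB); parts (b)--(c) by the GLS transform, the reparametrization \(\boldsymbol{\theta} = \boldsymbol{D}_N\boldsymbol{\beta}\) reducing \eqref{opt.prob} to standard bridge regression, verification of the \citet{kock2013oracle} conditions with transfer between \(\boldsymbol{\hat{\theta}}^{(q)}\) and \(\boldsymbol{\hat{\beta}}^{(q)}\) via the uniform singular-value bounds on \(\boldsymbol{D}_N\) (Lemma \ref{d.sing.val.lem}) and Theorem \ref{prop.2i}; and parts (d)--(h) via the same three-term decomposition, the exact-zeroing of \(\boldsymbol{\hat{\theta}}^{(q)}_{\mathcal{S}^c}\) (stronger than \(o_{\mathbb{P}}\), which is what lets the arbitrary finite weights \(\boldsymbol{b}_N\) and the cross terms vanish after \(\sqrt{NT}\) scaling), the event \(\{\hat{\mathcal{S}}_N = \mathcal{S}\}\) plus Slutsky, and the Cauchy--Schwarz conservative variance with the subgaussian-sum lemma for (h) --- all of which the paper packages as Theorem \ref{prop.ext.2}. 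The one imprecision is your remark in (d) that on the selection event \(\boldsymbol{\hat{\theta}}^{(q)}_{\mathcal{S}}\) \emph{coincides} with the least-squares oracle on the true support: it does not coincide, it is the restricted penalized estimator and only asymptotically equivalent to the oracle once the penalty gradient is shown negligible via \eqref{new.assum.2} --- a step your own sketch supplies in the very next clause, so nothing substantive is lost.
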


\begin{proof} Provided in Section \ref{sec.prove.first.thm}. 
\end{proof}

\section{Extensions of Main Results}\label{app.ext.sec}

Theorem \ref{main.te.cons.thm} is an immediate consequence of the following more general result that allows for estimands that include broader classes of functions of the marginal cohort membership probabilities or propensity scores.

\begin{theorem}[Consistency of FETWFE; extension of Theorem \ref{main.te.cons.thm}]\label{main.te.cons.thm.gen}

Assume that Assumptions (CNAS), (CCTSB), and (LINS) hold, as well as Assumptions (F1), (F2), S(\(s_N\)), and (R1) - (R3). Let \(q > 0\).

\begin{enumerate}[(a)]

\item Suppose that either Assumptions (CTSA) and (CIUN) hold or Assumption (CCTSA) holds. Then for any set of finite constants \(\{\psi_{rt}\}\),
\[
 \left| \sum_{r \in \mathcal{R}} \sum_{t=r}^T \psi_{rt}  \left( \hat{\tau}_{\text{ATT}} (  r,t )  -  \tau_{\text{ATT}} (r,t )  \right) \right|  = \mathcal{O}_\mathbb{P} \left( \min\{h_N, h_N'\}\right)
 \]
and for any differentiable function \(\boldsymbol{f}: [0,1]^{R +1} \to \mathbb{R}^R\) that is pointwise finite and has a Jacobian \(\nabla \boldsymbol{f} \in \mathbb{R}^{(R + 1) \times R}\) at \((\pi_0, \pi_{r_1}, \ldots, \pi_{r_R})\) that exists and is nonzero and finite, 
 \[
 \left|  \sum_{r \in \mathcal{R}} \sum_{t =r}^T \psi_{rt} \left( f_r(\hat{\pi}_0, \hat{\pi}_{r_1}, \ldots, \hat{\pi}_{r_R})    \hat{\tau}_{\text{ATT}} ( r, t) -  f_r(\pi_0, \pi_{r_1}, \ldots, \pi_{r_R})   \tau_{\text{ATT}} (r, t) \right) \right|  = \mathcal{O}_\mathbb{P} \left( \min\{h_N, h_N'\}\right)
  ,
  \]
where \(\hat{\pi}_r := N_r /N\) for all \(r \in \mathcal{R}\), \(h_N'\) is defined in \eqref{h.n.prime.def}, and \(h_N \) is defined in \eqref{h.n.def}.

\item
Suppose Assumption (CCTSA) holds and \(d_N = d\) is fixed. For all \(N\) and all \(r \in \{0\} \cup   \mathcal{R}\), assume all of the eigenvalues of \( \Cov ( \boldsymbol{X}_i \mid W_i = r)\) are bounded between \(\lambda_{\text{min}} > 0\) and \(\lambda_{\text{max}} < \infty\). Then for any set of finite constants \(\{\psi_{rt}\}\) and any fixed \(\boldsymbol{x}\) in the support of \(\boldsymbol{X}_i\),
\begin{align*}
\left| \sum_{r \in \mathcal{R}} \sum_{t=r}^T \psi_{rt} \left(    \hat{\tau}_{\text{CATT}}(r, t, \boldsymbol{x}) -  \tau_{\text{CATT}}(r, t, \boldsymbol{x})  \right) \right| =
\mathcal{O}_{\mathbb{P}}( \min\{h_N, h_N'\}  )  
.
\end{align*}

Further, for any \(\boldsymbol{x}\) in the support of \(\boldsymbol{X}_i\), let \(\boldsymbol{\hat{\pi}}(\boldsymbol{x})\) be an estimator for \(\pi_r(\boldsymbol{x})\) satisfying \( | \hat{\pi}_r(\boldsymbol{x}) - \pi_r(\boldsymbol{x}) | = \mathcal{O}_{\mathbb{P}}(a_N)\) for a decreasing sequence \(a_N\) for all \(r \in \{0\} \cup \mathcal{R}\), and let \(\boldsymbol{a}_r \) and \(\boldsymbol{b} _r\in \mathbb{R}^{R + 1}\) be finite vectors satisfying \( \boldsymbol{a}_r^\top \boldsymbol{{\pi}}(\boldsymbol{x})  > 0\) and \( \boldsymbol{b}_r^\top \boldsymbol{{\pi}}(\boldsymbol{x})  > 0\) for any \(\boldsymbol{x}\) in the support of \(\boldsymbol{X}_i\). Then

\begin{align*}
& \left|  \sum_{r \in \mathcal{R}} \sum_{t=r}^T \psi_{rt} \left(  \frac{ \boldsymbol{a}_r^\top \boldsymbol{{\hat{\pi}}}(\boldsymbol{x})  }{ \boldsymbol{b}_r^\top \boldsymbol{{\hat{\pi}}}(\boldsymbol{x})  }  \hat{\tau}_{\text{CATT}}(r, t, \boldsymbol{x}) -    \frac{ \boldsymbol{a}_r^\top \boldsymbol{{\pi}}(\boldsymbol{x})  }{ \boldsymbol{b}_r^\top \boldsymbol{{\pi}}(\boldsymbol{x})  }   \tau_{\text{CATT}}(r, t, \boldsymbol{x})  \right) \right| 
\\ = ~ & 
\mathcal{O}_{\mathbb{P}}( \min\{h_N, h_N'\} \vee a_N )  
.
\end{align*}

\end{enumerate}

\end{theorem}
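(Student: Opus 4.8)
The plan is to reduce every claim to the estimation-consistency bound of Theorem \ref{first.thm.fetwfe}(b), namely $\lVert \boldsymbol{\hat{\beta}}^{(q)} - \boldsymbol{\beta}_N^* \rVert_2 = \mathcal{O}_{\mathbb{P}}(\min\{h_N, h_N'\})$, together with standard $1/\sqrt{N}$ convergence of the empirical cohort frequencies $N_r/N$ and cohort covariate means $\boldsymbol{\overline{X}}_r$. The first step is identification. Under (CCTSA), Theorem \ref{te.interp.prop}(a) gives $\tau_{\text{CATT}}(r,t,\boldsymbol{x}) = \tau_{rt}^* + (\boldsymbol{x} - \E[\boldsymbol{X}_i \mid W_i = r])^\top \boldsymbol{\rho}_{rt}^*$, and marginalizing over the cohort-$r$ distribution of $\boldsymbol{X}_i$ (the centering makes the interaction term vanish in expectation) yields $\tau_{\text{ATT}}(r,t) = \tau_{rt}^*$; in the (CTSA)+(CIUN) case the same identity $\tau_{rt}^* = \tau_{\text{ATT}}(r,t)$ is immediate from Theorem \ref{te.interp.prop}(b). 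Since the estimators are $\hat{\tau}_{\text{ATT}}(r,t) = \hat{\tau}_{rt}^{(q)}$ from \eqref{att.est.cov.r.t} and $\hat{\tau}_{\text{CATT}}(r,t,\boldsymbol{x}) = \hat{\tau}_{rt}^{(q)} + (\boldsymbol{x} - \boldsymbol{\overline{X}}_r)^\top \hat{\rho}_{rt}^{(q)}$ from \eqref{catt.t.r.est}, each target difference becomes an explicit function of $\boldsymbol{\hat{\beta}}^{(q)} - \boldsymbol{\beta}_N^*$ and of $\boldsymbol{\overline{X}}_r - \E[\boldsymbol{X}_i \mid W_i = r]$.

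\textbf{The constant-weight claims (first display of each of (a) and (b)).} For the marginal effects, $\hat{\tau}_{\text{ATT}}(r,t) - \tau_{\text{ATT}}(r,t) = \hat{\tau}_{rt}^{(q)} - \tau_{rt}^*$ is a single coordinate of $\boldsymbol{\hat{\beta}}^{(q)} - \boldsymbol{\beta}_N^*$; since $T$, $R$, and $\mathfrak{W}$ are fixed there are finitely many terms, so Cauchy--Schwarz and Theorem \ref{first.thm.fetwfe}(b) give the $\mathcal{O}_{\mathbb{P}}(\min\{h_N,h_N'\})$ bound. For the conditional effects I would expand, writing $\boldsymbol{\mu}_r := \E[\boldsymbol{X}_i \mid W_i = r]$,
\[
(\boldsymbol{x} - \boldsymbol{\overline{X}}_r)^\top \hat{\rho}_{rt}^{(q)} - (\boldsymbol{x} - \boldsymbol{\mu}_r)^\top \boldsymbol{\rho}_{rt}^* = (\boldsymbol{x} - \boldsymbol{\overline{X}}_r)^\top (\hat{\rho}_{rt}^{(q)} - \boldsymbol{\rho}_{rt}^*) + (\boldsymbol{\mu}_r - \boldsymbol{\overline{X}}_r)^\top \boldsymbol{\rho}_{rt}^* .
\]
The first summand is bounded by $\lVert \boldsymbol{x} - \boldsymbol{\overline{X}}_r \rVert_2 \lVert \hat{\rho}_{rt}^{(q)} - \boldsymbol{\rho}_{rt}^* \rVert_2 = \mathcal{O}_{\mathbb{P}}(1) \cdot \mathcal{O}_{\mathbb{P}}(\min\{h_N,h_N'\})$ (using $\boldsymbol{\overline{X}}_r \xrightarrow{p} \boldsymbol{\mu}_r$ and that $\hat{\rho}_{rt}^{(q)} - \boldsymbol{\rho}_{rt}^*$ is a subvector of $\boldsymbol{\hat{\beta}}^{(q)} - \boldsymbol{\beta}_N^*$); the second is $\mathcal{O}_{\mathbb{P}}(1/\sqrt{N})$ because, with $d$ fixed and the eigenvalues of $\Cov(\boldsymbol{X}_i \mid W_i = r)$ bounded, a within-cohort central limit theorem gives $\boldsymbol{\overline{X}}_r - \boldsymbol{\mu}_r = \mathcal{O}_{\mathbb{P}}(1/\sqrt{N})$ (as $N_r/N \xrightarrow{p} \pi_r > 0$). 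A key bookkeeping observation is that $\min\{h_N, h_N'\}$ is bounded below by a constant multiple of $1/\sqrt{N}$: since (R2) upper-bounds $e_{1N}$ and $p_N \geq 1$, both $h_N = e_{1N}^{-1}\sqrt{p_N/N}$ and $h_N' = \sqrt{(p_N + \lambda_N s_N)/(N e_{1N})}$ exceed a fixed multiple of $1/\sqrt{N}$, so the $1/\sqrt{N}$ terms are absorbed.

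\textbf{The estimated-weight claims (second display of each of (a) and (b)).} Here the plan is a product decomposition that avoids any independence requirement. Writing $w_r$ for the true weight and $\hat{w}_r$ for its estimate (so $\hat{w}_r = f_r(\hat{\pi}_0,\ldots,\hat{\pi}_{r_R})$ with $\hat{\pi}_r = N_r/N$ in (a), and $\hat{w}_r = \boldsymbol{a}_r^\top \boldsymbol{\hat{\pi}}(\boldsymbol{x}) / \boldsymbol{b}_r^\top \boldsymbol{\hat{\pi}}(\boldsymbol{x})$ in (b)), and letting $\tau$ denote the relevant treatment effect, I would use
\[
\hat{w}_r \hat{\tau} - w_r \tau = \hat{w}_r(\hat{\tau} - \tau) + (\hat{w}_r - w_r)\tau .
\]
The first summand is $\mathcal{O}_{\mathbb{P}}(1)$ (the weight estimator is consistent for a finite limit, using continuity of $f_r$ in (a), and the assumption $\boldsymbol{b}_r^\top \boldsymbol{\pi}(\boldsymbol{x}) > 0$ to keep the denominator bounded away from zero in (b)) times the rate already established in the constant-weight step, hence $\mathcal{O}_{\mathbb{P}}(\min\{h_N,h_N'\})$. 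For the second summand, a first-order (delta-method) expansion converts the weight error into the input error: in (a) the finite, nonzero Jacobian of $\boldsymbol{f}$ together with $\hat{\pi}_r - \pi_r = \mathcal{O}_{\mathbb{P}}(1/\sqrt{N})$ (empirical multinomial frequencies under (F2)) give $\hat{w}_r - w_r = \mathcal{O}_{\mathbb{P}}(1/\sqrt{N})$, again absorbed into $\min\{h_N,h_N'\}$; in (b) the differentiability of the ratio with denominator bounded away from zero and $\hat{\pi}_r(\boldsymbol{x}) - \pi_r(\boldsymbol{x}) = \mathcal{O}_{\mathbb{P}}(a_N)$ give $\hat{w}_r - w_r = \mathcal{O}_{\mathbb{P}}(a_N)$, producing the extra $\vee\, a_N$ in the stated rate. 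Since $\tau$ is a fixed finite constant and there are finitely many $(r,t)$ terms, summing with the finite weights $\psi_{rt}$ preserves the rate.

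\textbf{Main obstacle.} The conceptual content is light once Theorem \ref{first.thm.fetwfe}(b) and Theorem \ref{te.interp.prop} are in hand; the real care goes into the rate bookkeeping. The main obstacle is verifying that the sampling errors of the plug-in weights and cohort means ($1/\sqrt{N}$, or $a_N$ for an external propensity estimator) combine with the regression rate $\min\{h_N,h_N'\}$ exactly as claimed---in particular showing that $1/\sqrt{N}$ is dominated and that the cross term in the product decomposition is controlled without assuming the weight and treatment-effect estimators are independent (so that, as noted after Theorem \ref{main.te.cons.thm}, a single data set may be used for both, since $\mathcal{O}_{\mathbb{P}}$ products multiply irrespective of dependence). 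The delta-method steps are where the hypotheses on $\boldsymbol{f}$ (finite nonzero Jacobian) and on $\boldsymbol{a}_r,\boldsymbol{b}_r$ (positive inner products with $\boldsymbol{\pi}(\boldsymbol{x})$) are essential, guaranteeing the weight estimators converge to finite limits at the stated rates.
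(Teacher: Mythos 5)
Your proposal is correct and follows essentially the same route as the paper's proof: identification via Theorem \ref{te.interp.prop}, reduction of every error to $\lVert \boldsymbol{\hat{\beta}}^{(q)} - \boldsymbol{\beta}_N^* \rVert_2 = \mathcal{O}_{\mathbb{P}}(\min\{h_N,h_N'\})$ from Theorem \ref{first.thm.fetwfe}(b), a multinomial-CLT-plus-delta-method bound for the estimated weights in (a), a within-cohort CLT for $\boldsymbol{\overline{X}}_r - \boldsymbol{\mu}_r$ (the paper's Lemma \ref{lem.asym.norm.cond.means}), a local-Lipschitz argument for the ratio weights in (b) (the paper's Lemma \ref{tilde.pi.r.o.p.cond.lemma}), and the same absorption of $1/\sqrt{N}$ into $\min\{h_N,h_N'\}$ via the (R2) upper bound on $e_{1N}$. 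Your two-term product decomposition $\hat{w}_r\hat{\tau} - w_r\tau = \hat{w}_r(\hat{\tau}-\tau) + (\hat{w}_r - w_r)\tau$ is just a rearrangement of the paper's three-term version, and your observation that no independence between weight and effect estimators is needed matches the paper's remark.
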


\begin{proof} Provided in Section \ref{app.prove.main.res}.
\end{proof}

Theorem \ref{te.asym.norm.thm.gen} is a similar generalization of Theorem \ref{te.asym.norm.thm}.

\begin{theorem}[Asymptotic Confidence Intervals for FETWFE; extension of Theorem \ref{te.asym.norm.thm}]\label{te.asym.norm.thm.gen}

Assume that Assumptions (CNAS), (CCTSB), and (LINS) hold, as well as Assumptions (F1), (F2), S(\(s\)) for a fixed \(s\), and (R1) - (R6). Assume \(q \in (0,1)\). Suppose that either Assumptions (CTSA) and (CIUN) hold or Assumption (CCTSA) holds. Let \(\{\psi_{rt}\}\) be an arbitrary set of finite constants, and for all of the below results, assume at least one of the \( \tau_{\text{ATT}} (r,t )\) corresponding to a nonzero \(\psi_{rt}\) is nonzero (otherwise, all of the below sequences of random variables converge in probability to 0).

\begin{enumerate}[(a)]

\item For the estimator \eqref{att.estimator.fixed} it holds that
\begin{align*}
\sqrt{ \frac{NT}{\hat{v}_N^{(\text{C})} }}  \sum_{r \in \mathcal{R}} \sum_{t=r}^T \psi_{rt}  ( \hat{\tau}_{\text{ATT}} (  r,t )  - \tau_{\text{ATT}} (r,t ) ) \xrightarrow{d} ~ &  \mathcal{N}(0, 1)
,
\end{align*}
where \(\hat{v}_N^{(\text{C})} \) is a finite-sample variance estimator defined in  \eqref{v.n.r.t.att.const}. (All variance estimators depend only on the known \(\sigma^2\) and the observed data.)

\item Suppose one set of data of size \(N\) is used to estimate the cohort probabilities \(N_r/ N_\tau\) and another independent data set of size \(N\) is used to estimate each \(\hat{\tau}_{\text{ATT}} ( r, t)\). Then for any differentiable function \(\boldsymbol{f}: [0,1]^{R +1} \to \mathbb{R}^R\) that is pointwise finite and has a Jacobian \(\nabla \boldsymbol{f} \in \mathbb{R}^{(R + 1) \times R}\) at \((\pi_0, \pi_{r_1}, \ldots, \pi_{r_R})\) that exists and is nonzero and finite, 
 it holds that
\begin{align*}
 \sqrt{ \frac{NT}{   \hat{v}_N^{\text{C}; \hat{\pi}}  }}  \sum_{r \in \mathcal{R}} \sum_{t =r}^T \psi_{rt}     \left(  f_r \left(   \hat{\pi}_{0},  \hat{\pi}_{r_1},  \ldots,  \hat{\pi}_{r_R} \right)  \hat{\tau}_{\text{ATT}} ( r, t) -   f_r \left(  \pi_{0}, \pi_{r_1}, \ldots,  \pi_{r_R} \right) \tau_{\text{ATT}} (r, t) \right) 
 \xrightarrow{d} ~ &   \mathcal{N}(0, 1)
,
\end{align*}
where \( \hat{v}_N^{\text{C}; \hat{\pi}}  \) is a finite-sample variance estimator defined in  \eqref{v.n.r.t.att.rand}.

\item Suppose a single set of data of size \(N\) is used to estimate both the cohort probabilities \(N_r/ N_\tau\) and each \(\hat{\tau}_{\text{ATT}} ( r, t)\). Then for any differentiable function \(\boldsymbol{f}: [0,1]^{R +1} \to \mathbb{R}^R\) that is pointwise finite and has a Jacobian \(\nabla \boldsymbol{f} \in \mathbb{R}^{(R + 1) \times R}\) at \((\pi_0, \pi_{r_1}, \ldots, \pi_{r_R})\) that exists and is nonzero and finite, the estimator \eqref{att.estimator.weighted} is asymptotically subgaussian: the sequence of random variables
\begin{align*}
 \sqrt{ \frac{NT}{   \hat{v}_N^{\text{C, (cons)}; \hat{\pi}}  }}  \sum_{r \in \mathcal{R}} \sum_{t =r}^T \psi_{rt}     \left(    f_r \left(  \hat{\pi}_{0}, \hat{\pi}_{r_1}, \ldots,  \hat{\pi}_{r_R} \right)    \hat{\tau}_{\text{ATT}} ( r, t) -   f_r \left(  \pi_{0}, \pi_{r_1}, \ldots,  \pi_{r_R} \right)  \tau_{\text{ATT}} (r, t) \right) 
\end{align*}
converges in distribution to a mean-zero subgaussian random variable with variance at most 1, where \( \hat{v}_N^{\text{C, (cons)}; \hat{\pi}}  \) is a finite-sample conservative variance estimator defined in \eqref{v.n.r.t.att.rand.cons}.

\end{enumerate}

\end{theorem}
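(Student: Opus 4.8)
The plan is to reduce every statement to the corresponding feasible-variance result in Theorem \ref{first.thm.fetwfe}, parts (e), (g), and (h), by writing each treatment-effect estimator as a (possibly random) linear functional of the fused coefficient vector $\boldsymbol{\hat{\theta}}^{(q)} = \boldsymbol{D}_N \boldsymbol{\hat{\beta}}^{(q)}$. First I would invoke Theorem \ref{te.interp.prop}(a) and (b), together with the correct-specification statement of Theorem \ref{first.thm.fetwfe}(a), to identify under either (CCTSA) or (CTSA) and (CIUN) the regression estimand $\tau_{rt}^*$ with the causal target $\tau_{\text{ATT}}(r,t)$, so that $\hat{\tau}_{\text{ATT}}(r,t) = \hat{\tau}_{rt}^{(q)}$ estimates the correct quantity. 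Since $\boldsymbol{D}_N$ is invertible (Lemma \ref{lem.d.express}), each $\hat{\tau}_{rt}^{(q)}$ equals a fixed linear combination $\boldsymbol{e}_{rt}^\top \boldsymbol{D}_N^{-1} \boldsymbol{\hat{\theta}}^{(q)}$ of the entries of $\boldsymbol{\hat{\theta}}^{(q)}$, and likewise $\tau_{\text{ATT}}(r,t) = \boldsymbol{e}_{rt}^\top \boldsymbol{D}_N^{-1} \boldsymbol{\theta}_N^*$, where $\boldsymbol{e}_{rt}$ is the standard basis vector selecting the $\tau_{rt}$ coordinate. Hence $\sum_{r,t}\psi_{rt}\hat{\tau}_{\text{ATT}}(r,t) = \boldsymbol{\psi}_N^\top \boldsymbol{\hat{\theta}}^{(q)}$ with $\boldsymbol{\psi}_N := (\boldsymbol{D}_N^{-1})^\top \sum_{r,t}\psi_{rt}\boldsymbol{e}_{rt}$. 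Because the treatment-effect block of $\boldsymbol{D}_N$ has fixed size (depending only on the fixed $R,T$) with inverse entries bounded by Lemma \ref{d.sing.val.lem}, and because $\boldsymbol{\psi}_N$ is supported only on $\tau$-difference coordinates, $(\boldsymbol{\psi}_N)_{\mathcal{S}}$ is a fixed finite vector and the remaining entries are finite, exactly matching the hypotheses on $\boldsymbol{\psi}_N$ in Theorem \ref{first.thm.fetwfe}(d)--(h).

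Part (a) then follows immediately by applying Theorem \ref{first.thm.fetwfe}(e) with $\mathcal{A}_N = \hat{\mathcal{S}}_N$ and this $\boldsymbol{\psi}_N$; the estimator $\hat{v}_N^{(\text{C})}$ coincides with the quantity $\sigma^2 \boldsymbol{\alpha}^\top (\widehat{\Cov}((\boldsymbol{Z}\boldsymbol{D}_N^{-1})_{(\cdot\cdot)\hat{\mathcal{S}}_N}))^{-1}\boldsymbol{\alpha}$ appearing there, and selection consistency (Theorem \ref{first.thm.fetwfe}(c)) guarantees $\hat{\mathcal{S}}_N = \mathcal{S}$ with probability tending to one, so the inverted sample covariance is on the correct support. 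For parts (b) and (c) the weight on each $\hat{\tau}_{rt}^{(q)}$ becomes $\psi_{rt} f_r(\boldsymbol{\hat{\pi}})$, yielding a random functional $\boldsymbol{\hat{\psi}}_N^\top \boldsymbol{\hat{\theta}}^{(q)}$ with $\boldsymbol{\hat{\psi}}_N = (\boldsymbol{D}_N^{-1})^\top \sum_{r,t}\psi_{rt} f_r(\boldsymbol{\hat{\pi}})\boldsymbol{e}_{rt}$, where $\hat{\pi}_r = N_r/N$ are empirical cohort proportions. The key step is to verify the hypothesis of Theorem \ref{first.thm.fetwfe}(f), namely $\sqrt{NT}\,(\boldsymbol{\theta}_N^*)_{\mathcal{S}}^\top(\boldsymbol{\hat{\alpha}}_N(\mathcal{S}) - \boldsymbol{\alpha}) \xrightarrow{d} \mathcal{N}(\boldsymbol{0}_s, v_\psi(\boldsymbol{\alpha}))$. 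Here $(\hat{\pi}_0,\ldots,\hat{\pi}_{r_R})$ is a vector of multinomial sample proportions, hence $\sqrt{N}$-consistent and asymptotically Gaussian; the delta method applied to $\boldsymbol{f}$, using that its Jacobian at $(\pi_0,\ldots,\pi_{r_R})$ exists and is nonzero and finite, gives joint asymptotic normality of $f_r(\boldsymbol{\hat{\pi}})$, and projecting onto the relevant coordinates of $\boldsymbol{\theta}_N^*$ (which are $\tau$-difference entries and are fixed across $N$, since the $\tau_{rt}^*$ are fixed in (LINS)) yields the displayed limit with an explicit $v_\psi(\boldsymbol{\alpha})$ built from the delta-method Jacobian and the multinomial covariance. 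A consistent $\hat{v}_\psi$ is then obtained by the plug-in $\boldsymbol{\hat{\pi}}$ into that sandwich.

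With this in hand, under data splitting (part (b)) the estimated weights are independent of $\boldsymbol{\hat{\theta}}^{(q)}$, so Theorem \ref{first.thm.fetwfe}(g) applies directly and $\hat{v}_N^{\text{C}; \hat{\pi}}$ is identified with the feasible variance $\hat{v}_N(\hat{\mathcal{S}}_N)$ given there, whose two summands are the treatment-effect variance and the weight-estimation variance $\hat{v}_\psi$. Under a single sample (part (c)) the two sources of randomness are dependent, and Theorem \ref{first.thm.fetwfe}(h) delivers asymptotic subgaussianity with the conservative variance $\hat{v}_N^{\text{C, (cons)}; \hat{\pi}}$, whose cross term $2\sqrt{\cdots}$ is exactly the Cauchy--Schwarz bound on the covariance of the two asymptotically Gaussian pieces, giving the $(\sigma_1+\sigma_2)^2$-type upper bound on the limiting variance. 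In both weighted cases the final asymptotic-normality (resp.\ subgaussianity) statement, with the causal target $\sum_{r,t}\psi_{rt} f_r(\boldsymbol{\pi})\tau_{\text{ATT}}(r,t) = \boldsymbol{\psi}_N(\mathcal{S})^\top\boldsymbol{\theta}_N^*$, is read off from $U_3(\hat{\mathcal{S}}_N)$ and $U_4(\hat{\mathcal{S}}_N)$.

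The hard part will be the precise verification of the Theorem \ref{first.thm.fetwfe}(f) hypothesis in the required form: making $v_\psi(\boldsymbol{\alpha})$ explicit through the delta-method Jacobian and the multinomial covariance of $\boldsymbol{\hat{\pi}}$, exhibiting a plug-in $\hat{v}_\psi$ that is provably consistent for it, and confirming that the selection event $\{\hat{\mathcal{S}}_N = \mathcal{S}\}$ (which holds with probability tending to one) does not disturb the joint convergence when the weights and the fused coefficients are drawn from the same data in part (c). Everything else is bookkeeping that transports the generic linear-functional conclusions of Theorem \ref{first.thm.fetwfe} to the specific weight maps $\boldsymbol{f}$, using only the boundedness of $\boldsymbol{D}_N^{-1}$ and the fixedness of the $\tau$-difference coordinates of $\boldsymbol{\theta}_N^*$.
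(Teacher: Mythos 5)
Your proposal is correct and follows essentially the same route as the paper: both identify the causal targets with the regression estimands via Theorem \ref{te.interp.prop}, express each weighted sum of treatment effects as a (possibly random) linear functional of $\boldsymbol{\hat{\theta}}^{(q)}$ through the rows of $\boldsymbol{D}_N^{-1}$, and then invoke Theorem \ref{first.thm.fetwfe}(e), (g), and (h) for parts (a), (b), and (c) respectively, with the multinomial CLT plus the delta method (and a plug-in sandwich estimator) supplying the asymptotic-normality hypothesis on the estimated weights. The only cosmetic difference is notational (selector vectors $\boldsymbol{e}_{rt}$ versus the paper's index map $i(r,t)$), and your explicit appeal to selection consistency in part (a) is already subsumed in the statement of Theorem \ref{first.thm.fetwfe}(e).
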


\begin{proof} Provided in Section \ref{app.prove.main.res}.
\end{proof}

Theorem \ref{te.asym.norm.thm.gen.cond} shows the asymptotic normality of two classes of conditional treatment effect estimators, analogously to Theorem \ref{te.asym.norm.thm.gen}, and provides consistent variance estimators for constructing asymptotically valid confidence intervals. It also provides sufficient assumptions for the FETWFE estimators of conditional average treatment effects to converge at \(1/\sqrt{N}\) rates.

\begin{theorem}[Asymptotic Confidence Intervals for conditional average treatment effects in FETWFE]\label{te.asym.norm.thm.gen.cond}

Assume that Assumptions (CNAS), (CCTS), and (LINS) hold, as well as Assumptions (F1), (F2), S(\(s\)) for a fixed \(s\), and (R1) - (R6). Assume \(q \in (0,1)\), and assume \(d_N = d\) is fixed. Let \(\{\psi_{rt}\}\) be an arbitrary set of finite constants, and for both of the below results, assume at least one of the \( \tau_{\text{CATT}} (r,t , \boldsymbol{x})\) corresponding to a nonzero \(\psi_{rt}\) is nonzero (otherwise, all of the below sequences of random variables converge in probability to 0).

\begin{enumerate}[(a)]

\item For \(\hat{\tau}_{\text{CATT}} (r, t, \boldsymbol{x}) \) as defined in \eqref{catt.t.r.est}, suppose one set of data of size \(N\) is used to estimate each cohort conditional covariate mean \(\boldsymbol{\overline{X}}_r\) and another independent data set of size \(N\) is used to estimate each \(\hat{\tau}_{rt}^{(q)}\) and \(\hat{\rho}_{rt}^{(q)}\) via the FETWFE regression. Then for any set of finite constants \(\{\psi_{rt}\}_{r \in \mathcal{R}, t \in \{r, \ldots, T\}}\) and any fixed \(\boldsymbol{x}\) in the support of \(\boldsymbol{X}_i\), it holds that
\begin{align*}
\sqrt{ \frac{NT}{ \hat{v}_N^{(\text{C;CATT})} }}  \sum_{r \in \mathcal{R}} \sum_{t=r}^T \psi_{rt}  ( \hat{\tau}_{\text{CATT}} (  r,t , \boldsymbol{x} )  - \tau_{\text{CATT}} (r,t , \boldsymbol{x}) ) \xrightarrow{d} ~ &  \mathcal{N}(0, 1)
,
\end{align*}
where \( \hat{v}_N^{(\text{C;CATT})} \) is a finite-sample variance estimator defined in \eqref{v.n.r.t.catt.const}.

\item Assume we have an estimator \((\hat{\pi}_{0}(\boldsymbol{x}) , \hat{\pi}_{r_1}(\boldsymbol{x}) , \ldots, \hat{\pi}_{r_R}(\boldsymbol{x}) )\) of the generalized propensity scores \(( \pi_0(\boldsymbol{x}) ,  \pi_{r_1}(\boldsymbol{x}) , \ldots,  \pi_{r_R}(\boldsymbol{x}) )\) that satisfies for any fixed \(\boldsymbol{x}\) in the support of \(\boldsymbol{X}_i\)
\[
\sqrt{N} \left(\hat{\pi}_{0}(\boldsymbol{x}) - \pi_0(\boldsymbol{x}) , \hat{\pi}_{r_1}(\boldsymbol{x}) - \pi_{r_1}(\boldsymbol{x}) , \ldots, \hat{\pi}_{r_R}(\boldsymbol{x}) - \pi_{r_R}(\boldsymbol{x})  \right) \xrightarrow{d} \mathcal{N}(0, \boldsymbol{\Sigma}_\pi)
\]
for some finite positive definite \( \boldsymbol{\Sigma}_\pi \in \mathbb{R}^{(R+1) \times (R+ 1)}\). Assume we also have available a consistent estimator \(\boldsymbol{\hat{\Sigma}}_\pi \) that converges entrywise in probability to \( \boldsymbol{\Sigma}_\pi \). Suppose one set of data of size \(N\) is used to estimate each cohort conditional covariate mean \(\boldsymbol{\overline{X}}_r\), another independent data set of size \(N\) is used to estimate each \(\hat{\tau}_{rt}^{(q)}\) and \(\hat{\rho}_{rt}^{(q)}\) via the FETWFE regression, and a third independent data set is used to estimate the generalized propensity scores. Then for any differentiable function \(f: [0,1]^{R +1} \to \mathbb{R}^R\) that is pointwise finite and has a Jacobian \(\nabla \boldsymbol{f} \in \mathbb{R}^{(R + 1) \times R}\) at \((\pi_0(\boldsymbol{x}), \pi_{r_1}(\boldsymbol{x}), \ldots, \pi_{r_R}(\boldsymbol{x}))\) that exists and is nonzero and finite, it holds that
\begin{align*}
&  \sqrt{ \frac{NT}{  \hat{v}_N^{(\text{C;CATT,} \pi)} }}  \sum_{r \in \mathcal{R}} \sum_{t =r}^T \psi_{rt}     \big(  f_r \left(  \hat{\pi}_{0}(\boldsymbol{x}) , \hat{\pi}_{r_1}(\boldsymbol{x}) , \ldots,  \hat{\pi}_{r_R}(\boldsymbol{x})  \right)  \hat{\tau}_{\text{CATT}} ( r, t, \boldsymbol{x})
\\ &  -   f_r \left(  \pi_{0}(\boldsymbol{x}) , \pi_{r_1}(\boldsymbol{x}) , \ldots,  \pi_{r_R}(\boldsymbol{x})  \right) \tau_{\text{CATT}} (r, t, \boldsymbol{x}) \big) 
 \xrightarrow{d}    \mathcal{N}(0, 1)
,
\end{align*}
where \(\hat{v}_N^{(\text{C;CATT,} \pi)}  \) is a finite-sample variance estimator defined in \eqref{v.n.r.t.catt.const.est.pi}.

\end{enumerate}

\end{theorem}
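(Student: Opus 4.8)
The plan is to recognize each estimator as a linear functional of the bridge estimator $\boldsymbol{\hat{\beta}}^{(q)}$ whose weight vector is itself estimated on data splits independent of the FETWFE regression, so that the asymptotic-normality and feasible-variance machinery of Theorem \ref{first.thm.fetwfe}(f) and (g) applies after the change of variables $\boldsymbol{\beta} = \boldsymbol{D}_N^{-1}\boldsymbol{\theta}$. Throughout I will invoke Theorem \ref{te.interp.prop}(a), which under (CNAS), (CCTS), and (LINS) gives $\tau_{\text{CATT}}(r,t,\boldsymbol{x}) = \tau_{rt}^* + (\boldsymbol{x}-\boldsymbol{\mu}_r)^\top\boldsymbol{\rho}_{rt}^*$ with $\boldsymbol{\mu}_r := \E[\boldsymbol{X}_i\mid W_i=r]$, so the population target is exactly $\boldsymbol{\psi}^\top\boldsymbol{\beta}_N^*$ for a fixed $\boldsymbol{\psi}$, and Theorem \ref{first.thm.fetwfe}(c) to reduce to the fixed-$s$ oracle model ($\hat{\mathcal{S}}_N = \mathcal{S}$ with probability tending to one). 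The degenerate case in which every relevant $\tau_{\text{CATT}}(r,t,\boldsymbol{x})$ vanishes is handled separately by the ``exact zero'' conclusion of selection consistency.

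For part (a), I would first write the pointwise error as
\[
\hat{\tau}_{\text{CATT}}(r,t,\boldsymbol{x}) - \tau_{\text{CATT}}(r,t,\boldsymbol{x}) = (\hat{\tau}_{rt}^{(q)} - \tau_{rt}^*) + (\boldsymbol{x}-\boldsymbol{\mu}_r)^\top(\hat{\rho}_{rt}^{(q)} - \boldsymbol{\rho}_{rt}^*) + (\boldsymbol{\mu}_r - \boldsymbol{\overline{X}}_r)^\top\hat{\rho}_{rt}^{(q)}.
\]
Summed against $\psi_{rt}$, the whole statistic equals $\boldsymbol{\hat{\psi}}^\top\boldsymbol{\hat{\beta}}^{(q)} - \boldsymbol{\psi}^\top\boldsymbol{\beta}_N^*$, where $\boldsymbol{\hat{\psi}}$ puts weight $\psi_{rt}$ on each $\tau_{rt}$ coefficient and $\psi_{rt}(\boldsymbol{x}-\boldsymbol{\overline{X}}_r)$ on each $\boldsymbol{\rho}_{rt}$ coefficient, and $\boldsymbol{\psi}$ is the population analogue with $\boldsymbol{\overline{X}}_r$ replaced by $\boldsymbol{\mu}_r$. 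Passing to $\boldsymbol{\theta}$-space, $\boldsymbol{\hat{\alpha}} := (\boldsymbol{D}_N^{-\top}\boldsymbol{\hat{\psi}})_{\mathcal{S}}$ is measurable with respect to the (independent) covariate-mean split, which places us in the $\boldsymbol{\hat{\alpha}}$-estimated-independently setting of Theorem \ref{first.thm.fetwfe}(f)/(g). The one condition to check is the cross-term CLT $\sqrt{NT}(\boldsymbol{\theta}_N^*)_{\mathcal{S}}^\top(\boldsymbol{\hat{\alpha}}-\boldsymbol{\alpha})\xrightarrow{d}\mathcal{N}(0,v_\psi)$; since $(\boldsymbol{\hat{\psi}}-\boldsymbol{\psi})^\top\boldsymbol{\beta}_N^* = \sum_r(\boldsymbol{\mu}_r - \boldsymbol{\overline{X}}_r)^\top\boldsymbol{g}_r$ with $\boldsymbol{g}_r := \sum_{t\ge r}\psi_{rt}\boldsymbol{\rho}_{rt}^*$ and the cohort means are independent across $r$, this follows from the ordinary CLT using $d_N=d$ fixed, the bounded-eigenvalue hypothesis on $\Cov(\boldsymbol{X}_i\mid W_i=r)$, and $N_r/N\xrightarrow{a.s.}\pi_r$, yielding $v_\psi = T\sum_r \boldsymbol{g}_r^\top\Cov(\boldsymbol{X}_i\mid W_i=r)\boldsymbol{g}_r/\pi_r$. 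A consistent $\hat{v}_\psi$ comes from plugging in sample cohort covariances, $\hat{\rho}_{rt}^{(q)}$ (consistent by Theorem \ref{first.thm.fetwfe}(b)) for $\boldsymbol{\rho}_{rt}^*$, and $N_r/N$ for $\pi_r$; Theorem \ref{first.thm.fetwfe}(g) then gives $U_3(\hat{\mathcal{S}}_N)\xrightarrow{d}\mathcal{N}(0,1)$, which is the claim with $\hat{v}_N^{(\text{C;CATT})}$.

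For part (b) I would follow the same template but absorb the propensity weights $\hat{w}_r := f_r(\hat{\boldsymbol{\pi}}(\boldsymbol{x}))$ into the random weight, so $\boldsymbol{\hat{\psi}}$ now carries $\psi_{rt}\hat{w}_r$ on $\tau_{rt}$ and $\psi_{rt}\hat{w}_r(\boldsymbol{x}-\boldsymbol{\overline{X}}_r)$ on $\boldsymbol{\rho}_{rt}$, estimated jointly on the two splits (covariate means, propensity scores) that are independent of the FETWFE split. Writing $\hat{w}_r\hat{c}_{rt} - w_r c_{rt} = \hat{w}_r(\hat{c}_{rt}-c_{rt}) + (\hat{w}_r-w_r)c_{rt}$ (with $c_{rt} := \tau_{\text{CATT}}(r,t,\boldsymbol{x})$) isolates the regression-plus-covariate-mean contribution, handled as in part (a) with $\hat{w}_r\xrightarrow{p}w_r$ absorbed by Slutsky, from the propensity contribution $\sum_r a_r(\hat{w}_r-w_r)$ with $a_r := \sum_{t\ge r}\psi_{rt}c_{rt}$. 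For the latter, the assumed $\sqrt{N}$-CLT for $\hat{\boldsymbol{\pi}}(\boldsymbol{x})$ and the delta method give $\sqrt{N}\sum_r a_r(\hat{w}_r-w_r)\approx \boldsymbol{h}^\top\sqrt{N}(\hat{\boldsymbol{\pi}}(\boldsymbol{x})-\boldsymbol{\pi}(\boldsymbol{x}))$ with $\boldsymbol{h} := \sum_r a_r\nabla f_r(\boldsymbol{\pi}(\boldsymbol{x}))$, asymptotically normal with variance $v_\pi = T\,\boldsymbol{h}^\top\boldsymbol{\Sigma}_\pi\boldsymbol{h}$. Because the three splits are independent, the cross-term variance in Theorem \ref{first.thm.fetwfe}(f) is $v_\psi + v_\pi$, consistently estimated using $\boldsymbol{\hat{\Sigma}}_\pi$, the estimated Jacobian, and plug-in CATT estimates; Theorem \ref{first.thm.fetwfe}(g) again yields the $\mathcal{N}(0,1)$ limit with variance estimator $\hat{v}_N^{(\text{C;CATT},\pi)}$.

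The main obstacle will be verifying the cross-term central limit condition of Theorem \ref{first.thm.fetwfe}(f) and producing a provably consistent $\hat{v}_\psi$ (and, in part (b), $\hat{v}_\pi$) while keeping rigorous track of the independent data splits and the $\boldsymbol{D}_N^{-\top}$ change of variables. In part (b) the delta-method bookkeeping for $\boldsymbol{f}$ and the argument that the covariate-mean and propensity-score errors contribute independent, jointly Gaussian terms (so that variances add) is the most delicate step; the reduction to the fixed-$s$ oracle model and the coefficient-level asymptotics are then immediate from Theorem \ref{first.thm.fetwfe}.
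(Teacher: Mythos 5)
Your proposal is correct and takes essentially the same route as the paper's proof: both reduce each estimator to the form \(\boldsymbol{\hat{\psi}}^\top\boldsymbol{\hat{\theta}}^{(q)} - \boldsymbol{\psi}^\top\boldsymbol{\theta}_N^*\) with weights estimated on splits independent of the FETWFE fit, verify the required cross-term CLT via the cohort-conditional-mean CLT (the paper's Lemma \ref{lem.asym.norm.cond.means}) and, in part (b), the same covariate-mean/delta-method/negligible-cross-term decomposition (the paper's Lemma \ref{lem.int.asym.norm.cond.est.prob}), and conclude by Theorem \ref{first.thm.fetwfe}(g) with matching plug-in variance estimators. Your variance formulas \(v_\psi\) and \(v_\pi\) coincide with the paper's \(v_R^{\text{CATT}}\) and \(T\boldsymbol{\psi}_\pi^\top\nabla\boldsymbol{f}^\top\boldsymbol{\Sigma}_\pi\nabla\boldsymbol{f}\boldsymbol{\psi}_\pi\), and your two-term decomposition plus Slutsky is equivalent to the paper's three-term decomposition.
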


\begin{proof} Provided in Section \ref{app.prove.main.res}.
\end{proof}

It is trivial to extend Theorem \ref{te.asym.norm.thm.gen.cond} to show that these classes of conditional average treatment effect estimators enjoy oracle properties analogous to those of Theorem \ref{te.oracle.thm}. Theorem \ref{te.asym.norm.thm.gen.cond} applies immediately to the classes of conditional average treatment effects estimators \eqref{catt.estimand.fixed.est} and \eqref{catt.estimand.weighted.est} by the same argument we used to prove Theorem \ref{main.te.cons.thm} since the Jacobian \(\nabla \boldsymbol{f} \in \mathbb{R}^{(R+1) \times R}\)
\[
\nabla \boldsymbol{f} := 
\begin{pmatrix}
0 & 0& \cdots & 0
\\ \frac{\sum_{r' \in \mathcal{R} \setminus r_1} \pi_{r'}(\boldsymbol{x}) }{\left(  \sum_{r' \in \mathcal{R}} \pi_{r'}(\boldsymbol{x}) \right)^2} & \frac{-\pi_{r_2}(\boldsymbol{x}) }{\left(  \sum_{r' \in \mathcal{R}} \pi_{r'}(\boldsymbol{x}) \right)^2} & \cdots & \frac{-\pi_{r_R}(\boldsymbol{x}) }{\left(  \sum_{r' \in \mathcal{R}} \pi_{r'}(\boldsymbol{x}) \right)^2}
\\ \frac{-\pi_{r_1}(\boldsymbol{x}) }{\left(  \sum_{r' \in \mathcal{R}} \pi_{r'}(\boldsymbol{x}) \right)^2} &  \frac{\sum_{r' \in \mathcal{R} \setminus r_2} \pi_{r'}(\boldsymbol{x}) }{\left(  \sum_{r' \in \mathcal{R}} \pi_{r'}(\boldsymbol{x}) \right)^2}  & \cdots & \frac{-\pi_{r_R}(\boldsymbol{x}) }{\left(  \sum_{r' \in \mathcal{R}} \pi_{r'}(\boldsymbol{x}) \right)^2}
\\ \vdots & \vdots & \ddots & \vdots
\\  \frac{-\pi_{r_1}(\boldsymbol{x}) }{\left(  \sum_{r' \in \mathcal{R}} \pi_{r'}(\boldsymbol{x}) \right)^2} & \frac{-\pi_{r_2}(\boldsymbol{x}) }{\left(  \sum_{r' \in \mathcal{R}} \pi_{r'}(\boldsymbol{x}) \right)^2} & \cdots & \frac{\sum_{r' \in \mathcal{R} \setminus r_R} \pi_{r'}(\boldsymbol{x}) }{\left(  \sum_{r' \in \mathcal{R}} \pi_{r'}(\boldsymbol{x}) \right)^2} 
\end{pmatrix}
\]
at \((\pi_0(\boldsymbol{x}), \pi_{r_1}(\boldsymbol{x}), \ldots, \pi_{r_R}(\boldsymbol{x}))\) and exists and is nonzero and finite by Assumption (F2).

Other extensions are possible as well. For example, when the data is not split into independent subsamples we can achieve asymptotic subgaussianity with a conservative variance estimator using a similar argument to the one used to prove Theorem \ref{te.asym.norm.thm.gen}(c).

\section{Proofs of Main Results}\label{app.prove.main.res}


We present two lemmas we will use in the following proofs.

\begin{lemma}\label{lem.d.express}
The number of treatment effects \(\mathfrak{W}\) is at most \((T-1)^2\). Further, after re-ordering the columns of \(\boldsymbol{\tilde{Z}}\), we can write \(\boldsymbol{D}_N\) as 
\begin{equation}\label{d.expres}
\operatorname{diag} \big( \boldsymbol{D}^{(1)}(R) , \boldsymbol{D}^{(1)}(T - 1) , \boldsymbol{I}_{d_N} , \boldsymbol{I}_{d_N} \otimes \boldsymbol{D}^{(1)}(R), \boldsymbol{I}_{d_N} \otimes \boldsymbol{D}^{(1)}(T - 1) ,  \boldsymbol{D}^{(2)}(\mathcal{R}) , \boldsymbol{I}_{d_N} \otimes \boldsymbol{D}^{(2)}(\mathcal{R}) \big)
,
\end{equation}
where \(\boldsymbol{D}^{(1)}(t) \in \mathbb{R}^{t \times t}\) is defined in \eqref{d.1.expres} and \(\boldsymbol{D}^{(2)}(\mathcal{R}) \in \mathbb{R}^{\mathfrak{W} \times \mathfrak{W}}\) is defined in \eqref{d.2.expres}. Further, \(\boldsymbol{D}_N\) is invertible, and \(\boldsymbol{D}_N^{-1}\) can be written as
\begin{align}
  \operatorname{diag} \Big(  &  \boldsymbol{D}^{(1)}(R)^{-1} , ((\boldsymbol{D}^{(1)}(T - 1))^{-1} ,  \boldsymbol{I}_{d_N} , \boldsymbol{I}_{d_N} \otimes \boldsymbol{D}^{(1)}(R)^{-1} , \boldsymbol{I}_{d_N} \otimes ((\boldsymbol{D}^{(1)}(T - 1))^{-1} , \nonumber
\\ & (\boldsymbol{D}^{(2)}(\mathcal{R}))^{-1} , \boldsymbol{I}_{d_N} \otimes (\boldsymbol{D}^{(2)}(\mathcal{R}))^{-1} \Big)\label{d.inv.exp}
.
\end{align}
Finally, both \(\boldsymbol{D}_N\) and \(\boldsymbol{D}_N^{-1}\) consists of blocks of size at most \((T-1)^2 \times (T-1)^2\). For  \(\boldsymbol{D}_N\) the entries of these blocks are either \(-1\), 0, or 1; and for \(\boldsymbol{D}_N^{-1}\) the entries of these blocks are either 0 or 1. Both matrices equal 0 elsewhere.
\end{lemma}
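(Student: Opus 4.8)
The plan is to read the block structure of $\boldsymbol{D}_N$ directly off the penalty \eqref{fetwfe.penalty} and then reduce the invertibility and inverse-structure claims to the two elementary building blocks $\boldsymbol{D}^{(1)}(t)$ and $\boldsymbol{D}^{(2)}(\mathcal{R})$ defined in \eqref{d.1.expres} and \eqref{d.2.expres}. First I would dispose of the bound on $\mathfrak{W}$: since $\mathfrak{W} = \sum_{r \in \mathcal{R}}(T - r + 1)$ with $\mathcal{R} \subseteq \{2,\ldots,T\}$, each summand satisfies $T - r + 1 \leq T - 1$ (because $r \geq 2$) and there are $|\mathcal{R}| = R \leq T - 1$ summands, giving $\mathfrak{W} \leq (T-1)^2$. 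This is the bound that will later cap every block size.

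Next I would obtain the block-diagonal form \eqref{d.expres} by grouping the columns of $\boldsymbol{\tilde{Z}}$ by parameter type in the order cohort dummies ($\nu$), time dummies ($\gamma$), covariates ($\boldsymbol{\kappa}$), the interactions ($\boldsymbol{\zeta}$, $\boldsymbol{\xi}$), treatment dummies ($\tau$), and treatment--covariate interactions ($\boldsymbol{\rho}$). Each line of \eqref{fetwfe.penalty} then involves only one such group, and within a group the penalized quantities are exactly the rows of the corresponding block: the cohort group produces the first differences $\nu_{r_k} - \nu_{r_{k-1}}$ together with the anchor $\nu_{r_R}$, which---up to the sign of a difference, irrelevant since $|\cdot|^q$ is even---is precisely $\boldsymbol{D}^{(1)}(R)$ applied to $(\nu_{r_1},\ldots,\nu_{r_R})$; the time group gives $\boldsymbol{D}^{(1)}(T-1)$ on $(\gamma_2,\ldots,\gamma_T)$; the direct penalties $|\kappa_j|^q$ give $\boldsymbol{I}_{d_N}$; and the covariate-indexed repetitions of these structures yield the Kronecker factors $\boldsymbol{I}_{d_N}\otimes\boldsymbol{D}^{(1)}(R)$ and $\boldsymbol{I}_{d_N}\otimes\boldsymbol{D}^{(1)}(T-1)$. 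The treatment group produces the tree-structured differences of Figure \ref{fig:fusion-penalties} plus the anchor $\tau_{r_1,r_1}$, which I identify with $\boldsymbol{D}^{(2)}(\mathcal{R})$, and its covariate repetitions give $\boldsymbol{I}_{d_N}\otimes\boldsymbol{D}^{(2)}(\mathcal{R})$.

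I would then establish invertibility and the inverse structure block by block. The matrix $\boldsymbol{D}^{(1)}(t)$ is upper bidiagonal with $+1$ on the diagonal and $-1$ on the superdiagonal, so its entries lie in $\{-1,0,1\}$; being triangular with unit diagonal it has $\det = 1$ and is invertible, and back-substituting the recursion $\theta_k = \beta_k - \beta_{k+1}$, $\theta_t = \beta_t$ gives $\beta_j = \sum_{k \geq j}\theta_k$, so $(\boldsymbol{D}^{(1)}(t))^{-1}$ is the upper-triangular all-ones matrix, with entries in $\{0,1\}$. For $\boldsymbol{D}^{(2)}(\mathcal{R})$ the key observation is that the fusion links among the treatment effects form a spanning tree of the $\mathfrak{W}$ nodes (the within-cohort adjacent-time links together with the across-cohort initial-effect links), rooted at $\tau_{r_1,r_1}$, which additionally carries the direct penalty $|\tau_{r_1,r_1}|^q$; ordering the treatment effects in a topological order of this tree (each node after its parent) triangularizes $\boldsymbol{D}^{(2)}(\mathcal{R})$ with $\pm 1$ on the diagonal, so it is invertible, and its inverse reconstructs each coefficient as the sum of the penalized increments along the unique path back to the root, whence its entries are again $0$ or $1$. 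Invertibility of $\boldsymbol{D}_N$ and the explicit form \eqref{d.inv.exp} then follow from the facts that a block-diagonal matrix is invertible iff each block is, that its inverse is block diagonal with the inverse blocks, and that $(\boldsymbol{I}_{d_N}\otimes\boldsymbol{B})^{-1} = \boldsymbol{I}_{d_N}\otimes\boldsymbol{B}^{-1}$. The remaining entry and size claims are bookkeeping: the entries of $\boldsymbol{D}_N$ are those of its blocks (in $\{-1,0,1\}$) and $0$ elsewhere, the entries of $\boldsymbol{D}_N^{-1}$ are in $\{0,1\}$ by the computations above, and every diagonal block---including those inside a Kronecker product with $\boldsymbol{I}_{d_N}$, which merely repeats a block---has size at most $\max\{R,\,T-1,\,\mathfrak{W}\} = \mathfrak{W} \leq (T-1)^2$ by the first step.

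The hard part will be the treatment block $\boldsymbol{D}^{(2)}(\mathcal{R})$: establishing rigorously that the penalty links form a spanning tree, exhibiting a topological ordering that triangularizes the block, and verifying that path-summation yields a $0/1$ inverse. The other blocks are standard bidiagonal objects and the Kronecker and block-diagonal manipulations are routine, so I expect essentially all of the genuine work to concentrate there.
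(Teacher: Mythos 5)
Your proposal is correct, and its skeleton matches the paper's proof: the same bound on \(\mathfrak{W}\), the same grouping of columns by parameter type to obtain the block-diagonal form \eqref{d.expres}, the same explicit inversion of \(\boldsymbol{D}^{(1)}(t)\) to the upper-triangular all-ones matrix, and the same block-diagonal/Kronecker bookkeeping at the end. Where you genuinely depart from the paper is the treatment block: the paper handles \(\boldsymbol{D}^{(2)}(\mathcal{R})\) entirely computationally, writing out its explicit nearly-block-triangular form \eqref{d.2.expres} and then \emph{exhibiting} the inverse \eqref{d.2.inv.exp} (built from transposed \(\boldsymbol{D}^{(1)}\) blocks and all-ones-first-column \(\boldsymbol{\tilde{U}}\) blocks), which can be verified by multiplication; you instead argue structurally, observing that the fusion links form a spanning tree on the \(\mathfrak{W}\) treatment effects rooted at \(\tau_{r_1,r_1}\) (edge count \(\sum_k (T-r_k) + (R-1) = \mathfrak{W}-1\) plus connectivity), so that a topological ordering triangularizes the block with \(\pm 1\) diagonal and the inverse reconstructs each coefficient by summing increments along the unique root-to-node path, giving \(0/1\) entries. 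Your argument is more conceptual and generalizes immediately to the broader penalty structures contemplated in Remark \ref{d.remark} (any tree-plus-root-penalty design would work verbatim), whereas the paper's explicit inverse formulas pay off downstream: the proof of Lemma \ref{d.sing.val.lem} bounds singular values via the maximum absolute row and column sums of \(\boldsymbol{D}_N^{-1}\), which is read directly off \eqref{d.1.inv.exp} and \eqref{d.2.inv.exp}; with your route one would extract the same bounds from the path-summation characterization (each row of the inverse has at most tree-depth many ones, each column at most subtree-size many), which is doable but is an extra step you would need to supply. One small slip: your closing claim that \(\max\{R,\,T-1,\,\mathfrak{W}\} = \mathfrak{W}\) is false in general (e.g., \(T=10\), \(\mathcal{R}=\{10\}\) gives \(\mathfrak{W}=1 < T-1\)), but this is harmless since each of \(R\), \(T-1\), and \(\mathfrak{W}\) is separately at most \((T-1)^2\), which is all the lemma asserts.
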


\begin{lemma}\label{bound.coefs} Under Assumption (R3), \(\max_{j \in p_N} \{|(\boldsymbol{\beta}_N^*)_j|\} \leq (T-1)^2 b_1\).

\end{lemma}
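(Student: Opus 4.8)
The plan is to express $\boldsymbol{\beta}_N^*$ in terms of the penalized differences $\boldsymbol{D}_N \boldsymbol{\beta}_N^*$ and to exploit the explicit structure of $\boldsymbol{D}_N^{-1}$ recorded in Lemma \ref{lem.d.express}. First I would observe that Assumption (R3) together with Assumption S($s_N$) controls every entry of $\boldsymbol{D}_N \boldsymbol{\beta}_N^*$ in the sup-norm: the entries indexed by $\mathcal{S}$ are bounded in absolute value by $b_1$ by (R3), while the remaining entries are exactly $0$ by the definition of $\mathcal{S}$ in Assumption S($s_N$). Hence $\lVert \boldsymbol{D}_N \boldsymbol{\beta}_N^* \rVert_\infty \leq b_1$.

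Next, since Lemma \ref{lem.d.express} guarantees that $\boldsymbol{D}_N$ is invertible, I would write $\boldsymbol{\beta}_N^* = \boldsymbol{D}_N^{-1} (\boldsymbol{D}_N \boldsymbol{\beta}_N^*)$. The decisive structural input is the final claim of Lemma \ref{lem.d.express}: $\boldsymbol{D}_N^{-1}$ is block diagonal with blocks of size at most $(T-1)^2 \times (T-1)^2$, each block entry equal to $0$ or $1$, and zero elsewhere. Consequently every row of $\boldsymbol{D}_N^{-1}$ has at most $(T-1)^2$ nonzero entries, each of which equals $1$.

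Finally, for any coordinate $j$ I would bound $|(\boldsymbol{\beta}_N^*)_j| = \bigl| \sum_k (\boldsymbol{D}_N^{-1})_{jk} (\boldsymbol{D}_N \boldsymbol{\beta}_N^*)_k \bigr|$ by the triangle inequality. Because the $j$th row contains at most $(T-1)^2$ unit entries and all its other entries vanish, this sum has at most $(T-1)^2$ nonzero terms, each bounded by $\lVert \boldsymbol{D}_N \boldsymbol{\beta}_N^* \rVert_\infty \leq b_1$, so that $|(\boldsymbol{\beta}_N^*)_j| \leq (T-1)^2 b_1$; taking the maximum over $j$ gives the claim. There is no substantive obstacle here: the only point requiring care is to read off from Lemma \ref{lem.d.express} that the inverse has $0/1$ entries, so that the $\ell_1$ norm of each row is controlled simply by counting nonzero entries (bounded by the block dimension $(T-1)^2$), rather than attempting to bound the operator norm of $\boldsymbol{D}_N^{-1}$ directly.
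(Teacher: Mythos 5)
Your proof is correct and takes essentially the same approach as the paper: both arguments bound $\lVert \boldsymbol{D}_N \boldsymbol{\beta}_N^* \rVert_\infty$ by $b_1$ and then exploit the $0/1$ block-diagonal structure of $\boldsymbol{D}_N^{-1}$ from Lemma \ref{lem.d.express} to bound each row sum by $(T-1)^2$, the paper phrasing this as the matrix norm inequality $\lVert \boldsymbol{\beta}_N^* \rVert_\infty \leq \lVert \boldsymbol{D}_N^{-1} \rVert_\infty \lVert \boldsymbol{\theta}_N^* \rVert_\infty$, which is exactly your entrywise triangle-inequality computation. Your explicit invocation of Assumption S($s_N$) to handle the entries of $\boldsymbol{D}_N \boldsymbol{\beta}_N^*$ outside $\mathcal{S}$ is a slightly more careful accounting than the paper's, which cites only (R3), but it is the same argument.
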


Proofs of both results are provided in Appendix \ref{main.thm.lems}.

\subsection{Proofs of Theorems \ref{main.te.cons.thm.gen} and  \ref{main.te.cons.thm}}

We start by proving Theorem \ref{main.te.cons.thm.gen}, which leads almost immediately to a proof of Theorem \ref{main.te.cons.thm}.

\begin{proof}[Proof of Theorem \ref{main.te.cons.thm.gen}]

\begin{enumerate}[(a)]

\item Observe that under our assumptions our model is correctly specified due to Theorem \ref{first.thm.fetwfe}(a). Recall that due to Theorem \ref{te.interp.prop}, under our assumptions the regression estimands correspond to our causal estimands.

Examining our estimator \(\hat{\tau}_{\text{ATT}} ( r, t)\) for the average treatment effect for a unit in cohort \(r\) at time \(t\) from \eqref{att.est.cov.r.t}, we see that we can then apply Theorem \ref{first.thm.fetwfe}(b) to immediately obtain that for every \(r \in \mathcal{R}\) and \(t \geq r\),
\begin{equation}\label{tau.r.t.cons.result}
\left| \hat{\tau}_{\text{ATT}} ( r, t) - \tau_{\text{ATT}} (r, t)  \right| = \left| \hat{\tau}_{rt}^{(q)} - \tau_{rt}^*   \right|  \leq \lVert \boldsymbol{\hat{\beta}}^{(q)} - \boldsymbol{\beta}_N^* \rVert_2 =   \mathcal{O}_\mathbb{P} \left( \min\{h_N, h_N'\}\right).
\end{equation}
Similarly, for any deterministic fixed linear combination of average treatment effects, using \eqref{tau.r.t.cons.result} and Theorem \ref{te.interp.prop} we have that for our class of estimators \eqref{att.estimator.fixed} and estimands \eqref{att.estimand.fixed},
\begin{align*}
 \left| \sum_{r \in \mathcal{R}} \sum_{t =r}^T \psi_{rt} \hat{\tau}_{\text{ATT}} (  r,t)   -  \sum_{r \in \mathcal{R}} \sum_{t =r}^T  \psi_{rt}  \tau_{\text{ATT}} (r,t)  \right|
%
 \leq ~ &  \sum_{r \in \mathcal{R}} \sum_{t =r}^T | \psi_{rt}|  \left|  \hat{\tau}_{\text{ATT}} ( r, t) - \tau_{\text{ATT}} (r, t) \right|
\\  = ~ &     \mathcal{O}_\mathbb{P} \left( \min\{h_N, h_N'\}\right)
,
\end{align*}
where we used the fact that \(T\) is fixed (and upper bounds \(R\)). 

Next we characterize the rate of convergence of each \( f_r(\hat{\pi}_0, \hat{\pi}_{r_1}, \ldots, \hat{\pi}_{r_R})\). Because by assumption \(\boldsymbol{f}(\cdot)\) has a finite Jacobian \(\nabla \boldsymbol{f} \in \mathbb{R}^{(R + 1) \times R}\) at \((\pi_0, \pi_{r_1}, \ldots, \pi_{r_R})\) that exists and is nonzero and finite, we have by the asymptotic normality of the maximum likelihood estimator \(\left( \frac{N_{0}}{N}, \frac{N_{r_1}}{N}, \frac{N_{r_2}}{N}, \ldots, \frac{N_{r_{R}}}{N} \right) \) (Theorem 14.6 in \citealt{wasserman2004all}) and the delta method (Theorem 3.1 in \citealt{van2000asymptotic}) that
\begin{align*}
\sqrt{N} \left( f(\hat{\pi}_0, \hat{\pi}_{r_1}, \ldots, \hat{\pi}_{r_R})  -  f(\pi_0, \pi_{r_1}, \ldots, \pi_{r_R})  \right) \xrightarrow{d} ~ & \mathcal{N}(0, \nabla \boldsymbol{f}^\top \boldsymbol{\Sigma}_M \nabla \boldsymbol{f})
%
%
\\ \implies \qquad  \sqrt{N}  \left(  f_{r'} (\hat{\pi}_0, \hat{\pi}_{r_1}, \ldots, \hat{\pi}_{r_R})  -  f_{r'}(\pi_0, \pi_{r_1}, \ldots, \pi_{r_R})  \right) \xrightarrow{d} ~ & \mathcal{N}(0,  \boldsymbol{e}_{r'}^\top  \nabla \boldsymbol{f}^\top \boldsymbol{\Sigma}_M \nabla \boldsymbol{f} \boldsymbol{e}_{r'})
,
\end{align*}
for any \(r' \in [R]\), where 
\begin{equation}\label{asym.cov.multinomi}
\boldsymbol{\Sigma}_M := \begin{pmatrix}
\pi_0(1 - \pi_0) & - \pi_0 \pi_{r_1}  & \cdots & - \pi_0 \pi_{r_R}
\\ - \pi_0 \pi_{r_1} & \pi_{r_1} (1 - \pi_{r_1}) & \cdots & - \pi_{r_1} \pi_{r_R}
\\ \vdots & \vdots & \ddots & \vdots
\\ - \pi_0 \pi_{r_R} & - \pi_{r_1} \pi_{r_R} & \cdots & \pi_{r_R}(1 - \pi_{r_R})
\end{pmatrix} \in \mathbb{R}^{(R +1) \times (R + 1)}
\end{equation}
and \(\boldsymbol{e}_r \in \mathbb{R}^R\) is the selector vector with a 1 in the \(r^{\text{th}}\) entry and zeroes elsewhere. Therefore by Theorem 2.4(i) in \citet{van2000asymptotic} and Proposition 1.8(iii) in \citet{Garcia-Portugues2023}, we have that for all \(r \in [R]\)
\begin{align}
\sqrt{N}  \left|   f_r(\hat{\pi}_0, \hat{\pi}_{r_1}, \ldots, \hat{\pi}_{r_R})  -  f_r(\pi_0, \pi_{r_1}, \ldots, \pi_{r_R})   \right| = ~ & \mathcal{O}_{\mathbb{P}} \left( 1 \right) \nonumber
\\ \implies \qquad  \left|   f_r(\hat{\pi}_0, \hat{\pi}_{r_1}, \ldots, \hat{\pi}_{r_R})  -  f_r(\pi_0, \pi_{r_1}, \ldots, \pi_{r_R})   \right| = ~ & \mathcal{O}_{\mathbb{P}} \left( \frac{1}{ \sqrt{N}} \right) \label{pi.hat.gen.convg.rate}
.
\end{align}

Finally, for our class of estimators \eqref{att.estimator.weighted} and estimands  \eqref{att.estimand.weighted}, we have
\begin{align*}
 &  \left|  \sum_{r \in \mathcal{R}} \sum_{t=r}^T  \psi_{rt} f_r(\hat{\pi}_0, \hat{\pi}_{r_1}, \ldots, \hat{\pi}_{r_R}) \hat{\tau}_{\text{ATT}} (  r,t)  -  \psi_{rt}   f_r(\pi_0, \pi_{r_1}, \ldots, \pi_{r_R})  \tau_{\text{ATT}} (r,t)  \right|
\\ = ~ &  \Bigg|  \sum_{r \in \mathcal{R}}   \sum_{t=r}^T   \psi_{rt} \big[  f_r(\pi_0, \pi_{r_1}, \ldots, \pi_{r_R}) \left( \hat{\tau}_{\text{ATT}} (  r,t)  -  \tau_{\text{ATT}} (r,t)  \right) 
\\ &  + \left(   f_r(\hat{\pi}_0, \hat{\pi}_{r_1}, \ldots, \hat{\pi}_{r_R})  -  f_r(\pi_0, \pi_{r_1}, \ldots, \pi_{r_R})   \right) \tau_{\text{ATT}} (r,t)  
\\ & +  ( f_r(\hat{\pi}_0, \hat{\pi}_{r_1}, \ldots, \hat{\pi}_{r_R})  -  f_r(\pi_0, \pi_{r_1}, \ldots, \pi_{r_R})  )\left( \hat{\tau}_{\text{ATT}} (  r,t)  -  \tau_{\text{ATT}} (r,t)  \right)   \big]\Bigg|
\\ \leq ~ &  \sum_{r \in \mathcal{R}}   \sum_{t=r}^T    \psi_{rt}  \big( f_r(\pi_0, \pi_{r_1}, \ldots, \pi_{r_R}) \left| \hat{\tau}_{\text{ATT}} (  r,t)  -  \tau_{\text{ATT}} (r,t)  \right|   +
\\ &   \left|    f_r(\hat{\pi}_0, \hat{\pi}_{r_1}, \ldots, \hat{\pi}_{r_R}) - f_r(\pi_0, \pi_{r_1}, \ldots, \pi_{r_R})   \right| \tau_{\text{ATT}} (r,t)  
\\ & +  | f_r(\hat{\pi}_0, \hat{\pi}_{r_1}, \ldots, \hat{\pi}_{r_R})  -  f_r(\pi_0, \pi_{r_1}, \ldots, \pi_{r_R})  |\left| \hat{\tau}_{\text{ATT}} (  r,t)  -  \tau_{\text{ATT}} (r,t)  \right|  \big)
%
%
%
\\ \stackrel{(a)}{=} ~ & \mathcal{O}_{\mathbb{P}}( \min\{h_N, h_N'\}) +  \mathcal{O}_{\mathbb{P}}(1/\sqrt{N}) + \mathcal{O}_{\mathbb{P}}( \min\{h_N, h_N'\}) \mathcal{O}_{\mathbb{P}}(1/\sqrt{N})
\\ \stackrel{(b)}{=} ~ & \mathcal{O}_{\mathbb{P}}( \min\{h_N, h_N'\}) 
,
\end{align*}
where in \((a)\) we used that all the \(  f_r(\pi_0, \pi_{r_1}, \ldots, \pi_{r_R}) \) and \(\tau_{\text{ATT}}(r, t)\) are finite, \eqref{tau.r.t.cons.result}, and \eqref{pi.hat.gen.convg.rate}; and in \((b)\) we used that \(1/\sqrt{N} = \mathcal{O}(h_N)\) since \(e_{1N}\) is upper-bounded by a constant and \(p_N\) is nondecreasing. 

\item Again, using Theorem \ref{te.interp.prop},
%
\[
\tau_{\text{CATT}} ( r, t, \boldsymbol{x})  = \tau_{rt}^* + (\boldsymbol{\rho}_{rt}^*)^\top \left(   \boldsymbol{x} - \E \left[ \boldsymbol{X}_{(1t)\cdot} \mid  W_i = r \right] \right) = \tau_{rt}^* + (\boldsymbol{\rho}_{rt}^*)^\top \left(   \boldsymbol{x} - \boldsymbol{\mu}_r \right)
,
\]
where \(\boldsymbol{\mu}_r := \E \left[ \boldsymbol{X}_{(it)\cdot} \mid  W_i = r \right] \). Using this and \eqref{catt.t.r.est} we have
\begin{align}
&  \left| \hat{\tau}_{\text{CATT}} (r, t, \boldsymbol{x})  - \tau_{\text{CATT}} (r, t, \boldsymbol{x})  \right| \nonumber
\\ = ~ & \left| \hat{\tau}_{rt}^{(q)}  -  \tau_{rt}^*  + (\boldsymbol{x} - \boldsymbol{\overline{X}}_r)^\top  \boldsymbol{\hat{\rho}}_{rt} - \left( \boldsymbol{x} - \boldsymbol{\mu}_r \right)^\top \boldsymbol{\rho}_{rt}^* \right| \nonumber
\\ = ~ & \left| \hat{\tau}_{rt}^{(q)}  -  \tau_{rt}^* + \left(\boldsymbol{x} -  \boldsymbol{\mu}_r \right)^\top \left(  \boldsymbol{\hat{\rho}}_{rt} -  \boldsymbol{\rho}_{rt}^*  \right)   +  \left( \boldsymbol{\mu}_r  - \boldsymbol{\overline{X}}_r \right)^\top \left(  \boldsymbol{\hat{\rho}}_{rt} -  \boldsymbol{\rho}_{rt}^*  \right)  - \left( \boldsymbol{\overline{X}}_r  - \boldsymbol{\mu}_r  \right)^\top \boldsymbol{\rho}_{rt}^*  \right| \nonumber
\\ \leq ~ & \left| \hat{\tau}_{rt}^{(q)}  -  \tau_{rt}^* \right| + \left| \left(\boldsymbol{x} -  \boldsymbol{\mu}_r \right)^\top \left(  \boldsymbol{\hat{\rho}}_{rt} -  \boldsymbol{\rho}_{rt}^*  \right) \right|  +  \left|\left( \boldsymbol{\mu}_r  - \boldsymbol{\overline{X}}_r \right)^\top \left(  \boldsymbol{\hat{\rho}}_{rt} -  \boldsymbol{\rho}_{rt}^*  \right) \right|  + \left| \left( \boldsymbol{\overline{X}}_r  - \boldsymbol{\mu}_r  \right)^\top \boldsymbol{\rho}_{rt}^*  \right| \nonumber
\\ \stackrel{(c)}{\leq} ~ & \left| \hat{\tau}_{rt}^{(q)}  -  \tau_{rt}^* \right| + \left\lVert\boldsymbol{x} -  \boldsymbol{\mu}_r \right\rVert_2 \left\lVert  \boldsymbol{\hat{\rho}}_{rt} -  \boldsymbol{\rho}_{rt}^*  \right\rVert_2  +  \left\lVert \boldsymbol{\mu}_r  - \boldsymbol{\overline{X}}_r \right\rVert_2 \left\lVert  \boldsymbol{\hat{\rho}}_{rt} -  \boldsymbol{\rho}_{rt}^*  \right\rVert_2  + \left\lVert \boldsymbol{\overline{X}}_r  - \boldsymbol{\mu}_r  \right\rVert_2 \left\lVert \boldsymbol{\rho}_{rt}^*  \right\rVert_2 \nonumber
\\ \stackrel{(d)}{=} ~ & \mathcal{O}_{\mathbb{P}}( \min\{h_N, h_N'\})  + \mathcal{O}_{\mathbb{P}}( \min\{h_N, h_N'\})    +  \mathcal{O}_{\mathbb{P}}\left( 1 / \sqrt{N} \right)  \mathcal{O}_{\mathbb{P}}\left( \min\{h_N, h_N'\}  \right) 
 +    \mathcal{O}_{\mathbb{P}}\left( 1 / \sqrt{N} \right) \nonumber
 \\ = ~ &  \mathcal{O}_{\mathbb{P}}( \min\{h_N, h_N'\}) \nonumber
,
\end{align}
 where in \((c)\) we applied Cauchy-Schwarz and \((d)\) follows from Theorem \ref{first.thm.fetwfe}, the fact that \(  \left\lVert \boldsymbol{\rho}_{rt}^*  \right\rVert_2 \) is fixed and bounded since \(d\) is fixed and each coefficient is finite due to Lemma \ref{bound.coefs}, and Lemma \ref{lem.asym.norm.cond.means}:
 

\begin{lemma}\label{lem.asym.norm.cond.means}
For each \(r \in \{0\} \cup \mathcal{R}\), let \(\boldsymbol{\psi}_{r} \in \mathbb{R}^d \) be constant vectors, with \(\boldsymbol{\psi}_{r} \neq \boldsymbol{0}\) for at least one \(r = r^*\). For all \(N\) and all \(r \in \{0\} \cup   \mathcal{R}\), assume all of the eigenvalues of \( \Cov ( \boldsymbol{X}_i \mid W_i = r)\) are bounded between \(\lambda_{\text{min}} > 0\) and \(\lambda_{\text{max}} < \infty\). Then
\[
 \sqrt{N}   \sum_{r \in \{0\} \cup \mathcal{R}}      \boldsymbol{\psi}_{r}^\top  \left(  \boldsymbol{\overline{X}}_r - \boldsymbol{\mu}_{r}  \right)  \xrightarrow{d}   \mathcal{N}(0, v_R)
 ,
 \]
 where
 \[
v_R :=  \sum_{r \in \{0\} \cup \mathcal{R}}   \frac{\boldsymbol{\psi}_{r}^\top  \Cov ( \boldsymbol{X}_i \mid W_i = r)\boldsymbol{\psi}_{r}}{\mathbb{P}(W_i = r)}    
.
 \]
 Further, for any \(r \in \{0\} \cup \mathcal{R}\),
 \[
  \lVert   \boldsymbol{\overline{X}}_r - \boldsymbol{\mu}_{r}  \rVert_2 = \mathcal{O}_{\mathbb{P}}\left(\frac{1}{\sqrt{N}} \right)
  .
 \]


\end{lemma}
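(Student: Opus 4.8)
The plan is to reduce the statement to a single joint central limit theorem for i.i.d.\ mean-zero vectors, after which the random cohort sizes $N_r$ are disposed of by Slutsky's theorem. For each $r \in \{0\} \cup \mathcal{R}$ I would define the i.i.d.\ vectors $\boldsymbol{V}_i^{(r)} := \mathbbm{1}\{W_i = r\}(\boldsymbol{X}_i - \boldsymbol{\mu}_r) \in \mathbb{R}^d$, where $\boldsymbol{\mu}_r = \E[\boldsymbol{X}_i \mid W_i = r]$; these are i.i.d.\ across $i$ by Assumption (F2), have mean zero since $\E[\boldsymbol{V}_i^{(r)}] = \mathbb{P}(W_i = r)\,\E[\boldsymbol{X}_i - \boldsymbol{\mu}_r \mid W_i = r] = \boldsymbol{0}$, and have finite second moments because $\E[X_{ij}^4] < \infty$ under (R1) (here $d$, $R$, $T$ are fixed). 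The crucial structural observation is that each unit belongs to exactly one cohort, so for $r \neq r'$ the product $\mathbbm{1}\{W_i = r\}\mathbbm{1}\{W_i = r'\}$ vanishes identically. Hence the stacked vector $\boldsymbol{V}_i := (\boldsymbol{V}_i^{(r)})_{r \in \{0\} \cup \mathcal{R}} \in \mathbb{R}^{d(R+1)}$ has a block-diagonal covariance whose $r$-th diagonal block is $\mathbb{P}(W_i = r)\,\Cov(\boldsymbol{X}_i \mid W_i = r)$ (by the tower property) and whose off-diagonal blocks all vanish.

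By the multivariate Lindeberg--L\'evy CLT, $\boldsymbol{T}_N := N^{-1/2}\sum_{i=1}^N \boldsymbol{V}_i \xrightarrow{d} \mathcal{N}(\boldsymbol{0}, \boldsymbol{\Sigma}_V)$, where $\boldsymbol{\Sigma}_V$ is the block-diagonal matrix just described. Writing $\boldsymbol{\overline{X}}_r - \boldsymbol{\mu}_r = N_r^{-1}\sum_{i} \boldsymbol{V}_i^{(r)}$, I would re-express the target as
\[
\sqrt{N}\sum_{r \in \{0\} \cup \mathcal{R}}\boldsymbol{\psi}_r^\top(\boldsymbol{\overline{X}}_r - \boldsymbol{\mu}_r) = \sum_r \frac{N}{N_r}\,\boldsymbol{\psi}_r^\top (\boldsymbol{T}_N)_r = \boldsymbol{a}_N^\top \boldsymbol{T}_N,
\]
where $(\boldsymbol{T}_N)_r$ is the $r$-th block and $\boldsymbol{a}_N := \big((N/N_r)\boldsymbol{\psi}_r\big)_r$. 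Since $N_r/N \xrightarrow{a.s.} \mathbb{P}(W_i = r) > 0$ (positivity following from $\pi_r(\boldsymbol{X}_i) > 0$ almost surely in (F2)), the continuous mapping theorem gives $\boldsymbol{a}_N \xrightarrow{p} \boldsymbol{a} := \big((1/\mathbb{P}(W_i = r))\boldsymbol{\psi}_r\big)_r$, a constant vector.

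To conclude, I would invoke Slutsky's theorem: because $\boldsymbol{a}_N \xrightarrow{p} \boldsymbol{a}$ and $\boldsymbol{T}_N$ converges in distribution (hence is $\mathcal{O}_{\mathbb{P}}(1)$), the pair converges jointly and $\boldsymbol{a}_N^\top \boldsymbol{T}_N \xrightarrow{d} \mathcal{N}(0, \boldsymbol{a}^\top \boldsymbol{\Sigma}_V \boldsymbol{a})$. A direct computation using the block-diagonal form yields
\[
\boldsymbol{a}^\top \boldsymbol{\Sigma}_V \boldsymbol{a} = \sum_r \frac{1}{\mathbb{P}(W_i=r)^2}\,\boldsymbol{\psi}_r^\top\big(\mathbb{P}(W_i=r)\,\Cov(\boldsymbol{X}_i\mid W_i=r)\big)\boldsymbol{\psi}_r = \sum_r \frac{\boldsymbol{\psi}_r^\top \Cov(\boldsymbol{X}_i\mid W_i=r)\boldsymbol{\psi}_r}{\mathbb{P}(W_i=r)} = v_R,
\]
which is exactly the claimed variance; the eigenvalue lower bound $\lambda_{\text{min}} > 0$ together with $\boldsymbol{\psi}_{r^*}\neq\boldsymbol{0}$ forces $v_R > 0$, so the limit is non-degenerate. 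The marginal bound $\lVert\boldsymbol{\overline{X}}_r - \boldsymbol{\mu}_r\rVert_2 = \mathcal{O}_{\mathbb{P}}(1/\sqrt{N})$ then follows from the same representation $\sqrt{N}(\boldsymbol{\overline{X}}_r - \boldsymbol{\mu}_r) = (N/N_r)(\boldsymbol{T}_N)_r$, a product of an $\mathcal{O}_{\mathbb{P}}(1)$ scalar and a convergent-in-distribution (hence $\mathcal{O}_{\mathbb{P}}(1)$) vector.

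The only genuine obstacle is the random normalization by $N_r$. Isolating it as the scalar weights $N/N_r$ and exploiting the vanishing cross-cohort covariance (so that $\boldsymbol{\Sigma}_V$ is block diagonal) reduces everything to a textbook i.i.d.\ CLT followed by a Slutsky step, so no delicate handling of correlation across cohorts or of the joint multinomial structure of $(N_0, N_{r_1}, \dots, N_{r_R})$ is required.
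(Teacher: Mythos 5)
Your proposal is correct and takes essentially the same route as the paper's proof: a central limit theorem for the i.i.d. cohort-indicator-weighted deviations \(\mathbbm{1}\{W_i = r\}(\boldsymbol{X}_i - \boldsymbol{\mu}_r)\), followed by a Slutsky-type step that replaces the random ratios \(N/N_r\) by their probability limits \(1/\mathbb{P}(W_i = r)\), with the final \(\mathcal{O}_{\mathbb{P}}(1/\sqrt{N})\) claim read off from the same convergence. The only difference is packaging: the paper applies a scalar CLT to \(Z_i = \sum_r \mathbbm{1}\{W_i = r\}\,\boldsymbol{\psi}_r^\top(\boldsymbol{X}_i - \boldsymbol{\mu}_r)/\mathbb{P}(W_i = r)\), with the inverse-probability weights baked in, and then invokes the Slutsky-type result of Lehmann (1999, Theorem 5.1.7) with weights \(\pi_r N/N_r \to_p 1\), whereas you run a multivariate CLT on the stacked vector with block-diagonal covariance and finish with the standard Slutsky/continuous-mapping product step — the underlying computations (vanishing cross-cohort terms, tower-property variance) are identical.
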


\begin{proof}
Provided in Appendix \ref{main.thm.lems}.
\end{proof}

Next, for any deterministic fixed set of constants \(\{\psi_{rt}\}\), the above result gives us
\begin{align}
 &\left| \sum_{r \in \mathcal{R}} \sum_{t =r}^T \psi_{rt} \hat{\tau}_{\text{CATT}} (r, t, \boldsymbol{x})   -  \sum_{r \in \mathcal{R}} \sum_{t =r}^T  \psi_{rt}\tau_{\text{CATT}} (r, t, \boldsymbol{x})   \right| \nonumber
\\ \leq ~ &  \sum_{r \in \mathcal{R}} \sum_{t =r}^T | \psi_{rt} |  \left|   \hat{\tau}_{\text{CATT}} (r, t, \boldsymbol{x})    - \tau_{\text{CATT}} (r, t, \boldsymbol{x})  \right| \nonumber
 \\ = ~ &  \mathcal{O}_{\mathbb{P}}( \min\{h_N, h_N'\}) \label{cov.convg.result}
.
\end{align}
%
%
%
Finally, we have
\begin{align*}
 &\left| \sum_{r \in \mathcal{R}} \sum_{t =r}^T \psi_{rt} \left(  \frac{\boldsymbol{a}_r^\top \boldsymbol{\hat{\pi}}(\boldsymbol{x})}{\boldsymbol{b}_r^\top \boldsymbol{\hat{\pi}}(\boldsymbol{x})}  \right)  \hat{\tau}_{\text{CATT}} (r, t, \boldsymbol{x})   -  \sum_{r \in \mathcal{R}} \sum_{t =r}^T  \psi_{rt}   \frac{\boldsymbol{a}_r^\top \boldsymbol{{\pi}}(\boldsymbol{x})}{\boldsymbol{b}_r^\top \boldsymbol{{\pi}}(\boldsymbol{x})}  \tau_{\text{CATT}} (r, t, \boldsymbol{x})   \right|
%
%
\\ = ~ & \Bigg| \sum_{r \in \mathcal{R}} \sum_{t =r}^T \psi_{rt} \Bigg[ \left( \frac{\boldsymbol{a}_r^\top \boldsymbol{\pi}(\boldsymbol{x})}{\boldsymbol{b}_r^\top \boldsymbol{\pi}(\boldsymbol{x})}   \right)  ( \hat{\tau}_{\text{CATT}} (r, \boldsymbol{x} , t)  -  \tau_{\text{CATT}} (r, t, \boldsymbol{x}) ) 
\\ &  +   \left[    \frac{\boldsymbol{a}_r^\top \boldsymbol{\hat{\pi}}(\boldsymbol{x})}{\boldsymbol{b}_r^\top \boldsymbol{\hat{\pi}}(\boldsymbol{x})}   -    \frac{\boldsymbol{a}_r^\top \boldsymbol{{\pi}}(\boldsymbol{x})}{\boldsymbol{b}_r^\top \boldsymbol{{\pi}}(\boldsymbol{x})}   \right]  \tau_{\text{CATT}} (r, t, \boldsymbol{x})   
\\ & + \left[    \frac{\boldsymbol{a}_r^\top \boldsymbol{\hat{\pi}}(\boldsymbol{x})}{\boldsymbol{b}_r^\top \boldsymbol{\hat{\pi}}(\boldsymbol{x})}   -    \frac{\boldsymbol{a}_r^\top \boldsymbol{{\pi}}(\boldsymbol{x})}{\boldsymbol{b}_r^\top \boldsymbol{{\pi}}(\boldsymbol{x})}   \right]   ( \hat{\tau}_{\text{CATT}} (r, \boldsymbol{x} , t)  -  \tau_{\text{CATT}} (r, t, \boldsymbol{x}) )  \Bigg] \Bigg|
\\ \leq ~ & \sum_{r \in \mathcal{R}} \sum_{t =r}^T \psi_{rt} \Bigg[ \frac{\boldsymbol{a}_r^\top \boldsymbol{\pi}(\boldsymbol{x})}{\boldsymbol{b}_r^\top \boldsymbol{\pi}(\boldsymbol{x})}   \left|  \hat{\tau}_{\text{CATT}} (r, \boldsymbol{x} , t)  -  \tau_{\text{CATT}} (r, t, \boldsymbol{x}) \right|
\\ &  +   \left|  \frac{\boldsymbol{a}_r^\top \boldsymbol{\hat{\pi}}(\boldsymbol{x})}{\boldsymbol{b}_r^\top \boldsymbol{\hat{\pi}}(\boldsymbol{x})}   -    \frac{\boldsymbol{a}_r^\top \boldsymbol{{\pi}}(\boldsymbol{x})}{\boldsymbol{b}_r^\top \boldsymbol{{\pi}}(\boldsymbol{x})}   \right|  \tau_{\text{CATT}} (r, t, \boldsymbol{x})   
\\ & + \left|   \frac{\boldsymbol{a}_r^\top \boldsymbol{\hat{\pi}}(\boldsymbol{x})}{\boldsymbol{b}_r^\top \boldsymbol{\hat{\pi}}(\boldsymbol{x})}   -    \frac{\boldsymbol{a}_r^\top \boldsymbol{{\pi}}(\boldsymbol{x})}{\boldsymbol{b}_r^\top \boldsymbol{{\pi}}(\boldsymbol{x})}   \right|  \left| \hat{\tau}_{\text{CATT}} (r, \boldsymbol{x} , t)  -  \tau_{\text{CATT}} (r, t, \boldsymbol{x})\right|  \Bigg]
\\ \stackrel{(e)}{\leq} ~ & \mathcal{O}_{\mathbb{P}}( \min\{h_N, h_N'\})
  +   \mathcal{O}_{\mathbb{P}}( a_N) 
+ \mathcal{O}_{\mathbb{P}}( a_N)   \mathcal{O}_{\mathbb{P}}( \min\{h_N, h_N'\}) 
%
\\ \stackrel{(f)}{=} ~ & \mathcal{O}_{\mathbb{P}}( \min\{h_N, h_N'\} \vee a_N )  
%
%
,
\end{align*}
where in \((f)\) we used Proposition 1.8 from \citet{Garcia-Portugues2023} and in \((e)\) we used \eqref{cov.convg.result}, the fact that \(  | \psi_{rt}|   \) is bounded for any fixed \(\boldsymbol{x}\) for all \(r \in \mathcal{R}\) due to Lemma \ref{bound.coefs}, the fact that  \(    \left| \tau_{\text{CATT}} (r, t, \boldsymbol{x})  \right| \) and \(  \boldsymbol{a}_r^\top \boldsymbol{{\pi}}(\boldsymbol{x}) / \boldsymbol{b}_r^\top \boldsymbol{{\pi}}(\boldsymbol{x})  \) are bounded for any fixed \(\boldsymbol{x}\), and Lemma \ref{tilde.pi.r.o.p.cond.lemma}:

\begin{lemma}\label{tilde.pi.r.o.p.cond.lemma} 
For any \(\boldsymbol{x}\) in the support of \(\boldsymbol{X}_i\), let \(\boldsymbol{\hat{\pi}}(\boldsymbol{x})\) be an estimator for \(\pi_r(\boldsymbol{x})\) satisfying \( | \hat{\pi}_r(\boldsymbol{x}) - \pi_r(\boldsymbol{x}) | = \mathcal{O}_{\mathbb{P}}(a_N)\) for a decreasing sequence \(a_N\) for all \(r \in \{0\} \cup \mathcal{R}\). Let \(\boldsymbol{a} \) and \(\boldsymbol{b} \in \mathbb{R}^{R + 1}\) be finite vectors satisfying \( \boldsymbol{a}^\top \boldsymbol{{\pi}}(\boldsymbol{x})  > 0\) and \( \boldsymbol{b}^\top \boldsymbol{{\pi}}(\boldsymbol{x})  > 0\) for any \(\boldsymbol{x}\) in the support of \(\boldsymbol{X}_i\). Then
 \[
  \left| \frac{\boldsymbol{a}^\top \boldsymbol{\hat{\pi}}(\boldsymbol{x})}{\boldsymbol{b}^\top \boldsymbol{\hat{\pi}}(\boldsymbol{x})} -  \frac{\boldsymbol{a}^\top \boldsymbol{{\pi}}(\boldsymbol{x})}{\boldsymbol{b}^\top \boldsymbol{{\pi}}(\boldsymbol{x})}  \right|   =   \mathcal{O}_{\mathbb{P}}\left( a_N \right)
 .
  \]
\end{lemma}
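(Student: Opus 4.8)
The plan is to reduce the bound on the ratio to two ingredients: the $\mathcal{O}_{\mathbb{P}}(a_N)$ convergence of the numerator, and the boundedness away from zero of the denominator. First I would establish that the two linear combinations converge at rate $a_N$. Since $R+1$ is fixed and $\boldsymbol{a}, \boldsymbol{b}$ are finite, the triangle inequality together with the hypothesis $|\hat{\pi}_r(\boldsymbol{x}) - \pi_r(\boldsymbol{x})| = \mathcal{O}_{\mathbb{P}}(a_N)$ for each $r \in \{0\} \cup \mathcal{R}$ gives
\[
\left| \boldsymbol{a}^\top \boldsymbol{\hat{\pi}}(\boldsymbol{x}) - \boldsymbol{a}^\top \boldsymbol{\pi}(\boldsymbol{x}) \right| \leq \sum_{r \in \{0\} \cup \mathcal{R}} |a_r| \left| \hat{\pi}_r(\boldsymbol{x}) - \pi_r(\boldsymbol{x}) \right| = \mathcal{O}_{\mathbb{P}}(a_N),
\]
and identically $\left| \boldsymbol{b}^\top \boldsymbol{\hat{\pi}}(\boldsymbol{x}) - \boldsymbol{b}^\top \boldsymbol{\pi}(\boldsymbol{x}) \right| = \mathcal{O}_{\mathbb{P}}(a_N)$.

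Next I would write the difference of ratios over a common denominator,
\[
\frac{\boldsymbol{a}^\top \boldsymbol{\hat{\pi}}(\boldsymbol{x})}{\boldsymbol{b}^\top \boldsymbol{\hat{\pi}}(\boldsymbol{x})} - \frac{\boldsymbol{a}^\top \boldsymbol{\pi}(\boldsymbol{x})}{\boldsymbol{b}^\top \boldsymbol{\pi}(\boldsymbol{x})} = \frac{\left( \boldsymbol{a}^\top \boldsymbol{\hat{\pi}}(\boldsymbol{x}) - \boldsymbol{a}^\top \boldsymbol{\pi}(\boldsymbol{x}) \right) \boldsymbol{b}^\top \boldsymbol{\pi}(\boldsymbol{x}) - \boldsymbol{a}^\top \boldsymbol{\pi}(\boldsymbol{x}) \left( \boldsymbol{b}^\top \boldsymbol{\hat{\pi}}(\boldsymbol{x}) - \boldsymbol{b}^\top \boldsymbol{\pi}(\boldsymbol{x}) \right)}{\boldsymbol{b}^\top \boldsymbol{\hat{\pi}}(\boldsymbol{x}) \cdot \boldsymbol{b}^\top \boldsymbol{\pi}(\boldsymbol{x})}.
\]
For fixed $\boldsymbol{x}$ the quantities $\boldsymbol{a}^\top \boldsymbol{\pi}(\boldsymbol{x})$ and $\boldsymbol{b}^\top \boldsymbol{\pi}(\boldsymbol{x})$ are finite constants and the latter is strictly positive by assumption, so the numerator is $\mathcal{O}_{\mathbb{P}}(a_N)$ by the first step.

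The main obstacle, though a mild one, is controlling the denominator: I must rule out $\boldsymbol{b}^\top \boldsymbol{\hat{\pi}}(\boldsymbol{x})$ degenerating toward zero. Since $a_N \to 0$, the first step yields $\boldsymbol{b}^\top \boldsymbol{\hat{\pi}}(\boldsymbol{x}) \xrightarrow{p} \boldsymbol{b}^\top \boldsymbol{\pi}(\boldsymbol{x}) > 0$, so for any $\epsilon > 0$ there is $N_\epsilon$ such that $\mathbb{P}\left( |\boldsymbol{b}^\top \boldsymbol{\hat{\pi}}(\boldsymbol{x}) - \boldsymbol{b}^\top \boldsymbol{\pi}(\boldsymbol{x})| > \boldsymbol{b}^\top \boldsymbol{\pi}(\boldsymbol{x})/2 \right) < \epsilon$ for $N > N_\epsilon$; on the complementary event $\boldsymbol{b}^\top \boldsymbol{\hat{\pi}}(\boldsymbol{x}) > \boldsymbol{b}^\top \boldsymbol{\pi}(\boldsymbol{x})/2$, whence $1/\left( \boldsymbol{b}^\top \boldsymbol{\hat{\pi}}(\boldsymbol{x}) \right) = \mathcal{O}_{\mathbb{P}}(1)$. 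Combining the $\mathcal{O}_{\mathbb{P}}(a_N)$ numerator with the $\mathcal{O}_{\mathbb{P}}(1)$ reciprocal of the denominator (the remaining factor $1/\left( \boldsymbol{b}^\top \boldsymbol{\pi}(\boldsymbol{x}) \right)$ being a finite constant) via the product rule for stochastic orders (Proposition 1.8 in \citet{Garcia-Portugues2023}) delivers the claimed $\mathcal{O}_{\mathbb{P}}(a_N)$ bound.
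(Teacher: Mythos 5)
Your proof is correct and takes essentially the same approach as the paper's: the identical common-denominator decomposition obtained by adding and subtracting \(\boldsymbol{a}^\top \boldsymbol{\pi}(\boldsymbol{x}) \cdot \boldsymbol{b}^\top \boldsymbol{\pi}(\boldsymbol{x})\), an \(\mathcal{O}_{\mathbb{P}}(a_N)\) bound on the resulting numerator from the termwise consistency and finiteness of \(\boldsymbol{a}\) and \(\boldsymbol{b}\), and a high-probability lower bound on \(\boldsymbol{b}^\top \boldsymbol{\hat{\pi}}(\boldsymbol{x})\) giving \(1/\boldsymbol{b}^\top \boldsymbol{\hat{\pi}}(\boldsymbol{x}) = \mathcal{O}_{\mathbb{P}}(1)\). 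The only cosmetic difference is that you bound the denominator away from zero using the fixed threshold \(\boldsymbol{b}^\top \boldsymbol{\pi}(\boldsymbol{x})/2\), whereas the paper uses the threshold \(\boldsymbol{b}^\top \boldsymbol{\pi}(\boldsymbol{x}) - M_\epsilon^{(1)} a_N\); both are valid implementations of the same idea.
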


\begin{proof}
Provided in Appendix \ref{main.thm.lems}.
\end{proof}

The proof is complete.

\end{enumerate}

\end{proof}

\begin{proof}[Proof of Theorem \ref{main.te.cons.thm}]
In light of Theorem \ref{main.te.cons.thm.gen}, to prove part (a) it is enough to note that
\[
f_r( \hat{\pi}_0,  \hat{\pi}_{r_1}, \ldots,  \hat{\pi_{r_R}}) = 
\begin{pmatrix}
\frac{ \hat{\pi}_{r_1} }{\sum_{r \in \mathcal{R}} \hat{\pi}_r }
\\ \frac{\hat{\pi}_{r_2} }{\sum_{r \in \mathcal{R}} \hat{\pi}_r }
\\ \vdots
\\ \frac{\hat{\pi}_{r_R} }{\sum_{r \in \mathcal{R}} \hat{\pi}_r }
\end{pmatrix}
\]
(where \(\hat{\pi}_r = N_r / N\)) is pointwise (that is, almost surely) finite and show that its Jacobian at \((\pi_0, \pi_{r_1}, \ldots, \pi_{r_R})\) exists and is nonzero and finite. The Jacobian at \((\pi_0, \pi_{r_1}, \ldots, \pi_{r_R})\) is
\[
\begin{pmatrix}
0 & 0 & \cdots & 0
\\  \frac{ \sum_{r \in \mathcal{R} \setminus r_1} \hat{\pi}_r  }{\left(\sum_{r \in \mathcal{R}} \hat{\pi}_r \right)^2}  &  -\frac{  \hat{\pi}_{r_2} }{\left(\sum_{r \in \mathcal{R}} \hat{\pi}_r \right)^2} & \cdots &   -\frac{  \hat{\pi}_{r_R} }{\left(\sum_{r \in \mathcal{R}} \hat{\pi}_r \right)^2}
\\    -\frac{  \hat{\pi}_{r_1} }{\left(\sum_{r \in \mathcal{R}} \hat{\pi}_r \right)^2} &  \frac{ \sum_{r \in \mathcal{R} \setminus r_2} \hat{\pi}_r  }{\left(\sum_{r \in \mathcal{R}} \hat{\pi}_r \right)^2} & \cdots &   -\frac{  \hat{\pi}_{r_R} }{\left(\sum_{r \in \mathcal{R}} \hat{\pi}_r \right)^2}
\\ \vdots & \vdots & \ddots & \vdots
\\    -\frac{  \hat{\pi}_{r_1} }{\left(\sum_{r \in \mathcal{R}} \hat{\pi}_r \right)^2} &    -\frac{  \hat{\pi}_{r_2} }{\left(\sum_{r \in \mathcal{R}} \hat{\pi}_r \right)^2}  & \cdots &  \frac{ \sum_{r \in \mathcal{R} \setminus r_R} \hat{\pi}_r  }{\left(\sum_{r \in \mathcal{R}} \hat{\pi}_r \right)^2}
\end{pmatrix} \in \mathbb{R}^{(R+1) \times R}
;
\]
notice it is nonzero as required.

Part (b) is immediate from Theorem \ref{main.te.cons.thm.gen}(b) using \(a_r = \boldsymbol{e}_r\), the selector vector with a 1 in the \(r^{\text{th}}\) position and zeroes elsewhere, and \(\boldsymbol{b}_r = (1, 1, \ldots, 1)^\top\).
\end{proof}

\subsection{Proofs of Theorems \ref{te.asym.norm.thm.gen}, \ref{te.oracle.thm}, and \ref{te.asym.norm.thm}}

Next we prove Theorem \ref{te.asym.norm.thm.gen}. Then we will use this result to prove Theorem \ref{te.asym.norm.thm}, and finally Theorem \ref{te.oracle.thm}.


\begin{proof}[Proof of Theorem \ref{te.asym.norm.thm.gen}]

By Theorem \ref{te.interp.prop}, the average treatment effect \(\tau_{\text{ATT}}(r, t)\) is equal to the regression estimand \(\tau_{rt}^*\).  We will prove Theorem \ref{te.asym.norm.thm.gen} by applying Theorem \ref{first.thm.fetwfe}(e), (g), and (h) with particular choices of vectors \(\boldsymbol{\psi}_N\) and \(\boldsymbol{\hat{\psi}}_N\). 

\begin{enumerate}[(a)]

\item Let \(i(r,t)\) be the index of the row of \(\boldsymbol{D}_N\) corresponding to \(\hat{\tau}_{rt}^{(q)}\) (that is, the column of \(\boldsymbol{\tilde{Z}}\) corresponding to this regression coefficient), so that
\[
\hat{\tau}_{rt}^{(q)} = (\boldsymbol{D}_N^{-1})_{i(r,t) \cdot}  \boldsymbol{\hat{\theta}}^{(q)}
\]
and
\[
\tau_{rt}^* = (\boldsymbol{D}_N^{-1})_{i(r,t) \cdot} \boldsymbol{\theta}_N^*
.
\]
We can see from Lemma \ref{lem.d.express} that each row of \(\boldsymbol{D}_N^{-1}\) contains at most \((T-1)^2\) nonzero entries, all of the nonzero entries equal 1, and the entries corresponding to the \(s\) relevant features will stay fixed even if \(p_N \to \infty\). So the vector \(\boldsymbol{\psi}_N = \left( (\boldsymbol{D}_N^{-1})_{i(r,t), \cdot}\right)^\top \) is a valid choice for Theorem \ref{first.thm.fetwfe}(d) - (h). In fact, any fixed, finite linear combination of rows \((\boldsymbol{D}_N^{-1})_{i(r,t), \cdot}\) is a valid choice. We therefore have from Theorem \ref{first.thm.fetwfe}(e) that if at least one of the nonzero components of \((\boldsymbol{D}_N^{-1})_{i(r,t), \cdot}\) corresponds to one of the \(s\) features in \(\mathcal{S}\),
 \begin{align*}
  \sqrt{ \frac{NT}{  \hat{v}_N^{\text{ATT};r,t} }}  \left( (\boldsymbol{D}_N^{-1})_{i(r,t), \cdot }  \boldsymbol{\hat{\theta}}^{(q)} - ( \boldsymbol{D}_N^{-1})_{i(r,t),  \cdot }  \boldsymbol{\theta}_N^* \right)
  \xrightarrow{d} ~ &  \mathcal{N}(0, 1)
\\ \iff \qquad  \sqrt{ \frac{NT}{ \hat{v}_N^{\text{ATT};r,t}}}     \left(  \hat{\tau}_{\text{ATT}} ( r, t) - \tau_{\text{ATT}} (r, t) \right) 
 \xrightarrow{d} ~ &   \mathcal{N}(0, 1)
,
\end{align*}
where
\[
\hat{v}_N^{\text{ATT};r,t} :=  \sigma^2 (\boldsymbol{D}_N^{-1})_{i(r,t), \hat{\mathcal{S}}}  \left( \widehat{\Cov} \left( (\boldsymbol{Z} \boldsymbol{D}_N^{-1}) _{(\cdot \cdot) \hat{\mathcal{S}}} \right) \right)^{-1}  ( (\boldsymbol{D}_N^{-1})_{i(r,t), \hat{\mathcal{S}}})^\top
,
\]
\(\widehat{\Cov}(\cdot)\) is defined in \eqref{est.cov.def} and \((\boldsymbol{D}_N^{-1})_{i(r,t), \hat{\mathcal{S}}} \in \mathbb{R}^{|\hat{\mathcal{S}}|}\) is a row vector that contains the \(|\hat{\mathcal{S}}|\) components of the \(i(r,t)\) row of \(\boldsymbol{D}_N^{-1}\) corresponding to the selected features. Similarly, for any deterministic fixed set of constants \(\{\psi_{rt}\}\), since
\[
\sum_{r \in \mathcal{R}} \sum_{t=r}^T \psi_{rt} \tau_{rt}^* =   \sum_{r \in \mathcal{R}} \sum_{t=r}^T \psi_{rt} (\boldsymbol{D}_N^{-1})_{i(r,t), \cdot} \boldsymbol{\theta}_N^*
,
\]
we can choose \(\boldsymbol{\psi}_N^{(\text{C})} :=  \sum_{r \in \mathcal{R}} \sum_{t=r}^T \psi_{rt}(\boldsymbol{D}_N^{-1})_{i(r,t),  \cdot }^\top \) to establish that
 \begin{align*}
\sqrt{ \frac{NT}{ \hat{v}_N^{(\text{C})} }}  \left(  (\boldsymbol{\psi}_N^{(\text{C})})^\top (\boldsymbol{\hat{\theta}}^{(q)} - \boldsymbol{\theta}_N^*) \right) \xrightarrow{d} ~ &  \mathcal{N}(0, 1)
\\ \iff \qquad   \sqrt{ \frac{NT}{\hat{v}_N^{(\text{C})} }}  \sum_{r \in \mathcal{R}} \sum_{t=r}^T \psi_{rt}  ( \hat{\tau}_{\text{ATT}} (  r,t )  - \tau_{\text{ATT}} (r,t ) ) \xrightarrow{d} ~ &  \mathcal{N}(0, 1)
,
\end{align*}
where
\begin{equation}\label{v.n.r.t.att.const}
\hat{v}_N^{(\text{C})} :=  \sigma^2 \left( (\boldsymbol{\psi}_N^{(\text{C})})_{\hat{\mathcal{S}}} \right)^\top \left( \widehat{\Cov} \left( (\boldsymbol{Z} \boldsymbol{D}_N^{-1}) _{(\cdot \cdot) \hat{\mathcal{S}}} \right) \right)^{-1} (\boldsymbol{\psi}_N^{(\text{C})})_{\hat{\mathcal{S}}}
.
\end{equation}

\item We will prove the convergence of the results with estimated cohort probabilities by applying Theorem \ref{first.thm.fetwfe}(g). Consider the random vector
\begin{equation}\label{hat.psi.hat.pi.c.def}
\boldsymbol{\hat{\psi}}_N^{(\text{C; } \hat{\pi})}  :=  \sum_{r \in \mathcal{R}} f_r(\hat{\pi}_{r_1}, \hat{\pi}_{r_2}, \ldots, \hat{\pi}_{r_R}) \sum_{t=r}^T   \psi_{rt}(\boldsymbol{D}_N^{-1})_{i(r,t), \cdot } \in \mathbb{R}^{p_N}
\end{equation}
and define
\[
\boldsymbol{\psi}_N^{(\text{C; } \pi)} :=   \sum_{r \in \mathcal{R}}  f_r(\pi_{r_1}, \pi_{r_2}, \ldots, \pi_{r_R})  \sum_{t=r}^T   \psi_{rt}(\boldsymbol{D}_N^{-1})_{i(r,t), \cdot} \in \mathbb{R}^{p_N}
,
\]
so that
 \begin{align*}
& (\boldsymbol{\hat{\psi}}_N^{(\text{C; } \hat{\pi})} )^\top  \boldsymbol{\hat{\theta}}^{(q)} - (  \boldsymbol{\psi}_N^{(\text{C; } \pi)})^\top \boldsymbol{\theta}_N^*
\\ = ~ &  \sum_{r \in \mathcal{R}}  \sum_{t =r}^T  \psi_{rt}  \left(   f_r(\hat{\pi}_{r_1}, \hat{\pi}_{r_2}, \ldots, \hat{\pi}_{r_R})    \hat{\tau}_{\text{ATT}} ( r, t) -   f_r(\pi_{r_1}, \pi_{r_2}, \ldots, \pi_{r_R})  \tau_{\text{ATT}} (r, t)  \right)
.
\end{align*}

We have assumed that the \((\hat{\pi}_{r_1}, \hat{\pi}_{r_2}, \ldots, \hat{\pi}_{r_R})\) values (and therefore \(\boldsymbol{\hat{\psi}}_N^{(\text{C; } \hat{\pi})}\)) are estimated on an independent data set from the one used to estimate \(\boldsymbol{\hat{\theta}}^{(q)}\). It remains to show that 
\[
\sqrt{NT} ( \boldsymbol{\theta}^* )_{\mathcal{S}}^\top \left((\boldsymbol{\hat{\psi}}_N^{(\text{C; } \hat{\pi})} )_{\mathcal{S}} - (\boldsymbol{\psi}_N^{(\text{C; } \pi)})_{\mathcal{S}}  \right) 
\]
 is asymptotically normal and to identify a consistent estimator for the asymptotic variance. First we establish asymptotic normality. Because by assumption \(\boldsymbol{f}(\cdot)\) has a finite Jacobian \(\nabla \boldsymbol{f} \in \mathbb{R}^{(R + 1) \times R}\) at \((\pi_0, \pi_{r_1}, \ldots, \pi_{r_R})\) that exists and is nonzero and finite, using the asymptotic normality of the maximum likelihood estimator \(\left( \frac{N_{0}}{N}, \frac{N_{r_1}}{N}, \frac{N_{r_2}}{N}, \ldots, \frac{N_{r_{R}}}{N} \right) \) (Theorem 14.6 in \citealt{wasserman2004all}) and the delta method (Theorem 3.1 in \citealt{van2000asymptotic}) that
\begin{align*}
\sqrt{N} \left( \boldsymbol{f}(\hat{\pi}_0, \hat{\pi}_{r_1}, \ldots, \hat{\pi}_{r_R})  -  \boldsymbol{f}(\pi_0, \pi_{r_1}, \ldots, \pi_{r_R})  \right) \xrightarrow{d} ~ & \mathcal{N}(0, \nabla \boldsymbol{f}^\top \boldsymbol{\Sigma}_M \nabla \boldsymbol{f})
%
%
,
\end{align*}
where \(\boldsymbol{\Sigma}_M \) was defined in \eqref{asym.cov.multinomi}. Then
\begin{align*}
& \sqrt{NT} ( \boldsymbol{\theta}^* )_{\mathcal{S}}^\top \left((\boldsymbol{\hat{\psi}}_N^{(\text{C; } \hat{\pi})} )_{\mathcal{S}} - (\boldsymbol{\psi}_N^{(\text{C; } \pi)})_{\mathcal{S}}  \right) 
\\ = ~ & \sqrt{NT} ( \boldsymbol{\theta}_N^*)_{\mathcal{S}}^\top \left( \sum_{r \in \mathcal{R}} \left[ f_r(\hat{\pi}_{r_1}, \hat{\pi}_{r_2}, \ldots, \hat{\pi}_{r_R}) - f_r(\pi_{r_1}, \pi_{r_2}, \ldots, \pi_{r_R}) \right] \sum_{t=r}^T   \psi_{rt}(\boldsymbol{D}_N^{-1})_{i(r,t), \mathcal{S}}  \right)
\\ = ~ & \sqrt{NT} ( \boldsymbol{\theta}_N^*)_{\mathcal{S}}^\top \left( \boldsymbol{M}^\top \left[ \boldsymbol{f}(\hat{\pi}_{r_1}, \hat{\pi}_{r_2}, \ldots, \hat{\pi}_{r_R}) - \boldsymbol{f}(\pi_{r_1}, \pi_{r_2}, \ldots, \pi_{r_R}) \right] \right)
\\ \xrightarrow{d} ~ &  \mathcal{N}(0, \boldsymbol{M}^\top \nabla \boldsymbol{f}^\top \boldsymbol{\Sigma}_M \nabla \boldsymbol{f} \boldsymbol{M})
,
\end{align*}
where
\[
\boldsymbol{M} := 
\begin{pmatrix}
\sum_{t=r_1}^T   \psi_{r_1 t}(\boldsymbol{D}_N^{-1})_{i(r_1,t), \mathcal{S}}  
\\ \sum_{t=r_2}^T   \psi_{r_2 t}(\boldsymbol{D}_N^{-1})_{i(r_2,t), \mathcal{S}}  
\\ \vdots
\\ \\ \sum_{t=r_R}^T   \psi_{r_R t}(\boldsymbol{D}_N^{-1})_{i(r_R,t), \mathcal{S}}  
\end{pmatrix} \in \mathbb{R}^{R \times s}
.
\]

Further, we can estimate this covariance matrix by \(\boldsymbol{\hat{V}}(\mathcal{A}_N)\) defined for any \(\mathcal{A}_N \subseteq [p_N]\) by
\begin{equation}\label{cov.mat.def.thm.6.3.gen}
\boldsymbol{\hat{V}} (\mathcal{A}_N) = T \cdot   \boldsymbol{\hat{M}} (\mathcal{A}_N)^\top\nabla \boldsymbol{\hat{f}}^\top \boldsymbol{\hat{\Sigma}}_M    \nabla \boldsymbol{\hat{f}}  \boldsymbol{\hat{M}} (\mathcal{A}_N)
,
\end{equation}
where
\[
\boldsymbol{\hat{M}} (\mathcal{A}_N) := 
\begin{pmatrix}
\sum_{t=r_1}^T   \psi_{r_1 t}(\boldsymbol{D}_N^{-1})_{i(r_1,t), \mathcal{A}_N}  
\\ \sum_{t=r_2}^T   \psi_{r_2 t}(\boldsymbol{D}_N^{-1})_{i(r_2,t), \mathcal{A}_N}  
\\ \vdots
\\ \\ \sum_{t=r_R}^T   \psi_{r_R t}(\boldsymbol{D}_N^{-1})_{i(r_R,t), \mathcal{A}_N}  
\end{pmatrix} \in \mathbb{R}^{R \times |\mathcal{A}_N|}
,
\]
\begin{equation}\label{hat.sigma.mat.def}
\boldsymbol{\hat{\Sigma}}_M :=  \frac{1}{N^2}\begin{pmatrix}
N_0(N - N_0) & - N_0 N_{r_1}  & \cdots & - N_0 N_{r_R}
\\ - N_0 N_{r_1} & N_{r_1} (N - N_{r_1}) & \cdots & - N_{r_1} N_{r_R}
\\ \vdots & \vdots & \ddots & \vdots
\\ - N_0 N_{r_R} & - N_{r_1} N_{r_R} & \cdots & N_{r_R}(N - N_{r_R})
\end{pmatrix} \in \mathbb{R}^{(R +1) \times (R + 1)}
,
\end{equation}
and \(  \nabla \boldsymbol{\hat{f}}\) is defined by replacing each instance of \(\pi_{r}\) in \(\nabla \boldsymbol{f}\) with its estimator \(\hat{\pi}_r\).
%
Using the consistency of \(  \boldsymbol{\hat{\theta}}^{(q)} \) and the sample covariance matrix, the continuous mapping theorem yields that
\[
(  \boldsymbol{\hat{\theta}}^{(q)})_{\mathcal{S}} ^\top   \boldsymbol{\hat{V}} (\mathcal{S})   \boldsymbol{\hat{\theta}}^{(q)}_{\mathcal{S}}
\xrightarrow{p} 
T ( \boldsymbol{\theta}_N^* )_{\mathcal{S}}^\top  (\nabla \boldsymbol{\psi}_N^{(\text{C; } \pi)})_{\mathcal{S}}^\top \boldsymbol{\Sigma}_M  (\nabla \boldsymbol{\psi}_N^{(\text{C; } \pi)})_{\mathcal{S}} (  \boldsymbol{\theta}_N^*  )_{\mathcal{S}}
,
\]
where \(\boldsymbol{\Sigma}_M \) was defined in \eqref{asym.cov.multinomi}. It follows from Theorem \ref{first.thm.fetwfe}(g) that if at least one of the nonzero components of \( \boldsymbol{\psi}_N^{(\text{C; } \pi)}  \) corresponds to one of the \(s\) features in \(\mathcal{S}\), we have
 \begin{align*}
&   \sqrt{ \frac{NT}{  \hat{v}_N^{\text{C}; \hat{\pi}} }}   \left( (\boldsymbol{\hat{\psi}}_N^{(\text{C; } \hat{\pi})} )^\top  \boldsymbol{\hat{\theta}}^{(q)} - (  \boldsymbol{\psi}_N^{(\text{C; } \pi)})^\top \boldsymbol{\theta}_N^* \right)
  \xrightarrow{d}  \mathcal{N}(0, 1)
\\ \iff \qquad  &  \sqrt{ \frac{NT}{   \hat{v}_N^{\text{C}; \hat{\pi}}  }}  \sum_{r \in \mathcal{R}}  \sum_{t =r}^T   \psi_{rt}  \left(   f_r(\hat{\pi}_{r_1}, \hat{\pi}_{r_2}, \ldots, \hat{\pi}_{r_R})    \hat{\tau}_{\text{ATT}} ( r, t) -   f_r(\pi_{r_1}, \pi_{r_2}, \ldots, \pi_{r_R})  \tau_{\text{ATT}} (r, t)  \right)  
\\  \xrightarrow{d} ~ &   \mathcal{N}(0, 1)
,
\end{align*}
where
\begin{equation}\label{v.n.r.t.att.rand}
 \hat{v}_N^{\text{C}; \hat{\pi}} :=   \sigma^2   (\boldsymbol{\hat{\psi}}_N^{(\text{C; } \hat{\pi})})_{ \hat{\mathcal{S}}}^\top \left( \widehat{\Cov} \left( (\boldsymbol{Z} \boldsymbol{D}_N^{-1}) _{(\cdot \cdot) \hat{\mathcal{S}}} \right) \right)^{-1}      (\boldsymbol{\hat{\psi}}_N^{(\text{C; } \hat{\pi})})_{ \hat{\mathcal{S}}}  + (  \boldsymbol{\hat{\theta}}^{(q)})_{\hat{\mathcal{S}}} ^\top  \boldsymbol{\hat{V}} (\hat{\mathcal{S}})  (  \boldsymbol{\hat{\theta}}^{(q)})_{\hat{\mathcal{S}}} 
.
\end{equation}

\item Part (b) follows from the same argument as part (a) but applying Theorem \ref{first.thm.fetwfe}(h) to yield the stated convergence when using the conservative variance estimator
\begin{align}
& \hat{v}_N^{\text{C, (cons)}; \hat{\pi}} \nonumber
\\  :=   ~ & \sigma^2   (\boldsymbol{\hat{\psi}}_N^{(\text{C; } \hat{\pi})})_{ \hat{\mathcal{S}}}^\top \left( \widehat{\Cov} \left( (\boldsymbol{Z} \boldsymbol{D}_N^{-1}) _{(\cdot \cdot) \hat{\mathcal{S}}} \right) \right)^{-1}      (\boldsymbol{\hat{\psi}}_N^{(\text{C; } \hat{\pi})})_{ \hat{\mathcal{S}}}  + (  \boldsymbol{\hat{\theta}}^{(q)})_{\hat{\mathcal{S}}} ^\top  \boldsymbol{\hat{V}} (\hat{\mathcal{S}})  (  \boldsymbol{\hat{\theta}}^{(q)})_{\hat{\mathcal{S}}}  \nonumber
 \\  & + 2 \sqrt{ \sigma^2   (\boldsymbol{\hat{\psi}}_N^{(\text{C; } \hat{\pi})})_{ \hat{\mathcal{S}}}^\top \left( \widehat{\Cov} \left( (\boldsymbol{Z} \boldsymbol{D}_N^{-1}) _{(\cdot \cdot) \hat{\mathcal{S}}} \right) \right)^{-1}      (\boldsymbol{\hat{\psi}}_N^{(\text{C; } \hat{\pi})})_{ \hat{\mathcal{S}}}   \cdot  (  \boldsymbol{\hat{\theta}}^{(q)})_{\hat{\mathcal{S}}} ^\top  \boldsymbol{\hat{V}} (\hat{\mathcal{S}})  (  \boldsymbol{\hat{\theta}}^{(q)})_{\hat{\mathcal{S}}} } 
 .\label{v.n.r.t.att.rand.cons}
\end{align}
(Recall that the random vector \(\boldsymbol{\hat{\psi}}_N^{(\text{C; } \hat{\pi})}\) was defined in Equation \ref{hat.psi.hat.pi.c.def} and the covariance matrix estimator \( \boldsymbol{\hat{V}} (\hat{\mathcal{S}}) \) was defined in Equation \ref{cov.mat.def.thm.6.3.gen}.)

\end{enumerate}

\end{proof}

\begin{proof}[Proof of Theorem \ref{te.asym.norm.thm}]

Part (a) is immediate from Theorem \ref{te.asym.norm.thm.gen}(a). To prove parts (b) and (c), in light of Theorem \ref{te.asym.norm.thm.gen}(b) and (c) it suffices to show that \((N_{r_1}/N_\tau, \ldots, N_{r_R}/N_\tau)\) is almost surely finite and has a Jacobian at \((\pi_0, \pi_{r_1}, \ldots, \pi_{r_R})\) that exists and is nonzero and finite, which we already showed in the proof of Theorem \ref{main.te.cons.thm}.

\end{proof}

\begin{proof}[Proof of Theorem \ref{te.oracle.thm}]
In the case where the treatment effects are nonzero, this result follows from the same arguments that we used in the proof of Theorem \ref{te.asym.norm.thm.gen}, but applying Theorem \ref{first.thm.fetwfe}(d) and (f). The oracle properties persist even in the case that the treatment effects equal 0 because, as noted in Theorem \ref{te.sel.cons.thm}, in this case the random variables scaled by \(\sqrt{NT}\) converge in probability (and distribution) to 0.
\end{proof}

\subsection{Proof of Theorem \ref{te.asym.norm.thm.gen.cond}}

%

%
%
%
%
%
%
%
%

\begin{enumerate}[(a)]

\item From Theorem \ref{te.interp.prop}, under our assumptions the regression estimands correspond to our causal estimands. Similarly to the proof of Theorem \ref{te.asym.norm.thm.gen}, we will prove the result by applying Theorem \ref{first.thm.fetwfe}(g) and Lemma \ref{lem.asym.norm.cond.means}. Let \(i(r, t, \boldsymbol{x})\) denote the indices of \(d\) rows of \(\boldsymbol{D}_N\) corresponding to \(\hat{\rho}_{rt}^{(q)}\) (that is, the columns of \(\boldsymbol{\tilde{Z}}\) corresponding to the indicator variables for those regression coefficients), so that
\[
 \boldsymbol{\rho}_{rt}^* = (\boldsymbol{D}_N^{-1})_{i(r, t, \boldsymbol{x}), \cdot} \boldsymbol{\theta}_N^*,
\]
and for any \(r \in \mathcal{R}\), \(t \geq r\), and \(\boldsymbol{x}\) in the support of \(\boldsymbol{X}_i\),
\begin{align*}
\tau_{\text{CATT}}(r, t, \boldsymbol{x}) = ~ &  \tau_{rt}^* + \left( \boldsymbol{x} - \E[ \boldsymbol{X}_i \mid W_i = r ] \right)^\top \boldsymbol{\rho}_{rt}^*
\\ = ~ &   (\boldsymbol{D}_N^{-1})_{i(r,t), \cdot} \boldsymbol{\theta}_N^* + \left( \boldsymbol{x} - \E[ \boldsymbol{X}_i \mid W_i = r ] \right)^\top (\boldsymbol{D}_N^{-1})_{i(r, t, \boldsymbol{x}), \cdot} \boldsymbol{\theta}_N^*
%
.
\end{align*}
Define
\[
\boldsymbol{\psi}_N^{(\text{C;CATT})} :=  \sum_{r \in \mathcal{R}} \sum_{t = r}^T \psi_{rt}  \left(  (\boldsymbol{D}_N^{-1})_{i(r,t), \cdot} + \left( \boldsymbol{x} -   \E[ \boldsymbol{X}_i \mid W_i = r ]  \right) ^\top (\boldsymbol{D}_N^{-1})_{i(r, t, \boldsymbol{x}), \cdot}  \right) 
\]
and
\[
\boldsymbol{\hat{\psi}}_N^{(\text{C;CATT})} :=  \sum_{r \in \mathcal{R}} \sum_{t = r}^T \psi_{rt}  \left(  (\boldsymbol{D}_N^{-1})_{i(r,t), \cdot} + ( \boldsymbol{x} -  \boldsymbol{\overline{X}}_r) ^\top (\boldsymbol{D}_N^{-1})_{i(r, t, \boldsymbol{x}), \cdot}  \right) 
,
\]
so
\begin{align*}
 (\boldsymbol{\hat{\psi}}_N^{(\text{C;CATT})} )^\top  \boldsymbol{\hat{\theta}}^{(q)} - ( \boldsymbol{\psi}_N^{(\text{C;CATT})})^\top \boldsymbol{\theta}_N^*  =  \sum_{r \in \mathcal{R}} \sum_{t=r}^T \psi_{rt} \left(    \hat{\tau}_{\text{CATT}}(r, t, \boldsymbol{x}) -  \tau_{\text{CATT}}(r, t, \boldsymbol{x})  \right)
.
\end{align*}
Since \(  \boldsymbol{\hat{\theta}}^{(q)}\) and each \(\boldsymbol{\overline{X}}_r\) are estimated on two independent data sets, \(  \boldsymbol{\hat{\theta}}^{(q)}\)  is independent of \(\boldsymbol{\hat{\psi}}_N^{(\text{C;CATT})}\). So to apply Theorem \ref{first.thm.fetwfe}(g) we need to establish the asymptotic normality of \( \sqrt{NT}  \left( \boldsymbol{\hat{\psi}}_N^{(\text{C;CATT})}   - \boldsymbol{\psi}_N^{(\text{C;CATT})} \right)^\top  \boldsymbol{\theta}_N^*\) and identify a consistent estimator for its asymptotic variance.
Applying Lemma \ref{lem.asym.norm.cond.means} we see that
\begin{align*}
%
\\ = ~ & \sqrt{NT}  \left(  \boldsymbol{\hat{\psi}}_N^{(\text{C;CATT})}   - \boldsymbol{\psi}_N^{(\text{C;CATT})}  \right)^\top  \boldsymbol{\theta}_N^* 
\\ = ~ &   \sqrt{NT} \left(  \sum_{r \in \mathcal{R}} \sum_{t = r}^T \psi_{rt}  \left(  (    \E[ \boldsymbol{X}_i \mid W_i = r ]  -  \boldsymbol{\overline{X}}_r) ^\top (\boldsymbol{D}_N^{-1})_{i(r, t, \boldsymbol{x}), \cdot}  \right)  \right)  \boldsymbol{\theta}_N^*
\\ = ~ &   \sqrt{NT} \left(  \sum_{r \in \mathcal{R}} \sum_{t = r}^T  - \psi_{rt}  \left( (  \boldsymbol{D}_N^{-1})_{i(r, t, \boldsymbol{x}), \cdot}  \boldsymbol{\theta}_N^* \right)^\top  \left(   \boldsymbol{\overline{X}}_r -  \E[ \boldsymbol{X}_i \mid W_i = r ]  \right)   \right)  
\\ \xrightarrow{d} ~ & \mathcal{N} \left(0, v_R^{\text{CATT}} \right)
,
\end{align*}
where
\begin{align*}
v_R^{\text{CATT}} :=  ~ & T  \sum_{r \in \mathcal{R}}   \frac{(\boldsymbol{\psi}_{r}^{\text{CATT}})^\top  \Cov ( \boldsymbol{X}_i \mid W_i = r) \boldsymbol{\psi}_{r}^{\text{CATT}}}{\mathbb{P}(W_i = r)} 
\end{align*}
for
\begin{align*}
\boldsymbol{\psi}_{r}^{\text{CATT}} := ~ & \sum_{t = r}^T   \psi_{rt} \left( (  \boldsymbol{D}_N^{-1})_{i(r, t, \boldsymbol{x}), \cdot}  \boldsymbol{\theta}_N^* \right)^\top 
 =  \sum_{t = r}^T   \psi_{rt} \boldsymbol{\rho}_{rt}^* , \qquad r \in \mathcal{R}
.
\end{align*}
Notice that each \(v_R^{\text{CATT}}\) is strictly positive and finite because each \(\Cov ( \boldsymbol{X}_i \mid W_i = r) \) is assumed to be finite and have a strictly positive minimum eigenvalue under Assumption (R6) and each \(\mathbb{P}(W_i = r)\) is strictly positive under Assumption (F2).
Also, by the consistency of \( \boldsymbol{\hat{\theta}}^{(q)} \) and each \(N_r / N\) and \(\widehat{\Cov} ( \boldsymbol{X}_i \mid W_i = r) \), along with the continuous mapping theorem, we have that 
\[
 T \sum_{r \in \mathcal{R}}   \frac{(\boldsymbol{\hat{\psi}}_{r}^{\text{CATT}})^\top  \widehat{\Cov} ( \boldsymbol{X}_i \mid W_i = r) \boldsymbol{\hat{\psi}}_{r}^{\text{CATT}}}{N_r / N} \xrightarrow{p} v_R^{\text{CATT}} 
\]
where
\[
\boldsymbol{\hat{\psi}}_{r}^{\text{CATT}} :=  \sum_{t = r}^T   \psi_{rt}  \boldsymbol{\hat{\rho}}_{rt}^{(q)}   , \qquad r \in \mathcal{R}
.
\]
Then from Theorem \ref{first.thm.fetwfe}(g) we have
\begin{align*}
 \sqrt{ \frac{NT}{ \hat{v}_N^{(\text{C;CATT})}   }}  \left(  (\boldsymbol{\hat{\psi}}_N^{(\text{C;CATT})} )^\top  \boldsymbol{\hat{\theta}}^{(q)} - ( \boldsymbol{\psi}_N^{(\text{C;CATT})})^\top \boldsymbol{\theta}_N^* \right) \xrightarrow{d}   \mathcal{N}(0, 1)
 \\ \iff \qquad  \sqrt{ \frac{NT}{ \hat{v}_N^{(\text{C;CATT})}   }}  \sum_{r \in \mathcal{R}} \sum_{t=r}^T \psi_{rt} \left(    \hat{\tau}_{\text{CATT}}(r, t, \boldsymbol{x}) -  \tau_{\text{CATT}}(r, t, \boldsymbol{x})  \right) \xrightarrow{d}   \mathcal{N}(0, 1)
\end{align*}
for
\begin{align}
\hat{v}_N^{(\text{C;CATT})} :=  ~ &  \sigma^2 (\boldsymbol{\hat{\psi}}_N^{(\text{C;CATT})})^\top \left( \widehat{\Cov} \left( (\boldsymbol{Z} \boldsymbol{D}_N^{-1}) _{(\cdot \cdot) \hat{\mathcal{S}}} \right) \right)^{-1} \boldsymbol{\hat{\psi}}_N^{(\text{C;CATT})}  \nonumber
\\ & + T \sum_{r \in \mathcal{R}}   \frac{(\boldsymbol{\hat{\psi}}_{r}^{\text{CATT}})^\top  \widehat{\Cov} ( \boldsymbol{X}_i \mid W_i = r) \boldsymbol{\hat{\psi}}_{r}^{\text{CATT}}}{N_r / N} \label{v.n.r.t.catt.const}
.
\end{align}

\item 

Observe that
\begin{align*}
& \sum_{r \in \mathcal{R}} \sum_{t = r}^T \psi_{rt}  f_r \left(  \pi_{r_1}(\boldsymbol{x}) , \pi_{r_2}(\boldsymbol{x}) , \ldots,  \pi_{r_R}(\boldsymbol{x})  \right)    \tau_{\text{CATT}}(r, t, \boldsymbol{x})
\\ = ~ & \sum_{r \in \mathcal{R}} \sum_{t = r}^T \psi_{rt}  f_r \left(  \pi_{r_1}(\boldsymbol{x}) , \pi_{r_2}(\boldsymbol{x}) , \ldots,  \pi_{r_R}(\boldsymbol{x})  \right)  \Big(  (\boldsymbol{D}_N^{-1})_{i(r,t), \cdot}^\top 
\\ & +  (\boldsymbol{D}_N^{-1})_{i(r, t, \boldsymbol{x}), \cdot} ^\top    \left( \boldsymbol{x} -  \E[ \boldsymbol{X}_i \mid W_i = r ]  \right) \Big) \boldsymbol{\theta}_N^*
\\ = ~ & \sum_{r \in \mathcal{R}} \sum_{t = r}^T \psi_{rt}  f_r \left(  \pi_{r_1}(\boldsymbol{x}) , \pi_{r_2}(\boldsymbol{x}) , \ldots,  \pi_{r_R}(\boldsymbol{x})  \right)  ( \boldsymbol{\psi}_{rt}^* )^\top\boldsymbol{\theta}_N^*
\\ = ~ &  (\boldsymbol{\psi}_N^{(\text{C;CATT,} \pi)})^\top \boldsymbol{\theta}_N^*
\end{align*}
where
\begin{equation}\label{hat.psi.nostar.rt.def}
 \boldsymbol{\psi}_{rt}^* :=   (\boldsymbol{D}_N^{-1})_{i(r,t), \cdot} +  (\boldsymbol{D}_N^{-1})_{i(r, t, \boldsymbol{x}), \cdot}^\top    \left( \boldsymbol{x} -  \E[ \boldsymbol{X}_i \mid W_i = r ]  \right)  \in \mathbb{R}^p
 \end{equation}
and
\[
\boldsymbol{\psi}_N^{(\text{C;CATT,} \pi)} :=  \sum_{r \in \mathcal{R}} 
\sum_{t = r}^T \psi_{rt}  
 f_r \left(  \pi_{r_1}(\boldsymbol{x}) , \pi_{r_2}(\boldsymbol{x}) , \ldots,  \pi_{r_R}(\boldsymbol{x})  \right) 
  \boldsymbol{\psi}_{rt}^*  \in \mathbb{R}^p
.
\]
Define the estimators
\begin{equation}\label{hat.psi.star.rt.def}
 \boldsymbol{\hat{\psi}}_{rt}^* :=   (\boldsymbol{D}_N^{-1})_{i(r,t), \cdot} +  (\boldsymbol{D}_N^{-1})_{i(r, t, \boldsymbol{x}), \cdot}^\top    \left( \boldsymbol{x} -  \boldsymbol{\overline{X}}_r  \right) 
 \end{equation}
 and
 \[
\boldsymbol{\hat{\psi}}_N^{(\text{C;CATT,} \pi)} :=  \sum_{r \in \mathcal{R}} 
\sum_{t = r}^T \psi_{rt} 
 f_r \left(  \hat{\pi}_{r_1}(\boldsymbol{x}) , \hat{\pi}_{r_2}(\boldsymbol{x}) , \ldots,  \hat{\pi}_{r_R}(\boldsymbol{x})  \right) 
   \boldsymbol{\hat{\psi}}_{rt}^*
\]
and notice that
\begin{align*}
& (\boldsymbol{\hat{\psi}}_N^{(\text{C;CATT,} \pi)})^\top \boldsymbol{\hat{\theta}}^{(q)} - (\boldsymbol{\psi}_N^{(\text{C;CATT,} \pi)} )^\top   \boldsymbol{\theta}_N^*
\\ =  ~ &   \sum_{r \in \mathcal{R}} \sum_{t=r}^T \psi_{rt} \big(   
f_r \left(  \hat{\pi}_{r_1}(\boldsymbol{x}) , \hat{\pi}_{r_2}(\boldsymbol{x}) , \ldots,  \hat{\pi}_{r_R}(\boldsymbol{x})  \right) 
  \hat{\tau}_{\text{CATT}}(r, t, \boldsymbol{x})
  \\ &  -  
   f_r \left(  \pi_{r_1}(\boldsymbol{x}) , \pi_{r_2}(\boldsymbol{x}) , \ldots,  \pi_{r_R} (\boldsymbol{x})  \right) 
   \tau_{\text{CATT}}(r, t, \boldsymbol{x})  \big)
.
\end{align*}
Since by assumption each \(\hat{\pi}_r(\boldsymbol{x})\) and \(\boldsymbol{\overline{X}}_r \) is independent of \( \boldsymbol{\hat{\theta}}^{(q)}\), \(\boldsymbol{\hat{\psi}}_N^{(\text{C;CATT,} \pi)}\) is independent of \( \boldsymbol{\hat{\theta}}^{(q)}\), and the result follows from applying Theorem \ref{first.thm.fetwfe}(g) if we can show that
\[
\sqrt{NT}  (  \boldsymbol{\theta}_N^*)^\top  \left(  \boldsymbol{\hat{\psi}}_N^{(\text{C;CATT,} \pi)} - \boldsymbol{\psi}_N^{(\text{C;CATT,} \pi)} \right)
\]
is asymptotically normal and identify a consistent estimator for the asymptotic variance. Showing this is conceptually straightforward---we can decompose \(\sqrt{NT}  (  \boldsymbol{\theta}_N^*)^\top  \left(  \boldsymbol{\hat{\psi}}_N^{(\text{C;CATT,} \pi)} - \boldsymbol{\psi}_N^{(\text{C;CATT,} \pi)} \right)\) into a term that converges in probability to 0, a term that has an asymptotic distribution depending on \( \sqrt{N}    \left(  \boldsymbol{\overline{X}}_r - \boldsymbol{\mu}_{r}  \right)  \) (asymptotically normal due to Lemma \ref{lem.asym.norm.cond.means}), and a term that depends on the \(\hat{\pi}_r(\boldsymbol{x})\) predictions (asymptotically normal by assumption and application of the delta method). Then the sum converges to a normal distribution because of the independence of each \(\hat{\pi}_r(\boldsymbol{x})\) and \(\boldsymbol{\overline{X}}_r \) and Slutsky's theorem.

However, the calculations are sufficiently tedious that we defer the work to the proof of Lemma \ref{lem.int.asym.norm.cond.est.prob}.

\begin{lemma}\label{lem.int.asym.norm.cond.est.prob} Under the assumptions of Theorem \ref{te.asym.norm.thm.gen.cond}(b),
\[
\sqrt{NT}  (  \boldsymbol{\theta}_N^*)^\top  \left(  \boldsymbol{\hat{\psi}}_N^{(\text{C;CATT,} \pi)} - \boldsymbol{\psi}_N^{(\text{C;CATT,} \pi)} \right) \xrightarrow{d} \mathcal{N} \left(0, v_R^{(\text{C;CATT,} \pi)} \right)
\]
for
\[
 v_R^{(\text{C;CATT,} \pi)} :=  \sum_{r \in \mathcal{R}}   \frac{(\boldsymbol{\psi}_{r}^{\text{CATT; prob}})^\top  \Cov ( \boldsymbol{X}_i \mid W_i = r) \boldsymbol{\psi}_{r}^{\text{CATT; prob}}}{\mathbb{P}(W_i = r)} + T \cdot \boldsymbol{\psi}_\pi^\top  \nabla \boldsymbol{f}^\top  \boldsymbol{\Sigma}_\pi \nabla \boldsymbol{f} \boldsymbol{\psi}_\pi
 ,
\]
\(\boldsymbol{\psi}_{r}^{\text{CATT; prob}}\) is defined in \eqref{psi.catt.prob.def}, and \(\boldsymbol{\Sigma}_\pi \) was defined in the statement of Theorem \ref{te.asym.norm.thm.gen.cond}. Further, \(\hat{v}_R^{(\text{C;CATT,} \pi)}\), defined in \eqref{v_r_c_catt_pi_est_form}, is a consistent estimator for \( v_R^{(\text{C;CATT,} \pi)}\).
\end{lemma}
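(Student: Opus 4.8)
The final statement, Lemma \ref{lem.int.asym.norm.cond.est.prob}, supplies the one missing ingredient needed to apply Theorem \ref{first.thm.fetwfe}(g) in the proof of Theorem \ref{te.asym.norm.thm.gen.cond}(b): the asymptotic normality of the weight-estimation error $\sqrt{NT}(\boldsymbol{\theta}_N^*)^\top(\boldsymbol{\hat{\psi}}_N^{(\text{C;CATT,}\pi)} - \boldsymbol{\psi}_N^{(\text{C;CATT,}\pi)})$ together with a consistent estimator of its variance. The plan is to expand this error into contributions from the two independent sources of randomness in the weight vector---the cohort covariate means $\boldsymbol{\overline{X}}_r$ and the estimated propensity scores $\boldsymbol{\hat{\pi}}(\boldsymbol{x})$ (shorthand for $(\hat{\pi}_0(\boldsymbol{x}), \ldots, \hat{\pi}_{r_R}(\boldsymbol{x}))$, with $\boldsymbol{\pi}(\boldsymbol{x})$ the population analog)---and to show that, after the $\sqrt{NT}$ scaling, each contributes an asymptotically Gaussian term while their interaction is negligible.

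Concretely, I would decompose each summand $\psi_{rt}[f_r(\boldsymbol{\hat{\pi}}(\boldsymbol{x}))\boldsymbol{\hat{\psi}}_{rt}^* - f_r(\boldsymbol{\pi}(\boldsymbol{x}))\boldsymbol{\psi}_{rt}^*]$ as
\[
\psi_{rt}\Big[ f_r(\boldsymbol{\pi}(\boldsymbol{x}))(\boldsymbol{\hat{\psi}}_{rt}^* - \boldsymbol{\psi}_{rt}^*) + \big(f_r(\boldsymbol{\hat{\pi}}(\boldsymbol{x})) - f_r(\boldsymbol{\pi}(\boldsymbol{x}))\big)\boldsymbol{\psi}_{rt}^* + \big(f_r(\boldsymbol{\hat{\pi}}(\boldsymbol{x})) - f_r(\boldsymbol{\pi}(\boldsymbol{x}))\big)(\boldsymbol{\hat{\psi}}_{rt}^* - \boldsymbol{\psi}_{rt}^*)\Big].
\]
Contracting with $(\boldsymbol{\theta}_N^*)^\top$ and using $(\boldsymbol{D}_N^{-1})_{i(r,t,\boldsymbol{x}),\cdot}\boldsymbol{\theta}_N^* = \boldsymbol{\rho}_{rt}^*$ from \eqref{hat.psi.nostar.rt.def}, the first piece reduces to a linear functional of $\boldsymbol{\mu}_r - \boldsymbol{\overline{X}}_r$ whose coefficient vectors are exactly the $\boldsymbol{\psi}_r^{\text{CATT; prob}}$, hence $\sqrt{N}$-asymptotically Gaussian by Lemma \ref{lem.asym.norm.cond.means}; the second piece is a linear functional of $\boldsymbol{\hat{\pi}}(\boldsymbol{x}) - \boldsymbol{\pi}(\boldsymbol{x})$, which is $\sqrt{N}$-asymptotically Gaussian by the assumed $\sqrt{N}$-normality of the propensity-score estimator composed with the delta method (Theorem 3.1 in \citealt{van2000asymptotic}), using that $\nabla\boldsymbol{f}$ exists and is finite and nonzero; and the third piece is a product of two $\mathcal{O}_{\mathbb{P}}(1/\sqrt{N})$ factors, hence $\mathcal{O}_{\mathbb{P}}(1/\sqrt{N})$ after scaling and asymptotically negligible. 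Because $\boldsymbol{\overline{X}}_r$ and $\boldsymbol{\hat{\pi}}(\boldsymbol{x})$ are computed on independent data sets, the two surviving Gaussian limits are independent, so Slutsky's theorem together with this independence yields convergence to a mean-zero Gaussian whose variance is the sum of the two limiting variances, matching $v_R^{(\text{C;CATT,}\pi)}$. Its strict positivity and finiteness follow from Assumption (R6) (bounding the eigenvalues of each $\Cov(\boldsymbol{X}_i \mid W_i = r)$), Assumption (F2) (bounding $\mathbb{P}(W_i = r)$ away from zero), and the finiteness of $\boldsymbol{\Sigma}_\pi$ and $\nabla\boldsymbol{f}$.

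For the consistent estimator $\hat{v}_R^{(\text{C;CATT,}\pi)}$, I would replace every population quantity in $v_R^{(\text{C;CATT,}\pi)}$ by its sample analog---$\boldsymbol{\rho}_{rt}^*$ by $\boldsymbol{\hat{\rho}}_{rt}^{(q)}$, $\mathbb{P}(W_i = r)$ by $N_r/N$, $\Cov(\boldsymbol{X}_i \mid W_i = r)$ by its empirical counterpart, $\boldsymbol{\Sigma}_\pi$ by the assumed consistent $\boldsymbol{\hat{\Sigma}}_\pi$, and $\nabla\boldsymbol{f}$ by its plug-in estimate---and conclude consistency by the continuous mapping theorem, invoking Theorem \ref{first.thm.fetwfe}(b) and the law of large numbers for the individual substitutions. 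I expect the main obstacle to be bookkeeping rather than conceptual: keeping the covariate-mean randomness and the propensity-score randomness cleanly separated throughout the decomposition, verifying that the cross term genuinely vanishes under the $\sqrt{NT}$ scaling, and confirming that the resulting additive variance matches the stated forms \eqref{v_r_c_catt_pi_est_form} and \eqref{v.n.r.t.catt.const.est.pi}. The delta-method step requires only the nonzero finite Jacobian assumed in the theorem, and it is the three-way data split that makes the limiting variance additive, sparing us a joint covariance computation between the two estimation stages.
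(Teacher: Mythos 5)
Your proposal follows essentially the same route as the paper's proof: the identical three-term decomposition (population-weight times covariate-mean error, propensity-score error times population weight, and the cross term), with the first term handled by Lemma \ref{lem.asym.norm.cond.means}, the second by the assumed $\sqrt{N}$-normality of the propensity estimator plus the delta method, the cross term shown to be $\mathcal{O}_{\mathbb{P}}(1/N)$ and hence negligible after $\sqrt{NT}$ scaling, independence from the data split yielding additive limiting variances, and a plug-in estimator with the continuous mapping theorem for the variance. The plan is correct and matches the paper's argument step for step.
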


\begin{proof} Provided in Appendix \ref{main.thm.lems}.
\end{proof}

Then from Lemma \ref{lem.int.asym.norm.cond.est.prob} and Theorem \ref{first.thm.fetwfe}(g) we have
\begin{align*}
&  \sqrt{ \frac{NT}{ \hat{v}_N^{(\text{C;CATT,} \pi)}   }}  \left(  (\boldsymbol{\hat{\psi}}_N^{(\text{C;CATT,} \pi)} )^\top  \boldsymbol{\hat{\theta}}^{(q)} - ( \boldsymbol{\psi}_N^{(\text{C;CATT,} \pi)})^\top \boldsymbol{\theta}_N^* \right) \xrightarrow{d}   \mathcal{N}(0, 1)
 \\ \iff \qquad  &  \sqrt{ \frac{NT}{ \hat{v}_N^{(\text{C;CATT,} \pi)}   }}    \sum_{r \in \mathcal{R}} \sum_{t=r}^T \psi_{rt} \big(   
f_r \left(  \hat{\pi}_{r_1}(\boldsymbol{x}) , \hat{\pi}_{r_2}(\boldsymbol{x}) , \ldots,  \hat{\pi}_{r_R}(\boldsymbol{x})  \right) 
  \hat{\tau}_{\text{CATT}}(r, t, \boldsymbol{x})
  \\ &  -  
   f_r \left(  \pi_{r_1}(\boldsymbol{x}) , \pi_{r_2}(\boldsymbol{x}) , \ldots,  \pi_{r_R} (\boldsymbol{x})  \right) 
   \tau_{\text{CATT}}(r, t, \boldsymbol{x})  \big)  \xrightarrow{d}   \mathcal{N}(0, 1)
.
\end{align*}
for
\begin{align}
\hat{v}_N^{(\text{C;CATT,} \pi)} :=  ~ &  \sigma^2 (\boldsymbol{\hat{\psi}}_N^{(\text{C;CATT,} \pi)})^\top \left( \widehat{\Cov} \left( (\boldsymbol{Z} \boldsymbol{D}_N^{-1}) _{(\cdot \cdot) \hat{\mathcal{S}}} \right) \right)^{-1} \boldsymbol{\hat{\psi}}_N^{(\text{C;CATT,} \pi)}  \nonumber
\\ & + \hat{v}_R^{(\text{C;CATT,} \pi)}\label{v.n.r.t.catt.const.est.pi}
.
\end{align}

\end{enumerate}


\section{Proof of Theorem \ref{first.thm.fetwfe}}\label{sec.prove.first.thm}

Before we prove the individual statements, we outline our general approach. Notice that the assumptions leading to equation (6.33) in \citet[Section 6.3]{wooldridge2021two} are satisfied, so we have for any \(i \in [N]\)
\begin{equation}\label{wooldridge.6.33.model.exp}
\E\left[ \boldsymbol{\tilde{y}}_{(i\cdot)} \mid W_i, \boldsymbol{X}_i \right]= \boldsymbol{\tilde{Z}}_{(i \cdot) \cdot} \boldsymbol{\beta}_N^* 
,
\end{equation}
where the subscript \((i\cdot)\) refers to the \(T\) observations corresponding to unit \(i\). (This verifies part (a) of Theorem \ref{first.thm.fetwfe}, though we also provide a proof in our notation below.) We propose estimating the bridge regression \eqref{opt.prob},
\[
\boldsymbol{\hat{\beta}}^{(q)} = \underset{ \boldsymbol{\beta} \in \mathbb{R}^{p_N}}{\arg \min} \left\{ \lVert \boldsymbol{y} -   \boldsymbol{Z} \boldsymbol{\beta} \rVert_2^2 + \lambda_N \lVert \boldsymbol{D}_N \boldsymbol{\beta} \rVert_q^q  \right\}
.
\]
We proved in Lemma \ref{lem.d.express} that the differences matrix \(\boldsymbol{D}_N\) is invertible. We can therefore solve \eqref{opt.prob} through the reparameterization \(\boldsymbol{\hat{\theta}}^{(q)} = \boldsymbol{D}_N \boldsymbol{\hat{\beta}}^{(q)}\), solving for
\[
\boldsymbol{\hat{\theta}}^{(q)} = \underset{ \boldsymbol{\theta} \in \mathbb{R}^{p_N}}{\arg \min} \left\{ \lVert \boldsymbol{y} -  \boldsymbol{Z} \boldsymbol{D}_N^{-1} \boldsymbol{\theta} \rVert_2^2 + \lambda_N \lVert \boldsymbol{\theta} \rVert_q^q  \right\}
\]
using standard bridge regression. Then \(\boldsymbol{\hat{\beta}}^{(q)} = \boldsymbol{D}_N^{-1}\boldsymbol{\hat{\theta}}^{(q)}\) is identical to the solution to \eqref{opt.prob} \citep[Section 3]{tibshirani2011solution}.

Having cast \eqref{opt.prob} as a standard bridge regression optimization problem, our results will follow from Theorem 1, Lemma 3, and Theorem 2(ii) from \citet{kock2013oracle} if we show the assumptions are met in our setting where
\begin{align*}
\boldsymbol{\tilde{Y}}_{iN} = ~ & \boldsymbol{\tilde{y}}_{(i\cdot)},
\\ \boldsymbol{\tilde{X}}_{iN} = ~ & \boldsymbol{\tilde{Z}}_{(i \cdot) \cdot}\boldsymbol{D}_N^{-1},
\\  \beta_0 = ~ & \boldsymbol{\theta}_N^*,
\\ \boldsymbol{c}_{iN} = ~ & c_i \boldsymbol{1}_T,
\\ \boldsymbol{\tilde{\epsilon}}_{iN} = ~ & \boldsymbol{u}_{(i\cdot)},
\\ \boldsymbol{Y}_{iN} = ~ &  \boldsymbol{y}_{(i\cdot)},
\\ \boldsymbol{X}_{iN} = ~ & \boldsymbol{Z}_{(i \cdot) \cdot}\boldsymbol{D}_N^{-1},
\\ \boldsymbol{X}_N = ~ & \boldsymbol{Z} \boldsymbol{D}_N^{-1}, 
\\ \boldsymbol{W}_N = ~ & (\boldsymbol{Z} \boldsymbol{D}_N^{-1})_{(\cdot \cdot)\mathcal{S}},
\\ \boldsymbol{\Sigma}_{1N} = ~ &
  \boldsymbol{\hat{\Sigma}} ((\boldsymbol{Z} \boldsymbol{D}_N^{-1}) _{(\cdot \cdot) \mathcal{S}}), 
\\ \boldsymbol{\epsilon}_{iN} = ~ & \sigma \boldsymbol{\Omega}^{-1/2} \boldsymbol{\epsilon}_{(i\cdot)},
\\ \gamma = ~ & q, \qquad \text{and}
\\ k_N = ~ & s_N
,
\end{align*}
where the left side of each equation uses the notation from \citet{kock2013oracle} and the right side uses the notation in our setting. We focus on the random effects setting of \citet{kock2013oracle} because we cannot perform the forward orthogonal deviations transform since \(\boldsymbol{\tilde{Z}}\) (and \(\boldsymbol{\tilde{Z}}\boldsymbol{D}_N^{-1}\)) contain (linear combinations of) time-invariant covariates and cohort indicators.

Now we are prepared to prove each statement.

\begin{enumerate}[(a)]

\item As mentioned earlier, under our assumptions correct specification is immediate from Equation (6.33) in \citet[Section 6.3]{wooldridge2021two}. We also provide a derivation for the correct specification here in our notation. Observe that for any \(i \in [N]\),
\begin{align}
& \E\left[ \tilde{y}_{(i 1)} \mid W_i, \boldsymbol{X}_i = \boldsymbol{x} \right] \nonumber
%
\\ = ~ &  \mathbbm{1}\{W_i = 0\} \E \left[ \tilde{y}_{(i1)}(0) \mid W_i  = 0, \boldsymbol{X}_i = \boldsymbol{x} \right] + \sum_{r \in \mathcal{R}}  \mathbbm{1}\{W_i = r\} \E\left[ \tilde{y}_{(i1)}(r) \mid W_i  = r, \boldsymbol{X}_i = \boldsymbol{x} \right] \nonumber
\\ \stackrel{(a)}{=} ~ &  \mathbbm{1}\{W_i = 0\} \E \left[ \tilde{y}_{(i1)}(0) \mid W_i  = 0, \boldsymbol{X}_i = \boldsymbol{x} \right] + \sum_{r \in \mathcal{R}}  \mathbbm{1}\{W_i = r\} \E\left[ \tilde{y}_{(i1)}(0) \mid W_i  = r, \boldsymbol{X}_i = \boldsymbol{x} \right] \nonumber
\\ \stackrel{(b)}{=} ~ &  \mathbbm{1}\{W_i = 0\}  \left( \eta^*  + \boldsymbol{x}^\top \boldsymbol{\kappa}^* \right) + \sum_{r \in \mathcal{R}}  \mathbbm{1}\{W_i = r\} \left(  \eta^* + \nu_r^* + \boldsymbol{x}^\top \left( \boldsymbol{\kappa}^*   + \boldsymbol{\zeta}_r^*\right)  \right) \nonumber
\\ =~ &     \eta^* +   \boldsymbol{x}^\top \boldsymbol{\kappa}^* + \sum_{r \in \mathcal{R}}  \mathbbm{1}\{W_i = r\} \left( \nu_r^* + \boldsymbol{x}^\top \boldsymbol{\zeta}_r^*\right)  \label{t.0.evs}
,
\end{align}
where in \((a)\) we used Assumption (CNAS) and in \((b)\) we used \eqref{t.1.untreated} and \eqref{t.1.treated} from (LINS). Next, for any \(t \in \{2, \ldots, T\}\) and \(i\) such that \(W_i = 0\),
\begin{align}
& \E\left[ \tilde{y}_{(i t)} \mid W_i = 0, \boldsymbol{X}_i \right]  \nonumber
\\  = ~ &  \E\left[ \tilde{y}_{(it)}(0)  - \tilde{y}_{(i1)}(0)  \mid W_i  = 0, \boldsymbol{X}_i = \boldsymbol{x} \right]  +   \E\left[  \tilde{y}_{(i1)}(0)  \mid W_i  = 0, \boldsymbol{X}_i = \boldsymbol{x} \right]  \nonumber
%
\\ \stackrel{(c)}{=} ~ &  \eta^* +    \gamma_t^* + \boldsymbol{x}^\top  (\boldsymbol{\xi}_t^*  +   \boldsymbol{\kappa}^* ) ,  \nonumber 
\end{align}
where in \((c)\) we used \eqref{trend.params} from (LINS) and \eqref{t.0.evs}. Then for any \(i\) such that \(W_i  = r \in \mathcal{R} \setminus \{2\}\) and \(t \in \{2, \ldots, r - 1\}\),
\begin{align}
& \E\left[ \tilde{y}_{(i t)} \mid W_i = r, \boldsymbol{X}_i  = \boldsymbol{x} \right] \nonumber
\\ = ~ & \E\left[ \tilde{y}_{(i 1)}(r) \mid W_i = r, \boldsymbol{X}_i  = \boldsymbol{x} \right] + \E\left[ \tilde{y}_{(i t)}(r) - \tilde{y}_{(i 1)}(r)  \mid W_i = r, \boldsymbol{X}_i  = \boldsymbol{x}\right]  \nonumber
\\ \stackrel{(e)}{=} ~ & \E\left[ \tilde{y}_{(i 1)}(0) \mid W_i = r, \boldsymbol{X}_i  = \boldsymbol{x} \right] + \E\left[ \tilde{y}_{(i t)}(0) -  \tilde{y}_{(i 1)}(0) \mid  W_i = r, \boldsymbol{X}_i  = \boldsymbol{x}\right] \nonumber  
\\ \stackrel{(f)}{=} ~ & \E\left[ \tilde{y}_{(i 1)}(0) \mid W_i = r, \boldsymbol{X}_i  = \boldsymbol{x} \right] + \E\left[ \tilde{y}_{(i t)}(0) -  \tilde{y}_{(i 1)}(0) \mid  W_i = 0, \boldsymbol{X}_i  = \boldsymbol{x}\right] \nonumber 
\\ \stackrel{(g)}{=} ~ &   \eta^* +   \nu_r^* + \boldsymbol{x}^\top (\boldsymbol{\kappa}^*  + \boldsymbol{\zeta}_r^* +  \boldsymbol{\xi}_t^* )  + \gamma_t^* \nonumber
,
\end{align}
where in \((e)\) we used (CNAS), in \((f)\) we used (CCTSB), and in \((g)\) we used \eqref{t.0.evs} and \eqref{trend.params} from (LINS). Finally, for any \(i\) such that \(W_i = r \in \mathcal{R}\) and \(t \in \{r, \ldots, T\}\),
\begin{align}
& \E\left[ \tilde{y}_{(i t)} \mid W_i = r, \boldsymbol{X}_i  = \boldsymbol{x} \right]  \nonumber
\\ = ~ &  \E\left[ \tilde{y}_{(i 1)}(r) \mid W_i = r, \boldsymbol{X}_i  = \boldsymbol{x} \right]  + \E \left[  \tilde{y}_{(it)}(0) - \tilde{y}_{(i1)}(0) \mid  W_i = 0, \boldsymbol{X}_{i} = \boldsymbol{x} \right]  \nonumber
\\ & 
+ \E\left[ \tilde{y}_{(i t)}(r)  - \tilde{y}_{(i 1)}(r) \mid W_i = r, \boldsymbol{X}_i  = \boldsymbol{x}\right]  
 - \E \left[  \tilde{y}_{(it)}(0) - \tilde{y}_{(i1)}(0)  \mid  W_i = 0, \boldsymbol{X}_{i} = \boldsymbol{x} \right]  \nonumber
\\ \stackrel{(h)}{=} ~ &  \eta^* +   \nu_r^* + \boldsymbol{x}^\top (\boldsymbol{\kappa}^*  + \boldsymbol{\zeta}_r^* +  \boldsymbol{\xi}_t^* )  + \gamma_t^* + \tau_{rt}^* + \boldsymbol{\dot{x}}_r \boldsymbol{\rho}_{rt}^* \label{tau.decomp}
,
\end{align}
where in \((h)\) we used \eqref{t.0.evs} and \eqref{trend.params} and \eqref{treat.eff.def.covs} from (LINS). Therefore regression \eqref{wooldridge.6.33.model} is correctly specified and \eqref{wooldridge.6.33.model.exp} holds.

\item

First we will show that Assumptions (FE1) - (FE3), (RE4), (A1), (A3), (A5), and (A6) from \citet{kock2013oracle} are satisfied. Then we will show that the conclusion of Theorem 1 in \citet{kock2013oracle} about \(\boldsymbol{\theta}_N^*\) leads to our statement about \(\boldsymbol{\beta}_N^*\). We will use three lemmas:

\begin{lemma}\label{d.sing.val.lem}
The smallest and largest singular values of \(\boldsymbol{D}_N\) are bounded as follows: \( \sigma_{\text{max}} (\boldsymbol{D}_N)  \leq   3\) and \(\sigma_{\text{min}}\left( \boldsymbol{D}_N  \right)  \geq  \frac{1}{T\sqrt{2T}}\).
\end{lemma}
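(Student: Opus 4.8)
The plan is to exploit the explicit block-diagonal form of $\boldsymbol{D}_N$ established in Lemma \ref{lem.d.express}. Since $\boldsymbol{D}_N$ is block diagonal and the Kronecker factors $\boldsymbol{I}_{d_N} \otimes \boldsymbol{A}$ share the singular values of $\boldsymbol{A}$, the singular values of $\boldsymbol{D}_N$ are exactly the union of those of the identity blocks (all equal to $1$), the blocks $\boldsymbol{D}^{(1)}(R)$ and $\boldsymbol{D}^{(1)}(T-1)$, and the block $\boldsymbol{D}^{(2)}(\mathcal{R})$. Hence $\sigma_{\max}(\boldsymbol{D}_N)$ is the largest and $\sigma_{\min}(\boldsymbol{D}_N)$ the smallest over these blocks, and it suffices to bound the relevant extreme singular values of each block, all of which are explicit small matrices with entries in $\{-1,0,1\}$.

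For the upper bound I would use the standard inequality $\|\boldsymbol{A}\|_2 \leq \sqrt{\|\boldsymbol{A}\|_1 \|\boldsymbol{A}\|_\infty}$, the geometric mean of the maximum absolute column sum and maximum absolute row sum. Each row of $\boldsymbol{D}^{(1)}(t)$ is either a single adjacent difference or the lone anchor coefficient, so it has at most two nonzero entries ($\pm 1$), giving $\|\boldsymbol{D}^{(1)}(t)\|_\infty \leq 2$; a direct count shows each column also contains at most two nonzeros, so $\|\boldsymbol{D}^{(1)}(t)\|_2 \leq 2$. For $\boldsymbol{D}^{(2)}(\mathcal{R})$ each row is again a single difference (or the anchor $\tau_{r_1,r_1}$), so $\|\boldsymbol{D}^{(2)}(\mathcal{R})\|_\infty \leq 2$, while counting the penalties in which a given $\tau_{rt}$ appears shows every coefficient enters at most three of them (an adjacent-time difference on each side and, for an initial effect, an adjacent-cohort difference), so $\|\boldsymbol{D}^{(2)}(\mathcal{R})\|_1 \leq 3$ and $\|\boldsymbol{D}^{(2)}(\mathcal{R})\|_2 \leq \sqrt{6}$. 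Taking the maximum over all blocks yields $\sigma_{\max}(\boldsymbol{D}_N) \leq \sqrt{6} \leq 3$.

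For the lower bound I would use $\sigma_{\min}(\boldsymbol{D}_N) = 1/\|\boldsymbol{D}_N^{-1}\|_2$, valid since $\boldsymbol{D}_N$ is invertible, and bound $\|\boldsymbol{D}_N^{-1}\|_2$ blockwise using $\|\boldsymbol{A}\|_2 \leq \|\boldsymbol{A}\|_F$. By Lemma \ref{lem.d.express} the inverse blocks have entries in $\{0,1\}$, so their squared Frobenius norm is simply the number of nonzero entries. The inverse block $(\boldsymbol{D}^{(1)}(t))^{-1}$ is the all-ones upper-triangular matrix, with at most $t(t+1)/2 \leq (T-1)T/2$ ones, hence norm at most $T \leq T\sqrt{2T}$. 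For $(\boldsymbol{D}^{(2)}(\mathcal{R}))^{-1}$ the key observation is the ``path'' interpretation: each row recovers one $\tau_{rt}$ as a telescoping sum of the differences along the unique path (across cohorts, then across time) from the anchor to that coefficient in Figure \ref{fig:fusion-penalties}. Because the cohort indices satisfy $r_k \geq k+1$, forcing the path length $k + (t-r_k) \leq t-1$, each row has at most $T-1$ nonzeros; with $\mathfrak{W} \leq (T-1)^2$ rows this gives at most $(T-1)^3 \leq 2T^3$ nonzeros total, so $\|(\boldsymbol{D}^{(2)}(\mathcal{R}))^{-1}\|_2 \leq (T-1)^{3/2} \leq T\sqrt{2T}$. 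Taking the maximum over blocks gives $\|\boldsymbol{D}_N^{-1}\|_2 \leq T\sqrt{2T}$ and therefore $\sigma_{\min}(\boldsymbol{D}_N) \geq 1/(T\sqrt{2T})$.

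The main obstacle is the analysis of the $\boldsymbol{D}^{(2)}(\mathcal{R})$ block and, more pointedly, its inverse: unlike the simple bidiagonal $\boldsymbol{D}^{(1)}$ blocks, recovering $\tau_{rt}$ requires summing differences along a path that mixes cohort steps and time steps, and the crux is to verify that this path length is $O(T)$ rather than $O(\mathfrak{W}) = O(T^2)$. The bound $r_k \geq k+1$ is exactly what keeps the Frobenius count at $O(T^3)$ and delivers the stated $T\sqrt{2T}$ rather than a weaker power of $T$. All remaining steps are routine norm inequalities once the block structure and the $\{0,1\}$ entries of $\boldsymbol{D}_N^{-1}$ from Lemma \ref{lem.d.express} are in hand.
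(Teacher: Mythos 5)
Your proposal is correct, and its architecture is the same as the paper's: reduce to the diagonal blocks identified in Lemma \ref{lem.d.express}, bound \(\sigma_{\max}\) via the Schur-type inequality \(\lVert \boldsymbol{A} \rVert_2 \leq \sqrt{\lVert \boldsymbol{A} \rVert_1 \lVert \boldsymbol{A} \rVert_\infty}\), and bound \(\sigma_{\min}\) by controlling a norm of the explicit inverse. The one substantive difference is in the lower bound: the paper uses \(\sigma_{\min}(\boldsymbol{A}) \geq 1/\sqrt{\lVert \boldsymbol{A}^{-1} \rVert_1 \lVert \boldsymbol{A}^{-1} \rVert_\infty}\), bounding the maximum column sum of \((\boldsymbol{D}^{(2)}(\mathcal{R}))^{-1}\) by \(\mathfrak{W}\) and the maximum row sum by \(R+T\), whereas you use \(\sigma_{\min}(\boldsymbol{D}_N) = 1/\lVert \boldsymbol{D}_N^{-1} \rVert_2 \geq 1/\lVert \boldsymbol{D}_N^{-1} \rVert_F\) and count nonzero entries (valid since the inverse has \(\{0,1\}\) entries). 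Both are routine once Lemma \ref{lem.d.express} is available, and your version actually yields a slightly sharper constant, \((T-1)^{-3/2}\) for the treatment block versus the paper's \(((T-1)\sqrt{2T-1})^{-1}\). One small overstatement: you call the bound \(r_k \geq k+1\) (row sparsity at most \(T-1\)) the crux that keeps the count at \(O(T^3)\), but it is not essential. Even the cruder per-row count \((k-1) + (T - r_k + 1) \leq R + T \leq 2T\) gives a total of at most \(\mathfrak{W} \cdot 2T \leq 2T(T-1)^2\) nonzeros, hence \(\lVert (\boldsymbol{D}^{(2)}(\mathcal{R}))^{-1} \rVert_F \leq (T-1)\sqrt{2T} \leq T\sqrt{2T}\), which already delivers the stated bound; this parallels the fact that the paper itself gets by with the loose row-sum bound \(R + T\). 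Your sharper path-length observation is correct and tightens the constant, but the proof does not hinge on it.
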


\begin{lemma}\label{power.lem}
Under Assumption (R1) it holds that
\[
\E \left[ ( \boldsymbol{\tilde{Z}}_{(i \cdot) \cdot} \boldsymbol{D}_N^{-1})_{(it)j}^4 \right]  \leq \kappa_4 \qquad \forall i \in [N], j \in [p_N], t \in [T]
\]
for some finite \(\kappa_4 > 0\). 

\end{lemma}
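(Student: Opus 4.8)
The plan is to exploit the sparse $0/1$ structure of $\boldsymbol{D}_N^{-1}$ from Lemma \ref{lem.d.express} together with the elementwise structure of the raw design matrix $\boldsymbol{\tilde{Z}}$, thereby reducing the fourth moment of an entry of $\boldsymbol{\tilde{Z}}_{(i\cdot)\cdot}\boldsymbol{D}_N^{-1}$ to a finite combination of fourth moments of individual entries of $\boldsymbol{\tilde{Z}}$, each of which is controlled directly by Assumption (R1). First I would write out the $(it,j)$ entry of the product as $(\boldsymbol{\tilde{Z}}_{(i\cdot)\cdot}\boldsymbol{D}_N^{-1})_{(it)j} = \sum_k \boldsymbol{\tilde{Z}}_{(it)k}(\boldsymbol{D}_N^{-1})_{kj}$. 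By Lemma \ref{lem.d.express}, $\boldsymbol{D}_N^{-1}$ is block diagonal with blocks of size at most $(T-1)^2$ and entries lying in $\{0,1\}$, so every column of $\boldsymbol{D}_N^{-1}$ has at most $m := (T-1)^2$ nonzero entries, each equal to $1$. Hence this entry is a sum of at most $m$ of the entries $\boldsymbol{\tilde{Z}}_{(it)k}$, with no multiplicative weights to track.

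Next I would apply the power-mean (Jensen) inequality $\left(\sum_{k=1}^m a_k\right)^4 \le m^3 \sum_{k=1}^m a_k^4$ and take expectations, obtaining $\E\bigl[(\boldsymbol{\tilde{Z}}_{(i\cdot)\cdot}\boldsymbol{D}_N^{-1})_{(it)j}^4\bigr] \le m^3 \sum_k \E[\boldsymbol{\tilde{Z}}_{(it)k}^4] \le m^4 \max_k \E[\boldsymbol{\tilde{Z}}_{(it)k}^4]$, where the sum and maximum range over the at most $m$ indices $k$ with $(\boldsymbol{D}_N^{-1})_{kj}=1$. Since $m \le (T-1)^2$, the prefactor is bounded by $(T-1)^8$ independently of $N$. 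It then remains only to bound a single generic fourth moment $\E[\boldsymbol{\tilde{Z}}_{(it)k}^4]$ uniformly over the columns $k$.

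The third step classifies the columns of $\boldsymbol{\tilde{Z}}$ according to the construction in Section \ref{sec.meth}. Every column is one of three types: a cohort, time, or treatment dummy, in which case $\boldsymbol{\tilde{Z}}_{(it)k} \in \{0,1\}$ and its fourth power is at most $1$; a covariate column, in which case $\boldsymbol{\tilde{Z}}_{(it)k} = X_{i\ell}$ for some $\ell$ and $\E[X_{i\ell}^4] < \infty$ by (R1); or an interaction of a dummy with a covariate, in which case $\lvert \boldsymbol{\tilde{Z}}_{(it)k}\rvert \le \lvert X_{i\ell}\rvert$ and so $\E[\boldsymbol{\tilde{Z}}_{(it)k}^4] \le \E[X_{i\ell}^4]$. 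Thus $\E[\boldsymbol{\tilde{Z}}_{(it)k}^4] \le \max\{1, \max_{\ell \in [d_N]}\E[X_{i\ell}^4]\}$ for every $k$, and combining with the previous bound gives the claim with $\kappa_4 := (T-1)^8 \max\{1,\, \sup_N \max_{\ell \in [d_N]}\E[X_{i\ell}^4]\}$.

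The main obstacle is ensuring that $\kappa_4$ is genuinely uniform over $i$, $t$, $j$, and $N$ rather than merely finite for each fixed $N$. Uniformity over $i$ is immediate from the iid Assumption (F2), and uniformity over $t$ and $j$ follows because every row of $\boldsymbol{\tilde{Z}}$ shares the same column structure and $m=(T-1)^2$ does not grow with $N$; the only substantive requirement is that the covariate fourth moments remain bounded over the growing index set $\ell \in [d_N]$ and over $N$. This is exactly the finite-fourth-moment content of Assumption (R1), read uniformly in $\ell$ and $N$ as part of the covariate regularity, and it is this uniform control (not merely pointwise finiteness) that I would be careful to invoke when asserting $\sup_N \max_{\ell}\E[X_{i\ell}^4] < \infty$.
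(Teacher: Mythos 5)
Your proof is correct, and its skeleton matches the paper's: both use Lemma \ref{lem.d.express} to reduce each entry of \(\boldsymbol{\tilde{Z}}_{(i \cdot) \cdot} \boldsymbol{D}_N^{-1}\) to a sum of at most \((T-1)^2\) entries of \(\boldsymbol{\tilde{Z}}_{(i \cdot) \cdot}\) with unit weights, and both then bound the fourth moments of the entries of \(\boldsymbol{\tilde{Z}}\) by classifying them as dummies, covariates, or dummy-covariate interactions, each dominated by \(\max\{1, |X_{i\ell}|\}\) so that (R1) applies. Where you genuinely diverge is the key analytic step: the paper expands the fourth power of the sum via the multinomial theorem and then bounds each type of cross-moment (\(\E[\tilde{Z}_{j'}^3\tilde{Z}_{j''}]\), \(\E[\tilde{Z}_{j'}^2\tilde{Z}_{j''}^2]\), \(\E[\tilde{Z}_{j'}^2\tilde{Z}_{j''}\tilde{Z}_{j'''}]\), \(\E[\tilde{Z}_{j'}\tilde{Z}_{j''}\tilde{Z}_{j'''}\tilde{Z}_{j''''}]\)) separately using Cauchy--Schwarz and H\"older's inequality, whereas you apply the power-mean (Jensen) inequality \(\left(\sum_{k=1}^m a_k\right)^4 \leq m^3 \sum_{k=1}^m a_k^4\) once, which eliminates cross-terms entirely; this is shorter, and it produces an explicit constant \(\kappa_4 \leq (T-1)^8 \max\{1, \max_\ell \E[X_{i\ell}^4]\}\) where the paper only concludes finiteness. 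You are also more explicit than the paper about uniformity in \(N\): Assumption (R1) literally asserts only per-\(N\) finiteness of the covariate fourth moments, and the paper's own proof accordingly concludes finiteness only ``for any fixed \(N\),'' so your requirement that \(\sup_N \max_{\ell \in [d_N]} \E[X_{i\ell}^4] < \infty\) is a harmless (and arguably necessary, if one wants \(\kappa_4\) free of \(N\)) strengthening rather than something the paper itself establishes. One caveat shared by both proofs: the treatment-covariate interaction columns involve covariates centered at cohort means, so the correct domination is by \(|X_{i\ell}| + |\E[X_{i\ell} \mid W_i = r]|\) rather than \(|X_{i\ell}|\); this changes nothing substantive, since centering by a finite constant preserves finite fourth moments.
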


\begin{lemma}\label{sig.min.lem}
Under the assumptions of Theorem \ref{first.thm.fetwfe}(b),
\begin{align*}
\rho_{1N} =  \lambda_{\text{min}} \left(  \boldsymbol{\hat{\Sigma}} \left(  \boldsymbol{Z}\boldsymbol{D}_N^{-1} \right)  \right)  \geq ~ & \frac{1}{9} e_{1N}   \qquad \text{and}
\\  
\rho_{2N} = \lambda_{\text{max}} \left(  \boldsymbol{\hat{\Sigma}} \left(  \boldsymbol{Z}\boldsymbol{D}_N^{-1} \right)  \right)  \leq ~ &   e_{2N}  \cdot 2 T^3
.
\end{align*}

\end{lemma}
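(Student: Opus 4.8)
The plan is to reduce everything to a congruence relationship between the two Gram matrices and then invoke the singular value bounds already recorded in Lemma \ref{d.sing.val.lem}. The key observation is the algebraic identity
\[
\boldsymbol{\hat{\Sigma}}\left( \boldsymbol{Z}\boldsymbol{D}_N^{-1} \right) = \frac{1}{N} \left( \boldsymbol{Z}\boldsymbol{D}_N^{-1} \right)^\top \boldsymbol{Z}\boldsymbol{D}_N^{-1} = \left( \boldsymbol{D}_N^{-1} \right)^\top \boldsymbol{\hat{\Sigma}}(\boldsymbol{Z}) \boldsymbol{D}_N^{-1},
\]
which follows directly from the definition \eqref{gram.mat.def}. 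Thus the matrix whose eigenvalues we must control is a congruence transform of $\boldsymbol{\hat{\Sigma}}(\boldsymbol{Z})$ by the invertible matrix $\boldsymbol{D}_N^{-1}$ (invertibility being guaranteed by Lemma \ref{lem.d.express}).

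Next I would pass to the Rayleigh quotient characterization and perform the change of variables $\boldsymbol{w} = \boldsymbol{D}_N^{-1}\boldsymbol{v}$ (so $\boldsymbol{v} = \boldsymbol{D}_N \boldsymbol{w}$, a bijection since $\boldsymbol{D}_N$ is invertible). This yields
\[
\rho_{1N} = \min_{\boldsymbol{v} \neq \boldsymbol{0}} \frac{\boldsymbol{v}^\top \left( \boldsymbol{D}_N^{-1} \right)^\top \boldsymbol{\hat{\Sigma}}(\boldsymbol{Z}) \boldsymbol{D}_N^{-1} \boldsymbol{v}}{\boldsymbol{v}^\top \boldsymbol{v}} = \min_{\boldsymbol{w} \neq \boldsymbol{0}} \frac{\boldsymbol{w}^\top \boldsymbol{\hat{\Sigma}}(\boldsymbol{Z}) \boldsymbol{w}}{\boldsymbol{w}^\top \boldsymbol{D}_N^\top \boldsymbol{D}_N \boldsymbol{w}},
\]
and the analogous expression with $\max$ for $\rho_{2N}$. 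For the lower bound on $\rho_{1N}$ I would bound the numerator below by $e_{1N} \lVert \boldsymbol{w} \rVert_2^2$ and the denominator above by $\sigma_{\max}(\boldsymbol{D}_N)^2 \lVert \boldsymbol{w} \rVert_2^2$, so that the ratio is at least $e_{1N} / \sigma_{\max}(\boldsymbol{D}_N)^2$; since Lemma \ref{d.sing.val.lem} gives $\sigma_{\max}(\boldsymbol{D}_N) \leq 3$, this produces $\rho_{1N} \geq e_{1N}/9$. Symmetrically, for $\rho_{2N}$ I would bound the numerator above by $e_{2N} \lVert \boldsymbol{w} \rVert_2^2$ and the denominator below by $\sigma_{\min}(\boldsymbol{D}_N)^2 \lVert \boldsymbol{w} \rVert_2^2$, so the ratio is at most $e_{2N}/\sigma_{\min}(\boldsymbol{D}_N)^2$; using $\sigma_{\min}(\boldsymbol{D}_N) \geq 1/(T\sqrt{2T})$ from Lemma \ref{d.sing.val.lem}, hence $\sigma_{\min}(\boldsymbol{D}_N)^2 \geq 1/(2T^3)$, gives $\rho_{2N} \leq e_{2N} \cdot 2T^3$.

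There is no serious analytic obstacle here: once the congruence identity and the variational change of variables are set up, the entire argument is two applications of the standard operator-norm/min-singular-value inequalities $\sigma_{\min}(\boldsymbol{D}_N)^2 \lVert \boldsymbol{w}\rVert_2^2 \leq \boldsymbol{w}^\top \boldsymbol{D}_N^\top \boldsymbol{D}_N \boldsymbol{w} \leq \sigma_{\max}(\boldsymbol{D}_N)^2 \lVert \boldsymbol{w}\rVert_2^2$ combined with the eigenvalue definitions of $e_{1N}, e_{2N}$. The only real content has been offloaded to Lemma \ref{d.sing.val.lem}, and so the main point requiring care is simply to make sure the direction of each inequality is paired correctly (numerator lower bound with denominator upper bound for $\rho_{1N}$, and conversely for $\rho_{2N}$) and that the constants are substituted in the correct squared form.
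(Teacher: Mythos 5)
Your proposal is correct, and the constants come out exactly as in the paper, but the mechanism is different. The paper's proof also reduces everything to the singular-value bounds of Lemma \ref{d.sing.val.lem}, but it gets there by noting that the eigenvalues of the Gram matrix are the squared singular values of $\boldsymbol{Z}\boldsymbol{D}_N^{-1}$ and then invoking a separate product inequality (its Lemma \ref{lem.sing.value.prod.bound}): $\sigma_{\text{min}}(\boldsymbol{A}\boldsymbol{B}) \geq \sigma_{\text{min}}(\boldsymbol{A})\sigma_{\text{min}}(\boldsymbol{B})$ when $\boldsymbol{B}$ has full row rank (a result cited from the literature), together with $\sigma_{\text{max}}(\boldsymbol{A}\boldsymbol{B}) \leq \sigma_{\text{max}}(\boldsymbol{A})\sigma_{\text{max}}(\boldsymbol{B})$, applied with $\boldsymbol{A} = \boldsymbol{Z}$, $\boldsymbol{B} = \boldsymbol{D}_N^{-1}$. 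Your congruence-plus-Rayleigh-quotient argument is self-contained: the change of variables $\boldsymbol{v} = \boldsymbol{D}_N\boldsymbol{w}$ is a bijection because $\boldsymbol{D}_N$ is invertible (Lemma \ref{lem.d.express}), the pointwise sandwich $\sigma_{\text{min}}(\boldsymbol{D}_N)^2 \lVert \boldsymbol{w}\rVert_2^2 \leq \boldsymbol{w}^\top \boldsymbol{D}_N^\top \boldsymbol{D}_N \boldsymbol{w} \leq \sigma_{\text{max}}(\boldsymbol{D}_N)^2 \lVert \boldsymbol{w}\rVert_2^2$ is elementary, and bounding each Rayleigh quotient pointwise before taking the min (resp.\ max) is valid, so you never need the product inequality or its full-row-rank hypothesis. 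What your route buys is the elimination of one auxiliary lemma and one external citation; what the paper's route buys is modularity, since the product inequality is reused conceptually elsewhere (e.g., in bounding singular values of $\boldsymbol{G}_N\boldsymbol{\tilde{Z}}$). One cosmetic point: you wrote the Gram matrix with a $1/N$ normalization while the paper's proof uses $1/(NT)$ for the $NT$-row matrix $\boldsymbol{Z}$; this is immaterial here because the same constant appears in $\boldsymbol{\hat{\Sigma}}(\boldsymbol{Z}\boldsymbol{D}_N^{-1})$, $\boldsymbol{\hat{\Sigma}}(\boldsymbol{Z})$, and hence in $e_{1N}, e_{2N}$, and cancels from every ratio, but it is worth stating the normalization consistently.
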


The proofs of all three of these lemmas are provided in Appendix \ref{sec.main.lemmas}. Now we will verify one assumption at a time.

\begin{itemize}

\item In our notation, Assumption (FE1) from \citet{kock2013oracle} requires that \((\boldsymbol{\tilde{Z}}_{(i \cdot) \cdot} \boldsymbol{D}_N^{-1}, c_i, \boldsymbol{u}_{(i\cdot)})_{i=1}^N\) are iid. This is satisfied by our Assumption (F2) since \(\boldsymbol{\tilde{Z}}_{(i \cdot) \cdot}\) is a deterministic, invertible function of \((W_i, \boldsymbol{X}_i)\), and \(\boldsymbol{D}_N^{-1}\) is also deterministic and invertible conditional on \((W_i, \boldsymbol{X}_i)\).

\item Assumption (FE2) requires that \(\E [ (\boldsymbol{\tilde{Z}} \boldsymbol{D}_N^{-1})_{(it)j}^4 ]\) and \(\E [u_{(it)}^4]\) are finite for all \(i \in [N], j \in [d_N], t \in [T]\). We assumed the latter in Assumption (R1). We proved that the former is satisfied in Lemma \ref{power.lem}.

\item Next, Assumption (FE3) requires that
\begin{align*}
\E \left[ \boldsymbol{u}_{(i\cdot)} \mid  \boldsymbol{\tilde{Z}}_{(i \cdot) \cdot} \boldsymbol{D}_N^{-1} , c_i \right] = ~ & \boldsymbol{0} \qquad \text{and}
\\ \Var \left[ \boldsymbol{u}_{(i\cdot)}  \mid   \boldsymbol{\tilde{Z}}_{(i \cdot) \cdot} \boldsymbol{D}_N^{-1} , c_i\right] = ~ & \sigma^2 \boldsymbol{I}_{T}, \qquad i \in [N]
,
\end{align*}
and Assumption (RE4) requires
\begin{align*}
\E \left[ c_i \mid  \boldsymbol{\tilde{Z}}_{(i \cdot) \cdot} \boldsymbol{D}_N^{-1} \right]  = ~ & 0,
\\  \Var \left[ c_i \mid  \boldsymbol{\tilde{Z}}_{(i \cdot) \cdot} \boldsymbol{D}_N^{-1} \right]  = ~ & \sigma_c^2, 
\end{align*}
and that \(\sigma\) and \(\sigma_c^2\) are known and finite. Our Assumption (F1) is sufficient for this because conditioning on \(W_i, \boldsymbol{X}_i\) is the same as conditioning on \( \boldsymbol{\tilde{Z}}_{(i \cdot) \cdot} \boldsymbol{D}_N^{-1}\) by the following argument. Because \(\boldsymbol{D}_N^{-1}\) is invertible, the \(\sigma\)-algebra generated by \(\boldsymbol{\tilde{Z}}_{(i \cdot) \cdot} \boldsymbol{D}_N^{-1}\), \(\sigma(\boldsymbol{\tilde{Z}}_{(i \cdot) \cdot} \boldsymbol{D}_N^{-1})\), contains the same information as \(\sigma(\boldsymbol{\tilde{Z}}_{(i \cdot) \cdot})\). Similarly, for each \((W_i, \boldsymbol{X}_i)\) there is only one valid \(\boldsymbol{\tilde{Z}}_{(i \cdot) \cdot}\), and every valid \(\boldsymbol{\tilde{Z}}_{(i \cdot) \cdot}\) corresponds to a unique \((W_i, \boldsymbol{X}_i)\)---that is, the mapping from \((W_i, \boldsymbol{X}_i)\) to \(\boldsymbol{\tilde{Z}}_{(i \cdot) \cdot}\) is also invertible, so \(\sigma(W_i, \boldsymbol{X}_i) = \sigma(\boldsymbol{\tilde{Z}}_{(i \cdot) \cdot})\). 

\item Our Assumptions (R1), (R2), and (R3) exactly match assumptions (A1), (A3), and (A5), respectively, from \citet{kock2013oracle} (in our setting where we assume \(T\) is fixed). 

\item Finally, Assumption (A6) in \citet{kock2013oracle} requires that
\[
\frac{p_N + \lambda_N s_N}{N \rho_{1N}} \xrightarrow{a.s.} 0
,
\]
where \(\rho_{1N}\) is the smallest eigenvalue of the random matrix \( \boldsymbol{\hat{\Sigma}} \left( \boldsymbol{Z} \boldsymbol{D}_N^{-1} \right) \). Under Lemma \ref{sig.min.lem} the smallest eigenvalue of \(\boldsymbol{\hat{\Sigma}}(\boldsymbol{Z})\), \(e_{1N}\), is within a constant factor of \(\rho_{1N}\), so our Assumption (R2) suffices to satisfy this condition.

\end{itemize}
 
We have shown that the assumptions of Theorem 1 in \citet{kock2013oracle} are satisfied, so we have \(\lVert \boldsymbol{\hat{\theta}}^{(q)} - \boldsymbol{\theta}_N^* \rVert_2 = \mathcal{O}_{\mathbb{P}} \left( \min\{h_N, h'_N\} \right)\). Finally, part (2) is proven because \(\left\lVert\boldsymbol{\hat{\beta}}^{(q)} - \boldsymbol{\beta}_N^*  \right\rVert_2\) is within a constant factor of \(\lVert \boldsymbol{\hat{\theta}}^{(q)} - \boldsymbol{\theta}_N^* \rVert_2 \):
\begin{align*}
\lVert \boldsymbol{\hat{\beta}}^{(q)} - \boldsymbol{\beta}_N^* \rVert_2 = ~ & \left\lVert  \boldsymbol{D}_N^{-1} \left( \boldsymbol{\hat{\theta}}^{(q)} - \boldsymbol{\theta}_N^*  \right) \right\rVert_2
 \leq   \left\lVert  \boldsymbol{D}_N^{-1} \right \rVert_{\text{op}}   \left\lVert\boldsymbol{\hat{\theta}}^{(q)} - \boldsymbol{\theta}_N^*  \right\rVert_2
 \leq  T\sqrt{2T}  \left\lVert\boldsymbol{\hat{\theta}}^{(q)} - \boldsymbol{\theta}_N^*  \right\rVert_2
,
\end{align*}
where in the last step we applied Lemma \ref{d.sing.val.lem}.

\item In light of the results from part (a), part (c) follows from Theorem \ref{prop.2i} below, which extends Lemma 3 from \citet{kock2013oracle}, under our assumptions. In particular, we will show that our Assumptions (R4) and (R5) suffice for \citet{kock2013oracle}'s Assumptions (A4) and (A7), as well as the needed part of Assumption (A2) for Lemma 3, in \citet{kock2013oracle}. 

\begin{theorem}[Extension of Lemma 3 from \citealt{kock2013oracle}]\label{prop.2i}
Under the assumptions of Lemma 3 from \citet{kock2013oracle},
\[
\lim_{N \to \infty} \mathbb{P} \left( \hat{\mathcal{S}} = \mathcal{S} \right) = 1
.
\]

\end{theorem}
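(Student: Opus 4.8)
The plan is to decompose the target event $\{\hat{\mathcal{S}} = \mathcal{S}\}$ into its two one-sided inclusions and treat each separately. Write $\mathcal{S} = \{j : (\boldsymbol{\beta}_0)_j \neq 0\}$ for the true active set and $\hat{\mathcal{S}} = \{j : \hat{\beta}_j \neq 0\}$ for the estimated one, where $\boldsymbol{\hat{\beta}}$ denotes Kock's bridge estimator and $\boldsymbol{\beta}_0$ the true coefficient vector. Since
\[
\{\hat{\mathcal{S}} = \mathcal{S}\} = \{\hat{\mathcal{S}} \subseteq \mathcal{S}\} \cap \{\mathcal{S} \subseteq \hat{\mathcal{S}}\},
\]
it suffices to show that each of these two events has probability tending to one and then conclude by a union bound. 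The first event $\{\hat{\mathcal{S}} \subseteq \mathcal{S}\}$---equivalently $\{\boldsymbol{\hat{\beta}}_{\mathcal{S}^c} = \boldsymbol{0}\}$---is precisely the ``no false positives'' (sparsity) conclusion of Lemma 3 of \citet{kock2013oracle}, which holds by hypothesis. This is the delicate direction, and the content of the extension is that it is available for free.

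The new work is the reverse inclusion $\{\mathcal{S} \subseteq \hat{\mathcal{S}}\}$, that no relevant coordinate is erroneously zeroed out. Here I would combine signal strength with estimation consistency. Under Assumption (A5) of \citet{kock2013oracle} (our Assumption (R3)) there is a fixed $b_0 > 0$ with $\min_{j \in \mathcal{S}} |(\boldsymbol{\beta}_0)_j| \geq b_0$, and under the assumptions of Lemma 3 the bridge estimator is $\ell_2$-consistent by Theorem 1 of \citet{kock2013oracle}, so $\lVert \boldsymbol{\hat{\beta}} - \boldsymbol{\beta}_0 \rVert_2 \xrightarrow{p} 0$. If some $j \in \mathcal{S}$ had $\hat{\beta}_j = 0$, then $|\hat{\beta}_j - (\boldsymbol{\beta}_0)_j| = |(\boldsymbol{\beta}_0)_j| \geq b_0$, whence
\[
\{\mathcal{S} \not\subseteq \hat{\mathcal{S}}\} \subseteq \Big\{\max_{j \in \mathcal{S}} |\hat{\beta}_j - (\boldsymbol{\beta}_0)_j| \geq b_0\Big\} \subseteq \big\{\lVert \boldsymbol{\hat{\beta}} - \boldsymbol{\beta}_0 \rVert_2 \geq b_0\big\}.
\]
Because $b_0$ is fixed while $\lVert \boldsymbol{\hat{\beta}} - \boldsymbol{\beta}_0 \rVert_2 \xrightarrow{p} 0$, the probability of the rightmost event tends to zero, so $\mathbb{P}(\mathcal{S} \subseteq \hat{\mathcal{S}}) \to 1$. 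Note that the $\ell_2$ bound dominates the coordinatewise maximum, so this step requires no union bound and remains valid even when $|\mathcal{S}| = s_N \to \infty$; this is exactly why no assumptions beyond those of Lemma 3 are needed.

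Finally I would assemble the pieces: by the two displays,
\[
\mathbb{P}(\hat{\mathcal{S}} = \mathcal{S}) \geq 1 - \mathbb{P}(\hat{\mathcal{S}} \not\subseteq \mathcal{S}) - \mathbb{P}(\mathcal{S} \not\subseteq \hat{\mathcal{S}}) \longrightarrow 1,
\]
the first subtracted probability vanishing by Lemma 3 and the second by the signal-strength argument. The genuine obstacle in this circle of ideas---controlling false positives for a nonconvex $q \in (0,1)$ penalty with a possibly diverging number of coordinates---is precisely what Lemma 3 of \citet{kock2013oracle} already supplies. The contribution of Theorem \ref{prop.2i} is the comparatively routine observation that a fixed signal floor together with $\ell_2$-consistency rules out false negatives, thereby upgrading one-sided screening to exact support recovery.
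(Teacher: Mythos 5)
Your proposal is correct and takes essentially the same approach as the paper's proof: both reduce exact support recovery to (i) the no-false-positives conclusion of Lemma 3 of \citet{kock2013oracle} and (ii) ruling out false negatives by combining the signal floor \(b_0\) from Assumption (A5) with the \(\ell_2\)-consistency of the bridge estimator from Theorem 1 of \citet{kock2013oracle}, finishing with a union bound. The only cosmetic difference is that the paper works through the event \(\{\lVert \hat{\beta}_N - \beta_N \rVert_\infty \leq b_0/2\}\) and then bounds \(\lVert \cdot \rVert_\infty \leq \lVert \cdot \rVert_2\), whereas you note directly that a zeroed-out relevant coordinate forces \(\lVert \boldsymbol{\hat{\beta}} - \boldsymbol{\beta}_0 \rVert_2 \geq b_0\).
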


\begin{proof}
Provided in Appendix \ref{sec.main.lemmas}.
\end{proof}

\begin{remark}
It appears that the strategy for the proof of of Theorem \ref{prop.2i} could be used to analogously extend Lemma 2, and Theorem 2(i), in \citet{Huang2008}.
\end{remark}

Again we verify one assumption at a time.

\begin{itemize}

\item Although Lemma 3 from \citet{kock2013oracle} requires their Assumption (A2) in the statement of the theorem, examining the proof we see that a lower bound on the minimum eigenvalue is not needed---the only thing that is needed is an almost sure upper bound on the maximum eigenvalue of \(\boldsymbol{\hat{\Sigma}} \left( (\boldsymbol{Z} \boldsymbol{D}_N^{-1})_{(\cdot \cdot)\mathcal{S}} \right) \), the empirical Gram matrix of the columns of \(\boldsymbol{Z} \boldsymbol{D}_N^{-1}\) corresponding to the relevant features in \(\boldsymbol{\theta}_N^*\). (In the proof, note the use of \(\tau_2\) on p. 140, and note that \(\tau_1\) is not needed.) Our Assumption (R4) is sufficient for this, since we assume the maximum eigenvalue of the empirical Gram matrix for \(\boldsymbol{\tilde{Z}}\), \(e_{2N}\) is finite, which ensures that the maximum eigenvalue of the Gram matrix for the full \(\boldsymbol{Z} \boldsymbol{D}_N^{-1}\) is finite due to Lemma \ref{sig.min.lem}. This maximum eigenvalue upper-bounds the maximum eigenvalue of the submatrix \(\boldsymbol{\hat{\Sigma}} \left( (\boldsymbol{Z} \boldsymbol{D}_N^{-1})_{(\cdot \cdot)\mathcal{S}} \right) \).


\item Our Assumption (R5) matches Assumption (A4) in \citet{kock2013oracle} in our setting where \(T_N = T\) is fixed.

\item Finally, we show that Assumption (A7) in \citet{kock2013oracle} is satisfied under our Assumption (R4). We need to show that
\[
e_{1N} \sqrt{\frac{e_{2N}}{p_N}} = \mathcal{O}_p (1).
\]
But since \(e_{1N} \leq e_{2N}\), almost surely
\[
e_{1N} \sqrt{\frac{e_{2N}}{p_N}} \leq \sqrt{\frac{e_{2N}^3}{p_N}} \leq \sqrt{\frac{e_{\text{max}}^3}{p_N}} = \mathcal{O} \left( \frac{1}{\sqrt{p_N}} \right)
,
\]
so \(e_{1N} \sqrt{e_{2N}/p_N}= \mathcal{O}_p (1)\).

\end{itemize}

So the first part of the result is now immediate. It only remains to show that 
\[
 \lim_{N \to \infty} \mathbb{P} \left(  \sqrt{NT} \boldsymbol{\hat{\theta}}^{(q)}_{\mathcal{S}^c} = \boldsymbol{0} \right) = 1 .
 \]
 From part \((c)\), we have that \(\lim_{n \to \infty} \mathbb{P} \left( \boldsymbol{\hat{\theta}}_{\mathcal{S}^C}^{(q)} = \boldsymbol{0} \right) = 1 \), so using \(\{\sqrt{NT} \boldsymbol{\hat{\theta}}^{(q)}_{\mathcal{S}^c}  = 0\} = \{\boldsymbol{\hat{\theta}}^{(q)}_{\mathcal{S}^c}  = 0 \} \), 
\begin{align*}
 \lim_{N \to \infty}  \mathbb{P} \left( \sqrt{NT} \boldsymbol{\hat{\theta}}^{(q)}_{\mathcal{S}^c}  = 0 \right)  = \lim_{N \to \infty} \mathbb{P} \left( \boldsymbol{\hat{\theta}}^{(q)}_{\mathcal{S}^c}  = 0 \right)  = 1.
\end{align*}

\item


To prove parts (d) - (h), we will present an extension of Theorem 2(ii) from \citet{kock2013oracle} that yields our result when in applied in our setting. Then we will show that the needed assumptions are satisfied to conclude the proof. In particular, our extension (a) obtains the asymptotic convergence of linear combinations of the coefficients of both relevant and irrelevant features; (b) establishes the asymptotic convergence of a statistic formed using the estimated Gram matrix on the selected set of coefficients, not the population covariance matrix of the true active set; (c) is like (a) when \(\boldsymbol{\psi}_N\) is estimated rather than known and is independent of \(\hat{\beta}_N\); (d) is like (b) when \(\boldsymbol{\psi}_N\) is estimated rather than known; and (e) is like (d) when \(\hat{\beta}_N\) and \(\boldsymbol{\hat{\psi}}_N\) are estimated on the same data set, so they are dependent.

 \begin{theorem}[Extension of Theorem 2(ii) from \citealt{kock2013oracle}]\label{prop.ext.2} Suppose the assumptions of Theorem 2(ii) from \citet{kock2013oracle} are satisfied.

 \begin{enumerate}[(a)]
 
 \item (Asymptotic normality when coefficients equal to 0 are included.) Let \(\{\boldsymbol{\psi}_N\}_{N=1}^\infty\) be a sequence of real-valued vectors where \(\boldsymbol{\psi}_N \in \mathbb{R}^{p_N}\) has the structure \((\boldsymbol{\alpha}^\top, \boldsymbol{b}_N^\top)^\top\) where \(\boldsymbol{\alpha} \in \mathbb{R}^k\) is a fixed, finite, nonzero vector and \(\{\boldsymbol{b}_N\}_{N=1}^\infty\) is any sequence of constants where each \(\boldsymbol{b}_N \in \mathbb{R}^{p_N - k}\) contains all finite entries. Then
 \[
  \sqrt{ NT}  \boldsymbol{\psi}_N^\top ( \hat{\beta}_N - \beta_0) \xrightarrow{d} \mathcal{N}\left( 0, \sigma^2 \boldsymbol{\alpha}^\top \left( \lim_{N \to \infty}  \E \left[ \Sigma_{1N} \right] \right)^{-1} \boldsymbol{\alpha} \right)
 .
 \]
 
 \item 
 
 (Asymptotic normality when coefficients equal to 0 are included and the covariance matrix and selected set are estimated.) Define a function \(\boldsymbol{\alpha}\) that maps any \(\mathcal{A} \in \mathcal{P}(\mathbb{N})\) to a fixed vector \(\boldsymbol{\alpha}(\mathcal{A}) \in (\mathbb{R} \setminus \{0\})^{|\mathcal{A}|}\). Define the sequence of vectors \( \{ \boldsymbol{\psi}_N(\mathcal{A}_N) \}_{N=1}^\infty \) to have components in \(\mathcal{A}_N\) equal to \(\boldsymbol{\alpha}(\mathcal{A}_N)\) and components \(\{\boldsymbol{b}_N (\mathcal{A}_N)\}_{N=1}^\infty \) in the components corresponding to \([p_N] \setminus \mathcal{A}_N\) that are all finite for all \(N\) (but are otherwise arbitrary). Define the sequence of random variables \(U_b(\mathcal{A}) \) to equal
\[
\frac{1}{\sigma}  
\sqrt{ \frac{NT}{\boldsymbol{\alpha}(\mathcal{A})^\top \left( \boldsymbol{\hat{\Sigma}}( \boldsymbol{X}_{(\cdot \cdot)\mathcal{A}} ) \right)^{-1} \boldsymbol{\alpha}(\mathcal{A})}}  \cdot  \left( \boldsymbol{\psi}_N(\mathcal{A})^\top  \left( \hat{\beta} -  \beta_0 \right) \right), 
\]
if \(\mathcal{A} \neq \emptyset\) and \( \boldsymbol{\hat{\Sigma}}( \boldsymbol{X}_{(\cdot \cdot) \mathcal{A}})\) is invertible and 0 otherwise, where \(\hat{\beta}\) is the bridge estimator of \citet[Section 3]{kock2013oracle}. Let \(\hat{\mathcal{S}}_N := \{j : \hat{\beta}_{Nj} \neq 0\}\) be the estimated selected set on a data set of size \(N\), and suppose \(\beta_0 \neq \boldsymbol{0} \). Then if \(\boldsymbol{\alpha}(\mathcal{S}) \neq \boldsymbol{0}\), the sequence \(\{U_b(\hat{\mathcal{S}}_N)\}\) converges in distribution to a standard Gaussian random variable.
 
 \item 
 
(Asymptotic normality when coefficients equal to 0 are included and the weights are estimated.) For any \(\mathcal{A} \in \mathcal{P}(\mathbb{N})\), suppose that \(\boldsymbol{\hat{\alpha}}_N(\mathcal{A}_N)\) is a random vector with the property that for a given data set, \(\mathcal{A} = \mathcal{A}'\) implies \(\boldsymbol{\hat{\alpha}}_N(\mathcal{A}) = \boldsymbol{\hat{\alpha}}_N(\mathcal{A}') \). Define the sequence of random vectors \( \{ \boldsymbol{\hat{\psi}}_N(\mathcal{A}_N) \}_{N=1}^\infty \) to have components in \(\mathcal{A}_N\) equal to \(\boldsymbol{\hat{\alpha}}_N(\mathcal{A}_N)\) and components \(\{\boldsymbol{\hat{b}}_N (\mathcal{A}_N)\}_{N=1}^\infty \) in the components corresponding to \([p_N] \setminus \mathcal{A}_N\) that are all almost surely finite for all \(N\) (but are otherwise arbitrary). Assume that for any fixed \(\beta_{0} \in \mathbb{R}^{p_N}\) with the \(k\) entries in the \(\mathcal{S}\) positions finite and the remaining entries all equal to 0,
\[
\sqrt{NT}   \beta_0^\top \left(  \boldsymbol{\hat{\psi}}_N(\mathcal{S})  -   \boldsymbol{\psi}_N(\mathcal{S})\right) \xrightarrow{d} \mathcal{N}(0, v_\psi(\beta_0, \mathcal{S}))
.
\]
Define the sequence of random variables \(U_c(\mathcal{A}) \) to equal
\[
\sqrt{ NT}  \left( \boldsymbol{\hat{\psi}}_N(\mathcal{A})^\top  \hat{\beta}_N - \boldsymbol{\psi}_N(\mathcal{A}) \beta_0 \right)
,
\]
where \(\hat{\beta}_N\) is the bridge estimator of \citet[Section 3]{kock2013oracle}. Assume \(\boldsymbol{\hat{\alpha}}_N(\mathcal{A}_N)\) is independent of \(\hat{\beta}_N\). Let \(\hat{\mathcal{S}}_N := \{j : \hat{\beta}_{Nj} \neq 0\}\) be the estimated selected set on a data set of size \(N\), and suppose \(\beta_0 \neq \boldsymbol{0} \). Then 
\[
U_c(\hat{\mathcal{S}}_N) \xrightarrow{d} \mathcal{N} \left( 0 ,\sigma^2 \boldsymbol{\alpha}^\top \left( \lim_{N \to \infty}  \E \left[ \Sigma_{1N} \right] \right)^{-1} \boldsymbol{\alpha}  +  v_\psi(\beta_0, \mathcal{S}) \right)
.
\]

 \item 
 
(Asymptotic normality when coefficients equal to 0 are included and the weights, covariance matrix, and selected set are estimated.)  Define \(\boldsymbol{\hat{\alpha}}_N(\mathcal{A}_N)\) and the sequence of random vectors \( \{ \boldsymbol{\hat{\psi}}_N(\mathcal{A}_N) \}_{N=1}^\infty \) as in part \((c)\), and assume they have the same properties. Define the sequence of random variables \(U_d(\mathcal{A}) \) to equal
\[
\sqrt{ \frac{NT}{ \hat{v}_N (\mathcal{A}) }}  \left( \boldsymbol{\hat{\psi}}_N(\mathcal{A})^\top  \hat{\beta}_N - \boldsymbol{\psi}_N(\mathcal{A})^\top \beta_0 \right)
\]
if \(\mathcal{A} \neq \emptyset , \boldsymbol{\hat{\Sigma}}( \boldsymbol{X}_{(\cdot \cdot) \mathcal{A}}) \text{ is invertible, and }  \boldsymbol{\hat{\alpha}}_N(\mathcal{A}) \neq \boldsymbol{0}\); and equals 0 otherwise; where \(\hat{\beta}_N\) is the bridge estimator of \citet[Section 3]{kock2013oracle},
\begin{equation}\label{var.est.kock}
\hat{v}_N (\mathcal{A}) :=  \sigma^2 \boldsymbol{\hat{\alpha}}(\mathcal{A})^\top \left( \boldsymbol{\hat{\Sigma}}( \boldsymbol{X}_{(\cdot \cdot)\mathcal{A}} ) \right)^{-1} \boldsymbol{\hat{\alpha}}(\mathcal{A}) + \hat{v}_\psi(\hat{\beta}_N, \mathcal{A})
,
\end{equation}
 and \(\hat{v}_\psi(\hat{\beta}_N, \mathcal{A})\) is an estimator of \(v_\psi(\beta_0, \mathcal{A})\) that satisfies \(\hat{v}_\psi(\hat{\beta}_N, \mathcal{S}) \xrightarrow{p} v_\psi(\beta_0, \mathcal{S})\). Assume \(\boldsymbol{\hat{\alpha}}_N(\mathcal{A}_N)\) is independent of \(\hat{\beta}_N\), and suppose \(\beta_0 \neq \boldsymbol{0} \). Then \(U_d(\hat{\mathcal{S}}_N)\) converges in distribution to a standard Gaussian random variable.

  \item 
  
 (Asymptotic subgaussianity when coefficients equal to 0 are included and the weights, covariance matrix, and selected set are estimated.) Let the sequence of random variables \(U_e(\mathcal{A})\) be defined the same was as \(U_d(\mathcal{A})\), except that we assume that \(\boldsymbol{\hat{\alpha}}_N(\mathcal{A}_N)\) is estimated on the same data as \(\hat{\beta}_N\), so they are not independent, and we use a different variance estimator:
  \[
U_e(\mathcal{A}) := \sqrt{ \frac{NT}{ \hat{v}_N^{(\text{cons})} (\mathcal{A}) }}  \left( \boldsymbol{\hat{\psi}}_N(\mathcal{A})^\top  \hat{\beta}_N - \boldsymbol{\psi}_N(\mathcal{A})^\top \beta_0 \right)
\]
(unless \(\hat{v}_N^{(\text{cons})} (\mathcal{A}) \) is not well-defined or equals 0, in which case \(U_e(\mathcal{A}) = 0\)), where
\begin{align}
\hat{v}_N^{(\text{cons})} (\mathcal{A}) :=  ~ & \sigma^2  \boldsymbol{\hat{\alpha}}(\mathcal{A})^\top \left(\boldsymbol{\hat{\Sigma}}( \boldsymbol{X}_{(\cdot \cdot)\mathcal{A}} )  \right)^{-1} \boldsymbol{\hat{\alpha}}(\mathcal{A})   + \hat{v}_\psi(\hat{\beta}_N, \mathcal{A})  \nonumber
\\ & + 2 \sqrt{\sigma^2  \boldsymbol{\hat{\alpha}}(\mathcal{A})^\top \left(\boldsymbol{\hat{\Sigma}}( \boldsymbol{X}_{(\cdot \cdot)\mathcal{A}} )  \right)^{-1} \boldsymbol{\hat{\alpha}}(\mathcal{A}) \cdot  \hat{v}_\psi(\hat{\beta}_N, \mathcal{A})}
 \label{var.upper.bound.subgaus}
\end{align}
is a conservative variance estimator. Assume again that \(\hat{v}_\psi(\hat{\beta}_N, \mathcal{S}) \xrightarrow{p} v_\psi(\beta_0, \mathcal{S})\) and \(\beta_0 \neq \boldsymbol{0} \). Then \(U_e(\hat{\mathcal{S}}_N)\) converges in distribution to a mean-zero subgaussian random variable with variance at most 1.

 \end{enumerate}
 
 \end{theorem}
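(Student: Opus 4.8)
The plan is to reduce to the correctly-selected event, decompose the scaled numerator of $U_e$ into a bridge-estimation term, a weight-estimation term, and a negligible cross term, and then control the limiting distribution through a tightness/subsequence argument combined with the conservative, Cauchy--Schwarz-based form of the variance estimator. First I would invoke selection consistency: under the maintained assumptions Theorem \ref{prop.2i} gives $\mathbb{P}(\hat{\mathcal{S}}_N = \mathcal{S}) \to 1$, and moreover $\mathbb{P}(\hat{\beta}_{N,\mathcal{S}^c} = \boldsymbol{0}) \to 1$. Hence it suffices to analyze $U_e(\mathcal{S})$ on the event $\{\hat{\mathcal{S}}_N = \mathcal{S}\}$, where, using $\hat{\beta}_{N,\mathcal{S}^c}=\boldsymbol{0}$ and $\beta_{0,\mathcal{S}^c}=\boldsymbol{0}$, the numerator equals $\sqrt{NT}\,(\boldsymbol{\hat{\alpha}}_N(\mathcal{S})^\top \hat{\beta}_{N,\mathcal{S}} - \boldsymbol{\alpha}^\top \beta_{0,\mathcal{S}})$. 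I would split this as $T_1 + T_2 + T_3$, where
\[
T_1 := \sqrt{NT}\,\boldsymbol{\alpha}^\top(\hat{\beta}_{N,\mathcal{S}} - \beta_{0,\mathcal{S}}), \quad T_2 := \sqrt{NT}\,(\boldsymbol{\hat{\alpha}}_N(\mathcal{S}) - \boldsymbol{\alpha})^\top \beta_{0,\mathcal{S}}, \quad T_3 := \sqrt{NT}\,(\boldsymbol{\hat{\alpha}}_N(\mathcal{S}) - \boldsymbol{\alpha})^\top(\hat{\beta}_{N,\mathcal{S}} - \beta_{0,\mathcal{S}}).
\]

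Next I would identify the limits of the three pieces. By Theorem \ref{prop.ext.2}(a) (equivalently Theorem 2(ii) of \citet{kock2013oracle}), $T_1 \xrightarrow{d} \sqrt{a_\infty}\,Z_1$ with $Z_1 \sim \mathcal{N}(0,1)$ and $a_\infty := \sigma^2 \boldsymbol{\alpha}^\top (\lim_N \E[\Sigma_{1N}])^{-1}\boldsymbol{\alpha}$; the same result also gives $\sqrt{NT}(\hat{\beta}_{N,\mathcal{S}} - \beta_{0,\mathcal{S}}) = \mathcal{O}_{\mathbb{P}}(1)$. By the maintained weight assumption (the CLT for $\sqrt{NT}\,\beta_0^\top(\boldsymbol{\hat{\psi}}_N(\mathcal{S}) - \boldsymbol{\psi}_N(\mathcal{S}))$), $T_2 \xrightarrow{d} \sqrt{b_\infty}\,Z_2$ with $Z_2 \sim \mathcal{N}(0,1)$ and $b_\infty := v_\psi(\beta_0,\mathcal{S})$. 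For the cross term, Cauchy--Schwarz gives $|T_3| \le \|\boldsymbol{\hat{\alpha}}_N(\mathcal{S}) - \boldsymbol{\alpha}\|_2 \cdot \|\sqrt{NT}(\hat{\beta}_{N,\mathcal{S}} - \beta_{0,\mathcal{S}})\|_2$, so $T_3 = o_{\mathbb{P}}(1)$ since the first factor is $o_{\mathbb{P}}(1)$ (consistency of the estimated weights) and the second is $\mathcal{O}_{\mathbb{P}}(1)$. Finally, consistency of $\boldsymbol{\hat{\Sigma}}(\boldsymbol{X}_{(\cdot\cdot)\mathcal{S}})$, the assumed $\hat{v}_\psi(\hat{\beta}_N,\mathcal{S}) \xrightarrow{p} v_\psi(\beta_0,\mathcal{S})$, and the continuous mapping theorem give $\hat{v}_N^{(\text{cons})}(\mathcal{S}) \xrightarrow{p} (\sqrt{a_\infty} + \sqrt{b_\infty})^2$, using that \eqref{var.upper.bound.subgaus} is exactly $a + b + 2\sqrt{ab} = (\sqrt{a}+\sqrt{b})^2$ in its two variance components; this limit is a strictly positive constant.

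The heart of the argument handles the dependence between $\hat{\beta}_N$ and $\boldsymbol{\hat{\alpha}}_N$, which are computed on the same sample, so I cannot appeal to joint asymptotic normality. Instead I would argue along subsequences. The pair $(T_1/\sqrt{a_\infty},\, T_2/\sqrt{b_\infty})$ is tight because each coordinate converges in distribution, so every subsequence has a further subsequence along which it converges jointly to some $(Z_1, Z_2)$, each marginal being standard Gaussian. Along such a subsequence, writing $U_e(\mathcal{S}) = (T_1 + T_2 + o_{\mathbb{P}}(1))/\sqrt{\hat{v}_N^{(\text{cons})}(\mathcal{S})}$ and applying Slutsky's theorem with the constant positive limit of the denominator yields
\[
U_e(\hat{\mathcal{S}}_N) \xrightarrow{d} \frac{\sqrt{a_\infty}\,Z_1 + \sqrt{b_\infty}\,Z_2}{\sqrt{a_\infty} + \sqrt{b_\infty}} = \lambda Z_1 + (1-\lambda) Z_2 =: Z, \qquad \lambda := \frac{\sqrt{a_\infty}}{\sqrt{a_\infty} + \sqrt{b_\infty}} \in [0,1].
\]
The limit $Z$ is mean-zero since each $Z_i$ is; it is subgaussian because the subgaussian (Orlicz) $\psi_2$-norm obeys the triangle inequality, so $\|Z\|_{\psi_2} \le \lambda\|Z_1\|_{\psi_2} + (1-\lambda)\|Z_2\|_{\psi_2}$, which equals the standard-Gaussian value since both $Z_i$ are standard normal; and its variance is $\lambda^2 + (1-\lambda)^2 + 2\lambda(1-\lambda)\rho \le (\lambda + (1-\lambda))^2 = 1$, where $\rho := \E[Z_1 Z_2]$ satisfies $|\rho| \le 1$ by Cauchy--Schwarz. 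Since every subsequential limit is mean-zero, subgaussian, and has variance at most $1$, the conclusion follows.

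I expect the main obstacle to be precisely this dependence. Because no joint central limit theorem is available for $(\hat{\beta}_N, \boldsymbol{\hat{\alpha}}_N)$, the true limiting variance $a_\infty + b_\infty + 2c_\infty$ (with $c_\infty$ the limiting covariance of $T_1$ and $T_2$) is not identified; the conservative estimator \eqref{var.upper.bound.subgaus} sidesteps this by replacing the unknown covariance with its Cauchy--Schwarz upper bound $\sqrt{a_\infty b_\infty}$, which is exactly why the limit can only be certified as subgaussian with variance at most one rather than as exactly Gaussian. The two supporting technical points requiring care are (i) verifying $\|\boldsymbol{\hat{\alpha}}_N(\mathcal{S}) - \boldsymbol{\alpha}\|_2 = o_{\mathbb{P}}(1)$ so the cross term is negligible, and (ii) the tightness/subsequence bookkeeping needed to pass to a joint limit without assuming convergence of the full sequence.
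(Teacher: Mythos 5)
The statement you were asked to prove is the full five-part theorem, but your proposal proves only part (e). Parts (a)--(d) are never established, and they cannot be discharged by your parenthetical ``Theorem \ref{prop.ext.2}(a) (equivalently Theorem 2(ii) of \citet{kock2013oracle})'': part (a) is strictly stronger than Kock and Callot's Theorem 2(ii), since it attaches arbitrary finite weights \(\boldsymbol{b}_N\) to the \(p_N - k\) zero coefficients. The paper proves (a) by splitting \(\sqrt{NT}\,\boldsymbol{\psi}_N^\top(\hat{\beta}_N - \beta_0) = \sqrt{NT}\,\boldsymbol{\alpha}^\top(\hat{\beta}_{1N} - \beta_{10}) + \sqrt{NT}\,\boldsymbol{b}_N^\top\hat{\beta}_{2N}\) and using Kock and Callot's Theorem 2(i), i.e.\ \(\mathbb{P}(\hat{\beta}_{2N} = \boldsymbol{0}) \to 1\), to show the second term converges in probability to zero --- a step with no analogue in your write-up (a naive bound \(\lVert \boldsymbol{b}_N \rVert_2 \lVert \hat{\beta}_{2N}\rVert_2\) would not work since the dimension \(p_N - k\) grows). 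Parts (b)--(d) are comparatively routine --- Slutsky plus the device \(\mathbb{P}(U(\hat{\mathcal{S}}_N) = U(\mathcal{S})) \geq \mathbb{P}(\hat{\mathcal{S}}_N = \mathcal{S}) \to 1\), and for (c) the independence of \(\boldsymbol{\hat{\alpha}}_N\) and \(\hat{\beta}_N\) to add the two asymptotic variances --- but they are part of the statement, are used downstream (Theorems \ref{te.oracle.thm}, \ref{te.asym.norm.thm}, \ref{te.asym.norm.thm.gen.cond}), and must be written out. As a proof of the stated theorem, the proposal is therefore incomplete. (One mitigating point: inside your part-(e) argument the reduction to the event \(\{\hat{\mathcal{S}}_N = \mathcal{S}\}\) makes \(\hat{\beta}_N\) vanish off \(\mathcal{S}\), so the term \(T_1\) really does only require Kock and Callot's original Theorem 2(ii) rather than the full part (a); your argument for (e) does not circularly depend on the unproven parts.)

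For part (e) itself your argument is correct and structurally the same as the paper's: the same three-term decomposition (the paper's part-(c) split, restricted to \(\mathcal{S}\)), the same Cauchy--Schwarz conservative variance \(a + b + 2\sqrt{ab} = (\sqrt{a}+\sqrt{b})^2\), and the same ``sum of dependent mean-zero subgaussians is subgaussian'' fact (the paper's Lemma \ref{subg.tails}, which you replace by the triangle inequality for the \(\psi_2\)-norm). Your tightness/subsequence step is in fact more careful than the paper's treatment, which writes the limit informally as a sum \(\mathcal{N}(0,a) + \mathcal{N}(0,b)\) without justifying that the two marginally convergent terms converge jointly; your argument makes explicit that what is really shown is that every subsequential limit is mean-zero, subgaussian, and has variance at most one. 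Two caveats you share with the paper: (i) negligibility of the cross term \(T_3\) uses componentwise consistency of \(\boldsymbol{\hat{\alpha}}_N(\mathcal{S})\), which does not follow from the stated hypothesis (a CLT for the single linear combination \(\beta_0^\top(\boldsymbol{\hat{\psi}}_N(\mathcal{S}) - \boldsymbol{\psi}_N(\mathcal{S}))\)); you flag this honestly, while the paper asserts the rate ``termwise''; (ii) since subsequential limits need not coincide (the correlation \(\rho\) may differ along subsequences), neither your argument nor the paper's literally establishes convergence in distribution of the full sequence to a single law --- only that all limit points have the claimed properties, which is what the conservative confidence intervals actually require.
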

 
 \begin{proof}
Provided in Appendix \ref{sec.main.lemmas}.
\end{proof}
 
 \begin{remark}
It appears that the strategies used to prove Theorem \ref{prop.ext.2} could be used to analogously extend Theorem 2(ii) in \citet{Huang2008}.
\end{remark}

It only remains to show that the assumptions of Theorem \ref{prop.ext.2} are satisfied in our setting. First, we note that \citet{kock2013oracle} states on p. 123 that a fixed \(T\) is enough to satisfy the last two assumptions of Theorem 2(ii). We have already shown that Assumptions (FE1), (FE2), (FE3), (RE4), (A1), (A3), (A4), (A5), (A6), and (A7) in \citet{kock2013oracle} are satisfied. Due to Theorem 3(i) from \citet{kock2013oracle}, our fixed \(T\) setting ensures that the uniform integrability assumption of Theorem 2(ii) from \citet{kock2013oracle} is satisfied. We have assumed fixed sparsity in Assumption S(\(s\)), and on p. 122 \citet{kock2013oracle} explains why our fixed \(T\) setting satisfies all of the remaining assumptions except for (A2).

We conclude by showing that the conditions implied by Assumption (A2) in \citet{kock2013oracle} are satisfied. \citet{kock2013oracle} makes an assumption that the minimum eigenvalue of the empirical Gram matrix of the columns of \(\boldsymbol{Z} \boldsymbol{D}_N^{-1}\) corresponding to the nonzero features in \(\boldsymbol{\theta}_N^*\), 
\[
\frac{1}{NT} \left( (\boldsymbol{Z} \boldsymbol{D}_N^{-1})_{(\cdot \cdot) \mathcal{S}}\right)^\top (\boldsymbol{Z} \boldsymbol{D}_N^{-1})_{(\cdot \cdot) \mathcal{S}},
\]
 is bounded away from 0 almost surely. Call this minimum eigenvalue \(e_{1\mathcal{S}N}\). This assumption is used in two places, both in the proof of Theorem 2(ii).

\begin{enumerate}[(a)]

\item  On p. 142, this assumption is used to establish that
\begin{align}
\frac{q}{2}  \frac{1}{e_{1\mathcal{S}N}} \left(\frac{b_0}{2} \right)^{q - 1} \lambda_N \sqrt{\frac{s_N}{NT}} \xrightarrow{p} ~ & 0 \nonumber
\\ \iff \frac{1}{e_{1\mathcal{S}N}}  \frac{\lambda_N}{\sqrt{N}}  \xrightarrow{p} ~ & 0 \nonumber
\\ \iff \frac{1}{e_{1\mathcal{S}N}}  \in ~ & o_{\mathbb{P}} \left( \frac{\sqrt{N}}{\lambda_N} \right)  \nonumber
,
\end{align}
where the first equivalence follows because the removed quantities are constant under our assumptions (specifically, that \(T\) and \(s_N\) are fixed). Note that our assumption \eqref{new.assum.2} ensures that this condition holds for the full empirical Gram matrix due to Lemma \ref{sig.min.lem}. The minimum eigenvalue of the full empirical Gram matrix lower bounds \(e_{1\mathcal{S}N}\), so \eqref{new.assum.2} is sufficient for this condition to hold.

\item The assumption is used again on p. 144 of \citet{kock2013oracle}, where it is sufficient for the minimum eigenvalue of the (population) matrix 
\begin{equation}\label{smaller.pop.eq}
\E \left[  \boldsymbol{\hat{\Sigma}} \left( \left(  \boldsymbol{Z} \boldsymbol{D}_N^{-1} \right)_{(\cdot \cdot) \mathcal{S}} \right) \right] 
\end{equation}
to be bounded away from 0. This is a submatrix of \eqref{min.eigen.eq}, so the minimum eigenvalue of \eqref{min.eigen.eq} is a lower bound for the minimum eigenvalue of \(\E \left[  \boldsymbol{\hat{\Sigma}} \left( \left(  \boldsymbol{Z} \boldsymbol{D}_N^{-1} \right)_{(\cdot \cdot) \mathcal{S}} \right) \right] \). Therefore our assumption in (R6) that the minimum eigenvalue of \eqref{min.eigen.eq} is bounded away from 0 is enough to satisfy this assumption.

\end{enumerate}


%
%
%

So our Assumption (R6), in combination with all of the previous assumptions, is enough to apply Theorem \ref{prop.ext.2}, and the proof is complete.

\end{enumerate}

\section{Proofs of Results Stated in the Appendix}\label{app.extra.proofs}

In Appendix \ref{proofs.par.trend.ciuu.app} we prove results that were stated but not proven in Appendix \ref{par.trend.ciuu.app}. In Appendix \ref{main.thm.lems} we prove supporting results for our main theorems.

\subsection{Proofs of Results from Appendix \ref{par.trend.ciuu.app}}\label{proofs.par.trend.ciuu.app}

\begin{proof}[Proof of Theorem \ref{unconf.ccts.cts.thm}]

First we provide a counterexample to show that (CCTS) does not imply (CTS). This counterexample is inspired by discussions in Sections 2 and 3 of \citet{callaway2023treatment} that touch on closely related issues. For simplicity, let \(T = 2\) and \(\mathcal{R} = \{2\}\). Assume that treatment status is not independent of the covariates, so \( \E[\boldsymbol{X}_i \mid W_i = 2] \neq  \E[\boldsymbol{X}_i \mid W_i = 0]\). Consider a setting where \eqref{trend.params} from Assumption (LINS) holds with \(\boldsymbol{\xi}_2^* \neq \boldsymbol{0}\): 
\[
 \E \left[ \tilde{y}_{(i 2)}(0) - \tilde{y}_{(i1)}(0) \mid W_i = 0, \boldsymbol{X}_{i}   \right] =  \gamma_2^* + \boldsymbol{X}_i^\top \boldsymbol{\xi}_2^* \qquad a.s.
 ,
 \]
 and the same condition holds for the untreated potential outcomes of the treated units:
\[
 \E \left[ \tilde{y}_{(i 2)}(0) - \tilde{y}_{(i1)}(0) \mid W_i = 2, \boldsymbol{X}_{i} \right] =  \gamma_2^* + \boldsymbol{X}_i^\top \boldsymbol{\xi}_2^* \qquad a.s.
 \]
(CCTS) holds since \(\E [ \tilde{y}_{(i2)} (0) - \tilde{y}_{(i1)}(0) \mid  W_i = 2, \boldsymbol{X}_i ]  = \E [ \tilde{y}_{(i2)} (0) - \tilde{y}_{(i1)}(0) \mid  W_i = 0, \boldsymbol{X}_i ]\) almost surely. But for \(r \in \{0, 2\}\),
\begin{align*}
  \E [ \tilde{y}_{(i2)} (0) - \tilde{y}_{(i1)}(0) \mid  W_i = r  ]  
  = ~ &   \E \left[ \E \left[ \tilde{y}_{(i 2)}(0) - \tilde{y}_{(i1)}(0) \mid W_i = r, \boldsymbol{X}_{i} \right]  \mid W_i = r \right]
\\  = ~ &    \gamma_2^* + \E[\boldsymbol{X}_i \mid W_i = r]^\top \boldsymbol{\xi}_2^*
,
\end{align*}
so \(  \E [ \tilde{y}_{(i2)} (0) - \tilde{y}_{(i1)}(0) \mid  W_i = 2  ]   \neq   \E [ \tilde{y}_{(i2)} (0) - \tilde{y}_{(i1)}(0) \mid  W_i = 0  ]  \) since \( \E[\boldsymbol{X}_i \mid W_i = 2] \neq  \E[\boldsymbol{X}_i \mid W_i = 0]\).

Next we show an example where (CTS) holds but (CCTS) does not. Let \(\mathcal{R} = \{2, \ldots, T\}\), and suppose \(\boldsymbol{X}_i\) has a discrete distribution, only taking on values \(\boldsymbol{x}_1\) and \(\boldsymbol{x}_2\). To simplify notation, define \(G_{it} :=  \tilde{y}_{(it)} (0) - \tilde{y}_{(i1)}(0) \) for all \(t \in \{2, \ldots, T\}\). Then (CTS) is equivalent to
\begin{align}
\E [ G_{it} \mid  W_i  ]  =  ~ & \E [ G_{it}  ] ,  \qquad  i \in [N], t \in \{2, \ldots, T\}  \nonumber
\\ \iff \qquad \E [ G_{it} \mid  W_i  = 0 ]  =  ~ & \E [ G_{it}   \mid  W_i  = r] ,  \qquad r \in \mathcal{R}, i \in [N], t \in \{2, \ldots, T\} \label{cts.equiv}
\end{align}
and (CCTS) is equivalent to, for all \(i \in [N]\) and \( t \in \{2, \ldots, T\}\),
\begin{align}
\E [  G_{it} \mid  W_i , \boldsymbol{X}_{i} ]  = ~ &  \E [ G_{it} \mid  \boldsymbol{X}_i ]  \nonumber
\\ \iff \qquad \E [  G_{it} \mid  W_i = 0 , \boldsymbol{X}_{i} = \boldsymbol{x}_j ]  = ~ &  \E [ G_{it} \mid  W_i = r ,  \boldsymbol{X}_{i} = \boldsymbol{x}_j ] ,  \qquad    j \in \{1, 2\}, r \in \mathcal{R}  \label{ccts.equiv}
.
\end{align}
Our goal is to construct an example where \eqref{cts.equiv} holds and \eqref{ccts.equiv} does not hold. Suppose that for some \(a > 0\), for all \(i \in [N]\), \(t \in \{2, \ldots, T\}\)
\begin{align*}
\E [  G_{it} \mid  W_i = 0, \boldsymbol{X}_{i}  = \boldsymbol{x}_1 ]  = ~ &  -a,
\\ \E [  G_{it} \mid  W_i = r, \boldsymbol{X}_{i}  = \boldsymbol{x}_1 ]  = ~ &  -ra, \qquad r \in \mathcal{R},
\\ \E [  G_{it} \mid  W_i = 0, \boldsymbol{X}_{i}  = \boldsymbol{x}_2 ]  = ~ &  a, \qquad \text{and}
\\ \E [  G_{it} \mid  W_i = r, \boldsymbol{X}_{i}  = \boldsymbol{x}_2 ]  = ~ &  ra, \qquad r \in \mathcal{R}
.
\end{align*}
By inspection, one can see that \eqref{ccts.equiv} does not hold. We are done if we show that \eqref{cts.equiv} does. Suppose \(W_i \) and \(\boldsymbol{X}_i\) are independent: let \( \mathbb{P}(  W_i = w, \boldsymbol{X}_{i}  = \boldsymbol{x}_j )  = 1/(2[R+1]) = 1/(2T)\) for \(w \in \{0\} \cup \mathcal{R}\) and \(j \in \{1, 2\}\). We have
 \begin{align*}
&  \E [ G_{it} \mid  W_i = 0  ] 
\\ = ~ &    \frac{1}{\mathbb{P}(W_{i} = 0)} \big( \E [  G_{it} \mid  W_i = 0, \boldsymbol{X}_{i}  = \boldsymbol{x}_1 ]   \mathbb{P}(  W_i = 0, \boldsymbol{X}_{i}  = \boldsymbol{x}_1 )
\\ &  + \E [  G_{it} \mid  W_i = 0, \boldsymbol{X}_{i}  = \boldsymbol{x}_2 ]   \mathbb{P}(  W_i = 0, \boldsymbol{X}_{i}  = \boldsymbol{x}_2 ) \big)
\\ = ~ &    T \left(  -a  \cdot \frac{1}{2T}
  + a \cdot \frac{1}{2T} \right)
\\ = ~ & 0
 \end{align*}
 and for any \(r \in \mathcal{R}\),
 \begin{align*}
&  \E [ G_{it} \mid  W_i = r  ] 
\\ = ~ &    \frac{1}{\mathbb{P}(W_{i} = r)} \big( \E [  G_{it} \mid  W_i = r, \boldsymbol{X}_{i}  = \boldsymbol{x}_1 ]   \mathbb{P}(  W_i = r, \boldsymbol{X}_{i}  = \boldsymbol{x}_1 )
\\ &  + \E [  G_{it} \mid  W_i = r, \boldsymbol{X}_{i}  = \boldsymbol{x}_2 ]   \mathbb{P}(  W_i = r, \boldsymbol{X}_{i}  = \boldsymbol{x}_2 ) \big)
\\ = ~ &    T \left(  -ra  \cdot \frac{1}{2T}
  + ra \cdot \frac{1}{2T} \right)
\\ = ~ & 0
.
 \end{align*}

\end{proof}

\begin{proof}[Proof of Theorem \ref{te.interp.prop}(b) and (c)]

For any \(r \in \mathcal{R}\) and \(t \in \{r, \ldots, T\}\),
\begin{align}
& \tau_{rt}^*  \nonumber
\\ = ~ & \E[ \tau_{rt}^* +   \left(\boldsymbol{X}_i -  \E \left[ \boldsymbol{X}_{i} \mid  W_i = r \right] \right)^\top  \boldsymbol{\rho}_{rt}^* \mid W_i = r]  \nonumber
\\  \stackrel{(a)}{=} ~ &  \E \left[   \E\left[ \tilde{y}_{(i t)}(r)  - \tilde{y}_{(i 1)}(r) \mid W_i = r, \boldsymbol{X}_i  \right]   \mid W_i = r \right]
 - \E \left[ \E \left[  \tilde{y}_{(it)}(0) - \tilde{y}_{(i1)}(0)  \mid  W_i = 0, \boldsymbol{X}_{i}  \right] \mid W_i = r \right] \nonumber
\\ \stackrel{(b)}{=} ~ &  \E \left[ \E [ \tilde{y}_{(i t)}(r) \mid W_i = r, \boldsymbol{X}_{i} ] \mid W_i = r \right]  -   \E \bigg[ \bigg(  \E[\tilde{y}_{(i1)}(0) \mid W_i = r,   \boldsymbol{X}_{i}  ]   \nonumber
\\ &  +  \E [  \tilde{y}_{(i t)}(0) - \tilde{y}_{(i1)}(0) \mid W_i = 0, \boldsymbol{X}_{i}  ] \bigg) \mid W_i = r \bigg] \label{part.c.start}
\\ \stackrel{(c)}{=} ~ &  \E \left[ \E [ \tilde{y}_{(i t)}(r) \mid W_i = r, \boldsymbol{X}_{i} ] \mid W_i = r \right]  -   \E \bigg[ \bigg(  \E[\tilde{y}_{(i1)}(0) \mid W_i = r,   \boldsymbol{X}_{i}  ]   \nonumber
\\ &  +  \E [  \tilde{y}_{(i t)}(0) - \tilde{y}_{(i1)}(0) \mid W_i = 0  ] \bigg) \mid W_i = r \bigg] \nonumber
\\ = ~ &  \E \left[  \tilde{y}_{(i t)}(r) \mid W_i = r \right]  - \Big(  \E[\tilde{y}_{(i1)}(0) \mid W_i = r ]   \nonumber
  +  \E \left[ \tilde{y}_{(i t)}(0) - \tilde{y}_{(i1)}(0) \mid W_i = 0   \right]   \Big) \nonumber
\\ \stackrel{(d)}{=} ~ &  \E \left[  \tilde{y}_{(i t)}(r)  \mid W_i = r \right]  -   \E \left[  \tilde{y}_{(i t)}(0)  \mid W_i = r \right]  \nonumber
\\ = ~ & \tau_{\text{ATT}} (r,  t) 
\nonumber
 ,
\end{align}
where in \((a)\) we used \eqref{treat.eff.def.covs} from (LINS), in \((b)\) we used (CNAS), in \((c)\) we used (CIUN), and in \((d)\) we used (CTSA). This proves part (b).

Next we prove part (c). Observe that under \eqref{trend.params} and  \eqref{treat.eff.def.covs} from (LINS), for any \(r \in \mathcal{R}\) and \(t \in \{r, \ldots, T\}\), starting from \eqref{part.c.start} we have
\begin{align}
& \tau_{rt}^*  \nonumber
\\ = ~ &  \E \left[ \E [ \tilde{y}_{(i t)}(r) \mid W_i = r, \boldsymbol{X}_{i} ] \mid W_i = r \right]  -   \E \bigg[ \bigg(  \E[\tilde{y}_{(i1)}(0) \mid W_i = r,   \boldsymbol{X}_{i}  ]   \nonumber
\\ &  +  \E [  \tilde{y}_{(i t)}(0) - \tilde{y}_{(i1)}(0) \mid W_i = 0, \boldsymbol{X}_{i}  ] \bigg) \mid W_i = r \bigg] \nonumber
 %
\\ \stackrel{(e)}{=} ~ &  \E \left[ \E [ \tilde{y}_{(i t)}(r) \mid W_i = r, \boldsymbol{X}_{i}  ] \mid W_i = r \right]  -   \E \bigg[ \bigg(  \E[\tilde{y}_{(i1)}(0) \mid W_i = r,   \boldsymbol{X}_{i}  ]   \nonumber
\\ &  + \gamma_t^* + \boldsymbol{X}_i^\top \boldsymbol{\xi}_t^* \bigg) \mid W_i = r \bigg] \nonumber
\\ = ~ &  \E \left[  \tilde{y}_{(i t)}(r) \mid W_i = r \right]  - \left(  \E[\tilde{y}_{(i1)}(0) \mid W_i = r ]   \nonumber
  + \gamma_t^* + \E[\boldsymbol{X}_i \mid W_i = r] ^\top \boldsymbol{\xi}_t^* \right) \nonumber
\\ \stackrel{(f)}{=} ~ &  \E \left[  \tilde{y}_{(i t)}(r) \mid W_i = r \right]  - \Big(  \E[\tilde{y}_{(i1)}(0) \mid W_i = r ]   \nonumber
\\ &   +  \E \left[ \tilde{y}_{(i t)}(0) - \tilde{y}_{(i1)}(0) \mid W_i = 0, \boldsymbol{X}_{i} =  \E[\boldsymbol{X}_i \mid W_i = r]  \right]   \Big) \nonumber
\\ =  ~ &  \E \left[  \tilde{y}_{(i t)}(r)  - \tilde{y}_{(i1)}(0)\mid W_i = r \right]   -  \E \left[ \tilde{y}_{(i t)}(0) - \tilde{y}_{(i1)}(0) \mid W_i = 0   \right]   \nonumber
\\ &  +  \E \left[ \tilde{y}_{(i t)}(0) - \tilde{y}_{(i1)}(0) \mid W_i = 0   \right]  -  \E \left[ \tilde{y}_{(i t)}(0) - \tilde{y}_{(i1)}(0) \mid W_i = 0, \boldsymbol{X}_{i} =  \E[\boldsymbol{X}_i \mid W_i = r]  \right]   \nonumber
\\ \stackrel{(g)}{=}  ~ &  \tau_{\text{ATT}}(r, t)  \nonumber  +  \E \left[ \E \left[ \tilde{y}_{(i t)}(0) - \tilde{y}_{(i1)}(0) \mid W_i = 0, \boldsymbol{X}_{i}   \right]   \mid W_i = 0 \right]    - \left(  \gamma_t^* + \E[\boldsymbol{X}_i \mid W_i = r] ^\top \boldsymbol{\xi}_t^*  \right) \nonumber
\\ \stackrel{(h)}{=}  ~ &  \tau_{\text{ATT}}(r, t)  \nonumber  +    \E \left[  \gamma_t^* + \boldsymbol{X}_i^\top \boldsymbol{\xi}_t^*  \mid W_i = 0 \right]      - \left(  \gamma_t^* + \E[\boldsymbol{X}_i \mid W_i = r] ^\top \boldsymbol{\xi}_t^*  \right) \nonumber
\\ =  ~ &  \tau_{\text{ATT}}(r, t)  \nonumber  +   \gamma_t^* +   \E \left[  \boldsymbol{X}_i  \mid W_i = 0 \right] ^\top \boldsymbol{\xi}_t^*      - \left(  \gamma_t^* + \E[\boldsymbol{X}_i \mid W_i = r] ^\top \boldsymbol{\xi}_t^*  \right) \nonumber
\\ =  ~ &  \tau_{\text{ATT}}(r, t)  \nonumber  +  \left(  \E \left[  \boldsymbol{X}_i  \mid W_i = 0 \right]     -  \E[\boldsymbol{X}_i \mid W_i = r]  \right)  ^\top \boldsymbol{\xi}_t^* \nonumber
,
\end{align}
where in (e), (f), and (h) we used \eqref{trend.params} from (LINS) and (g) uses an equality we found under (CTSA) in the proof of part (b) above as well as \eqref{trend.params} from (LINS).

\end{proof}

\subsection{Proofs of Supporting Results for Main Theorems}\label{main.thm.lems}

\begin{proof}[Proof of Lemma \ref{lem.d.express}]
First, \(\mathfrak{W} \leq (T-1)^2\) because treatment effects are interactions between the \(R \leq T-1\) cohort indicators and (at most) \(T-1\) treatment times.

Next, for any \(t \in \mathbb{N}\), define
\begin{equation}\label{d.1.expres}
\boldsymbol{D}^{(1)}(t) := 
\begin{pmatrix}
1 & -1 & 0 & 0 & \cdots & 0 & 0
\\ 0 &  1 & -1 & 0 & \cdots & 0  & 0
\\ 0 & 0 & 1 & -1 & \cdots & 0  & 0
\\ \vdots & \vdots & \vdots & \vdots & \ddots & \vdots & \vdots
\\ 0 & 0 & 0 & 0 & \cdots & 1 & -1
\\ 0 & 0 & 0 & 0 & \cdots & 0 & 1
\end{pmatrix}
 \in \mathbb{R}^{t \times t}
.
\end{equation}
Note that this is the linear transformation we apply to several groups of coefficients in \(\boldsymbol{\beta}\). For example, for the vector of cohort fixed effects \(\boldsymbol{\nu} \in \mathbb{R}^R\),
\[
\boldsymbol{D}^{(1)}(R) \boldsymbol{\nu} = \begin{pmatrix}
 \nu_2 - \nu_1
\\ \vdots
\\ \nu_R - \nu_{R-1}
\\ \nu_R 
\end{pmatrix}
,
\]
which are the transformed coefficients to which we apply an \(\ell_q\) penalty. We similarly penalize \(\boldsymbol{\gamma}\) as well as \(\boldsymbol{\zeta}_{j}\), the coefficients for the interactions of the cohort fixed effects and covariate \(j\) (separately for each of the \(d\) covariates) and \(\boldsymbol{\xi}_j\), the coefficients for the interactions of the time fixed effects and covariate \(j\). So if we rearrange the columns of \(\boldsymbol{\tilde{Z}}\) corresponding to the cohort-covariate interactions so that all of the cohort interactions with the base effect of feature \(j=1\) come first, then the cohort interactions with feature 2, and so on, we can write this \(R d_N \times Rd_N \) block diagonal part of \(\boldsymbol{D}_N\) as \( \boldsymbol{I}_{d_N} \otimes \boldsymbol{D}^{(1)}(R) \), where \(\otimes\) denotes the Kronecker product. We can likewise rearrange the columns of \(\boldsymbol{\tilde{Z}}\) and write the part of \(\boldsymbol{D}_N\) corresponding to the time-covariate interactions as \( \boldsymbol{I}_{d_N} \otimes \boldsymbol{D}^{(1)}(T - 1) \).

We penalize the coefficients \(\boldsymbol{\kappa}\) directly, so this part of \(\boldsymbol{D}\) can be represented simply as \(\boldsymbol{I}_{d_N}\).

Next we consider the matrix \(\boldsymbol{D}^{(2)}(\mathcal{R})\) we use to transform the base treatment effects \(\boldsymbol{\tau}\) for \(\ell_1\) penalization. The matrix \(\boldsymbol{D}^{(2)}(\mathcal{R})\) has a similar structure to \(\boldsymbol{D}^{(1)}(t)\), except that we directly penalize each cohort's first treatment effect, not the last one, and we penalize each cohort's first treatment effect towards the previous cohort's first treatment effect. So denoting \(\mathcal{R} = (r_1, r_2, \ldots, r_R)\), \(\boldsymbol{D}^{(2)}(\mathcal{R})\) has a nearly block diagonal structure
\begin{equation}\label{d.2.expres}
\boldsymbol{D}^{(2)}(\mathcal{R}) := 
\begin{pmatrix}
\boldsymbol{\tilde{D}}^{(1)}(1)^\top & \boldsymbol{0} & \boldsymbol{0} & \cdots & \boldsymbol{0} & \boldsymbol{0}
\\ \boldsymbol{U}(r_1)  &  \boldsymbol{\tilde{D}}^{(1)}(2)^\top  & \boldsymbol{0} & \cdots & \boldsymbol{0} & \boldsymbol{0}
\\ \boldsymbol{0} &  \boldsymbol{U}(r_2)  &  \boldsymbol{\tilde{D}}^{(1)}(3)^\top  & \cdots & \boldsymbol{0} & \boldsymbol{0}
\\ \vdots & \vdots & \vdots & \ddots & \vdots & \vdots
\\ \boldsymbol{0} & \boldsymbol{0} &  \boldsymbol{0} & \cdots &   \boldsymbol{U}(r_R)  &  \boldsymbol{\tilde{D}}^{(1)}(R)^\top 
\end{pmatrix} \in \mathbb{R}^{\mathfrak{W} \times \mathfrak{W}}
,
\end{equation}
where \(\boldsymbol{\tilde{D}}(x) := \boldsymbol{D}^{(1)}(T - r_x + 1)  \) and
\[
 \boldsymbol{U}(w) := \begin{pmatrix}
 -1 & \boldsymbol{0}_{1 \times (T - w)}
 \\ \boldsymbol{0}_{(T - w) \times 1} & \boldsymbol{0}_{(T - w) \times (T - w)}
 \end{pmatrix} \in \mathbb{R}^{(T - w + 1) \times (T - w + 1)}
 .
\]
Then \(\boldsymbol{D}^{(2)}(\mathcal{R})  \boldsymbol{\tau}\) yields the transformed treatment effects to which we apply an \(\ell_q\) penalty, and similarly, for every \(j \in [d_N]\), \(\boldsymbol{D}^{(2)}(\mathcal{R}) \boldsymbol{\rho}_j\) yields the transformed coefficients of the interactions between the \(j^{\text{th}}\) centered covariate and the treatment effects. Putting this all together, after re-ordering the columns appropriately we can write \(\boldsymbol{D}_N\) as \eqref{d.expres}.

Next we show that these matrices are invertible by explicitly presenting their inverses. One can verify by matrix multiplication that
\begin{equation}\label{d.1.inv.exp}
\boldsymbol{D}^{(1)}(t)^{-1} = \begin{pmatrix}
1 & 1 & 1 & \cdots & 1
\\ 0 & 1 & 1 & \cdots & 1
\\ 0 & 0 & 1 & \cdots & 1
\\ \vdots & \vdots & \vdots & \ddots & \vdots
\\ 0 & 0 & 0 & \cdots & 1
\end{pmatrix} \in \mathbb{R}^{t \times t}
,
\end{equation}
Similarly,
\begin{align}
 (\boldsymbol{D}^{(2)}(\mathcal{R}))^{-1} 
 = ~ &  \begin{pmatrix}
 \boldsymbol{\tilde{D}}^{-1}(1)^\top & \boldsymbol{0} & \cdots & \boldsymbol{0} & \boldsymbol{0}
\\ \boldsymbol{\tilde{U}}(r_1, r_2)  & \boldsymbol{\tilde{D}}^{-1}(2)^\top  & \cdots & \boldsymbol{0} & \boldsymbol{0}
\\ \vdots & \vdots & \ddots & \vdots & \vdots
\\  \boldsymbol{\tilde{U}}(r_1, r_R)  &  \boldsymbol{\tilde{U}}(r_2, r_R)  &  \cdots &   \boldsymbol{\tilde{U}}(r_{R-1}, r_R)  &  \boldsymbol{\tilde{D}}^{-1}(R)^\top
\end{pmatrix} \label{d.2.inv.exp}
,
\end{align}
where
\[
\boldsymbol{\tilde{U}}(w_1, w_2) = \begin{pmatrix}
1 & 0 & 0 & \cdots & 0
\\ 1 & 0 & 0 & \cdots & 0
\\ \vdots & \vdots & \vdots & \ddots & \vdots
\\ 1 & 0 & 0 & \cdots & 0
\end{pmatrix} \in \mathbb{R}^{(T - w_2 + 1) \times (T - w_1 + 1) }
.
\]
Finally, using the properties of block diagonal matrices and the fact that \(\boldsymbol{D}^{(1)}(t)\) and \(\boldsymbol{D}^{(2)}(\mathcal{R})\) are invertible, we have from \eqref{d.expres} that \(\boldsymbol{D}_N^{-1} \) is 
\begin{align*}
  \operatorname{diag} \Big( & \boldsymbol{D}^{(1)}(R)^{-1} , ((\boldsymbol{D}^{(1)}(T - 1))^{-1} ,  \boldsymbol{I}_{d_N} , \boldsymbol{I}_{d_N} \otimes \boldsymbol{D}^{(1)}(R)^{-1} , \boldsymbol{I}_{d_N} \otimes ((\boldsymbol{D}^{(1)}(T - 1))^{-1} ,
\\ & (\boldsymbol{D}^{(2)}(\mathcal{R}))^{-1} , \boldsymbol{I}_{d_N} \otimes (\boldsymbol{D}^{(2)}(\mathcal{R}))^{-1} \Big)
,
\end{align*}
matching the expression from \eqref{d.inv.exp}.

Lastly, we show the claimed block diagonal structure of \(\boldsymbol{D}_N\) and \(\boldsymbol{D}_N^{-1}\). Examining the expressions for \(\boldsymbol{D}_N\) and \(\boldsymbol{D}_N^{-1}\), it is clear that they are both block diagonal with blocks of size \(R \leq T - 1\), \(T-1\), and \(\mathfrak{W} \leq (T-1)^2\). We have shown in this proof that all of the blocks contain entries equal to 0, -1, or 1.

\end{proof}

\begin{proof}[Proof of Lemma \ref{bound.coefs}]
By Lemma \ref{lem.d.express}, \(\boldsymbol{D}_N^{-1}\) has a block diagonal structure with blocks of size at most \( (T - 1)^2 \times (T-1)^2\) containing entries that equal either 0 or 1, so \(\lVert  \boldsymbol{D}_N^{-1}  \rVert_\infty \leq (T-1)^2\), where for a matrix \(\lVert \cdot \rVert_\infty\) denotes the maximum absolute row sum. Therefore
\begin{align*}
\lVert \boldsymbol{\beta}_N^* \rVert_\infty = ~ & \left\lVert  \boldsymbol{D}_N^{-1} \boldsymbol{\theta}_N^*  \right\rVert_\infty
 \leq   \left\lVert  \boldsymbol{D}_N^{-1} \right \rVert_{\infty}   \left\lVert  \boldsymbol{\theta}_N^*  \right\rVert_\infty
 \leq  (T-1)^2 b_1
,
\end{align*}
where in the last step we used Assumption (R3).
\end{proof}

\begin{proof}[Proof of Lemma \ref{lem.asym.norm.cond.means}] 

To prove the result we will apply Theorem 5.1.7 in \citet{lehmann1999elements}, which we reproduce here:

\begin{theorem}[Theorem 5.1.7 from \citet{lehmann1999elements}]\label{thm.5.1.7}
Let \(\boldsymbol{a}  = (a_1, \ldots, a_k)\) be a vector of constants, and let \(\{\boldsymbol{U}^{(N)}\}\) and \(\{\boldsymbol{A}^{(N)}\}\) be sequences of random variables, with \(\boldsymbol{U}^{(N)} = (U_1^{(N)}, \ldots, U_k^{(N)})\) and \(\boldsymbol{A}^{(N)} = (A_1^{(N)}, \ldots, A_k^{(N)})\). Suppose the sequence of random variables \(T_N = \sum_{j=1}^k a_j U_j^{(N)} \) converges in distribution to a random variable \(T\), the random variables \(A_j^{(N)} \xrightarrow{p} a_j\) for each \(j \in \{1, \ldots, k\}\), and the random variables \(U_j^{(N)} = \mathcal{O}_{\mathbb{P}}(1)\). Then the sequence of random variables \( \sum_{j=1}^k A_j^{(n)} U_j^{(N)} \) converges in distribution to \(T\).
\end{theorem}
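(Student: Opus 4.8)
The plan is to prove this as a generalized Slutsky result by peeling off the randomness in the coefficients as an asymptotically negligible remainder. First I would decompose the target statistic as
\[
\sum_{j=1}^k A_j^{(N)} U_j^{(N)} = \underbrace{\sum_{j=1}^k a_j U_j^{(N)}}_{=\, T_N} + \underbrace{\sum_{j=1}^k \left( A_j^{(N)} - a_j \right) U_j^{(N)}}_{=:\, R_N}.
\]
By hypothesis $T_N \xrightarrow{d} T$, so it suffices to show $R_N \xrightarrow{p} 0$ and then invoke Slutsky's theorem in the form that $T_N \xrightarrow{d} T$ together with $R_N \xrightarrow{p} 0$ implies $T_N + R_N \xrightarrow{d} T$. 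The entire argument thus reduces to controlling the remainder $R_N$.

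The crux is to show that each summand $(A_j^{(N)} - a_j) U_j^{(N)}$ converges to zero in probability, since it is a product of a term that is $o_{\mathbb{P}}(1)$ (as $A_j^{(N)} \xrightarrow{p} a_j$ gives $A_j^{(N)} - a_j \xrightarrow{p} 0$) and a term that is merely bounded in probability ($U_j^{(N)} = \mathcal{O}_{\mathbb{P}}(1)$). I would argue this directly via a truncation bound: fixing $\eta > 0$ and $\epsilon > 0$, the $\mathcal{O}_{\mathbb{P}}(1)$ property supplies a finite $M$ and a threshold $N_1$ with $\mathbb{P}(|U_j^{(N)}| > M) < \epsilon/2$ for all $N > N_1$, whence
\[
\mathbb{P}\left( \left| \left( A_j^{(N)} - a_j \right) U_j^{(N)} \right| > \eta \right) \leq \mathbb{P}\left( \left| U_j^{(N)} \right| > M \right) + \mathbb{P}\left( \left| A_j^{(N)} - a_j \right| > \frac{\eta}{M} \right).
\]
The first term on the right is below $\epsilon/2$, and the second tends to zero because $A_j^{(N)} - a_j \xrightarrow{p} 0$, so the left-hand side falls below $\epsilon$ for all sufficiently large $N$, establishing that the product is $o_{\mathbb{P}}(1)$.

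Finally, because $k$ is a fixed finite integer, $R_N$ is a finite sum of terms each $o_{\mathbb{P}}(1)$, and so $R_N \xrightarrow{p} 0$ by a routine union bound (or by repeatedly using that the sum of two $o_{\mathbb{P}}(1)$ sequences is $o_{\mathbb{P}}(1)$). Combining this with $T_N \xrightarrow{d} T$ through Slutsky's theorem gives $\sum_{j=1}^k A_j^{(N)} U_j^{(N)} = T_N + R_N \xrightarrow{d} T$, which completes the proof. The main obstacle is the middle step: it is the only place where the boundedness-in-probability and the convergence-in-probability hypotheses genuinely interact, and one must split the probability at a truncation level $M$ rather than naively multiplying rates. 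Everything else is bookkeeping, since the decomposition is purely algebraic, the finite-sum step relies on $k$ not growing with $N$, and the conclusion is an immediate application of Slutsky's theorem.
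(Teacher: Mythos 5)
Your proof is correct. Note that the paper itself offers no proof of this statement---it is imported verbatim as Theorem 5.1.7 of \citet{lehmann1999elements}---so there is no internal argument to compare against; your route (the algebraic decomposition \(\sum_{j=1}^k A_j^{(N)} U_j^{(N)} = T_N + R_N\), the truncation bound at level \(M\) showing each \((A_j^{(N)} - a_j)U_j^{(N)}\) is \(o_{\mathbb{P}}(1)\) since \(o_{\mathbb{P}}(1)\cdot \mathcal{O}_{\mathbb{P}}(1) = o_{\mathbb{P}}(1)\), and a final application of Slutsky's theorem) is exactly the standard textbook argument, and indeed essentially Lehmann's own.
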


Given the joint probability distribution \(\{(W_i, \boldsymbol{X}_i)\}_{i=1}^N\), consider the iid random variables \(Z_i\) defined by
\begin{align*}
Z_i   := ~ & \sum_{r \in \{0\} \cup \mathcal{R}} \frac{ \mathbbm{1}\{W_i = r\}}{\mathbb{P}(W_i = r)} \boldsymbol{\psi}_{r}^\top \left( \boldsymbol{X}_i - \E[ \boldsymbol{X}_i \mid W_i  = r]  \right)
\\ = ~ & \sum_{r \in \{0\} \cup \mathcal{R}}  \frac{\mathbbm{1}\{W_i = r\}}{\mathbb{P}(W_i = r)}  \boldsymbol{\psi}_{r}^\top  \left( \boldsymbol{X}_i - \boldsymbol{\mu}_{r} \right) 
.
\end{align*}
We have
\begin{align*}
\E [ Z_i] = ~ &  \E [ \E[ Z_i \mid W_i ] ] 
 =  \E \left[  \frac{1}{\mathbb{P}(W_i )}  \boldsymbol{\psi}_{W_i}^\top \left( \E[ \boldsymbol{X}_i \mid W_i ]  - \E[ \boldsymbol{X}_i \mid W_i ] \right) \right]
 =  
   \E \left[ 0 \right]
 =  0
\end{align*}
and
\begin{align}
\Var (Z_i) = ~ & \E[Z_i^2] \nonumber
=  \E \left [ \E\left[ Z_i^2 \mid W_i \right] \right]  \nonumber
\\ = ~ & \E\left[    \frac{1}{(\mathbb{P}(W_i ))^2} \boldsymbol{\psi}_{W_i}^\top  \Cov ( \boldsymbol{X}_i \mid W_i)\boldsymbol{\psi}_{W_i} \right]  \nonumber
\\ = ~ & \sum_{r \in \{0\} \cup \mathcal{R}}    \frac{1}{\mathbb{P}(W_i = r )^2}  \boldsymbol{\psi}_{r}^\top  \Cov ( \boldsymbol{X}_i \mid W_i = r)\boldsymbol{\psi}_{r} \cdot \mathbb{P}(W_i = r)  \nonumber
%
%
\\ = ~ & v_R \nonumber 
.
\end{align}
We can lower-bound \(v_R\) uniformly across choices of \(\boldsymbol{\psi}_r\): since \(\boldsymbol{\psi}_r \neq \boldsymbol{0}\) for at least one \(r = r^*\) and by assumption the smallest eigenvalue of \( \Cov ( \boldsymbol{X}_i \mid W_i = r^*)\) is bounded below by some \(\lambda_{\text{min}} > 0\), we have
\[
v_R \geq  \frac{\boldsymbol{\psi}_{r^*}^\top  \Cov ( \boldsymbol{X}_i \mid W_i = r^*)\boldsymbol{\psi}_{r^*}}{\mathbb{P}(W_i = r^*)}  \geq \frac{ \lambda_{\text{min}}  \lVert \boldsymbol{\psi}_{r^*}\rVert_2^2}{ 1}  > 0
.
\]
Similarly, since the largest eigenvalue of \( \Cov ( \boldsymbol{X}_i \mid W_i = r) \leq \lambda_{\text{max}}\) for some finite \(\lambda_{\text{max}}\) for all \(r \in \{0\} \cup \mathcal{R}\) and \(\min_{r \in \{0\} \cup \mathcal{R}}\{  \mathbb{P}(W_i = r) \} > 0\) by Assumption (R1), we can upper-bound \(v_R\) by a constant. So by the central limit theorem, for any choice of finite \(\{\boldsymbol{\psi}_r\}\) with at least one \(\boldsymbol{\psi}_{r^*} \neq \boldsymbol{0}\),
\begin{align}
\frac{1}{\sqrt{ N }} \sum_{i=1}^N  Z_i \xrightarrow{d}  ~ & \mathcal{N}(0,  v_R) \nonumber
\\ \iff \qquad \frac{1}{\sqrt{ N    }} \sum_{r \in \{0\} \cup \mathcal{R}} \sum_{\{i: W_i = r\}}     \frac{1}{\mathbb{P}(W_i = r)} \boldsymbol{\psi}_{r}^\top  \left( \boldsymbol{X}_i - \boldsymbol{\mu}_{r}  \right)  \xrightarrow{d}  ~ & \mathcal{N}(0,   v_R) \nonumber
\\ \iff \qquad \frac{1}{\sqrt{ N  }} \sum_{r \in \{0\} \cup \mathcal{R}}   \frac{N_r}{\mathbb{P}(W_i = r)}   \boldsymbol{\psi}_{r}^\top  \left(  \frac{1}{N_r} \sum_{\{i: W_i = r\}}  \boldsymbol{X}_i - \boldsymbol{\mu}_{r}  \right)  \xrightarrow{d}  ~ & \mathcal{N}(0,  v_R) \nonumber
\\ \iff \qquad   \sum_{r \in \{0\} \cup \mathcal{R}}   \sqrt{N}  \frac{N_r / N}{\mathbb{P}(W_i = r)}  \boldsymbol{\psi}_{r}^\top  \left(  \boldsymbol{\overline{X}}_r - \boldsymbol{\mu}_{r}  \right)  \xrightarrow{d}  ~ & \mathcal{N}(0,   v_R) \label{dist.result}
.
\end{align}
Now consider the random variables
\begin{align*}
\boldsymbol{A}^{(N)}  :=  ~ & \left( \pi_{0} \cdot  \frac{N}{\sum_{i=1}^N \mathbbm{1}\{W_i = 0\}},    \pi_{r_1} \cdot  \frac{N}{\sum_{i=1}^N \mathbbm{1}\{W_i = r_1\}}, \ldots,    \pi_{r_R} \cdot  \frac{N}{\sum_{i=1}^N \mathbbm{1}\{W_i = r_R\}} \right)
\\ =  ~ & \left(  \pi_{0} \cdot \frac{N}{N_{0}},   \pi_{r_1} \cdot  \frac{N}{N_{r_1}}, \ldots,  \pi_{r_R} \cdot  \frac{N}{N_{r_R}} \right)
.
\end{align*}
By the continuous mapping theorem \(\boldsymbol{A}^{(N)}  \xrightarrow{p}   \left(  1, 1, \ldots,  1 \right)\). Further, let
\begin{align*}
\boldsymbol{U}^{(N)} := ~ & \Bigg( \sum_{i=1}^N \frac{\mathbbm{1}\{W_i = 0\}}{\pi_{0}}    \boldsymbol{\psi}_{0}^\top \left( \boldsymbol{X}_i - \E[ \boldsymbol{X}_i \mid W_i  = 0]  \right) ,  \sum_{i=1}^N \frac{\mathbbm{1}\{W_i = r_1\}}{\pi_{r_1}}    \boldsymbol{\psi}_{r_1}^\top \left( \boldsymbol{X}_i - \E[ \boldsymbol{X}_i \mid W_i  = r_1]  \right),
\\ & \cdots, \sum_{i=1}^N \frac{\mathbbm{1}\{W_i = r_R\}}{\pi_{r_R}}    \boldsymbol{\psi}_{r_R}^\top \left( \boldsymbol{X}_i - \E[ \boldsymbol{X}_i \mid W_i  = r_R]  \right)  \Bigg)
.
\end{align*}
Since from \eqref{dist.result} we have
\[
\sum_{r \in \{0\} \cup \mathcal{R}}   1 \cdot  \frac{1}{\sqrt{N}}U_{r}^{(N)} \xrightarrow{d}  \mathcal{N}(0,   v_R) 
,
\]
we have from Theorem \ref{thm.5.1.7}
\begin{align*}
\sum_{r \in \{0\} \cup \mathcal{R}}   A_{r}^{(N)} \cdot  \frac{1}{\sqrt{N}}U_{r}^{(N)} \xrightarrow{d} ~ &   \mathcal{N}(0,   v_R) 
\\ \iff \qquad   \sum_{r \in \{0\} \cup \mathcal{R}}   \sqrt{\frac{N}{ v_R  }}    \boldsymbol{\psi}_{r}^\top  \left(  \boldsymbol{\overline{X}}_r - \boldsymbol{\mu}_{r}  \right)  \xrightarrow{d}  ~ &  \mathcal{N}(0, 1)
.
\end{align*}
Finally we prove the last claim. It follows from the above that
\[
N \lVert   \boldsymbol{\overline{X}}_r - \boldsymbol{\mu}_{r}  \rVert_2^2
\]
converges in distribution to a generalized chi-squared random variable, so by Theorem 2.4(i) in \citet{van2000asymptotic} and Proposition 1.8(iii) in \citet{Garcia-Portugues2023} we have
\begin{align*}
N \lVert   \boldsymbol{\overline{X}}_r - \boldsymbol{\mu}_{r}  \rVert_2^2 = ~ & \mathcal{O}_{\mathbb{P}}(1)
\\ \implies \qquad  \lVert   \boldsymbol{\overline{X}}_r - \boldsymbol{\mu}_{r}  \rVert_2^2 = ~ & \mathcal{O}_{\mathbb{P}} \left( \frac{1}{N} \right)
\\ \implies \qquad  \lVert   \boldsymbol{\overline{X}}_r - \boldsymbol{\mu}_{r}  \rVert_2 = ~ & \mathcal{O}_{\mathbb{P}} \left( \frac{1}{\sqrt{N}} \right)
.
\end{align*}

%

\end{proof}

\begin{proof}[Proof of Lemma \ref{tilde.pi.r.o.p.cond.lemma}]
For any \(r \in \mathcal{R}\),
\begin{align*}
& \left| \frac{\boldsymbol{a}^\top \boldsymbol{\hat{\pi}}(\boldsymbol{x})}{\boldsymbol{b}^\top \boldsymbol{\hat{\pi}}(\boldsymbol{x})} -  \frac{\boldsymbol{a}^\top \boldsymbol{{\pi}}(\boldsymbol{x})}{\boldsymbol{b}^\top \boldsymbol{{\pi}}(\boldsymbol{x})}  \right| 
\\  = ~ &  \left|   \frac{\boldsymbol{a}^\top \boldsymbol{\hat{\pi}}(\boldsymbol{x}) \cdot \boldsymbol{b}^\top \boldsymbol{{\pi}}(\boldsymbol{x}) - \boldsymbol{a}^\top \boldsymbol{{\pi}}(\boldsymbol{x}) \boldsymbol{b}^\top \boldsymbol{\hat{\pi}}(\boldsymbol{x})}{\boldsymbol{b}^\top \boldsymbol{\hat{\pi}}(\boldsymbol{x}) \boldsymbol{b}^\top \boldsymbol{{\pi}}(\boldsymbol{x})}  \right|
\\  = ~ &  \left|   \frac{\boldsymbol{a}^\top \boldsymbol{\hat{\pi}}(\boldsymbol{x}) \cdot \boldsymbol{b}^\top \boldsymbol{{\pi}}(\boldsymbol{x}) 
- \boldsymbol{a}^\top \boldsymbol{{\pi}}(\boldsymbol{x}) \cdot \boldsymbol{b}^\top \boldsymbol{{\pi}}(\boldsymbol{x})
+  \boldsymbol{a}^\top \boldsymbol{{\pi}}(\boldsymbol{x}) \cdot \boldsymbol{b}^\top \boldsymbol{{\pi}}(\boldsymbol{x})
- \boldsymbol{a}^\top \boldsymbol{{\pi}}(\boldsymbol{x}) \boldsymbol{b}^\top \boldsymbol{\hat{\pi}}(\boldsymbol{x})}{\boldsymbol{b}^\top \boldsymbol{\hat{\pi}}(\boldsymbol{x}) \boldsymbol{b}^\top \boldsymbol{{\pi}}(\boldsymbol{x})}  \right|
\\  = ~ &  \left|   \frac{\boldsymbol{b}^\top \boldsymbol{{\pi}}(\boldsymbol{x}) \cdot \boldsymbol{a}^\top \left( \boldsymbol{\hat{\pi}}(\boldsymbol{x}) 
- \boldsymbol{{\pi}}(\boldsymbol{x}) \right)
+  \boldsymbol{a}^\top \boldsymbol{{\pi}}(\boldsymbol{x}) \cdot \boldsymbol{b}^\top \left( \boldsymbol{{\pi}}(\boldsymbol{x})
- \boldsymbol{\hat{\pi}}(\boldsymbol{x}) \right)}
{\boldsymbol{b}^\top  \boldsymbol{\hat{\pi}}(\boldsymbol{x}) \boldsymbol{b}^\top \boldsymbol{{\pi}}(\boldsymbol{x})}  \right|
 \\ = ~ &  \frac{1}{ \boldsymbol{b}^\top  \boldsymbol{\hat{\pi}}(\boldsymbol{x})  }  \mathcal{O}_{\mathbb{P}}\left( a_N \right)
 ,
\end{align*}
where in the last step we used our consistency assumption and the finiteness of \(\boldsymbol{a}\) and \(\boldsymbol{b}\). We are done if we can show that
\[
 \frac{1}{ \boldsymbol{b}^\top  \boldsymbol{\hat{\pi}}(\boldsymbol{x})  }    =  \mathcal{O}_{\mathbb{P}}(1)
.
\]
That is, we want to show that for any \(\epsilon > 0\) there exists a finite \(M_\epsilon > 0\) and \(N_\epsilon > 0\) such that for all \(N > N_\epsilon\),
\[
\mathbb{P} \left(  \frac{1}{ \boldsymbol{b}^\top  \boldsymbol{\hat{\pi}}(\boldsymbol{x})  }  > M_\epsilon \right) < \epsilon
\qquad \iff \qquad \mathbb{P} \left(  \boldsymbol{b}^\top  \boldsymbol{\hat{\pi}}(\boldsymbol{x}) <  \frac{1}{M_\epsilon} \right) < \epsilon
.
\]
Fix \(\epsilon > 0\). By assumption, for any \(\boldsymbol{x}\) in the support of \(\boldsymbol{X}_i\) we have \(| \boldsymbol{b}^\top(  \boldsymbol{\hat{\pi}}(\boldsymbol{x}) -  \boldsymbol{\pi}(\boldsymbol{x}) ) | = \mathcal{O}_{\mathbb{P}}(a_N)\), so there exists a finite \(M_\epsilon^{(1)} > 0\) and \(N_\epsilon^{(1)} > 0\) such that for all \(N > N_\epsilon^{(1)}\),
\begin{align*}
\mathbb{P} \left( \frac{ | \boldsymbol{b}^\top(  \boldsymbol{\hat{\pi}}(\boldsymbol{x}) -  \boldsymbol{\pi}(\boldsymbol{x}) ) | }{a_N} > M_\epsilon^{(1)} \right) < \epsilon
\qquad \implies  \qquad & \mathbb{P} \left(   \boldsymbol{b}^\top \boldsymbol{\hat{\pi}}(\boldsymbol{x}) < \boldsymbol{b}^\top \boldsymbol{\pi}(\boldsymbol{x}) -   M_\epsilon^{(1)} a_N \right) < \epsilon
%
.
\end{align*}
Since by assumption \( \boldsymbol{b}^\top \boldsymbol{\pi}(\boldsymbol{x}) > 0\), the result holds for \(N_\epsilon \geq N_\epsilon^{(1)} \) large enough so that for all \(N > N_\epsilon\) it holds that \(\boldsymbol{b}^\top \boldsymbol{\pi}(\boldsymbol{x}) -   M_\epsilon^{(1)} a_N > 0\) (which is assured since \(a_N\) is decreasing), and \(M_\epsilon := 1/( \boldsymbol{b}^\top \boldsymbol{\pi}(\boldsymbol{x}) -   M_\epsilon^{(1)} a_{N_\epsilon}  ) \).
\end{proof}

\begin{proof}[Proof of Lemma \ref{lem.int.asym.norm.cond.est.prob}]
Observe that
\begin{align*}
 & \boldsymbol{\hat{\psi}}_N^{(\text{C;CATT,} \pi)} - \boldsymbol{\psi}_N^{(\text{C;CATT,} \pi)}
 %
%
\\ = ~ &   \sum_{r \in \mathcal{R}}  \sum_{t = r}^T     f_r \left(  \hat{\pi}_{r_1}(\boldsymbol{x}) , \hat{\pi}_{r_2}(\boldsymbol{x}) , \ldots,  \hat{\pi}_{r_R}(\boldsymbol{x})  \right)  \psi_{rt}   \boldsymbol{\hat{\psi}}_{rt}^*  -    \sum_{r \in \mathcal{R}}   f_r \left(  \pi_{r_1}(\boldsymbol{x}) , \pi_{r_2}(\boldsymbol{x}) , \ldots,  \pi_{r_R} (\boldsymbol{x})  \right)   \sum_{t = r}^T  \psi_{rt}   \boldsymbol{\psi}_{rt}^* 
%
  %
\\ = ~ &    \sum_{r \in \mathcal{R}}   \sum_{t = r}^T   f_r \left(  \pi_{r_1}(\boldsymbol{x}) , \pi_{r_2}(\boldsymbol{x}) , \ldots,  \pi_{r_R} (\boldsymbol{x})  \right)   \psi_{rt}  \left( \boldsymbol{\hat{\psi}}_{rt}^* 
-   \boldsymbol{\psi}_{rt}^*  \right)
\\ & +   \sum_{r \in \mathcal{R}}  \sum_{t = r}^T   \left( f_r \left(  \hat{\pi}_{r_1}(\boldsymbol{x}) , \hat{\pi}_{r_2}(\boldsymbol{x}) , \ldots,  \hat{\pi}_{r_R}(\boldsymbol{x})  \right)  
  -   f_r \left(  \pi_{r_1}(\boldsymbol{x}) , \pi_{r_2}(\boldsymbol{x}) , \ldots,  \pi_{r_R} (\boldsymbol{x})  \right)  \right)   \psi_{rt}   \boldsymbol{\psi}_{rt}^* 
\\ & +    \sum_{r \in \mathcal{R}}   \sum_{t = r}^T \left( f_r \left(  \hat{\pi}_{r_1}(\boldsymbol{x}) , \hat{\pi}_{r_2}(\boldsymbol{x}) , \ldots,  \hat{\pi}_{r_R}(\boldsymbol{x})  \right)  
  -   f_r \left(  \pi_{r_1}(\boldsymbol{x}) , \pi_{r_2}(\boldsymbol{x}) , \ldots,  \pi_{r_R} (\boldsymbol{x})  \right)  \right)   \psi_{rt}  \left( \boldsymbol{\hat{\psi}}_{rt}^* 
-   \boldsymbol{\psi}_{rt}^*  \right)
,
\end{align*}
where \(\boldsymbol{\hat{\psi}}_{rt}^*\) was defined in \eqref{hat.psi.star.rt.def} and \( \boldsymbol{\hat{\psi}}_{rt}^* \) was defined in \eqref{hat.psi.nostar.rt.def}. We will show that after scaling by \(\sqrt{NT}\) and multiplying by \( (  \boldsymbol{\theta}_N^*)^\top\), terms of the first two forms converge in distribution to independent Gaussian random variables, and terms of the third form converge in probability to 0.

\begin{itemize}

\item Applying Lemma \ref{lem.asym.norm.cond.means}, we have
\begin{align*}
& \sqrt{NT}  (  \boldsymbol{\theta}_N^*)^\top  \sum_{r \in \mathcal{R}}  f_r \left(  \pi_{r_1}(\boldsymbol{x}) , \pi_{r_2}(\boldsymbol{x}) , \ldots,  \pi_{r_R} (\boldsymbol{x})  \right) \sum_{t=r}^T \psi_{rt} \left( \boldsymbol{\hat{\psi}}_{rt}^* 
-   \boldsymbol{\psi}_{rt}^*  \right)
\\ = ~ & \sqrt{NT} \sum_{r \in \mathcal{R}}  f_r \left(  \pi_{r_1}(\boldsymbol{x}) , \pi_{r_2}(\boldsymbol{x}) , \ldots,  \pi_{r_R} (\boldsymbol{x})  \right) \sum_{t=r}^T - \psi_{rt}  (  \boldsymbol{\theta}_N^*)^\top  \left( (\boldsymbol{D}_N^{-1})_{i(r, t, \boldsymbol{x}), \cdot}^\top    \left(   \boldsymbol{\overline{X}}_r   -  \boldsymbol{\mu}_r   \right)  
 \right)
 \\ \xrightarrow{d} ~ & \mathcal{N}(0, v_R^{\text{CATT; prob}})
,
\end{align*}
where
\[
v_R^{\text{CATT; prob}} :=  T \cdot \sum_{r \in \mathcal{R}}   \frac{(\boldsymbol{\psi}_{r}^{\text{CATT; prob}})^\top  \Cov ( \boldsymbol{X}_i \mid W_i = r) \boldsymbol{\psi}_{r}^{\text{CATT; prob}}}{\mathbb{P}(W_i = r)} 
\]
for
\begin{align}
\boldsymbol{\psi}_{r}^{\text{CATT; prob}} :=  ~ &  f_r \left(  \pi_{r_1}(\boldsymbol{x}) , \pi_{r_2}(\boldsymbol{x}) , \ldots,  \pi_{r_R} (\boldsymbol{x})  \right) \sum_{t = r}^T   \psi_{rt} (  \boldsymbol{D}_N^{-1})_{i(r, t, \boldsymbol{x}), \cdot}  \boldsymbol{\theta}_N^* \nonumber
\\ = ~ &   f_r \left(  \pi_{r_1}(\boldsymbol{x}) , \pi_{r_2}(\boldsymbol{x}) , \ldots,  \pi_{r_R} (\boldsymbol{x})  \right) \sum_{t = r}^T   \psi_{rt}\boldsymbol{\rho}_{rt}^* , \qquad r \in \mathcal{R} \label{psi.catt.prob.def}
.
\end{align}
\item Next, using the assumed joint asymptotic normality of \((\hat{\pi}_0(\boldsymbol{x}), \hat{\pi}_{r_1}(\boldsymbol{x}), \ldots, \hat{\pi}_{r_R}(\boldsymbol{x}))\) and the delta method (Theorem 3.1 in \citealt{van2000asymptotic}), we have for each \(r \in \mathcal{R}\)
\begin{align*}
& \sqrt{NT}  \big(  f_{r_1} \left(  \hat{\pi}_{r_1}(\boldsymbol{x}) , \hat{\pi}_{r_2}(\boldsymbol{x}) , \ldots,  \hat{\pi}_{r_R} (\boldsymbol{x})  \right)
  -  f_{r_1} \left(  \pi_{r_1}(\boldsymbol{x}) , \pi_{r_2}(\boldsymbol{x}) , \ldots,  \pi_{r_R} (\boldsymbol{x})  \right)
  ,  \ldots,
  \\ &  f_{r_R} \left(  \hat{\pi}_{r_1}(\boldsymbol{x}) , \hat{\pi}_{r_2}(\boldsymbol{x}) , \ldots,  \hat{\pi}_{r_R} (\boldsymbol{x})  \right)
  -    f_{r_R} \left(  \pi_{r_1}(\boldsymbol{x}) , \pi_{r_2}(\boldsymbol{x}) , \ldots,  \pi_{r_R} (\boldsymbol{x})  \right) \big)
\\ \xrightarrow{d} ~ & \mathcal{N}\left(\boldsymbol{0}, T \cdot \nabla \boldsymbol{f}^\top    \boldsymbol{\Sigma}_\pi \nabla \boldsymbol{f} \right)
\end{align*}

Therefore 
\begin{align*}
& \sqrt{NT}  (  \boldsymbol{\theta}_N^*)^\top    \sum_{r \in \mathcal{R}}   \left( f_r \left(  \hat{\pi}_{r_1}(\boldsymbol{x}) , \hat{\pi}_{r_2}(\boldsymbol{x}) , \ldots,  \hat{\pi}_{r_R}(\boldsymbol{x})  \right)  
  -   f_r \left(  \pi_{r_1}(\boldsymbol{x}) , \pi_{r_2}(\boldsymbol{x}) , \ldots,  \pi_{r_R} (\boldsymbol{x})  \right)  \right) \sum_{t=r}^T \psi_{rt} \boldsymbol{\psi}_{rt}^* 
\\ = ~ & \sqrt{NT}   \sum_{r \in \mathcal{R}}   \left( f_r \left(  \hat{\pi}_{r_1}(\boldsymbol{x}) , \hat{\pi}_{r_2}(\boldsymbol{x}) , \ldots,  \hat{\pi}_{r_R}(\boldsymbol{x})  \right)  
  -   f_r \left(  \pi_{r_1}(\boldsymbol{x}) , \pi_{r_2}(\boldsymbol{x}) , \ldots,  \pi_{r_R} (\boldsymbol{x})  \right)  \right) \sum_{t=r}^T \psi_{rt}   (  \boldsymbol{\theta}_N^*)^\top   \boldsymbol{\psi}_{rt}^* 
\\ \xrightarrow{d} ~ & \mathcal{N} \left(0, T \cdot \boldsymbol{\psi}_\pi^\top  \nabla \boldsymbol{f}^\top  \boldsymbol{\Sigma}_\pi \nabla \boldsymbol{f} \boldsymbol{\psi}_\pi \right)
\end{align*}
for
\begin{align*}
\boldsymbol{\psi}_\pi := ~ &  \left( \sum_{t=r_1}^T \psi_{r_1t}   (  \boldsymbol{\theta}_N^*)^\top   \boldsymbol{\psi}_{r_1t}^* , \sum_{t=r_2}^T \psi_{r_2t}   (  \boldsymbol{\theta}_N^*)^\top   \boldsymbol{\psi}_{r_2t}^*, \ldots, \sum_{t=r_R}^T \psi_{r_Rt}   (  \boldsymbol{\theta}_N^*)^\top   \boldsymbol{\psi}_{r_Rt}^* \right) 
\\ = ~ &  \bigg( \sum_{t=r_1}^T \psi_{r_1t} \left( \boldsymbol{\tau}_{r_1t}^* + \left(\boldsymbol{x} - \E [ \boldsymbol{X}_i \mid W_i = r_1] \right)^\top \boldsymbol{\rho}_{r_1t}^* \right), \ldots,
\\ &  \sum_{t=r_R}^T \psi_{r_Rt} \left( \boldsymbol{\tau}_{r_Rt}^* + \left(\boldsymbol{x} - \E [ \boldsymbol{X}_i \mid W_i = r_R] \right)^\top \boldsymbol{\rho}_{r_Rt}^* \right) \bigg) 
.
\end{align*}
\item Finally, the above two results, along with Theorem 2.4(i) in \citet{van2000asymptotic} and Proposition 1.8(iii) in \citet{Garcia-Portugues2023}, yield that for any \(r \in \mathcal{R}\)
\begin{align*}
&  \left( f_r \left(  \hat{\pi}_{r_1}(\boldsymbol{x}) , \hat{\pi}_{r_2}(\boldsymbol{x}) , \ldots,  \hat{\pi}_{r_R}(\boldsymbol{x})  \right)  
  -   f_r \left(  \pi_{r_1}(\boldsymbol{x}) , \pi_{r_2}(\boldsymbol{x}) , \ldots,  \pi_{r_R} (\boldsymbol{x})  \right)  \right)   (  \boldsymbol{\theta}_N^*)^\top \sum_{t =r}^T \left( \boldsymbol{\hat{\psi}}_{rt}^* 
-   \boldsymbol{\psi}_{rt}^*  \right) 
\\ = ~ &  \mathcal{O}_{\mathbb{P}}\left( \frac{1}{\sqrt{N}} \right)   \mathcal{O}_{\mathbb{P}}\left( \frac{1}{\sqrt{N}} \right)  =  \mathcal{O}_{\mathbb{P}}\left( \frac{1}{N} \right) ,
\end{align*}
so
\begin{align*}
& \sqrt{NT} (  \boldsymbol{\theta}_N^*)^\top \sum_{r \in \mathcal{R}} \big( f_r \left(  \hat{\pi}_{r_1}(\boldsymbol{x}) , \hat{\pi}_{r_2}(\boldsymbol{x}) , \ldots,  \hat{\pi}_{r_R}(\boldsymbol{x})  \right)  
 \\ &  -   f_r \left(  \pi_{r_1}(\boldsymbol{x}) , \pi_{r_2}(\boldsymbol{x}) , \ldots,  \pi_{r_R} (\boldsymbol{x})  \right)  \big)   \sum_{t =r}^T \psi_{rt} \left( \boldsymbol{\hat{\psi}}_{rt}^* 
-   \boldsymbol{\psi}_{rt}^*  \right) 
\\ = ~ & \sqrt{NT}\sum_{r \in \mathcal{R}} \sum_{t =r}^T \psi_{rt} \big( f_r \left(  \hat{\pi}_{r_1}(\boldsymbol{x}) , \hat{\pi}_{r_2}(\boldsymbol{x}) , \ldots,  \hat{\pi}_{r_R}(\boldsymbol{x})  \right)  
\\ &   -   f_r \left(  \pi_{r_1}(\boldsymbol{x}) , \pi_{r_2}(\boldsymbol{x}) , \ldots,  \pi_{r_R} (\boldsymbol{x})  \right)  \big)     (  \boldsymbol{\theta}_N^*)^\top  \left( \boldsymbol{\hat{\psi}}_{rt}^* 
-   \boldsymbol{\psi}_{rt}^*  \right) 
\\ = ~ & \sqrt{N}  \mathcal{O}_{\mathbb{P}}\left( \frac{1}{N} \right) 
,
\end{align*}
and
\begin{align*}
& \sqrt{NT} (  \boldsymbol{\theta}_N^*)^\top \sum_{r \in \mathcal{R}} \big( f_r \left(  \hat{\pi}_{r_1}(\boldsymbol{x}) , \hat{\pi}_{r_2}(\boldsymbol{x}) , \ldots,  \hat{\pi}_{r_R}(\boldsymbol{x})  \right)  
 \\ &  -   f_r \left(  \pi_{r_1}(\boldsymbol{x}) , \pi_{r_2}(\boldsymbol{x}) , \ldots,  \pi_{r_R} (\boldsymbol{x})  \right)  \big)   \sum_{t =r}^T \psi_{rt} \left( \boldsymbol{\hat{\psi}}_{rt}^* 
-   \boldsymbol{\psi}_{rt}^*  \right) \xrightarrow{p} 0 
.
\end{align*}
\end{itemize}
Putting this together and using the assumed independence, we have
\[
\sqrt{NT}  (  \boldsymbol{\theta}_N^*)^\top  \left(  \boldsymbol{\hat{\psi}}_N^{(\text{C;CATT,} \pi)} - \boldsymbol{\psi}_N^{(\text{C;CATT,} \pi)} \right) \xrightarrow{d} \mathcal{N} \left(0, v_R^{(\text{C;CATT,} \pi)} \right)
\]
for
\[
 v_R^{(\text{C;CATT,} \pi)} :=  T \cdot \sum_{r \in \mathcal{R}}   \frac{(\boldsymbol{\psi}_{r}^{\text{CATT; prob}})^\top  \Cov ( \boldsymbol{X}_i \mid W_i = r) \boldsymbol{\psi}_{r}^{\text{CATT; prob}}}{\mathbb{P}(W_i = r)} + T \cdot \boldsymbol{\psi}_\pi^\top  \nabla \boldsymbol{f}^\top  \boldsymbol{\Sigma}_\pi \nabla \boldsymbol{f} \boldsymbol{\psi}_\pi
.
\]
Finally, consider the plug-in estimator
\begin{equation}\label{v_r_c_catt_pi_est_form}
 \hat{v}_R^{(\text{C;CATT,} \pi)} :=  T \cdot \sum_{r \in \mathcal{R}}   \frac{(\boldsymbol{\hat{\psi}}_{r}^{\text{CATT; prob}})^\top  \widehat{\Cov} ( \boldsymbol{X}_i \mid W_i = r) \boldsymbol{\hat{\psi}}_{r}^{\text{CATT; prob}}}{N_r / N_\tau} + T \cdot \boldsymbol{\hat{\psi}}_\pi^\top  \nabla \boldsymbol{\hat{f}}^\top  \boldsymbol{\hat{\Sigma}}_\pi \nabla \boldsymbol{\hat{f}} \boldsymbol{\hat{\psi}}_\pi
,
\end{equation}
where
\begin{align*}
\boldsymbol{\hat{\psi}}_{r}^{\text{CATT; prob}} := ~ &  \frac{\hat{\pi}_r(\boldsymbol{x})}{ \sum_{r' \in \mathcal{R}}\hat{\pi}_{r'}(\boldsymbol{x})}  \sum_{t = r}^T   \psi_{rt} \boldsymbol{\hat{\rho}}_{rt}^{(q)} , \qquad r \in \mathcal{R},
\\ \boldsymbol{\hat{\psi}}_\pi := ~ & \bigg( \sum_{t=r_1}^T \psi_{r_1t} \left( \boldsymbol{\hat{\tau}}_{r_1t}^{(q)} + \left(\boldsymbol{x} - \boldsymbol{\overline{X}}_{r_1} \right)^\top \boldsymbol{\hat{\rho}}_{r_1t}^{(q)} \right), \ldots,
\\ &  \sum_{t=r_R}^T \psi_{r_Rt} \left( \boldsymbol{\hat{\tau}}_{r_Rt}^{(q)} + \left(\boldsymbol{x} -  \boldsymbol{\overline{X}}_{r_R}  \right)^\top \boldsymbol{\hat{\rho}}_{r_Rt}^{(q)} \right) \bigg)   , 
%
\end{align*}
and \(\nabla \boldsymbol{\hat{f}}\) is defined by replacing each instance of \(\pi_r(\boldsymbol{x})\) in \(\nabla \boldsymbol{f}\) with its estimator \(\hat{\pi}_r(\boldsymbol{x})\). Notice that \(\hat{v}_R^{(\text{C;CATT,} \pi)}  \xrightarrow{p} v_R^{(\text{C;CATT,} \pi)} \) by the consistency of the individual terms and the continuous mapping theorem.
\end{proof}

\section{Proofs of Lemmas From Proof of Theorem \ref{first.thm.fetwfe}}\label{sec.main.lemmas}

\begin{proof}[Proof of Lemma \ref{d.sing.val.lem}]
Notice from \eqref{d.expres} and the expressions in the proof of Lemma \ref{lem.d.express} that the \(\ell_1\) norm of each row or column of \(\boldsymbol{D}_N\) is at most 3, so we have
\[
 \sigma_{\text{max}} (\boldsymbol{D}_N)  \leq   \sqrt{  \lVert \boldsymbol{D}_N \rVert_{1} \lVert \boldsymbol{D}_N \rVert_{\infty} } \leq \sqrt{3 \cdot 3} = 3
 ,
\]
where for matrices \( \lVert \cdot \rVert_1\) denotes the maximum absolute column sum of a matrix and \(\lVert \cdot \rVert_\infty\) denotes the maximum absolute row sum. Now we lower bound \(\sigma_{\text{min}} \left( \boldsymbol{D}_N \right)\). We have that for any matrix \(\boldsymbol{A}\)
\begin{equation}\label{sv.lb}
\sigma_{\text{min}} \left( \boldsymbol{A} \right) \geq \frac{1}{\sqrt{\lVert \boldsymbol{A}^{-1} \rVert_{1} \lVert \boldsymbol{A}^{-1} \rVert_{\infty}}}
.
\end{equation}
Recall our expression for \(\boldsymbol{D}_N^{-1} \) from \eqref{d.inv.exp}.
The singular values of a block diagonal matrix are the union of the singular values of the individual matrices on the diagonal, so it is enough to lower bound the singular values of each of these individual matrices using \eqref{sv.lb} and then take the minimum of all of the lower bounds.

Using the expression for \(\boldsymbol{D}^{(1)}(t)^{-1}\) from \eqref{d.1.inv.exp} and \eqref{sv.lb},
\begin{equation}\label{d.1.sig.min}
\lVert  \boldsymbol{D}^{(1)}(t)^{-1} \rVert_{1} \lVert \boldsymbol{D}^{(1)}(t)^{-1} \rVert_{\infty} \leq t^2 \qquad \implies \qquad  \sigma_{\text{min}} \left(  \boldsymbol{D}^{(1)}(t)^{-1} \right) \geq 1/t
.
\end{equation}

Next, using the expression for \( (\boldsymbol{D}^{(2)}(\mathcal{R}))^{-1} \) from \eqref{d.2.inv.exp}, observe that the maximum column \(\ell_1\) norm of \( (\boldsymbol{D}^{(2)}(\mathcal{R}))^{-1} \) is \(\mathfrak{W}\), the \(\ell_1\) norm of the first column. Similarly, we see that the maximum row \(\ell_1\) norm, in the bottom block of \( (\boldsymbol{D}^{(2)}(\mathcal{R}))^{-1} \), is at most \( R - 1 + T - r_R  + 1  \leq R + T\), because each of the \(R- 1\) blocks of \(\boldsymbol{\tilde{U}}\) matrices has a single 1 in each row and \( \left( \boldsymbol{D}^{(1)}(T - r_R + 1)^{-1}  \right)^\top\) has a row of \(T - r_R + 1\) ones. So
\begin{align*}
 \lVert (\boldsymbol{D}^{(2)}(\mathcal{R}))^{-1}  \rVert_{1} \lVert (\boldsymbol{D}^{(2)}(\mathcal{R}))^{-1} \rVert_{\infty} \leq  ~ & \mathfrak{W}(T+R)
\\ \implies \qquad   \sigma_{\text{min}} \left(  (\boldsymbol{D}^{(2)}(\mathcal{R}))^{-1} \right) \geq ~ & \frac{1}{\sqrt{\mathfrak{W}(T+R)}}
.
\end{align*}
Using the fact that \(\mathfrak{W} \leq (T-1)^2\) from Lemma \ref{lem.d.express}, we have
\begin{equation}\label{d.2.sig.min}
  \sigma_{\text{min}} \left(  (\boldsymbol{D}^{(2)}(\mathcal{R}))^{-1} \right) \geq \frac{1}{(T-1)\sqrt{T+R}} \geq \frac{1}{(T-1)\sqrt{2T-1}}
.
\end{equation}
Finally, using \eqref{d.1.sig.min} and \eqref{d.2.sig.min}, \(T \geq 2\), and the fact that the smallest singular value of the identity matrix is 1, we have
\begin{align*}
\sigma_{\text{min}}\left( \boldsymbol{D}_N  \right) \geq ~ &  \min \left\{ \frac{1}{R}, \frac{1}{T-1},1, \frac{1}{(T-1)\sqrt{2T-1}} \right\}   \geq \frac{1}{(T-1)\sqrt{2T-1}} 
\geq    \frac{1}{T\sqrt{2T}}
.
\end{align*}

\end{proof}

\begin{proof}[Proof of Lemma \ref{power.lem}]
We will first show that \(\boldsymbol{X}_i\) having finite fourth moments under Assumption (R1) is enough for \(\boldsymbol{\tilde{Z}}_{(i \cdot) \cdot} \) to have finite fourth moments as well. We will show that each entry of \( \boldsymbol{\tilde{Z}}_{(i \cdot) \cdot} \boldsymbol{D}_N^{-1}\) is a bounded linear combination of a finite number of elements of \(\boldsymbol{\tilde{Z}}_{(i \cdot) \cdot}\), and likewise for \(\boldsymbol{G}_N  \boldsymbol{\tilde{Z}}_{(i \cdot) \cdot} \boldsymbol{D}_N^{-1}\). Then the finite moments will follow using our Assumption (R1), Cauchy-Schwarz, and H\"{o}lder's inequality.

Besides \(\boldsymbol{X}_i\), the entries of \(\boldsymbol{\tilde{Z}}_i\) consist of indicator variables and interactions of the indicator variables with themselves and \(\boldsymbol{X}_i\). All of these terms are upper-bounded by the greater of 1 or \(\boldsymbol{X}_i\). Using this, our assumption that \(\E[X_{(i)j}^4] \) is finite for all \(j \in [p_N]\), and the fact that \(\boldsymbol{X}_i\) is time-invariant, we have that for any fixed \(N\), \(\max_{j \in [d_N], t \in [T]} \{\E[Z_{(it)j}^4] \}\) is finite.

Now we consider the entries of \(\boldsymbol{\tilde{Z}}_{(i \cdot) \cdot} \boldsymbol{D}_N^{-1} \in \mathbb{R}^{NT \times p_N}\). The columns of \(\boldsymbol{\tilde{Z}}_{(i \cdot) \cdot} \boldsymbol{D}_N^{-1}\) are linear combinations of the columns of \(\boldsymbol{\tilde{Z}}_{(i \cdot) \cdot}\); in particular, the \(j^{\text{th}}\) column of \(\boldsymbol{\tilde{Z}}_{(i \cdot) \cdot} \boldsymbol{D}_N^{-1}\) is \(\boldsymbol{\tilde{Z}}_{(i \cdot) \cdot}  (\boldsymbol{D}_N^{-1})_{\cdot j}  \in \mathbb{R}^{NT}\), where \((\boldsymbol{D}_N^{-1})_{\cdot j} \) is the \(j^{\text{th}}\) column of \(\boldsymbol{D}_N^{-1}\). From \eqref{d.inv.exp} and the expressions in the proof of Lemma \ref{lem.d.express} we see that \(\boldsymbol{D}_N^{-1}\) contains only entries that equal 1 or 0, so for any \(t \in [T]\) and \(j \in [p_N]\),
\[
(\boldsymbol{\tilde{Z}}_{(i \cdot) \cdot} \boldsymbol{D}_N^{-1})_{(1t)j}^4 =  \left[\left(\boldsymbol{\tilde{Z}}_{(i \cdot) \cdot}   (\boldsymbol{D}_N^{-1})_{\cdot j} \right)_{(1t)} \right]^4  =  \left( \sum_{\{j': (\boldsymbol{D}_N^{-1})_{j'j} \neq 0\}}    Z_{(1t)j'}  \right)^4
 .
\]
From \eqref{d.inv.exp} and the expressions in the proof of Lemma \ref{lem.d.express} we have that \(\boldsymbol{D}_N^{-1}\) has a block diagonal structure with blocks of size at most \(\mathfrak{W} \leq (T - 1)^2\) (an upper bound we established in Lemma \ref{lem.d.express}), so this sum contains at most \((T-1)^2\) terms before expansion. By the multinomial theorem, the number of terms after expansion is therefore at most
\[
\binom{4 + (T - 1)^2 - 1}{(T - 1)^2 - 1}
.
\]
 \(T\) is fixed and finite, so the number of terms after expansion is finite as well. Therefore if we can show that each summand after expansion has finite \(4^{\text{th}}\) moment, we have shown that each entry of \(\boldsymbol{\tilde{Z}}_{(i \cdot) \cdot}   \boldsymbol{D}_N^{-1}\) has finite \(4^{\text{th}}\) moment.

This sum when expanded contains terms of the form \(\tilde{Z}_{(1t)j'}^4\), \(\tilde{Z}_{(1t)j'}^2 \tilde{Z}_{(1t)j''}^2 \) for \(j' \neq j''\), \(\tilde{Z}_{(1t)j'}^3 \tilde{Z}_{(1t)j''} \), \(\tilde{Z}_{(1t)j'}^2 \tilde{Z}_{(1t)j''} \tilde{Z}_{(1t)j'''} \) and \(\tilde{Z}_{(1t)j'} \tilde{Z}_{(1t)j''} \tilde{Z}_{(1t)j'''} \tilde{Z}_{(1t)j''''} \) (disregarding constant factors). We demonstrate that each of these terms has finite fourth moment below using Cauchy-Schwarz and H\"{o}lder's inequality:

\begin{itemize}

\item We have already shown that \(\E[\tilde{Z}_{(1t)j'}^4]\) is finite.

\item \(\E[ \tilde{Z}_{(1t)j'}^2 \tilde{Z}_{(1t)j''}^2 ] \leq \sqrt{\E[ \tilde{Z}_{(1t)j'}^4] \E[ \tilde{Z}_{(1t)j''}^4 ] }\), and terms of this form are finite using the finite fourth moments of \(\boldsymbol{\tilde{Z}}_{(i \cdot) \cdot}\).

\item Recall that by H\"{o}lder's inequality, for any \(p, q > 1\) such that \(1/p + 1/q  =1\), if \(X\) and \(Y\) are random variables with \( \left(\E |X|^p\right)^{1/p}  \) and \(  \left(\E|Y|^q \right)^{1/q}\) finite then
\[
\E |XY|\leq \left(\E |X|^p\right)^{1/p} \left(\E|Y|^q \right)^{1/q} 
.
\]
Choosing \(p=4/3\) and \(q = 4\), we have
\begin{align*}
\E[\tilde{Z}_{(1t)j'}^3 \tilde{Z}_{(1t)j''}] \leq ~ & \E[|\tilde{Z}_{(1t)j'}|^3 |\tilde{Z}_{(1t)j''}| ] \leq   \left(\E \left[(|\tilde{Z}_{(1t)j'}|^3)^{4/3} \right] \right)^{3/4} \left(\E \left[ |\tilde{Z}_{(1t)j''}|^4 \right] \right)^{1/4}
\\ = ~ &  \left(\E |\tilde{Z}_{(1t)j'}|^4 \right)^{3/4} \left(\E|\tilde{Z}_{(1t)j''}|^4 \right)^{1/4}
,
\end{align*}
which is finite.

\item \(\E[\tilde{Z}_{(1t)j'}^2 \tilde{Z}_{(1t)j''} \tilde{Z}_{(1t)j'''}]  \leq \sqrt{\E[\tilde{Z}_{(1t)j'}^4]} \sqrt{\E[ \tilde{Z}_{(1t)j''}^2 \tilde{Z}_{(1t)j'''}^2 } ]\), and we already showed that terms of this form are finite.

\item \(\E[\tilde{Z}_{(1t)j'} \tilde{Z}_{(1t)j''} \tilde{Z}_{(1t)j'''} \tilde{Z}_{(1t)j''''}  ] \leq \sqrt{\E[\tilde{Z}_{(1t)j'}^2 \tilde{Z}_{(1t)j''}^2] \E[\tilde{Z}_{(1t)j'''}^2 \tilde{Z}_{(1t)j''''}^2 ]} \), and we already showed that terms of this form are finite.

\end{itemize}

\end{proof}

\begin{proof}[Proof of Lemma \ref{sig.min.lem}] 
Recall that for any matrix \(\boldsymbol{A}\), the eigenvalues of \(\boldsymbol{A}^\top \boldsymbol{A}\) are the squares of the singular values of \(\boldsymbol{A}\). So \(\lambda_{\text{min}} \left( \frac{1}{NT}  \left( \boldsymbol{Z} \boldsymbol{D}_N^{-1} \right)^\top \boldsymbol{Z} \boldsymbol{D}_N^{-1}  \right) =   \sigma_{\text{min}} \left(\boldsymbol{Z} \boldsymbol{D}_N^{-1} \right)^2/(NT)\) and 
\[
\lambda_{\text{max}} \left( \frac{1}{NT}  \left( \boldsymbol{Z} \boldsymbol{D}_N^{-1} \right)^\top \boldsymbol{Z} \boldsymbol{D}_N^{-1}  \right) =    \sigma_{\text{max}} \left(\boldsymbol{Z} \boldsymbol{D}_N^{-1} \right)^2/(NT)
,
\]
and it is enough to bound the singular values of \(\boldsymbol{Z} \boldsymbol{D}_N^{-1} \), which we can do with Lemma \ref{lem.sing.value.prod.bound}:
\begin{lemma}\label{lem.sing.value.prod.bound}
Let \(\boldsymbol{A}\) and \(\boldsymbol{B}\) be conformable real matrices.

\begin{enumerate}[(a)]

\item  \citep{4018462} Suppose \(\boldsymbol{A} \neq \boldsymbol{0}\) and \(\boldsymbol{B}\) has full row rank. Then
\[
\sigma_{\text{min}} \left(\boldsymbol{A}\boldsymbol{B} \right) \geq \sigma_{\text{min}} \left(\boldsymbol{A} \right) \sigma_{\text{min}} \left(\boldsymbol{B} \right) 
.
\]

\item 
\[
 \sigma_{\text{max}} \left(\boldsymbol{A}\boldsymbol{B} \right) \leq \sigma_{\text{max}} \left(\boldsymbol{A} \right) \sigma_{\text{max}} \left(\boldsymbol{B} \right) 
.
\]
\end{enumerate}
\end{lemma}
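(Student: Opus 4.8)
The plan is to dispatch the two parts separately, since they rely on different facts. Part (a) is exactly the matrix singular value inequality established in \citep{4018462}, so I would simply invoke that reference; its hypotheses ($\boldsymbol{A} \neq \boldsymbol{0}$ and $\boldsymbol{B}$ of full row rank) match ours verbatim. For completeness I would also record the short self-contained argument in the case relevant to our application, where $\boldsymbol{B}$ is square and invertible (this is exactly the situation $\boldsymbol{B} = \boldsymbol{D}_N^{-1}$ used in Lemma \ref{sig.min.lem}). In that case both $\boldsymbol{A}$ and $\boldsymbol{A}\boldsymbol{B}$ may be taken to have at least as many rows as columns, so the smallest singular value admits the variational form $\sigma_{\text{min}}(\boldsymbol{M}) = \min_{\|\boldsymbol{x}\|_2 = 1}\|\boldsymbol{M}\boldsymbol{x}\|_2$. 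Then for any unit vector $\boldsymbol{z}$, setting $\boldsymbol{y} = \boldsymbol{B}\boldsymbol{z} \neq \boldsymbol{0}$ (nonzero because $\boldsymbol{B}$ is invertible) gives $\|\boldsymbol{A}\boldsymbol{B}\boldsymbol{z}\|_2 = \|\boldsymbol{y}\|_2 \, \|\boldsymbol{A}(\boldsymbol{y}/\|\boldsymbol{y}\|_2)\|_2 \geq \sigma_{\text{min}}(\boldsymbol{A})\|\boldsymbol{B}\boldsymbol{z}\|_2 \geq \sigma_{\text{min}}(\boldsymbol{A})\sigma_{\text{min}}(\boldsymbol{B})$, and minimizing over $\boldsymbol{z}$ yields the claim.

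For part (b), I would argue directly from the operator-norm characterization of the largest singular value, $\sigma_{\text{max}}(\boldsymbol{M}) = \sup_{\boldsymbol{x} \neq \boldsymbol{0}} \|\boldsymbol{M}\boldsymbol{x}\|_2 / \|\boldsymbol{x}\|_2$. For any conformable $\boldsymbol{A}, \boldsymbol{B}$ and any nonzero $\boldsymbol{x}$, apply the defining bound $\|\boldsymbol{A}\boldsymbol{y}\|_2 \leq \sigma_{\text{max}}(\boldsymbol{A})\|\boldsymbol{y}\|_2$ with $\boldsymbol{y} = \boldsymbol{B}\boldsymbol{x}$, then $\|\boldsymbol{B}\boldsymbol{x}\|_2 \leq \sigma_{\text{max}}(\boldsymbol{B})\|\boldsymbol{x}\|_2$, to obtain $\|\boldsymbol{A}\boldsymbol{B}\boldsymbol{x}\|_2 \leq \sigma_{\text{max}}(\boldsymbol{A})\sigma_{\text{max}}(\boldsymbol{B})\|\boldsymbol{x}\|_2$. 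Dividing by $\|\boldsymbol{x}\|_2$ and taking the supremum over $\boldsymbol{x}$ gives the submultiplicativity of the spectral norm, which is precisely the stated inequality. The degenerate case $\boldsymbol{B}\boldsymbol{x} = \boldsymbol{0}$ is harmless, since then both sides of the chain of inequalities vanish.

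There is essentially no hard step here: part (a) is imported wholesale from the cited reference, and part (b) is the standard submultiplicativity of the spectral norm. If any subtlety deserves attention, it is the role of the full-row-rank hypothesis in part (a): without it the smallest singular value of $\boldsymbol{B}$ need not equal the minimum of $\|\boldsymbol{B}\boldsymbol{z}\|_2$ over unit vectors, and the inequality can fail; the hypothesis is what guarantees $\boldsymbol{B}\boldsymbol{z} \neq \boldsymbol{0}$ for the relevant directions so that the rescaling step above is legitimate. Since in the application $\boldsymbol{B} = \boldsymbol{D}_N^{-1}$ is invertible by Lemma \ref{lem.d.express}, this hypothesis is trivially satisfied.
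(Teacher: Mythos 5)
Your proposal is correct and matches the paper's treatment: the paper gives no proof of this lemma at all, simply attributing part (a) to the cited reference \citep{4018462} and invoking part (b) as the standard submultiplicativity of the spectral norm. Your supplementary self-contained argument for the square-invertible case and your operator-norm proof of (b) are both sound, including the correct caveat that the variational characterization \(\sigma_{\text{min}}(\boldsymbol{M}) = \min_{\lVert \boldsymbol{x} \rVert_2 = 1} \lVert \boldsymbol{M}\boldsymbol{x} \rVert_2\) requires at least as many rows as columns, which holds in the application since \(\boldsymbol{Z}\) is \(NT \times p_N\) with \(p_N \leq NT\) and \(\boldsymbol{D}_N^{-1}\) is square and invertible.
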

We bounded the singular values of the deterministic \(\boldsymbol{D}_N^{-1}\) in Lemma \ref{d.sing.val.lem}. Using that and the fact that \(  \frac{1}{NT} \sigma_{\text{min}} \left(\boldsymbol{Z} \right)^2  = \lambda_{\text{min}} \left(  \frac{1}{NT}  \boldsymbol{Z}^\top \boldsymbol{Z} \right) = e_{1N} \), we have
\begin{align*}
\lambda_{\text{min}} \left(  \frac{1}{NT}  \left( \boldsymbol{Z} \boldsymbol{D}_N^{-1} \right)^\top  \boldsymbol{Z} \boldsymbol{D}_N^{-1}  \right)   \geq  ~ &   \frac{1}{NT}  \sigma_{\text{min}} \left(\boldsymbol{Z} \right)^2   \sigma_{\text{min}}  \left( \boldsymbol{D}_N^{-1} \right)^2 
\\ \geq  ~ &   \lambda_{\text{min}} \left(  \frac{1}{NT} \boldsymbol{\tilde{Z}}^\top \boldsymbol{\tilde{Z}} \right)  \left( \frac{1}{3} \right)^2
\\ \geq ~ &   \frac{1}{9} e_{1N} 
,
\end{align*}
where we used that from Lemma \ref{d.sing.val.lem},
\begin{align*}
\sigma_{\text{max}} (\boldsymbol{D}_N)  \leq    3
\qquad \iff \qquad \sigma_{\text{min}} (\boldsymbol{D}_N^{-1})  \geq     \frac{1}{3}
.
\end{align*}
Similarly,
\begin{align*}
\lambda_{\text{max}} \left(  \frac{1}{NT}  \left( \boldsymbol{Z} \boldsymbol{D}_N^{-1} \right)^\top \boldsymbol{Z} \boldsymbol{D}_N^{-1}  \right) \leq  ~ &    \lambda_{\text{max}} \left(  \frac{1}{NT} \boldsymbol{\tilde{Z}}^\top \boldsymbol{\tilde{Z}} \right)   \left(T \sqrt{2T}\right)^2
 \leq  e_{2N}  \cdot 2T^3
,
\end{align*}
where we used that from Lemma \ref{d.sing.val.lem},
\begin{align*}
\sigma_{\text{min}} (\boldsymbol{D}_N)  \geq     \frac{1}{T \sqrt{2T}}
\qquad \iff \qquad \sigma_{\text{max}} (\boldsymbol{D}_N^{-1})  \leq    T \sqrt{2T}
.
\end{align*}
\end{proof}

\begin{proof}[Proof of Theorem \ref{prop.2i}]

For clarity, we use the notation of \citet{kock2013oracle}. To extend Lemma 3 from \citet{kock2013oracle}, we use a similar strategy to the one used to prove Theorem 11.3(d) in \citet{hastie2015statistical}. Since under Assumption (A5) from \citet{kock2013oracle} we have \(\min_{ 1 \leq j \leq k_N}\{|\beta_{10j|}\} \geq b_0\), it is enough to show that with high probability \( \hat{\beta}_{2N} = \boldsymbol{0}\) and \(\lVert \hat{\beta}_N - \beta_N \rVert_\infty  \leq b_0/2\). That is,
\begin{align*}
\left\{ \hat{\beta}_{2N} = \boldsymbol{0} \right\} \cap  \left\{ \left\lVert \hat{\beta}_N - \beta_N \right\rVert_\infty  \leq \frac{b_0}{2} \right\}   \subseteq  ~ & \left\{ \hat{\mathcal{S}} = \mathcal{S}  \right\}
\\ \implies \qquad  \left\{ \hat{\mathcal{S}} \neq \mathcal{S}  \right\} \subseteq ~ & \left\{ \hat{\beta}_{2N} \neq \boldsymbol{0} \right\} \cup  \left\{ \left\lVert \hat{\beta}_N - \beta_N \right\rVert_\infty  > \frac{b_0}{2} \right\} 
\\ \stackrel{(a)}{\implies} \qquad   \mathbb{P} \left( \hat{\mathcal{S}} \neq \mathcal{S}  \right) \leq ~ &  \mathbb{P} \left( \left\{ \hat{\beta}_{2N} \neq \boldsymbol{0} \right\} \cup  \left\{ \left\lVert \hat{\beta}_N - \beta_N \right\rVert_2  > \frac{b_0}{2} \right\}  \right)
\\ \leq ~ &  \mathbb{P} \left( \hat{\beta}_{2N} \neq \boldsymbol{0} \right) + \mathbb{P} \left( \left\lVert \hat{\beta}_N - \beta_N \right\rVert_2  > \frac{b_0}{2}   \right)
,
\end{align*}
where in \((a)\) we used \(\lVert \cdot \rVert_\infty \leq \lVert \cdot \rVert_2\). Lemma 3 from \citet{kock2013oracle} establishes that 
\[
\lim_{N \to \infty} \mathbb{P} \left(  \hat{\beta}_{2N} \neq \boldsymbol{0}  \right) = 0
,
\]
and Theorem 1 in \citet{kock2013oracle} (the assumptions of which are satisfied under the assumptions of Lemma 3 in \citealt{kock2013oracle}) establishes the consistency of \(\hat{\beta}_N\), which is sufficient for
\[
\lim_{N \to \infty} \mathbb{P} \left( \left\lVert \hat{\beta}_N - \beta_N \right\rVert_2  > \frac{b_0}{2}   \right) = 0
.
\]
So
\begin{align*}
0 \leq \lim_{N \to \infty}  \mathbb{P} \left( \hat{\mathcal{S}} \neq \mathcal{S}  \right) \leq ~ &  \lim_{N \to \infty} \mathbb{P} \left( \hat{\beta}_{2N} \neq \boldsymbol{0} \right) +  \lim_{N \to \infty} \mathbb{P} \left( \left\lVert \hat{\beta}_N - \beta_N \right\rVert_2  > \frac{b_0}{2}   \right)  = 0
,
\end{align*}
proving the result.

\end{proof}

\begin{proof}[Proof of Theorem \ref{prop.ext.2}] 
 \begin{enumerate}[(a)]
 
 \item Again, for clarity we use the notation of \citet{kock2013oracle}. Since
 \begin{align*}
\sqrt{N T_N} \boldsymbol{\psi}_N^\top ( \hat{\beta}_N - \beta_0) =  ~ &  \sqrt{N T_N} \boldsymbol{\alpha}^\top ( \hat{\beta}_{1N}- \beta_{10}) + \sqrt{N T_N} \boldsymbol{b}_N^\top ( \hat{\beta}_{2N}- \beta_{20})
 \\  =  ~ &  \sqrt{N T_N}  \boldsymbol{\alpha}^\top ( \hat{\beta}_{1N}- \beta_{10}) + \sqrt{N T_N}  \boldsymbol{b}_N^\top \hat{\beta}_{2N}
 ,
 \end{align*}
 if we can show that 
 \[
\sqrt{N T_N}   \boldsymbol{b}_N^\top \hat{\beta}_{2N} \xrightarrow{p} 0
 \]
 then the result follows from Theorem 2(ii) from \citet{kock2013oracle} (in particular, Equation 7.5 in the proof) and Slutsky's theorem. Theorem 2(i) from \citet{kock2013oracle} establishes that 
\begin{align*}
\lim_{N \to \infty} \mathbb{P} \left( \hat{\beta}_{2N}= \boldsymbol{0} \right) =  1
.
\end{align*}
Since for any \(N\)
\[
\left\{ \hat{\beta}_{2N} = \boldsymbol{0}\right\} \subseteq\left\{  \sqrt{N T_N}  \sum_{j =1}^{p_N - k}  \left| (\boldsymbol{b}_N )_j   (\hat{\beta}_{2N} )_j  \right| = 0 \right\} 
,
\]
we have
\begin{align*}
\mathbb{P} \left(  \hat{\beta}_{2N}= \boldsymbol{0} \right) \leq  \mathbb{P} \left(  \sqrt{N T_N} \sum_{j =1}^{p_N - k}  \left| (\boldsymbol{b}_N )_j   (\hat{\beta}_{2N} )_j \right| = 0 \right)
 ,
\end{align*}
so
\[
1 \geq \lim_{N \to \infty} \mathbb{P} \left(  \sqrt{N T_N}\sum_{j =1}^{p_N - k}   \left| (\boldsymbol{b}_N )_j \hat{\beta}_{2N}   \right|  = 0 \right) \geq  \lim_{N \to \infty} \mathbb{P} \left(  \hat{\beta}_{2N}= \boldsymbol{0} \right) 
 =  1
 ,
\]
establishing
\[
\lim_{N \to \infty} \mathbb{P} \left(  \sqrt{N T_N}\sum_{j =1}^{p_N - k}   \left| (\boldsymbol{b}_N )_j (\hat{\beta}_{2N})_j   \right|  = 0 \right)  = 1
.
\]
Since this notion of convergence is stronger than convergence in probability, it follows that
\begin{align*}
\sqrt{N T_N}\sum_{j =1}^{p_N - k}   (\boldsymbol{b}_N )_j  (\hat{\beta}_{2N} )_j \xrightarrow{p}  ~ & 0
,
\end{align*}
proving the result.

\item 

Since by the consistency of the sample covariance matrix and the continuous mapping theorem
\[
\frac{1}{\sigma}  
\sqrt{ \frac{T}{\boldsymbol{\alpha}(\mathcal{S})^\top \left( \boldsymbol{\hat{\Sigma}}( \boldsymbol{X}_{(\cdot \cdot)\mathcal{S}} ) \right)^{-1} \boldsymbol{\alpha}(\mathcal{S})}}  \xrightarrow{p} 
\frac{1}{\sigma}  
\sqrt{ \frac{T}{ \boldsymbol{\alpha}(\mathcal{S})^\top \left( \lim_{N \to \infty}  \E \left[ \Sigma_{1N} \right] \right)^{-1} \boldsymbol{\alpha}(\mathcal{S})  }}
,
\]
by part (a) and Slutsky's theorem we have that \(U_b(\mathcal{S} )\) converges in distribution to a standard Gaussian random variable. Further, \(\mathbb{P}( U_b(\hat{\mathcal{S}}_N) - U_b(\mathcal{S})  =   0 ) \geq \mathbb{P}( \hat{\mathcal{S}}_N  = \mathcal{S} ) \), so from Theorem \ref{prop.2i} we have
\[
1 \geq \lim_{N \to \infty}  \mathbb{P}( U_b(\hat{\mathcal{S}}_N) - U_b(\mathcal{S})  =   0 )  \geq \lim_{N \to \infty}  \mathbb{P}( \hat{\mathcal{S}}_N  = \mathcal{S} ) = 1
,
\]
which implies \(U_b(\hat{\mathcal{S}}_N) - U_b(\mathcal{S})  \xrightarrow{p} 0 \). Therefore by Slutsky's theorem,
\[
U_b(\hat{\mathcal{S}}_N) = U_b(\mathcal{S}) + \left( U_b(\hat{\mathcal{S}}_N) - U_b(\mathcal{S}) \right) \xrightarrow{d} \mathcal{N}(0, 1)
.
\]

\item  

We begin by decomposing our sequence of random variables:
\begin{align}
U_c(\mathcal{A}) = ~ &  \sqrt{NT}  \left( \boldsymbol{\hat{\psi}}_N(\mathcal{A})^\top \hat{\beta} -   \boldsymbol{\psi}_N(\mathcal{A})^\top \beta_0 \right) \nonumber
\\ = ~ &   \sqrt{NT}  \left( \boldsymbol{\psi}_N(\mathcal{A})^\top \hat{\beta}  -   \boldsymbol{\psi}_N(\mathcal{A})^\top \beta_0  + \left( \boldsymbol{\hat{\psi}}_N(\mathcal{A}) - \boldsymbol{\psi}_N(\mathcal{A}) \right)^\top \hat{\beta}  \right) \nonumber
\\ = ~ &  \sqrt{NT}  \bigg( \boldsymbol{\psi}_N(\mathcal{A})^\top \left( \hat{\beta}  -    \beta_0  \right) + \left( \boldsymbol{\hat{\psi}}_N(\mathcal{A}) - \boldsymbol{\psi}_N(\mathcal{A}) \right)^\top \beta_0 \nonumber
\\ &  + \left( \boldsymbol{\hat{\psi}}_N(\mathcal{A}) - \boldsymbol{\psi}_N(\mathcal{A}) \right)^\top \left( \hat{\beta} - \beta_0  \right) \bigg)  \label{tilde.u3.exp}
.
\end{align}
We examine these terms one at a time. By part (a),
 \[
  \sqrt{ NT}  \boldsymbol{\psi}_N(\mathcal{S})^\top ( \hat{\beta}_N - \beta_0) \xrightarrow{d} \mathcal{N}\left( 0, \sigma^2 \boldsymbol{\alpha}^\top \left( \lim_{N \to \infty}  \E \left[ \Sigma_{1N} \right] \right)^{-1} \boldsymbol{\alpha} \right)
 .
 \]
By assumption, for any \(\beta_{0} \in \mathbb{R}^{p_N}\) with \(k\) entries nonzero and finite and the remaining entries equal to 0,
\[
\sqrt{NT}  \left(  \boldsymbol{\hat{\psi}}_N(\mathcal{S})  -   \boldsymbol{\psi}_N(\mathcal{S})\right)^\top \beta_{0} \xrightarrow{d} \mathcal{N}(0, v_\psi(\beta_0, \mathcal{S}))
.
\]
Finally, similarly to \cite{kock2013oracle} we partition \( \boldsymbol{\hat{\psi}}_N(\mathcal{A})  \) into component \(\boldsymbol{\hat{\psi}}_{1N}(\mathcal{A}) \) that corresponds to the nonzero coefficients \(\beta_{10} \in \mathbb{R}^k\) and component \(\boldsymbol{\hat{\psi}}_{2N}(\mathcal{A}) \) corresponding to \(\beta_{20} = \boldsymbol{0} \in \mathbb{R}^{p_N - k}\), and likewise for \( \boldsymbol{\psi}_N(\mathcal{A}) \). Then we have
\begin{align*}
& \left( \boldsymbol{\hat{\psi}}_N(\mathcal{A}) - \boldsymbol{\psi}_N(\mathcal{A}) \right)^\top \left( \hat{\beta} - \beta_0  \right) 
\\ = ~ & \left( \boldsymbol{\hat{\psi}}_{1N}(\mathcal{A}) - \boldsymbol{\psi}_{1N}(\mathcal{A}) \right)^\top \left( \hat{\beta}_{1N} - \beta_{10}  \right) +  \left( \boldsymbol{\hat{\psi}}_{2N}(\mathcal{A}) - \boldsymbol{\psi}_{2N}(\mathcal{A}) \right)^\top \left( \hat{\beta}_{2N} - \beta_{20}  \right)
\\ = ~ & \left( \boldsymbol{\hat{\psi}}_{1N}(\mathcal{A}) - \boldsymbol{\psi}_{1N}(\mathcal{A}) \right)^\top \left( \hat{\beta}_{1N} - \beta_{10}  \right) +  \left( \boldsymbol{\hat{\psi}}_{2N}(\mathcal{A}) - \boldsymbol{\psi}_{2N}(\mathcal{A}) \right)^\top \left( \hat{\beta}_{2N}  \right)
.
\end{align*}
For the first term, applying Theorem 2.4(i) in \citet{van2000asymptotic} and Proposition 1.8(iii) in \citet{Garcia-Portugues2023} termwise we have that
\begin{align*}
 \left( \boldsymbol{\hat{\psi}}_{1N}(\mathcal{A}) - \boldsymbol{\psi}_{1N}(\mathcal{A}) \right)^\top \left( \hat{\beta}_{1N} - \beta_{10}  \right) = ~ & \mathcal{O}_{\mathbb{P}} \left( \frac{1}{\sqrt{N}} \right)  \mathcal{O}_{\mathbb{P}} \left( \frac{1}{\sqrt{N}} \right)  = \mathcal{O}_{\mathbb{P}} \left( \frac{1}{N} \right)  
 \\ \implies \qquad    \sqrt{NT} \left( \boldsymbol{\hat{\psi}}_{1N}(\mathcal{A}) - \boldsymbol{\psi}_{1N}(\mathcal{A}) \right)^\top \left( \hat{\beta}_{1N} - \beta_{10}  \right) = ~ & \mathcal{O}_{\mathbb{P}} \left( \frac{1}{\sqrt{N}} \right)  
 .
\end{align*}
For the second term, by a similar argument to the one we used in part (a), Theorem 2(i) from \cite{kock2013oracle} gives us that the \(\sqrt{NT} \left( \boldsymbol{\hat{\psi}}_{2N}(\mathcal{A}) - \boldsymbol{\psi}_{2N}(\mathcal{A}) \right)^\top \left( \hat{\beta}_{2N}  \right) \xrightarrow{p} 0\). Putting this together,
\[
\sqrt{NT}   \left( \boldsymbol{\hat{\psi}}_N(\mathcal{A}) - \boldsymbol{\psi}_N(\mathcal{A}) \right)^\top \left( \hat{\beta} - \beta_0  \right)  \xrightarrow{p} 0
.
\]
Putting this all together, using Slutsky's theorem and the independence of \(\hat{\beta}\) and \(\boldsymbol{\hat{\psi}}_N(\mathcal{S})\) we have from \eqref{tilde.u3.exp} that
\[
U_c(\mathcal{S}) \xrightarrow{d} \mathcal{N} \left(0,\sigma^2 \boldsymbol{\alpha}^\top \left( \lim_{N \to \infty}  \E \left[ \Sigma_{1N} \right] \right)^{-1} \boldsymbol{\alpha}  +  v_\psi(\beta_0, \mathcal{S})  \right)
.
\]
Finally, the same argument we used in part (b) yields \(U_c(\hat{\mathcal{S}}) \xrightarrow{d} \mathcal{N}(0,\sigma^2 \boldsymbol{\alpha}^\top \left( \lim_{N \to \infty}  \E \left[ \Sigma_{1N} \right] \right)^{-1} \boldsymbol{\alpha}  +  v_\psi(\beta_0, \mathcal{S})  ) \).

\item  
Using the consistency of the sample covariance matrix, the assumed consistency of \(\boldsymbol{\hat{\alpha}}(\mathcal{S})\), the assumed consistency of \(\hat{v}_\psi(\hat{\beta}_N, \mathcal{S})\), and the continuous mapping theorem,
\begin{align*}
& \frac{1}{\sqrt{\hat{v}_N (\mathcal{S})}}
\\  = ~ &  \frac{1}{\sqrt{\boldsymbol{\hat{\alpha}}(\mathcal{S})^\top \left( \boldsymbol{\hat{\Sigma}}( \boldsymbol{X}_{(\cdot \cdot)\mathcal{S}} ) \right)^{-1} \boldsymbol{\hat{\alpha}}(\mathcal{S}) + \hat{v}_\psi(\hat{\beta}_N, \mathcal{S})}}  
\\ \xrightarrow{p} ~ &   \frac{1}{ \sqrt{\boldsymbol{\alpha}^\top \left( \lim_{N \to \infty}  \E [ \Sigma_{1N} ]  \right)^{-1} \boldsymbol{\alpha} + v_\psi(\beta_0, \mathcal{S}) }} 
 ,
\end{align*}
so part \((c)\) and Slutsky's theorem yield that \(U_d(\mathcal{S}) = \frac{1}{\sqrt{\hat{v}_N (\mathcal{S})}}  U_c(\mathcal{S})
 \xrightarrow{d} \mathcal{N}(0,1) \), where our finite-sample variance estimator \(\hat{v}_N(\mathcal{S})\) was defined in \eqref{var.est.kock}. Finally, the same argument we used in part (b) yields \(U_d(\hat{\mathcal{S}}) \xrightarrow{d} \mathcal{N}(0,1) \).
 
 \item  If \(\hat{\beta}_N\) and \( \boldsymbol{\hat{\psi}}_N(\mathcal{S})  \) are estimated on the same data set, they are not independent. So by the same argument we used in part \((c)\) we still have
 \begin{align*}
 &   \sqrt{NT}  \left( \boldsymbol{\hat{\psi}}_N(\mathcal{S})^\top ( \hat{\beta}_N -   \beta_0 )   + \left(  \boldsymbol{\hat{\psi}}_N(\mathcal{S})  -   \boldsymbol{\psi}_N(\mathcal{S})\right)^\top \beta_0 \right)
\\ \xrightarrow{d} ~ &  \mathcal{N}\left(0,  \sigma^2  \boldsymbol{\alpha}^\top \left( \lim_{N \to \infty}  \E [ \Sigma_{1N} ]  \right)^{-1} \boldsymbol{\alpha} \right) +   \mathcal{N}\left(0, v_\psi(\beta_0, \mathcal{S})\right)
,
\end{align*}
but because the two Gaussian random variables are neither independent nor do we know that they are jointly Gaussian, we cannot say as much about the distribution of their sum. However, due to Lemma \ref{subg.tails} we do know that they are subgaussian.
\begin{lemma}\label{subg.tails}
If \(X\) and \(Y\) are mean-zero subgaussian random variables, then \(X + Y\) is mean zero and subgaussian.
\end{lemma}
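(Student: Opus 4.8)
The plan is to reduce the statement to the triangle inequality for the subgaussian (Orlicz $\psi_2$) norm, which holds whether or not $X$ and $Y$ are independent. First, the mean-zero claim is immediate: subgaussianity implies all moments are finite, so $X$ and $Y$ are integrable and linearity of expectation gives $\E[X+Y] = \E[X] + \E[Y] = 0$. The substance is in the subgaussianity of the sum, and the key observation is that no independence hypothesis is available here (nor is one needed), since in the application in part (e) of Theorem \ref{prop.ext.2} the quantities $\hat{\beta}_N$ and $\boldsymbol{\hat{\psi}}_N(\mathcal{S})$ are estimated on the same data and hence dependent.

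For the subgaussianity I would invoke the equivalence of the standard characterizations of subgaussian random variables (Proposition 2.5.2 in \citet{vershynin2018high}): a random variable $W$ is subgaussian if and only if its subgaussian norm $\|W\|_{\psi_2} := \inf\{t > 0 : \E[\exp(W^2/t^2)] \leq 2\}$ is finite, and this is equivalent, up to absolute constants relating the various parameters, to the tail bound $\mathbb{P}(|W| \geq x) \leq 2\exp(-x^2/K^2)$ used to define subgaussianity in the excerpt. Since $X$ and $Y$ then have finite $\psi_2$ norms and $\|\cdot\|_{\psi_2}$ is a genuine norm (in particular it obeys the triangle inequality), one gets $\|X+Y\|_{\psi_2} \leq \|X\|_{\psi_2} + \|Y\|_{\psi_2} < \infty$, so $X+Y$ is subgaussian; translating back to the tail-bound form completes the argument.

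As an alternative that avoids appealing to the norm structure, I would argue directly from the tail bounds via a union bound. Writing $\mathbb{P}(|X| \geq x) \leq 2\exp(-x^2/K_1^2)$ and $\mathbb{P}(|Y| \geq x) \leq 2\exp(-x^2/K_2^2)$, the inclusion $\{|X+Y| \geq x\} \subseteq \{|X| \geq x/2\} \cup \{|Y| \geq x/2\}$ yields $\mathbb{P}(|X+Y| \geq x) \leq 2\exp(-x^2/(4K_1^2)) + 2\exp(-x^2/(4K_2^2)) \leq 4\exp(-x^2/K^2)$ with $K^2 = 4\max\{K_1^2, K_2^2\}$. The leading constant $4$ rather than $2$ is harmless, since the notion of subgaussianity is insensitive to the value of the prefactor and one may re-derive a bound with prefactor $2$ at the cost of enlarging $K$.

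The main obstacle is bookkeeping rather than anything conceptual: the definition recorded earlier in the excerpt pins the prefactor at $2$, whereas the natural union-bound estimate produces a larger prefactor, so the write-up must either route through the equivalence of characterizations (which decouples subgaussianity from any particular prefactor) or explicitly absorb the constant into $K$. I would favor the $\psi_2$-norm route for brevity, as it makes the dependence-agnostic nature of the claim transparent in a single line.
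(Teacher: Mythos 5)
Your proof is correct, but it takes a genuinely different route from the paper's. The paper works with the moment-generating-function characterization of subgaussianity, $\E[\exp\{tX\}] \leq \exp\{K_X^2 t^2\}$, and handles the dependence between $X$ and $Y$ via Cauchy--Schwarz: $\E[\exp\{t(X+Y)\}] \leq \sqrt{\E[\exp\{2tX\}]\,\E[\exp\{2tY\}]} \leq \exp\{2t^2(K_X^2 + K_Y^2)\}$, which yields an explicit MGF constant $K^2 = 2(K_X^2 + K_Y^2)$. You instead route through either the triangle inequality for the $\psi_2$ Orlicz norm or a union bound on the tails, both of which are equally dependence-agnostic and equally valid. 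Your $\psi_2$-norm argument is the most succinct and makes the irrelevance of independence transparent; your union-bound alternative is the most elementary, and you correctly flag and resolve the only subtlety there, namely that the prefactor $4$ must be absorbed into an enlarged $K$ to match a definition that fixes the prefactor at $2$ (equivalently, one can pass through the equivalence of characterizations, which is insensitive to the prefactor). What the paper's Cauchy--Schwarz computation buys relative to yours is a direct, quantitative exponential-moment bound with a clean closed-form constant, obtained without invoking the equivalence between the tail-bound and MGF definitions except at the very start; what your approach buys is brevity and a cleaner conceptual statement that subgaussianity is closed under addition simply because $\|\cdot\|_{\psi_2}$ is a norm. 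You also correctly identify the contextual reason the lemma must not assume independence (its use in Theorem \ref{prop.ext.2}(e), where both estimators come from the same data), which matches the paper's intent.
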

\begin{proof} Provided in Appendix \ref{tech.lem.sec}.
\end{proof}

Further, since for any two random variables with finite variance by Cauchy-Schwarz
\begin{align*}
\Var(X + Y) =  ~ & \Var(X) + 2 \Cov(X,Y ) + \Var(Y) 
\\ \leq  ~ &  \Var(X) + 2 \sqrt{\Var(X)  \Var(Y)} + \Var(Y)
,
\end{align*}
we have that the above sequence of random variables converges in distribution to a subgaussian random variable with variance at most
\[
\sigma^2  \boldsymbol{\alpha}^\top \left( \lim_{N \to \infty}  \E [ \Sigma_{1N} ]  \right)^{-1} \boldsymbol{\alpha}  + 2 \sqrt{\sigma^2  \boldsymbol{\alpha}^\top \left( \lim_{N \to \infty}  \E [ \Sigma_{1N} ]  \right)^{-1} \boldsymbol{\alpha}  \cdot v_\psi(\beta_0, \mathcal{S})} + v_\psi(\beta_0, \mathcal{S})
.
\]
Using the assumed consistency of \(\hat{v}_\psi(\hat{\beta}_N, \mathcal{S}) \), we have by the continuous mapping theorem that for \(\hat{v}_N^{(\text{cons})} (\mathcal{A})\) as defined in \eqref{var.upper.bound.subgaus},
\begin{align*}
& \hat{v}_N^{(\text{cons})} (\mathcal{S}) 
\\ =  ~ & \sigma^2  \boldsymbol{\hat{\alpha}}(\mathcal{S})^\top \left(\boldsymbol{\hat{\Sigma}}( \boldsymbol{X}_{(\cdot \cdot)\mathcal{S}} )  \right)^{-1} \boldsymbol{\hat{\alpha}}(\mathcal{S})   + \hat{v}_\psi(\hat{\beta}_N, \mathcal{S})   \nonumber
 + 2 \sqrt{\sigma^2  \boldsymbol{\hat{\alpha}}(\mathcal{S})^\top \left(\boldsymbol{\hat{\Sigma}}( \boldsymbol{X}_{(\cdot \cdot)\mathcal{S}} )  \right)^{-1} \boldsymbol{\hat{\alpha}}(\mathcal{S}) \cdot  \hat{v}_\psi(\hat{\beta}_N, \mathcal{S})}
 \\ \xrightarrow{p} ~ &  \sigma^2  \boldsymbol{\alpha}^\top \left( \lim_{N \to \infty}  \E [ \Sigma_{1N} ]  \right)^{-1} \boldsymbol{\alpha}  + v_\psi(\beta_0, \mathcal{S})
  + 2 \sqrt{\sigma^2  \boldsymbol{\alpha}^\top \left( \lim_{N \to \infty}  \E [ \Sigma_{1N} ]  \right)^{-1} \boldsymbol{\alpha}  \cdot  v_\psi(\beta_0, \mathcal{S})} 
 .
\end{align*}
Therefore by Slutsky's theorem, \(U_e(\mathcal{S})\) converges in distribution to a mean-zero subgaussian random variable with variance at most 1. Finally, once again a similar argument to the one we used in part (b) yields that \(U_e(\hat{\mathcal{S}})\) converges in distribution to the same random variable as \(U_e(\mathcal{S})\).

\end{enumerate}

\end{proof}

\section{Statements and Proofs of Other Results}\label{tech.lem.sec}

%
%
%
%
%
%
%
%
%

\begin{proof}[Proof of Lemma \ref{subg.tails}]

The sum \(X + Y\) has mean zero due to linearity of expectation. To show subgaussianity, it is enough to show that \(\E \exp \left\{t(X+Y) \right\}  \leq  \exp \left\{K^2 t^2 \right\}  \) for some finite \(K^2\) for all \(t \in \mathbb{R}\) \citep[Proposition 2.5.2]{vershynin2018high}. Because \(X\) and \(Y\) are subgaussian, we have that \(\E \left[ \exp \left\{t X \right\}  \right]  \leq \exp \left\{K_X^2 t^2 \right\} \) and \(\E \left[ \exp \left\{t Y \right\}  \right]  \leq \exp \left\{K_Y^2 t^2 \right\} \) for all \(t \in \mathbb{R}\). So using these facts and Cauchy-Schwarz, we have for any \(t \in \mathcal{R}\)
\begin{align*}
\E \exp \left\{t(X+Y) \right\} = ~ & \E \left[ \exp \left\{t X \right\}  \exp \left\{t Y \right\}  \right]
\\ \leq ~ &  \sqrt{\E \left[ \exp \left\{2t X \right\}  \right] \E \left[ \exp \left\{2t Y \right\}  \right] }
\\ \leq ~ &  \sqrt{ \exp \left\{K_X ^2 \cdot 4 t^2  \right\} \exp \left\{K_Y^2 \cdot 4  t^2\right\}   }
\\ = ~ & \exp \left\{2t^2 \left( K_X^2  +  K_Y^2  \right)\right\}  
,
\end{align*}
and the result is proven with \(K^2 := 2 \left( K_X^2  +  K_Y^2  \right)\).

\end{proof}

\begin{lemma}\label{g.sing.val.lem}
The smallest and largest singular values of \(\boldsymbol{G}_N\) are bounded as follows: \( \sigma_{\text{max}} (\boldsymbol{G}_N)  \leq  \sqrt{2}\) and \(\sigma_{\text{min}}\left( \boldsymbol{G}_N  \right)  \geq \sigma/\sqrt{T \sigma_c^2 + \sigma^2}  \).
\end{lemma}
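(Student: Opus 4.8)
The plan is to exploit the Kronecker structure of $\boldsymbol{G}_N = \sigma (\boldsymbol{I}_N \otimes \boldsymbol{\Omega}^{-1/2})$ together with an explicit eigendecomposition of $\boldsymbol{\Omega}$. First I would recall the standard fact that the singular values of a Kronecker product $\boldsymbol{A} \otimes \boldsymbol{B}$ are exactly the pairwise products $\sigma_i(\boldsymbol{A})\,\sigma_j(\boldsymbol{B})$. Since every singular value of $\boldsymbol{I}_N$ equals $1$, the set of singular values of $\boldsymbol{G}_N$ is precisely $\{\sigma \cdot \sigma_j(\boldsymbol{\Omega}^{-1/2})\}_j$. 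Thus the problem reduces to bounding the singular values of the fixed $T \times T$ matrix $\boldsymbol{\Omega}^{-1/2}$, a calculation that does not involve $N$ at all.

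Next I would compute the spectrum of $\boldsymbol{\Omega} = \sigma^2 \boldsymbol{I}_T + \sigma_c^2 \boldsymbol{1}_T \boldsymbol{1}_T^\top$ (as defined in Section \ref{setup.sec}). The rank-one matrix $\boldsymbol{1}_T \boldsymbol{1}_T^\top$ has eigenvalue $T$ with eigenvector $\boldsymbol{1}_T$ and eigenvalue $0$ with multiplicity $T-1$ on the orthogonal complement of $\boldsymbol{1}_T$. Consequently $\boldsymbol{\Omega}$ is symmetric positive definite with eigenvalues $\sigma^2 + T\sigma_c^2$ (simple, eigenvector $\boldsymbol{1}_T$) and $\sigma^2$ (multiplicity $T-1$). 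Because $\boldsymbol{\Omega}^{-1/2}$ is symmetric positive definite, its singular values coincide with its eigenvalues, namely the inverse square roots of the eigenvalues of $\boldsymbol{\Omega}$: the largest is $1/\sigma$ and the smallest is $1/\sqrt{\sigma^2 + T\sigma_c^2}$.

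Multiplying through by the scalar $\sigma$ then gives the exact values
\[
\sigma_{\text{max}}(\boldsymbol{G}_N) = \sigma \cdot \frac{1}{\sigma} = 1, \qquad \sigma_{\text{min}}(\boldsymbol{G}_N) = \frac{\sigma}{\sqrt{\sigma^2 + T\sigma_c^2}},
\]
from which the claimed bounds follow immediately: the lower bound holds with equality (after rewriting $\sigma^2 + T\sigma_c^2 = T\sigma_c^2 + \sigma^2$), and the upper bound holds since $1 \leq \sqrt{2}$. I would note that the $\sqrt{2}$ in the statement is a deliberately loose bound; the exact maximum is $1$.

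There is essentially no hard step here, as this is an elementary spectral computation on a fixed-size, highly structured matrix. The only points requiring minor care are (i) invoking the Kronecker singular-value identity correctly so that the factor $\sigma$ and the identity block contribute cleanly, and (ii) verifying that $\boldsymbol{\Omega}$ is positive definite so that $\boldsymbol{\Omega}^{-1/2}$ is well defined and its singular values equal its eigenvalues — both of which are immediate from $\sigma^2 > 0$ and $\sigma_c^2 \geq 0$.
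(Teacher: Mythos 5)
Your proof is correct, and it takes a genuinely different (and sharper) route than the paper. The paper never diagonalizes \(\boldsymbol{\Omega}\): it upper-bounds \(\sigma_{\text{max}}(\boldsymbol{\Omega})\) by \(\sqrt{\lVert \boldsymbol{\Omega} \rVert_1 \lVert \boldsymbol{\Omega} \rVert_\infty} \leq T\sigma_c^2 + \sigma^2\), computes \(\boldsymbol{\Omega}^{-1}\) explicitly via the Sherman--Morrison formula, lower-bounds \(\sigma_{\text{min}}(\boldsymbol{\Omega}) \geq \sigma^2/2\) by the analogous norm bound applied to \(\boldsymbol{\Omega}^{-1}\), and only then passes to \(\boldsymbol{\Omega}^{-1/2}\) and the Kronecker product. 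You instead observe that \(\boldsymbol{\Omega} = \sigma^2 \boldsymbol{I}_T + \sigma_c^2 \boldsymbol{1}_T \boldsymbol{1}_T^\top\) is a rank-one perturbation of a multiple of the identity, so its spectrum is exactly \(\{\sigma^2 + T\sigma_c^2\} \cup \{\sigma^2 \text{ (multiplicity } T-1)\}\), giving the \emph{exact} singular values
\[
\sigma_{\text{max}}(\boldsymbol{G}_N) = 1, \qquad \sigma_{\text{min}}(\boldsymbol{G}_N) = \frac{\sigma}{\sqrt{T\sigma_c^2 + \sigma^2}}.
\]
Your approach buys tightness: it shows the paper's lower bound holds with equality and that the \(\sqrt{2}\) in the upper bound is slack (an artifact of the paper's bound \(\sigma_{\text{min}}(\boldsymbol{\Omega}) \geq \sigma^2/2\), which is off by a factor of 2 from the true value \(\sigma^2\)), and it avoids the Sherman--Morrison computation entirely. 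What the paper's approach buys is uniformity of technique: the same \(\lVert \cdot \rVert_1\)-\(\lVert \cdot \rVert_\infty\) bound is the workhorse in the companion Lemma \ref{d.sing.val.lem} for \(\boldsymbol{D}_N\), where no clean closed-form spectrum exists, so the author applies one tool to both matrices rather than exploiting the special structure available here. Since the lemma is only ever used for constant-factor bounds, nothing downstream changes, but your argument is the one a referee would prefer.
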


\begin{proof}[Proof of Lemma \ref{g.sing.val.lem}]
We will prove the result in a similar way to the proof of Lemma \ref{d.sing.val.lem}. By properties of the Kronecker product, it is enough to bound the singular values of \(\boldsymbol{\Omega}\). We have
\[
 \sigma_{\text{max}} (\boldsymbol{\Omega})  \leq   \sqrt{  \lVert \boldsymbol{\Omega} \rVert_{1} \lVert \boldsymbol{\Omega} \rVert_{\infty} } \leq \sqrt{[T \cdot \sigma_c^2 + \sigma^2]^2} = T \cdot \sigma_c^2 + \sigma^2
 ,
\]
where for matrices \( \lVert \cdot \rVert_1\) denotes the maximum absolute column sum of a matrix and \(\lVert \cdot \rVert_\infty\) denotes the maximum absolute row sum. We will use \eqref{sv.lb} to lower-bound \( \sigma_{\text{min}} (\boldsymbol{\Omega}) \). First we will find \(\boldsymbol{\Omega}^{-1}\). Observe that we can express \(\boldsymbol{\Omega}\) as
\[
\boldsymbol{\Omega} = \sigma^2 \boldsymbol{I}_{T} + \boldsymbol{v} \boldsymbol{v}^\top
\]
where \(\boldsymbol{v} = (\sigma_c, \ldots, \sigma_c) \in \mathbb{R}^{T}\). Therefore we can invert \(\boldsymbol{\Omega}\) using the Sherman-Morrison formula:
\begin{align}
\boldsymbol{\Omega}^{-1} = ~ & \frac{1}{\sigma^2} \boldsymbol{I}_{T} - \frac{ \frac{1}{\sigma^2} \boldsymbol{I}_{T}  \boldsymbol{v} \boldsymbol{v}^\top  \frac{1}{\sigma^2} \boldsymbol{I}_{T}  }{1 + \boldsymbol{v}^\top  \frac{1}{\sigma^2} \boldsymbol{I}_{T} \boldsymbol{v}}
 =  \frac{1}{\sigma^2} \left( \boldsymbol{I}_{T} - \frac{\frac{1}{\sigma^2} \boldsymbol{v} \boldsymbol{v}^\top  }{1+ \frac{1}{\sigma^2} \boldsymbol{v}^\top \boldsymbol{v}} \right)
=  \frac{1}{\sigma^2} \left( \boldsymbol{I}_{T} - \frac{ \boldsymbol{v} \boldsymbol{v}^\top  }{\sigma^2 + T \sigma_c^2} \right) \nonumber
\\ = ~ &  \frac{1}{\sigma^2} \begin{pmatrix}
1 - \frac{\sigma_c^2}{\sigma^2 + T \sigma_c^2} &  - \frac{\sigma_c^2}{\sigma^2 + T \sigma_c^2} & \cdots &  - \frac{\sigma_c^2}{\sigma^2 + T \sigma_c^2}
\\  - \frac{\sigma_c^2}{\sigma^2 + T \sigma_c^2} & 1 - \frac{\sigma_c^2}{\sigma^2 + T \sigma_c^2} & \cdots &  - \frac{\sigma_c^2}{\sigma^2 + T \sigma_c^2}
\\ \vdots & \vdots & \ddots & \vdots
\\   - \frac{\sigma_c^2}{\sigma^2 + T \sigma_c^2} &   - \frac{\sigma_c^2}{\sigma^2 + T \sigma_c^2} & \cdots & 1 - \frac{\sigma_c^2}{\sigma^2 + T \sigma_c^2}
\end{pmatrix} \nonumber 
.
\end{align}

Therefore
\begin{align*}
\sigma_{\text{min}} \left( \boldsymbol{\Omega} \right) \geq ~ &  \frac{1}{\sqrt{\lVert \boldsymbol{\Omega}^{-1} \rVert_{1} \lVert \boldsymbol{\Omega}^{-1} \rVert_{\infty}}} \geq \sigma^2 \left( 1 + T \frac{\sigma_c^2}{\sigma^2 + T \sigma_c^2}\right)^{-1} =  \sigma^2 \left( \frac{\sigma^2 + 2 T\sigma_c^2}{\sigma^2 + T \sigma_c^2} \right)^{-1}
\\  = ~ &  \frac{\sigma^2 \left( \sigma^2 + T \sigma_c^2\right)}{\sigma^2 + 2 T\sigma_c^2} \geq  \frac{\sigma^2 \left( \sigma^2 + T \sigma_c^2\right)}{2 \left(\sigma^2 +  T\sigma_c^2 \right)} = \frac{\sigma^2}{2}.
\end{align*}
Finally, using these results we have \(\sigma_{\text{min}} \left( \boldsymbol{\Omega}^{-1/2} \right)  \geq 1/\sqrt{T \sigma_c^2 + \sigma^2} \) and \(\sigma_{\text{max}} \left( \boldsymbol{\Omega}^{-1/2} \right)  \leq \sqrt{2}/\sigma\). By properties of the Kronecker product, these bounds hold for \(\boldsymbol{G}_N = \sigma \boldsymbol{I}_{T} \otimes \boldsymbol{\Omega}^{-1/2}\) as well after multiplying by \(\sigma\).

\end{proof}

\begin{lemma}\label{rank.cond2}
Suppose one treated cohort has fewer than \(d_N + 1\) units. Then \(\boldsymbol{\tilde{Z}}\) does not have full column rank.
\end{lemma}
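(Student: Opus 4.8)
The plan is to produce an explicit linear dependence among the columns of $\boldsymbol{\tilde{Z}}$ by restricting attention to the block of columns attached to a single cohort–time treatment cell, and then to count dimensions. First I would fix the treated cohort $r \in \mathcal{R}$ whose count satisfies $N_r < d_N + 1$, i.e. $N_r \leq d_N$, and single out the observation time $t = T$. Since $\mathcal{R} \subseteq \{2, \ldots, T\}$ we always have $T \geq r$, so the pair $(r, T)$ indexes a genuine base treatment effect $\tau_{rT}$ in \eqref{wooldridge.6.33.model}, together with its $d_N$ interaction coefficients $\boldsymbol{\rho}_{rT}$. The corresponding $d_N + 1$ columns of $\boldsymbol{\tilde{Z}}$ are the treatment dummy $\mathbbm{1}\{W_i = r\}\mathbbm{1}\{t = T\}$ and the $d_N$ interactions $\mathbbm{1}\{W_i = r\}\mathbbm{1}\{t = T\}\,\dot{X}_{(ir)j}$ for $j \in [d_N]$.

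The key observation is that all $d_N + 1$ of these columns are supported only on the rows $\{(i,T) : W_i = r\}$, of which there are exactly $N_r$ (one observation at time $T$ for each of the $N_r$ units in cohort $r$, using the balanced-panel assumption). I would then form the $N_r \times (d_N + 1)$ submatrix obtained by restricting these columns to those rows: its first column is the all-ones vector $\boldsymbol{1}_{N_r}$ and its remaining $d_N$ columns collect the (cohort-mean–centered) covariate values for the units in cohort $r$. Because this submatrix has more columns than rows when $N_r \leq d_N$, its rank is at most $N_r < d_N + 1$, so there is a nonzero vector $\boldsymbol{c} \in \mathbb{R}^{d_N + 1}$ in its kernel. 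Since every one of the chosen columns vanishes on all rows outside the support, the same linear combination $\boldsymbol{c}$ annihilates the full-length columns of $\boldsymbol{\tilde{Z}}$, exhibiting a nontrivial linear dependence and hence showing that $\boldsymbol{\tilde{Z}}$ lacks full column rank.

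I do not expect a serious obstacle here; the argument is essentially a pigeonhole/dimension count. The only point requiring care is the bookkeeping of supports: I would verify that multiplying a covariate by the treatment dummy $\mathbbm{1}\{W_i = r\}\mathbbm{1}\{t = T\}$ leaves the column zero off the $N_r$ designated rows, and note that centering the covariates by their cohort means (as in $\dot{X}_{(ir)}$) alters the entries only on the support and therefore does not affect the localization. With that confirmed, the restricted submatrix is literally $[\boldsymbol{1}_{N_r} \mid \dot{\boldsymbol{X}}_{(r)}]$ with fewer rows than columns, and the conclusion is immediate.
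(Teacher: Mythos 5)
Your proof is correct and takes essentially the same approach as the paper's: both arguments localize the treatment dummy and its covariate interactions for cohort \(r\) to the rows belonging to that cohort and then use a dimension count (more columns than supporting rows, since \(N_r < d_N + 1\)) to force a nontrivial linear dependence that extends to the full columns of \(\boldsymbol{\tilde{Z}}\). The only difference is cosmetic: you restrict to the single time \(t = T\), giving \(d_N+1\) columns supported on \(N_r\) rows with an explicit kernel vector, whereas the paper counts all \((d_N+1)(T-r+1)\) post-treatment columns inside the \(N_r(T-r+1)\)-dimensional support subspace — the pigeonhole is identical.
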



\begin{proof}[Proof of Lemma \ref{rank.cond2}]

The treatment effect dummies can be seen as interaction effects between the cohort dummies and (a subset of) the time dummies. In particular, the treatment effect for cohort \(r\) at time \(t\) is equal to the column for the \(r^{\text{th}}\) cohort dummy, but with entries for all times except \(t\) set equal to 0. The \(T - r + 1\) treatment dummies for cohort \(r\) are mutually orthogonal and span a \((T - r + 1)\)-dimensional subspace, and they occupy an \(N_r(T-r+1)\)-dimensional subspace, where \(N_r\) is the number of units in cohort \(r\).

Because the covariates (like the cohort indicators) are time-invariant, the time-covariate interactions are closely related to the treatment effect dummies. The covariate-time interactions for cohort \(r\) lie in the same \(N_r(T-r+1)\)-dimensional subspace as the treatment effects for cohort \(r\) and are either linearly dependent or span a subspace of dimension \(d_N(T - r + 1)\).  \(\boldsymbol{\tilde{Z}}\) is rank deficient in the former setting, so we assume the latter. Then the treatment dummies and covariate-time interactions for cohort \(r\) span a subspace of dimension \((d_N + 1)(T - r + 1)\) within this \(N_r(T-r+1)\)-dimensional subspace. So if \(N_r < d_N + 1\), these columns must be linearly dependent, and \(\boldsymbol{\tilde{Z}}\) does not have full column rank.

\end{proof}

\begin{lemma}\label{conv.claim.lem}
Suppose \(e_{1N}\) is bounded away from 0 with high probability: for some finite \(N^*, c > 0\), for all \(N > N^*\) it holds that \(\mathbb{P} (e_{1N} < c ) <  a_N\) for some vanishing sequence \(\{a_N\}\). Let \(\{X_N\}\) be a sequence of random variables, and suppose that \(X_N  = \mathcal{O}_{\mathbb{P}}( \frac{1}{e_{1N}}\sqrt{p_N/N})\). Then \(X_N = \mathcal{O}_{\mathbb{P}}(\sqrt{p_N/N})\).
\end{lemma}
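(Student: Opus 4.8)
The plan is to unwind the definition of $\mathcal{O}_{\mathbb{P}}$ and then transfer the bound from the random rate $\frac{1}{e_{1N}}\sqrt{p_N/N}$ to the deterministic rate $\sqrt{p_N/N}$ by conditioning on whether or not $e_{1N}$ sits above the constant $c$. The first thing I would do is reinterpret the hypothesis $X_N = \mathcal{O}_{\mathbb{P}}(\frac{1}{e_{1N}}\sqrt{p_N/N})$ as the statement that the ratio $X_N/(\frac{1}{e_{1N}}\sqrt{p_N/N}) = |X_N| e_{1N}\sqrt{N/p_N}$ is stochastically bounded; concretely, fixing $\epsilon > 0$, it supplies a finite $M_\epsilon$ and a threshold $N_\epsilon$ such that $\mathbb{P}(|X_N| e_{1N}\sqrt{N/p_N} > M_\epsilon) < \epsilon/2$ for all $N > N_\epsilon$.

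The key step is a two-term union bound that splits according to the ``good'' event $G_N := \{e_{1N} \geq c\}$ and its complement. On $G_N$ we have $|X_N| e_{1N}\sqrt{N/p_N} \geq c\,|X_N|\sqrt{N/p_N}$, which gives the inclusion $\{|X_N|\sqrt{N/p_N} > M_\epsilon/c\} \cap G_N \subseteq \{|X_N| e_{1N}\sqrt{N/p_N} > M_\epsilon\}$. Consequently $\mathbb{P}(|X_N|\sqrt{N/p_N} > M_\epsilon/c) \leq \mathbb{P}(|X_N| e_{1N}\sqrt{N/p_N} > M_\epsilon) + \mathbb{P}(e_{1N} < c)$. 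For $N$ larger than both $N_\epsilon$ and $N^*$, the first term is below $\epsilon/2$ by the previous step and the second is below $a_N$ by hypothesis; since $a_N \to 0$, choosing $N$ past the point where $a_N < \epsilon/2$ forces the sum below $\epsilon$. Taking $M_\epsilon' := M_\epsilon/c$ and a threshold equal to the maximum of $N^*$, $N_\epsilon$, and that point then reproduces exactly the definition of $X_N = \mathcal{O}_{\mathbb{P}}(\sqrt{p_N/N})$.

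The only real subtlety, and the main thing to get right, is the bookkeeping around the random rate: because $\frac{1}{e_{1N}}\sqrt{p_N/N}$ is itself random, the hypothesis must first be pinned down as a bound on $|X_N| e_{1N}\sqrt{N/p_N}$, and one must confirm that dividing through by $c$ on $G_N$ is legitimate (it is, as $c > 0$). Once the notation is made precise, no delicate probabilistic estimates are needed — the argument is an elementary union bound, with the vanishing of $a_N$ handling the small ``bad event'' probability and the hypothesis handling the rest.
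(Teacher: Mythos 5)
Your proof is correct and follows essentially the same route as the paper's: both split on the event $\{e_{1N} \geq c\}$, transfer the stochastic bound from $|X_N|e_{1N}\sqrt{N/p_N}$ to $|X_N|\sqrt{N/p_N}$ on that event using $c>0$, and absorb the bad event into the vanishing $a_N$ with an $\epsilon/2$ split. The only cosmetic difference is that you phrase the split as an event-inclusion union bound while the paper writes it via the law of total probability with conditional probabilities; the substance is identical.
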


\begin{proof}[Proof of Lemma \ref{conv.claim.lem}]
By assumption, for every \(\epsilon > 0\) there exists a finite \(M^{(1)}_\epsilon > 0\) and \(N^{(1)}_\epsilon > 0\) such that for all \(N > N^{(1)}_\epsilon\),
\[
\mathbb{P} \left( \left| \frac{e_{1N} X_N}{\sqrt{p_N/N}} \right| > M^{(1)}_\epsilon \right) < \epsilon
.
\]
Now for all \(N >  \max\{N^*, N^{(1)}_{\epsilon/2}\} \),
\begin{align*}
& \mathbb{P} \left( \left| \frac{X_N}{\sqrt{p_N/N}} \right| > \frac{M_{\epsilon/2}^{(1)}}{c} \right) 
\\ = ~ &   \mathbb{P} \left( \left| \frac{X_N}{\sqrt{p_N/N}} \right| > \frac{M_{\epsilon/2}^{(1)}}{c} \mid e_{1N} \geq c \right)   \mathbb{P}(e_{1N} \geq c) 
  + \mathbb{P} \left( \left| \frac{X_N}{\sqrt{p_N/N}} \right| > \frac{M_{\epsilon/2}^{(1)}}{c} \mid e_{1N} < c \right)   \mathbb{P}(e_{1N} < c) 
\\ \leq ~ &   \mathbb{P} \left( \left| \frac{X_N}{\sqrt{p_N/N}} \right| > \frac{M_{\epsilon/2}^{(1)}}{c} \mid e_{1N} \geq c \right)    \mathbb{P}(e_{1N} \geq c) 
+  \mathbb{P}(e_{1N} < c) 
\\ \leq  ~ &   \mathbb{P} \left( \left| \frac{ X_N}{\sqrt{p_N/N}} \right| > \frac{M_{\epsilon/2}^{(1)}}{e_{1N}} \mid e_{1N} \geq c \right)    \mathbb{P}(e_{1N} \geq c) 
 +  \mathbb{P}(e_{1N} < c) 
\\ =  ~ &   \mathbb{P} \left( \left| \frac{e_{1N} X_N}{\sqrt{p_N/N}} \right| > M_{\epsilon/2}^{(1)} \mid e_{1N} \geq c \right)    \mathbb{P}(e_{1N} \geq c) 
 +  \mathbb{P}(e_{1N} < c) 
\\ \leq ~ &   \mathbb{P} \left( \left| \frac{e_{1N} X_N}{\sqrt{p_N/N}} \right| > M_{\epsilon/2}^{(1)} \mid e_{1N} \geq c \right)    \mathbb{P}(e_{1N} \geq c) 
+  \mathbb{P} \left( \left| \frac{e_{1N} X_N}{\sqrt{p_N/N}} \right| > M_{\epsilon/2}^{(1)} \mid e_{1N} < c \right)    \mathbb{P}(e_{1N} < c) 
\\ & +  \mathbb{P}(e_{1N} < c) 
\\ = ~ &   \mathbb{P} \left( \left| \frac{e_{1N} X_N}{\sqrt{p_N/N}} \right| > M_{\epsilon/2}^{(1)}  \right)     +  \mathbb{P}(e_{1N} < c) 
\\ < ~ & \frac{\epsilon}{2} + a_N
.
\end{align*}
Since \(a_N\) is vanishing, there exists a finite \(N^{(2)}_{\epsilon}\) so that \(a_N \leq \epsilon/2\) for all \(N >  N^{(2)}_{\epsilon}\). Then we have established that for any \(\epsilon > 0\), for \(M_\epsilon : = M_{\epsilon/2}^{(1)}/c  \) and \( N_\epsilon := \max\{N^*, N^{(1)}_{\epsilon/2}  , N^{(2)}_{\epsilon}\} \), for all \(N > N_\epsilon\) it holds that
\begin{align*}
 \mathbb{P} \left( \left| \frac{ X_N}{\sqrt{p_N/N}} \right| > M_\epsilon\right) < ~ &  \epsilon
.
\end{align*}

\end{proof}

\section{Other Extensions}
\subsection{Incorporating Direct Ridge Regularization into FETWFE}\label{sec:ridge_extension}

Although FETWFE is a regularized estimation method, in settings where \(p\) is comparable to \(N\) it can be helpful in practice to add additional regularization to the estimated coefficients themselves. One practical way to do this is to add an additional \(\ell_2\) (ridge) penalty directly to the estimated coefficients to enhance numerical stability and improve predictive performance. This idea is similar to the motivation behind the elastic net \citep{Zou2005} which combines the lasso and ridge penalties. In our context, ridge regularization applied to the untransformed coefficients \(\boldsymbol{\beta}\) provides an optional way to stabilize fused extended two‐way fixed effects (FETWFE) estimation in addition to the adaptive fusion penalties.
%
Specifically, adding a ridge penalty to the optimization problem,
\[
\lambda_{2N} \|\boldsymbol{\beta}\|_2^2,
\]
serves this purpose by shrinking the coefficients toward zero. This additional regularization is conceptually similar to the elastic net approach and may reduce the variance of the estimator without sacrificing too much bias.

\subsubsection{Estimation Approach}

To incorporate ridge regularization, consider the following modified optimization problem:
\begin{equation}\label{elastic.net.1}
\min_{\boldsymbol{\beta} \in \mathbb{R}^{p_N}} \left\{ \| \boldsymbol{y} - \boldsymbol{Z}\boldsymbol{\beta} \|_2^2 + \lambda_{1N}\|\boldsymbol{D}_N\boldsymbol{\beta}\|_q^q + \lambda_{2N} \|\boldsymbol{\beta}\|_2^2 \right\}.
\end{equation}
Assuming that \(\boldsymbol{y}\) is centered and the columns of \(\boldsymbol{Z}\) are centered and scaled, his can be rewritten by augmenting the data, in an approach similar to the one mentioned by \citep{Zou2005}. Define the extended response and design matrix as
\[
\tilde{\boldsymbol{y}}^* = \begin{pmatrix} \boldsymbol{y}\\ \boldsymbol{0} \end{pmatrix} \quad\text{and}\quad
\tilde{\boldsymbol{Z}}^* = \begin{pmatrix} \boldsymbol{Z}\\ \sqrt{\lambda_{2N}}\,I_{p_N} \end{pmatrix}.
\]
Then \eqref{elastic.net.1} can be written as
\[
\min_{\boldsymbol{\beta} \in \mathbb{R}^{p_N}} \left\{ \left\|\tilde{\boldsymbol{y}}^* - \tilde{\boldsymbol{Z}}^*\boldsymbol{\beta}\right\|_2^2 + \lambda_{1N}\|D_N\boldsymbol{\beta}\|_q^q \right\}.
\]
After again using the transformation from Appendix \ref{sec.prove.first.thm}
\[
\boldsymbol{\theta} = \boldsymbol{D}_N\boldsymbol{\beta} \quad \Longrightarrow \quad \boldsymbol{\beta} = \boldsymbol{D}_N^{-1}\boldsymbol{\theta},
\]
we can finally write \eqref{elastic.net.1} as
\[
\min_{\boldsymbol{\theta} \in \mathbb{R}^{p_N}} \left\{ \left\|\tilde{\boldsymbol{y}}^* - \begin{pmatrix} \boldsymbol{Z}D_N^{-1}\\ \sqrt{\lambda_{2N}}\,D_N^{-1} \end{pmatrix}\boldsymbol{\theta}\right\|_2^2 + \lambda_{1N}\|\boldsymbol{\theta}\|_q^q \right\}
,
\]
which (given a fixed choice of \(\lambda_{2N}\)) can be solved with a standard bridge regression solver (for example, using packages such as \texttt{grpreg} in R) exactly as one would for the original FETWFE estimator.

Further, if \(\lambda_{2N}\) vanishes appropriately in \(N\) then this ridge-penalized version of FETWFE is likely to retain its consistency and asymptotic normality properties, because the extra ridge rows will have a vanishing influence on the estimation as \(N \to \infty\).

Such an estimator is analogous to the \textit{naive elastic net} of \citep{Zou2005}; it may be wise, by analogy, to scale this estimator by \(1 + \lambda_2\) in order to mitigate the ``double" regularization.

\subsection{The Case with No Covariates}
The fused extended two-way fixed effects (FETWFE) model simplifies when no covariates are included. In this setting, the model requires the estimation of only fixed effects and treatment parameters, and the identification assumptions can be different. In the standard FETWFE model with covariates, the full parameter vector comprises
\[
p = R + (T-1) + d + dR + d(T-1) + \mathfrak{W} + \mathfrak{W} d
\]
parameters to estimate. When no covariates are present (i.e., $d=0$), the parameter vector reduces to a length of
\[
p = R + (T-1) + \mathfrak{W}.
\]
Thus, only the fixed effects and the treatment effects must be estimated, reducing the dimensionality of the model and, in turn, the degrees of freedom required for estimation.

However, estimation without covariates also fundamentally changes the required causal inference assumptions. With covariates, identification of the treatment effects relies on the conditional no anticipation (CNAS) and conditional parallel trends (CCTS) assumptions. These assumptions require that, after conditioning on covariates, units do not anticipate treatment and that the untreated potential outcomes follow parallel trends.

In the absence of covariates, these conditions are replaced by their marginal (unconditional) counterparts, Assumption (CTS) along with marginal no-anticipation. The more key difference is marginal parallel trends. As we discussed in Appendix \ref{par.trend.ciuu.app}, there is some nuance, but overall Assumption (CTS) is generally more stringent than (CCTS).

Omitting covariates does allow us to omit Assumption (LINS). When covariates are included, we require Assumption (LINS) to impose a linear relationship between covariates and potential outcomes. Without covariates, there is no need for such a linearity assumption. Therefore, the identification of treatment effects does not depend on the correctness of a linear specification for covariate effects, which may enhance robustness against model misspecification.

\end{document}